\numberwithin{equation}{section}
\newtheorem{theorem}{Theorem}[section]
\newtheorem{lemma}[theorem]{Lemma}
\newtheorem{corollary}[theorem]{Corollary}
\newtheorem{definition}[theorem]{Definition}
\newtheorem{claim}[theorem]{Claim}
\newtheorem{problem}[theorem]{Problem}
\theoremstyle{remark}
\newtheorem{remark}[theorem]{Remark}
\newtheorem{observation}[theorem]{Observation}
\algrenewcommand\textproc{\mathsf}
\algrenewcommand\algorithmiccomment[1]{\hfill\(\triangleright\)~#1}
\algrenewcommand\alglinenumber[1]{\scriptsize #1:}
\crefname{theorem}{Theorem}{Theorems}
\crefname{lemma}{Lemma}{Lemmas}
\crefname{corollary}{Corollary}{Corollaries}
\crefname{definition}{Definition}{Definitions}
\crefname{problem}{Problem}{Problems}
\crefname{claim}{Claim}{Claims}
\crefname{remark}{Remark}{Remarks}
\crefname{observation}{Observation}{Observations}
\crefname{section}{Section}{Sections}
\crefname{subsection}{Section}{Sections}
\crefname{subsubsection}{Section}{Sections}
\crefname{algorithm}{Algorithm}{Algorithms}
\crefname{appendix}{Appendix}{Appendices}
\newcommand{\email}[1]{\href{mailto:#1}{\texttt{#1}}}
\newcommand{\den}{\mathsf{den}}
\DeclareMathOperator{\polylog}{polylog}
\DeclareMathOperator{\poly}{poly}
\DeclareMathOperator{\dist}{dist}
\newcommand{\eps}{\epsilon}
\newcommand{\seq}{\mathsf{seq}}
\newcommand{\R}{\mathbb{R}}
\newcommand{\bx}{\mathsf{box}}
\DeclareMathOperator{\E}{\textbf{E}}
\newcommand\restr[2]{{
		\left.\kern-\nulldelimiterspace
		#1
		\vphantom{\big|}
		\right|_{#2}
}}
\def\set#1{\ensuremath{\left\{#1\right\}}}
\def\Abs#1{\left|#1\right|}
\def\vec#1{\mathbf{#1}}
\def\veci{\mathbf{i}}
\def\vecj{\mathbf{j}}
\def\veck{\mathbf{k}}
\def\vecv{\mathbf{v}}
\def\vecx{\mathbf{x}}
\def\vecy{\mathbf{y}}
\definecolor{mygreen}{RGB}{190,255,190}
\definecolor{myred}{RGB}{255,220,220}
\definecolor{mygray}{RGB}{240,240,240}
\newcommand{\ignore}[1]{}
\title{Testing forbidden order-pattern properties on hypergrids}
\author{
	Harish Chandramouleeswaran\thanks{Chennai Mathematical Institute, India (\email{harishc@cmi.ac.in}).}
	\and
	Ilan Newman\thanks{Department of Computer Science, University of Haifa, Israel (\email{ilan@cs.haifa.ac.il}). Research supported by the Israel Science Foundation (grant 379/21).}
	\and
	Tomer Pelleg
	\and
	Nithin Varma\thanks{Department of Mathematics and Computer Science, University of Cologne, Germany (\email{varma@cs.uni-koeln.de}).}
}
\date{}
\begin{document}
	\maketitle
	
	\begin{abstract}
		Given a permutation $\pi \colon [k] \to [k]$, a function $f \colon [n]^d \to \mathbb{R}$ is said to be $\pi$-free if 
		there are no $k$ indices $x_1 \prec \dots \prec x_k \in [n]^d$ such that 
		$f(x_i) < f(x_j)$ and $\pi(i) < \pi(j)$ for all $i,j \in [k]$, where $\prec$ is the natural partial order over $[n]^d$. 
		For a fixed $\pi$ and $\epsilon \in (0,1)$, the problem of
		$\epsilon$-testing $\pi$-freeness is to distinguish  the case that $f$ is $\pi$-free
		from the case that at least $\epsilon n^d$ values of $f$ need to be
		modified in order to make it  $\pi$-free. When $k=2$, the problem is identical to monotonicity testing,
		which is extensively studied in property testing.
		The case of $k > 2$ has also received significant attention for functions $f \colon [n] \to \mathbb{R}$.
		
		We initiate a systematic study of testing of $\pi$-freeness for higher
		dimensional grids; specifically, for permutations of size $3$ over hypergrids of dimension $2$.
		We  design an adaptive one-sided error
		$\pi$-freeness tester with query complexity $O(n^{4/5 + o(1)})$ that works for all permutations of size $3$.
		We then argue that for general permutations of size $3$,
		every nonadaptive tester must have query complexity
		$\Omega(n)$, and further, that every \emph{adaptive} tester 
		must have query complexity $\Omega(\sqrt{n})$. This is the
		first general lower bound for testing $\pi$-freeness that is higher than
		$\Theta(\log n)$.
		
		For the important special case of $\pi = (1,2,3)$ or
		$(3,2,1)$, we  design a nonadaptive tester with a
		significantly improved (nearly optimal) query complexity of
		$\operatorname{polylog} n$. Such an exponential 
		gap in the complexity of testing freeness of nonmonotone and monotone patterns is not known, in general, in the one-dimensional case.
		
		Erasure-resilient (ER) monotonicity testers are important subroutines in the design of our 
		$\pi$-freeness testers. 
		We  design
		$\delta$-ER $\epsilon$-testers for
		monotonicity for functions $f \colon [n]^d \to \mathbb{R}$ 
		with query complexity $O\left(\frac{\log^{O(d)} n}{\epsilon (1-\delta)}\right)$, where $\delta \in (0,1)$ is an upper bound on the fraction of erasures. Earlier erasure-resilient monotonicity testers worked only when $\delta = O(\epsilon/d)$. The complexity of our nonadaptive monotonicity tester is nearly optimal as evidenced by a lower bound of Pallavoor, Raskhodnikova, and Waingarten (Random Struct. Algorithms, 2022).
		
		Lastly, we argue that the current techniques in function
		property testing cannot give us sublinear-query testers for patterns
		of length $4$ even for $2$-dimensional hypergrids.
	\end{abstract}


\section{Introduction}

Order relations between values capture both local and global structural properties of numerical datasets. The study of specific order patterns in inputs is a fundamental topic spanning algorithms, combinatorics and applications. 
Knuth, in an early work~\cite{Knuth68}, showed that arrays that avoid the pattern $(2,3,1)$ are exactly the ones that can be sorted by a single stack. Recent research in the same spirit has focused on designing faster algorithms for optimization problems such as Euclidean TSP and searching in BSTs when inputs avoid certain order patterns~\cite{ChalermsookGKMS15,ChalermsookGJAPY23,ChalermsookPY24,BerendsohnKO24}.

There is also a long line of work in combinatorics on the number of permutations that avoid specific order patterns~\cite{Bona97, Bona99, Arratia99, Klazar00, AlonF00}, culminating in the celebrated Marcus-Tardos theorem~\cite{MarcusT04}. There are measures based on the frequencies of various order patterns such as permutation entropy\footnote{The presence or absence of specific order patterns is used to empirically determine whether a time series is stochastic or deterministic in nature. Specifically, the time series is inferred to be deterministic if it avoids some specific order pattern~\cite{Rosso07,ZuninoSR12} There are also dedicated libraries to perform various ordinal analyses on sequential datasets~\cite{BergerKSJ19,PessaR21}.}, which are extensively used in time series analysis~\cite{BandtP02} and have applications to varied fields such as biomedical sciences and econophysics among others~\cite{ZaninZRP12,Bandt16}.

Given a real-valued function \( f \colon [n] \to \mathbb{R} \) and a permutation (order pattern) \( \pi \colon [k] \to [k] \), we say that \( f \) contains a \emph{\(\pi\)-appearance} if there exist distinct indices \( x_1, \dots, x_k \in [n] \) such that the values of \( f \) respect the ordering dictated by \( \pi \). That is, for all \( i, j \in [k] \), \( f(x_i) < f(x_j) \) if and only if \( \pi(i) < \pi(j) \). The function \( f \) is said to be \emph{\( \pi \)-free} otherwise. 
There are several classical algorithms for deciding whether a given function is $\pi$-free for short patterns \( \pi \)~\cite{AlbertAAH01,AhalR08,BerendsohnKM2021}, including linear-time algorithms~\cite{GuillemotM14, Fox13}.
The quest for faster algorithms has led to the study of $\pi$-freeness in the property testing framework~\cite{RubinfeldS96, GoldreichGR98} of sublinear-time algorithms, although much less is known in this setting.
Investigations on forbidden order pattern testing has so far focused mostly on the case of real-valued
functions on the line $[n]$.
Formally, for a parameter \( \epsilon \in (0,1) \), a function \( f \colon [n] \to \mathbb{R}\) is \( \epsilon \)-far from being \( \pi \)-free if at least \( \epsilon n \) values of \( f \) must be modified to make it \( \pi \)-free. An \( \epsilon \)-tester for \( \pi \)-freeness is an algorithm that, given oracle access to \( f \), decides with constant probability whether \( f \) is \( \pi \)-free or \( \epsilon \)-far from \( \pi \)-free.
For general
constant length patterns and constant $\epsilon$,
the best known $\pi$-freeness testers have query complexity
$\widetilde{O}(n^{o(1)})$~\cite{NV22}.
In contrast, for the important special case of monotonicity testing (i.e., $(2,1)$-freeness testing), optimal \( \epsilon \)-testers~\cite{EKK+00, Fischer04, BGJRW12, CS13a, Belovs18} have query complexity \( \Theta\left(\frac{\log n}{\epsilon}\right) \).
Research on testing order pattern freeness has also resulted in the development of several interesting combinatorial techniques~\cite{NewmanRRS19, Ben-EliezerCLW19, Ben-EliezerC18, Ben-EliezerLW19, NV22}.

The significance of the study of short ordered patterns in one-dimensional functions is evident from the discussion above.
Many of the aforementioned problems extend naturally to higher-dimensional functions and are motivated by both theoretical interest and practical applications, as, for instance, in analyzing higher dimensional time series data~\cite{KellerL03,RibeiroZLSM12}. 
In the context of sublinear-time algorithms, monotonicity testing has been extensively studied for functions over general hypergrids, i.e., for $f \colon [n]^d
\to \R$, and there are optimal $\epsilon$-testers with complexity
$\Theta\left(\frac{d\log n}{\epsilon}\right)$~\cite{GGLRS00,DGLRRS99,CS13a,CS14,CDJS17}. However, little is known for testing general pattern freeness over domains other than the line, and in
particular, over hypergrids of dimension larger than $1$.

\subsection{Our results}

In this paper, we address this knowledge gap by initiating a
systematic study of pattern freeness testing of real-valued functions
over higher dimensional hypergrid domains. Let $\prec$ denote the natural  partial order over $[n]^d$, i.e., $x \prec y$ if, for all $i \in [d]$, we have $x[i] \leq y[i]$, where $p[i]$ refers to the $i^{\text{th}}$ coordinate of $p \in [n]^d$.
Given a permutation $\pi\in\mathcal{S}_k,$ a function $f \colon [n]^d \to \R$ is $\pi$-free if 
there are no $k$ indices $x_1 \prec \dots \prec x_k \in [n]^d$ such that 
$f(x_i) < f(x_j)$ and $\pi(i) < \pi(j)$ for all $i,j \in [k]$.
For a fixed $\pi$ and $\epsilon \in (0,1)$, the problem of
$\epsilon$-testing $\pi$-freeness is to distinguish  the case that $f$ is $\pi$-free
from the case that at least $\epsilon n^d$ values of $f$ need to be
modified in order to make it  $\pi$-free.  The problem is interesting in its own right and as a natural 
generalization of monotonicity testing for functions over the hypergrids. 

Our results can be summarized as follows. We develop techniques 
to design sublinear algorithms and strong lower bounds for testing $\pi$-freeness over hypergrids for all patterns $\pi \in \mathcal{S}_3$. It turns out that 
there are interesting new phenomena and significant new challenges to
deal with for patterns of length $3$ even over $2$-dimensional hypergrids. For testing freeness of larger order patterns, we argue that the current techniques in function
property testing cannot give strong sublinear-query testers even over $2$-dimensional hypergrids. Erasure-resilient as well as partial function monotonicity testing are important components
of our algorithms. Naturally, we design nearly optimal fully erasure-resilient monotonicity testers~\cite{DixitRTV18} over the hypergrids and efficient monotonicity testers over high dimensional partial functions.

\subsubsection*{Order-patterns over hypergrids} 	

We show that there are fundamental distinctions between nonmonotone and monotone patterns as well as between large and small patterns
for $\pi$-freeness testing over the hypergrids. We emphasize that this is very much unlike the case for the line $[n]$. We also show a 
distinction between adaptive and nonadaptive algorithms for $\pi$-freeness over hypergrids.
The bounds we obtain are summarized in \cref{table:results}.

\paragraph{Monotone vs.\ nonmonotone patterns.} Much of the research on $\pi$-freeness of functions $f \colon [n] \to \R$ has focused on the case of $\pi$ being monotone~\cite{NewmanRRS19, Ben-EliezerC18, Ben-EliezerCLW19, Ben-EliezerLW19}.
For constant $k$ and $\epsilon$, Ben-Eliezer, Letzter, and Waingarten~\cite{Ben-EliezerLW19} designed $\epsilon$-testers for $\pi$-freeness with query complexity $O(\log n)$ for monotone patterns $\pi$ of length $k$, i.e., for $\pi \in \{(1,2,\dots, k), \, (k,\dots,2,1)\}$. The best-known $\epsilon$-tester for general $\pi$-freeness, due to Newman and Varma~\cite{NV22}, has query complexity $\widetilde{O}(n^{o(1)})$ for constant $k$ and $\epsilon$.
Additionally, the only known general lower bound for $\epsilon$-testing freeness of patterns of \emph{any} length is $\Omega(\log n)$, which is the lower bound for monotonicity testing~\cite{Fischer04}.
Newman and Varma~\cite{NV22} conjecture that 
one can design $\pi$-freeness testers for all patterns $\pi$ of constant length with query complexity $\polylog n$.
They base their conjecture mainly on the fact that one can test $\pi$-freeness for all patterns $\pi$ of length $3$ with query complexity $\polylog n$~\cite{NewmanRRS19}. 

In this paper, we show that an analogous conjecture cannot hold for pattern-freeness testing over general hypergrids. Specifically, we prove that testing nonmonotone pattern freeness is exponentially harder than testing monotone pattern freeness \emph{even for patterns of length $3$ over $2$-dimensional hypergrids}. 

Our first main result is an $\Omega(\sqrt{n})$ lower bound for $\pi$-freeness testing
of patterns $\pi$ of length $3$ of functions $f \colon [n]^2 \to \R$. We emphasize that our lower bound holds even when the tester is allowed to make its queries adaptively.\footnote{A tester is \emph{nonadaptive} if its queries do not depend on the answers to previous queries. It is \emph{adaptive} otherwise.}

\begin{theorem} \label{thm:132_a_lb_intro}
	Any one-sided error\footnote{A tester has \emph{one-sided error} if it accepts a $\pi$-free function with probability $1$. It has \emph{two-sided error} otherwise.} $\epsilon$-tester for $(1,3,2)$-freeness of functions $f: [n]^2 \to \R$, has query complexity $\Omega(\sqrt{n})$, for every $\epsilon \leq 4/81$.
\end{theorem}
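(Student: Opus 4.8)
The plan is to prove the bound by exhibiting a distribution $\mathcal{D}$ over functions $f\colon[n]^2\to\R$, each of which is $\epsilon$-far from $(1,3,2)$-free, such that for every deterministic adaptive algorithm issuing at most $c\sqrt n$ queries (for a small constant $c$), the probability over $f\sim\mathcal{D}$ that its queried points contain a \emph{witness} --- three points $x_1\prec x_2\prec x_3$ with $f(x_1)<f(x_3)<f(x_2)$ --- is at most $1/3$. A one-sided error tester must accept every $(1,3,2)$-free function with probability $1$, so it can reject only once its queried points contain such a witness: by the design of $\mathcal{D}$ (see below) any witness-free view is consistent with a fixed $(1,3,2)$-free ``base'' function. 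Hence on $\mathcal{D}$ the tester accepts with probability at least $2/3$, contradicting that every function in the support is $\epsilon$-far; by Yao's minimax lemma, every randomized one-sided error $\epsilon$-tester then needs $\Omega(\sqrt n)$ queries. I will use the reformulation that $f$ contains a $(1,3,2)$-appearance iff there is a ``descent'' $x_2\prec x_3$ with $f(x_2)>f(x_3)$ admitting a predecessor $x_1\prec x_2$ with $f(x_1)<f(x_3)$.

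\textbf{The hard distribution.} I would let $f$ be a ``generically increasing'' --- hence manifestly $(1,3,2)$-free --- base function into which a \emph{random} hidden structure plants a family of $\Theta(n^2)$ pairwise point-disjoint $(1,3,2)$-appearances. The structure would be a uniformly random object on $\Theta(n)$ ``tokens'' (e.g.\ a random matching, or a random function replicated in a two-level layout), designed so that: (i) the three points of each planted appearance lie in three well-separated regions of $[n]^2$ (three bands of antidiagonals, or three coordinate bands with buffers), so that no small ball of queries contains a whole appearance and the ``$x_1$''-role is itself nontrivial; (ii) a queried set $Q$ contains a planted appearance only if $Q$ ``completes a collision'' of the hidden object --- two queries whose token labels agree --- and as long as no collision is completed, $f$ restricted to $Q$ coincides with the base, and is thus $(1,3,2)$-free (this is what legitimizes the one-sidedness step above); and (iii) the planted violations are \emph{genuinely two-dimensional}: they are invisible inside any single row, column, antidiagonal, or monotone staircase. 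Point (iii) is the crucial design constraint --- for a product, lexicographic, or constant-on-antidiagonal assignment the problem degenerates to one-dimensional $(1,3,2)$-freeness or monotonicity, which is $\polylog n$-testable (this degeneration is exactly the mechanism behind the $\polylog n$ upper bound for monotone patterns), so the construction must avoid it. The adaptive-versus-nonadaptive gap ($\Omega(\sqrt n)$ here versus the companion $\Omega(n)$ lower bound for nonadaptive testers) would be built in by giving the hidden object a two-level ``pointer'' structure: adaptively one can follow a pointer to localize a collision after a birthday-type coincidence among $\sqrt n$ queries, whereas nonadaptively no such localization is possible.

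\textbf{Distance.} For every $f$ in the support I would exhibit explicitly at least $\epsilon n^2$ pairwise point-disjoint $(1,3,2)$-appearances. Any edit that restores $(1,3,2)$-freeness must change at least one point of each of them, so at least $\epsilon n^2$ entries must be altered and $f$ is $\epsilon$-far. The appearances occupy disjoint cells of $[n]^2$, one appearance per cell plus a surrounding buffer, and these buffers together with the three-band separation waste a constant fraction of the grid --- which is why $\epsilon$ need only be a small constant; the stated threshold $4/81$ should come out as (a lower bound on) the density of usable violation cells in the layout.

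\textbf{Indistinguishability and the main obstacle.} Fixing a deterministic adaptive algorithm as a depth-$q$ decision tree with $q=o(\sqrt n)$, I would argue by deferred decisions: maintain the invariant that, conditioned on the answers seen so far, the unrevealed part of the hidden object is (statistically close to) uniform on the remaining tokens. Under this invariant the $t$-th query reveals a token that ``collides'' with one of the $\le t-1$ earlier ones --- the only event that can expose a planted appearance or a pointer to one --- with probability $O(t/n)$; summing over $t\le q$ bounds the exposure probability by $O(q^2/n)=o(1)$. I expect step two to be the main obstacle: engineering a planted structure that is \emph{simultaneously} $\epsilon$-far (forcing $\Omega(n^2)$, hence genuinely two-dimensional, disjoint violations), hidden behind a birthday-type event (which is what yields $\sqrt n$ rather than $\polylog n$), and invisible on every collision-free query set (as required by the one-sidedness reduction). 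Achieving all three at once --- while also ruling out the adaptive probing strategies that separate the $\Omega(\sqrt n)$ adaptive bound from the $\Omega(n)$ nonadaptive one --- is the delicate part; once the construction is fixed, the deferred-decisions argument of step four is comparatively routine.
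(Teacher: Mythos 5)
Your high-level skeleton matches the paper's strategy: use Yao's principle; exploit one-sidedness to reduce to bounding the probability that the queried set contains a $(1,3,2)$-witness; build a hard distribution of $\epsilon$-far functions whose planted appearances are revealed only by a birthday-type collision; and argue indistinguishability by a ``collisions are rare'' union bound. However, you explicitly defer the central construction (``I expect step two to be the main obstacle \dots once the construction is fixed, the deferred-decisions argument \dots is comparatively routine''), and that construction is in fact the entire content of the theorem. As written, the proposal is an accurate roadmap with the destination left blank.

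Here is concretely what the paper does that you would need. It does not build the hard distribution directly on functions; it first defines an \emph{intersection-search} problem on two $\prec$-increasing $3n \times 3n$ arrays $A,B$ with a $\Theta(1)$ fraction of common values, and gives a deterministic, blackbox reduction to $(1,3,2)$-freeness on $[9n]^2$: $A$ is reversed and laid out as tightly-spaced ``up-down'' pairs ($A^r[\veci]\pm 1/4$) in a low corner, $B$ is placed in the opposite corner, and $0$ elsewhere, so that a $(1,3,2)$-appearance exists iff $A[\veci]=B[\vecj]$ for some $\veci,\vecj$. The $\gamma \cdot 9n^2$ common values yield exactly $\gamma\cdot 9n^2$ pairwise disjoint appearances, giving $\epsilon \ge \gamma/9$; this is where the threshold $4/81$ comes from (here $\gamma \ge 4/9$). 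The adaptive lower bound for intersection-search is the step you are missing: fix $A$, and define $B$ by slicing $[3n]^2$ into the $\Theta(n)$ antichains $S_r = \{\veci : i_1+i_2 = r\}$ for $r \in \{n+2,\dots,3n+1\}$, each of size $\Theta(n)$, and on each $S_r$ scramble the values by an \emph{independent} uniformly random permutation $\pi_r$. Since each $S_r$ is an antichain, any ordering of its entries is $\prec$-consistent, so the only way to certify a match is to hit a genuine collision; thus any algorithm, adaptive or not, is solving $\Theta(n)$ \emph{independent} birthday instances of size $\Theta(n)$, and $q_r$ queries to $S_r$ succeed with probability at most $q_r^2/\Theta(n)$. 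Summing, the total success probability is at most $q^2/\Theta(n)$, giving $q = \Omega(\sqrt n)$. Your instinct about ``genuinely two-dimensional'' violations is right, and the antichain slicing is precisely the device that realizes it: a violation pair lives on a single antidiagonal, on which the grid order imposes no constraint, so no sorted-search shortcut exists.

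Two points where your sketch goes astray. First, ``a uniformly random object on $\Theta(n)$ tokens'' is not enough: $\Theta(n)$ tokens give only $\Theta(n)$ disjoint appearances, far short of the $\Theta(n^2)$ needed for $\epsilon$-farness. What is actually needed, and what the paper does, is $\Theta(n)$ \emph{independent} sub-problems each with $\Theta(n)$ tokens. Second, your explanation of the $\Omega(\sqrt n)$-vs-$\Omega(n)$ gap via ``a two-level pointer structure'' that an adaptive algorithm can ``follow'' is not how the paper handles it and would in fact be counterproductive: one does not want to hand the adaptive algorithm a pointer that shortens its search. The real reason the nonadaptive lower bound is $\Omega(n)$ is a \emph{different} hard distribution (a random global shift $\veck \in [n]^2$ plus random $0/1$ tweaks) which is broken adaptively by exploiting the $\prec$-monotonicity of $A,B$ to binary-search for $\veck$; the antichain construction above is the separate distribution introduced \emph{because} it is robust against adaptivity, at the price of only $\Omega(\sqrt n)$ rather than $\Omega(n)$.
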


The above is the first nontrivial adaptive lower bound for pattern freeness that does not directly follow from the lower bounds for monotonicity testing~\cite{Fischer04,CS14}. We mention that our techniques can be generalized to higher-dimensional functions $f \colon [n]^d \to \mathbb{R}$ to show an $\Omega(n^{(d-1)/2})$ lower bound for general $\pi$-freeness of patterns $\pi$ of length $3$. The lower bounds that we obtain are exponentially stronger than the $\Omega(d\log n)$ bounds that directly comes out of monotonicity testing over $f \colon [n]^d \to \R$~\cite{Fischer04,CS14}.

In contrast, for monotone $3$-patterns, i.e., when $\pi = (1,2,3)$, or, equivalently, $(3,2,1)$, we design testers with exponentially better query complexity. Our testers have the additional feature of being nonadaptive. In particular, our result can be seen as a higher dimensional counterpart of the fact that testing monotone patterns of constant length can be done \emph{nonadaptively} using $\polylog n$ queries for real-valued functions over the line $[n]$~\cite{NewmanRRS19,Ben-EliezerCLW19}. However, proving our result requires very different ideas.

\begin{theorem} \label{thm:123_polylog}
	There exists a one-sided error nonadaptive $\epsilon$-tester with query complexity $\log^{O(1)} n$ for $(1,2,3)$-freeness of functions $f \colon [n]^2 \to
	\R$ for constant $\epsilon \in (0,1)$.
\end{theorem}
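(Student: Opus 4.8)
The plan is to prove \Cref{thm:123_polylog} by first pinning down the combinatorial structure of $(1,2,3)$-free functions on $[n]^2$, and then turning that structure into a scale-indexed, nonadaptive search for an increasing chain of length three, using monotonicity and erasure-resilient monotonicity testers on sub-instances as the main subroutines.

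\textbf{Structural characterization.} For $x \in [n]^2$ set $g^-(x) = \min\{f(y) : y \prec x\}$ and $g^+(x) = \max\{f(y) : x \prec y\}$; both are monotone non-increasing on $([n]^2,\prec)$, and $g^-(x) \le f(x) \le g^+(x)$ always. I would first prove that $f$ is $(1,2,3)$-free if and only if every $x$ satisfies $f(x) = g^-(x)$ or $f(x) = g^+(x)$: an increasing chain $x_1 \prec x_2 \prec x_3$ forces $g^-(x_2) \le f(x_1) < f(x_2) < f(x_3) \le g^+(x_2)$, and conversely, if $g^-(x) < f(x) < g^+(x)$, then a minimizer of $f$ over $\{y : y \prec x\}$ and a maximizer over $\{y : x \prec y\}$ are, together with $x$, a $(1,2,3)$-appearance. (This is just the height-$2$ Mirsky decomposition of the order ``$y \to z$ iff $y \prec z$ and $f(y) < f(z)$'': equivalently, $f$ is $(1,2,3)$-free iff $[n]^2$ splits into two classes on each of which $f$ is non-increasing along $\prec$.) Next, writing $M = \{x : g^-(x) < f(x) < g^+(x)\}$, I would show that lowering $f$ to $g^-(x)$ on every $x \in M$ yields a $(1,2,3)$-free function: checking the above characterization for the modified function splits into the three cases $x \in M$, $x \in \{f = g^-\}$, and $x \in \{f = g^+\} \setminus \{f = g^-\}$, and in each case one verifies --- using only that down-sets and up-sets are monotone under $\prec$ and that values were only lowered --- that the new down-minimum or up-maximum at $x$ still equals the new value at $x$. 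Hence the distance of $f$ to $(1,2,3)$-freeness is at most $|M|$, so $f$ being $\epsilon$-far implies $|M| \ge \epsilon n^2$.

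\textbf{Reduction to finding one chain.} It now suffices to locate a single increasing chain $x_1 \prec x_2 \prec x_3$ with $f(x_1) < f(x_2) < f(x_3)$; a one-sided tester rejecting only on seeing such a chain is automatically correct, so the task is to design a nonadaptive query set of size $\polylog n$ that hits such a chain with constant probability whenever $|M| \ge \epsilon n^2$. After replacing values by ranks we may take $f \colon [n]^2 \to [n^2]$. For a threshold $t$, the superlevel set $W_t = \{x : f(x) > t\}$ is query-decidable, and one checks directly that $f$ is $(1,2,3)$-free iff, for every $t$, the restriction of $-f$ to $\widehat W_t := \{x \in W_t : \exists\, y \prec x,\ y \neq x,\ f(y) \le t\}$ is a monotone partial function. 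Each $x \in M$ witnesses non-monotonicity of $-f|_{\widehat W_t}$ for a suitable threshold $t$, and its downward and upward witnesses occur at definite scales $2^a, 2^b$ along the two axes. The tester would therefore lay out, for each of the $O(\log^2 n)$ axis-scale pairs and each of $O(\log n)$ randomly shifted dyadic thresholds, a structured sample: a randomized, multiplicatively-structured collection of anchor points, and inside the $2^a \times 2^b$ boxes hanging below and above each anchor, either a uniform subsample --- when the witnesses of $M$-points are spread out in such a box --- or a call to a monotonicity tester on $-f$ restricted to the box intersected with $W_t$, when they are concentrated. A pigeonhole over the $O(\polylog n)$ (scale-pair, threshold, regime) buckets, using $|M| \ge \epsilon n^2$, would identify a bucket carrying an $\Omega(\epsilon / \polylog n)$ fraction of $M$, whose portion of the query set then contains an increasing chain with constant probability.

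\textbf{Where erasure resilience enters, and the main obstacle.} In the ``concentrated'' subcase we must test monotonicity of $-f$ on a subset of a box --- its intersection with $W_t$ --- while only the whole box is query-decidable, and that subset can miss a constant fraction of the box. This is exactly the regime for which our fully erasure-resilient monotonicity tester over hypergrids is built (it tolerates an erasure rate up to a constant, whereas earlier erasure-resilient monotonicity testers required rate $O(\epsilon/d)$), so I would prove that tester first and invoke it here; a monotonicity tester for genuinely partial functions plays the analogous role when the relevant domain is only partly known. The technical heart, and the main obstacle, is the analysis behind the pigeonhole: on the line, the analogous ``find a monotone pattern'' task is handled by growing-suffixes/blow-up arguments~\cite{Ben-EliezerLW19, NewmanRRS19, Ben-EliezerCLW19}, but in two dimensions the down-set (and up-set) of a point is a two-dimensional box, and the witness placing a point into $M$ may sit anywhere inside it, so one cannot simply ``recurse into a half''. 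Organizing $M$ by axis-scales and by the spread/concentrated dichotomy while keeping the query budget polylogarithmic, and converting each constant-fraction-erased box into a monotonicity instance the erasure-resilient tester can solve, is what I expect to be hard; the structural part above, by contrast, should be short.
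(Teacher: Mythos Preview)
Your structural observations are correct: the set $M=\{x:g^-(x)<f(x)<g^+(x)\}$ of ``middle points'' has $|M|\ge\epsilon n^2$ whenever $f$ is $\epsilon$-far, and the level-set characterization via $\widehat W_t$ holds. But the algorithmic half of the proposal is not a proof, and the obstacles you flag at the end are real and unaddressed. Concretely: (i) $\widehat W_t$ is not query-decidable --- membership requires scanning an entire down-set --- and you silently replace it by $B\cap W_t$ once an anchor below $B$ with value $\le t$ is fixed; but producing such anchors nonadaptively, for the right $t$ and with the right density, is precisely the missing step. (ii) ``$O(\log n)$ randomly shifted dyadic thresholds'' is undefined, and there is no argument that any $\polylog n$ family of thresholds covers a constant fraction of $M$: the usable thresholds for $x\in M$ are the interval $[g^-(x),f(x))$, which varies arbitrarily with $x$. (iii) Applying the ER monotonicity tester to $B\cap W_t$ treats $B\setminus W_t$ as erased, incurring cost $\propto 1/(1-\delta)$ with $\delta=|B\setminus W_t|/|B|$; nothing in your bucketing bounds $\delta$ away from $1$, and the ``spread/concentrated'' dichotomy that is supposed to handle this is never defined. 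As written, the plan locates the difficulty but does not resolve it.

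The paper's proof takes a quite different route and uses neither thresholds nor level sets. It fixes a box-grid $G_m^{(2)}$ with $m=\log^2 n$, takes the large \emph{matching} of pairwise-disjoint $(1,2,3)$-appearances guaranteed by \cref{cor:hamdel3}, and classifies each appearance by the geometric configuration its three legs form in $G_m^{(2)}$: all in one box ($M_0$), all in one row/column but not one box ($M_1$), the $1$- and $2$-legs in distinct rows and columns ($M_2$), or a $\Gamma$/inverted-$\Gamma$ shape ($M_3$). Each of $M_1,M_2,M_3$ is handled by a two-dimensional ``median'' argument that reduces it to a $(1,2)$-freeness test on an explicit sub-rectangle with distance parameter $\widetilde\Omega(\epsilon/m)$ --- this is where the monotonicity tester is actually invoked, with controlled distance and on a known domain, so no uncontrolled erasure fraction arises. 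The remaining case $M_0$ is handled by recursing into $\widetilde O(1/\epsilon)$ random boxes; with $m=\log^2 n$ the recursion depth is $O(\log n)$ and each level costs $\poly(m,1/\epsilon)=\polylog n$.
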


We remark that such a gap between testing nonmonotone and monotone pattern freeness is known for the line $[n]$ when restricted to \emph{nonadaptive testers}. Ben-Eliezer and Canonne~\cite{Ben-EliezerC18} showed that there are nonmonotone patterns $\pi$ of length $k$ such that every nonadaptive $\pi$-freeness tester requires $\Omega(n^{1-(1/(k-1))})$ queries. However, there are nonadaptive testers for all monotone patterns of length $k$ that has query complexity $\Theta((\log  n)^{\log k})$~\cite{Ben-EliezerCLW19}. 

\paragraph{Adaptivity vs.\ nonadaptivity.}
We also complement our lower bound in~\cref{thm:132_a_lb_intro} with efficient sublinear-time $\pi$-freeness testers for general patterns $\pi$ of length $3$. 
Using standard birthday paradox arguments, a simple $\epsilon$-tester 
with query complexity $O(n^{4/3}/\epsilon^{1/3})$ for testing $\pi$-freeness
of patterns $\pi$ of length $3$ of functions $f \colon [n]^2 \to \R$ (see~\cref{obs:32}) can be obtained. This is already a sublinear-query tester as the size of the domain is $n^2$.\footnote{This argument is more general and applies to longer patterns and higher dimensions as long as some additional requirements on the (Hamming) distance to pattern freeness are met. See \cref{sec:ham-del-intro} for more details.}\  Furthermore, the tester has one-sided error and is nonadaptive. However, this tester is only \emph{moderately} sublinear, and in particular, its query complexity is larger than $n$. 

We show that, in general, nonadaptive testers for $\pi$-freeness
of patterns $\pi$ of length $3$ of functions $f \colon [n]^2 \to \R$ cannot do significantly better. 

\begin{theorem} \label{thm:132_na_lb_intro}
	Any one-sided error nonadaptive $\epsilon$-tester for $\pi$-freeness of functions $f: [n]^2 \to \R$, for general $\pi \in \mathcal{S}_3$, has query complexity $\Omega(n)$.
\end{theorem}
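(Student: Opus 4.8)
The plan is to argue via Yao's principle, in the form adapted to one-sided error testers. Recall that a nonadaptive one-sided error tester chooses a query set $Q$ depending only on its internal randomness, and may reject only when the values it observes on $Q$ are inconsistent with \emph{every} $\pi$-free function, i.e.\ only when $\restr{f}{Q}$ has no $\pi$-free extension to $[n]^2$. Hence it suffices to construct a distribution $\mathcal{D}$ over functions $f\colon [n]^2\to\R$ that are $\epsilon$-far from $\pi$-free, such that for every fixed $Q$ with $|Q|=q=o(n)$ the probability over $f\sim\mathcal{D}$ that $\restr{f}{Q}$ has a $\pi$-free extension is $1-o(1)$; averaging over the tester's distribution on $Q$ then shows that every one-sided nonadaptive $\epsilon$-tester accepts some $\epsilon$-far input with probability $>1/3$ unless $q=\Omega(n)$. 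It is enough to treat a single nonmonotone pattern: the monotone patterns $(1,2,3),(3,2,1)$ are excluded by \cref{thm:123_polylog}, and the remaining nonmonotone patterns of $\mathcal{S}_3$ are obtained from $(1,3,2)$ by reversing the pattern (equivalently, reflecting all coordinates of $[n]^2$, which reverses $\prec$) and by complementing values (negating $f$); so I would fix $\pi=(1,3,2)$ throughout.

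For the hard distribution I would start from a ``monotone backbone'' $B\colon[n]^2\to\R$ that is strictly increasing along every chain, and then plant $t=\Theta(n^2)$ pairwise-disjoint, randomly placed ``bumps'': a bump is a comparable pair $a_i\prec b_i$ at which $f$ is perturbed so that $f(a_i)>f(b_i)$ while $f$ agrees with $B$ away from the perturbed points. The randomness of $\mathcal{D}$ lies in the placement of the bumps (morally, a uniformly random ``spread-out'' family of disjoint comparable pairs in a central subgrid). The perturbation should be engineered to guarantee two structural facts: (i) the pairs $(a_i,b_i)$ are the \emph{only} descending pairs of $f$ (the only $u\prec v$ with $f(u)>f(v)$), and, as a consequence of (i) together with strict monotonicity of $B$ on chains, (ii) every $(1,3,2)$-appearance of $f$ has the form $(x,a_i,b_i)$ for some $i$ and some $x\prec a_i$ with $f(x)<f(b_i)$, so that every appearance ``contains'' a complete bump.

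Granting (i) and (ii), the remaining steps are routine. \emph{Farness}: if modifying $f$ on a set $M$ makes it $(1,3,2)$-free and some bump $(a_i,b_i)$ has $a_i,b_i\notin M$, then by (ii) every $x\prec a_i$ with $f(x)<f(b_i)$ lies in $M$; placing the bumps in a central region makes this $\Omega(n^2)$ points, so either $|M|=\Omega(n^2)$, or every bump meets $M$ and then $|M|\ge t=\Omega(n^2)$ since the bumps are disjoint --- either way $f$ is $\epsilon$-far for an appropriate constant $\epsilon$. \emph{Indistinguishability}: spread-out placement gives, for every fixed $Q$ with $|Q|=q$, that the probability over $f\sim\mathcal{D}$ that $Q$ contains both endpoints of some bump is at most $\binom{q}{2}\cdot O(1/n^2)=O(q^2/n^2)=o(1)$; conditioned on the complementary event, (i) says $\restr{f}{Q}$ has no descending pair, and then $\restr{f}{Q}$ extends to a $(1,3,2)$-free function by setting every point of $[n]^2\setminus Q$ to one common value below $\min_{x\in Q} f(x)$ --- in any $(1,3,2)$-appearance $x_1\prec x_2\prec x_3$ of the result the middle-valued vertex $x_3$ and the maximum-valued vertex $x_2$ must lie in $Q$ (neither can take the minimum value), but $x_2\prec x_3$ with $f(x_2)>f(x_3)$ would be a descending pair inside $Q$, a contradiction.

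The main obstacle is the construction of the bumps so that (i) holds while the placement is ``spread out'' enough for the $O(q^2/n^2)$ bound and numerous enough ($\Theta(n^2)$, robustly completable) for farness; these requirements pull in opposite directions. The birthday bound needs that any \emph{fixed} comparable pair is a bump with probability only $O(1/n^2)$, which pushes toward long-range, shape-varied bumps placed by an essentially uniform random matching (not a rigid row/column structure, which would be defeated by a query set concentrated on one line and give only $\Omega(\sqrt n)$); but lowering $f(b_i)$ at a long-range bump threatens to create unwanted descents between $b_i$ and the many points lying between $a_i$ and $b_i$ or sharing a cover with $b_i$, which pushes toward short, rigidly organized bumps. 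I would resolve this with a two-scale construction --- a coarse grid on which a scaled backbone $B$ dominates every comparison, and inside each coarse cell a carefully structured batch of disjoint descending pairs that interact neither with one another nor across cells, carrying enough internal randomization to defeat attacks that concentrate queries in a few cells or lines --- and then verify that the assembled function has no descents other than the planted ones and that the planted family is spread out in the required sense. Everything else is the bookkeeping sketched above.
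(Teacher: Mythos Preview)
Your plan differs from the paper's approach (which reduces from a two-array ``intersection search'' problem and then lower-bounds that problem via a random-shift birthday argument), and unfortunately it has a genuine gap at exactly the point you flag as the ``main obstacle''.

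The problem is that claim (i) --- that the planted bumps $(a_i,b_i)$ are the \emph{only} descending pairs of $f$ --- together with your stipulation that $f$ agrees with the monotone backbone $B$ away from bump endpoints, forces every bump to be a \emph{cover relation} in $[n]^2$. Indeed, if $a_i\prec c\prec b_i$ for some $c$ that is not a bump endpoint, then $f(c)=B(c)$, and since $B(a_i)<B(c)<B(b_i)$ and $f(a_i)>f(b_i)$, at least one of $(a_i,c)$ or $(c,b_i)$ must be a descent that is not a bump, contradicting (i). But the cover relations in $[n]^2$ are exactly the pairs differing by a single unit step, and there are only $\Theta(n^2)$ of them. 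Placing $\Theta(n^2)$ disjoint bumps among $\Theta(n^2)$ covers means each fixed cover pair is a bump with probability $\Theta(1)$, not $O(1/n^2)$; a query set $Q$ of size $q$ already contains $\Theta(q)$ cover pairs, so the probability it captures a bump is $\Theta(q)$ and you get no nontrivial lower bound. Your two-scale sketch does not escape this: within each coarse cell the same argument applies, and ``internal randomization'' cannot make adjacent pairs look like they are $O(1/n^2)$-rare.

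The paper sidesteps this entirely by \emph{not} insisting that planted pairs be the only descents. Its hard function has a whole region that is essentially decreasing (so descents are abundant), but it is engineered so that every $(1,3,2)$-appearance uses one specific adjacent pair from that region together with a point in a separate monotone region whose values are randomly aligned with the first region via an unknown shift in $[n]^2$. A nonadaptive tester must then effectively guess the shift, which is what yields $\Omega(n)$. If you want to repair your direct approach, you would need to abandon (i) and instead control the structure of $(1,3,2)$-appearances while allowing many descents --- at which point you are close to re-deriving the paper's construction.
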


Our lower bound technique can also yield a query complexity lower bound of $\Omega(n^{d-1})$ on the nonadaptive query complexity of testing $\pi$-freeness of functions $f \colon [n]^d \to \R$ for general $\pi \in \mathcal{S}_3$.

We also design $o(n)$-query adaptive testers for the same problem proving that adaptivity can and does significantly help $\pi$-freeness testing. 

\begin{theorem}\label{thm:3-pat-general}
	Let $\pi \in \mathcal{S}_3$. There exists a one-sided error
	$\epsilon$-tester with query complexity $O(n^{4/5 + o(1)})$ for $\pi$-freeness of functions $f \colon [n]^2 \to
	\R$ for constant $\epsilon \in (0,1)$.
\end{theorem}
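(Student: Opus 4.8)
\emph{Proof approach.}
We first reduce to a single pattern. The monotone patterns $(1,2,3)$ and $(3,2,1)$ are handled directly by \cref{thm:123_polylog}, which \emph{a fortiori} gives an adaptive tester of the required complexity; these two are in any case interchanged by the value-complementation $f\mapsto -f$. Among the four non-monotone patterns in $\mathcal{S}_3$, value-complementation sends a $(1,3,2)$-appearance to a $(3,1,2)$-appearance and a $(2,1,3)$-appearance to a $(2,3,1)$-appearance, while reversing both coordinates, $x\mapsto (n+1-x[1],\,n+1-x[2])$, reverses the partial order $\prec$ and turns a $(1,3,2)$-appearance into a $(2,3,1)$-appearance. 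Composing these maps connects all four non-monotone patterns, and since each map is a distance-preserving bijection on functions that exchanges $\pi$-freeness with $\sigma$-freeness for the corresponding $\sigma$, it suffices to build a one-sided-error adaptive $O(n^{4/5+o(1)})$-query tester for $(1,3,2)$-freeness, where a $(1,3,2)$-appearance is a triple $x_1\prec x_2\prec x_3$ with $f(x_1)<f(x_3)<f(x_2)$.

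The engine of the algorithm is a reformulation isolating the smallest-valued point. For $y\in[n]^2$ let $C_y=\{x: y\prec x,\ x\neq y\}$ be the strict upper cone of $y$ (a sub-hypergrid with one corner removed, which we treat as one extra erasure), and let $C_y^{+}=\{x\in C_y: f(x)>f(y)\}$. Then $f$ has a $(1,3,2)$-appearance with $x_1=y$ \emph{if and only if} the restriction of $f$ to $C_y$, with the points of $C_y\setminus C_y^{+}$ declared to be \emph{erasures}, is not monotone: a monotonicity violation $a\prec b$ with $f(a)>f(b)$ and $a,b\in C_y^{+}$ is exactly the appearance $(y,a,b)$, and conversely. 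This suggests the core subroutine: sample candidate bottom points $y$, and on each run an erasure-resilient monotonicity tester for hypergrids on $C_y$ with the above erasure set; by one-sidedness we reject only when an explicit violation, hence an explicit $(1,3,2)$-appearance, is produced and verified from the at most three queried values. The point is that the erasure fraction $\delta_y=1-|C_y^{+}|/|C_y|$ can be arbitrary in $[0,1)$ — tiny when $f(y)$ is near the bottom of its cone, close to $1$ when it is near the top — so the classical erasure-resilient testers requiring $\delta=O(\epsilon/d)$ do not suffice, and we must invoke the $\delta$-erasure-resilient monotonicity tester of complexity $O\!\left(\log^{O(d)}n/(\epsilon'(1-\delta))\right)$ announced earlier.

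This subroutine is efficient precisely when the distance to $(1,3,2)$-freeness is \emph{concentrated}. Fix a threshold $\tau\approx n^{-2/5}$ and call $y$ \emph{rich} if $|C_y^{+}|\geq\tau|C_y|$ and $f|_{C_y^{+}}$ is $\Omega_\epsilon(1)$-far from monotone. If there are at least $\tau n^2$ rich points, then $\widetilde O(1/\tau)$ uniform samples hit one with high probability, and the erasure-resilient tester on its cone — run with erasure rate $1-\delta\geq\tau$ and distance parameter $\Omega_\epsilon(1)$ — costs $\widetilde O(1/\tau)=\widetilde O(n^{4/5})$ and exhibits a violation. The heart of the proof is the complementary structural statement: if $f$ is $\epsilon$-far from $(1,3,2)$-free and yet fewer than $\tau n^2$ points are rich, then (by a covering/charging argument showing that any $(1,3,2)$-free restriction must delete $\geq\epsilon n^2$ points, one of which always lies in the repair set of some bottom's cone) the appearances of $f$ — equivalently the extendable $(2,1)$-pairs together with their small witnesses below — are forced to be spread thinly and in a structured way across the grid. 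A second, adaptive subroutine harvests this structure: it subsamples down to an effective witness set of size $\approx n^2/\poly(1/\tau)$ and performs adaptive local search above sampled points, again at cost $\widetilde O(n^{4/5})$. Running both subroutines in parallel (and, since we do not know in advance which case holds, over the $O(\log n)$ relevant scales $\tau$) yields the claimed $O(n^{4/5+o(1)})$ bound, with the $n^{o(1)}$ absorbing the $\log^{O(1)}n$ factors from the erasure-resilient tester and the scale union.

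The main obstacle is the complementary structural case: one must show that the failure of the ``rich bottom'' condition genuinely confines the appearances to a sparsified, locally searchable structure of the right size, and one must track all parameters ($\tau$, the sub-box fineness, the local-search budget, the per-cone erasure rates) so that the two strategies balance at exponent exactly $4/5$ rather than something larger. In particular, correctly handling the regime where $C_y^{+}$ is simultaneously small and far from monotone — which defeats a naive application of the erasure-resilient tester, since locating its few non-erased points could cost nearly $|C_y|$ queries — is the technical crux, and is exactly the situation that must be routed to the second subroutine rather than the first.
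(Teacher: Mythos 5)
Your reduction of all non‑monotone $3$‑patterns to $(1,3,2)$ via value‑complementation and domain‑reversal is correct and matches the paper's remark that $(1,3,2)$ is the only non‑monotone $\mathcal{S}_3$ pattern up to order‑isomorphism. Your cone reformulation is also correct: fixing the bottom index $y$, a $(1,3,2)$‑appearance with $1$‑leg $y$ is precisely a monotonicity violation in $f$ restricted to the upper cone $C_y^{+}=\{x\succ y: f(x)>f(y)\}$, and treating $C_y\setminus C_y^{+}$ as erasures is legitimate (one can simulate an erasure by ignoring a queried point with value $\leq f(y)$). This is a genuinely different route from the paper, which instead builds the $3$‑dimensional graph‑grid $[n]^2\times R(f)$, applies the \textsf{Layering}/\textsf{Gridding} machinery, invokes the Marcus–Tardos generalization of \cite{JNO21} to force $O(m^2)$ nonempty boxes, does a case analysis on the configuration of the three legs among the boxes, and closes with a recursion (with the gridding parameter $m$ recalibrated at each level) to reach $n^{4/5+o(1)}$.

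However, your proposal is not a complete proof. The entire argument for the complementary regime — ``fewer than $\tau n^2$ rich points'' — is a placeholder: the promised ``covering/charging argument,'' the claimed structural consequence that appearances are ``spread thinly and in a structured way,'' and the ``second, adaptive subroutine'' that ``harvests this structure'' are all left unspecified, and you explicitly flag this as the main obstacle. That case is exactly the hard heart of the theorem: $\epsilon$‑farness from $(1,3,2)$‑freeness guarantees only a global matching of $\Omega(\epsilon n^2)$ disjoint appearances (\cref{obs:matching}), and there is no evident reason that the bottom legs of this matching spread into cones that are individually $\Omega_\epsilon(1)$‑far from monotone with $|C_y^+|\geq\tau|C_y|$. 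Appearances can easily concentrate so that the $(2,1)$‑pairs above any single $y$ are a tiny fraction of $C_y^{+}$; indeed the paper's own problematic case (all three legs in one $z$‑layer) is roughly the configuration your cone test handles worst, and the paper needs a separate recursive mechanism precisely for it. Without a proved dichotomy lemma and a concrete, analyzed second subroutine, the claimed $n^{4/5+o(1)}$ exponent is not established. (Minor additional slips: $\widetilde{O}(1/\tau)$ with $\tau\approx n^{-2/5}$ is $\widetilde{O}(n^{2/5})$, not $\widetilde{O}(n^{4/5})$ — the $n^{4/5}$ should come from the product of the $\widetilde{O}(1/\tau)$ samples with the $\widetilde{O}(1/\tau)$ cost of each ER‑tester run — and when you sample $y$ you cannot tell whether it is rich, so you must budget the ER call on every sample, which is consistent with the corrected accounting.)
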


Our result is in contrast to the case of monotonicity, where adaptivity is known to not help, in general, even over hypergrid domains~\cite{CS14}.
Moreover, these are the first nontrivial sublinear-query $\pi$-freeness testers for functions
$f \colon [n]^d \to \R$ and $\pi \in \mathcal{S}_3$ for $d > 1$.

\paragraph{Large vs.\ small order patterns.} For patterns of length $4$ or more, we provide evidence that the
current techniques cannot give us
sublinear-query testers for dimension larger than $1$. The reason is that all known $\pi$-freeness testers exploit the fact that functions that are far from being $\pi$-free have many disjoint $\pi$-appearances. We show in \cref{sec:ham-del-intro} that such a statement is not true in general for patterns of length $4$ or more over hypergrids. This is in stark contrast to the case of functions defined on the line, where this fact is crucially used to obtain sublinear-query testers for all constant-length patterns~\cite{NV22}.

\subsubsection*{Erasure-resilient monotonicity testing}

The \emph{erasure-resilient model} (henceforth referred to as the ER model), introduced by Dixit, Raskhodnikova, Thakurta, and Varma~\cite{DixitRTV18}, is the setting where some of the function values are \enquote{erased} by an adversary and cannot be
used by the algorithm. An ER algorithm gets an upper bound $\delta \in
(0,1)$ on the fraction of erased values. The actual erased points are
unknown and the algorithm learns whether the value at a specific domain point has been erased by the adversary only when that point is queried. Given parameters $\delta \in (0,1)$ and $\epsilon \in (0,1)$, a $\delta$-ER $\epsilon$-tester
for a property $\mathcal{P}$ of functions $\{f \colon D \to \R\}$ gets oracle access to a function $f$
with at most $\delta$-fraction of adversarially erased values in unknown places.
The tester must decide whether $f$ can be
completed to a function in $\mathcal{P}$  or whether every completion of $f$ 
must be changed in at least an $\epsilon$-fraction of the nonerased
domain in order to belong to $\mathcal{P}$.\footnote{A completion of a partially erased function is a total function agreeing with it on the nonerased points.}

Erasure-resilient monotonicity testers are crucial components of our pattern freeness testers. Dixit et al.~\cite{DixitRTV18} designed optimal $\delta$-ER
$\epsilon$-testers for monotonicity of functions $f \colon [n]^d \to
\R$ with complexity $O\left(\frac{d\log
	n}{\epsilon(1-\delta)}\right)$, {\em but}  these testers only work when the fraction of
erasures $\delta$ is $O(\epsilon/d)$. 
However, Pallavoor, Raskhodnikova, and Waingarten~\cite{PallavoorRW22}  proved that every nonadaptive $\delta$-ER $\epsilon$-tester for monotonicity of real-valued functions $f
\colon\{0,1\}^d \to \R$ on the Boolean hypercube must have query complexity $2^{d^\kappa}$ even
when $\delta = \Theta\left(\frac{\epsilon}{d^{0.5 - \kappa}}\right)$,
where $\kappa \in (0, 0.5)$.
We design
optimal ER-testers for functions $f \colon [n]^d \to \R$ that work for any fraction of erasures.

\begin{theorem}\label{thm:mon-ER}
	There exists a $\delta$-ER $\epsilon$-tester for monotonicity of functions $f \colon [n]^d \to \R$ with query complexity $O\left(\frac{\log^{O(d)} n}{\epsilon (1-\delta)}\cdot \log^2 \frac{1}{\epsilon}\right)$, for any $0 < \delta, \epsilon < 1$.
\end{theorem}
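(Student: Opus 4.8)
The plan is to reduce the erasure-resilient problem to testing monotonicity of an auxiliary \emph{completed} function, and handle the high erasure rate via a sampling/dimension-reduction argument rather than the union-bound-over-erasures approach of Dixit et al. First I would recall the standard "works in one step" fact for monotonicity over hypergrids: there is a nonadaptive $\epsilon$-tester with query complexity $O\!\left(\frac{\log^{O(d)} n}{\epsilon}\right)$ (e.g.\ the dimension-reduction tester of Chakrabarty--Seshadhri and follow-ups) that, given query access to a total function $g \colon [n]^d \to \R$, finds a violated pair $x \prec y$, $g(x) > g(y)$, with probability $\Omega(1)$ whenever $g$ is $\epsilon$-far from monotone. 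The key structural observation is that such testers query points drawn from a distribution that is, up to $\polylog n$ factors per coordinate, close to uniform on lines/axis-parallel paths; so if only a $\delta$-fraction of the domain is erased, then with the appropriate oversampling factor $\frac{1}{1-\delta}$ we can still realize each intended query at a \emph{nonerased} point in the same "fiber", at the cost of an extra $O\!\left(\frac{1}{1-\delta}\right)$ multiplicative overhead and an extra $\log^2\frac1\epsilon$ coming from amplification across the $O(\log\frac1\epsilon)$ "levels" of the dimension reduction.

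The main steps, in order, are: (1) Fix the set $E$ of erased points, $|E| \le \delta n^d$. Show that if every completion of $f$ is $\epsilon$-far from monotone on the nonerased part, then for a uniformly random axis-parallel line $\ell$ (in a uniformly random direction $i \in [d]$), the restriction $f|_\ell$ restricted to its nonerased points is, in expectation, $\Omega(\epsilon/\log^{O(d)} n)$-far from monotone — this is the "dimension reduction survives erasures" lemma and is the technical heart. (2) On a single such line, run a one-dimensional ER monotonicity tester: here the clean fact (already essentially in Dixit et al.\ for $d=1$, or provable directly) is that $1$-D ER monotonicity testing costs $O\!\left(\frac{\log n}{\epsilon(1-\delta)}\right)$ for \emph{any} $\delta < 1$, because on the line one can binary-search past erased blocks. (3) Combine: sample $O\!\left(\frac{\log^{O(d)} n}{\epsilon(1-\delta)}\log^2\frac1\epsilon\right)$ lines with the right profile, run the $1$-D subroutine on each, reject iff any subroutine finds a violating pair; accept a monotone (completable) $f$ with probability $1$ since no violating pair exists in any completion, giving one-sided error. (4) Bookkeeping to get the stated exponent $\log^{O(d)} n$ and the $\log^2\frac1\epsilon$ factor.

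The hard part will be step (1): proving that adversarial erasures, which the adversary may concentrate precisely on the lines where $f$'s monotonicity violations live, cannot destroy the dimension-reduction guarantee. The natural fix is to argue about the \emph{completion} rather than $f$ itself — since every completion is $\epsilon$-far, pick the "worst" completion $g^\star$; $g^\star$ has $\ge \epsilon n^d / 2$ disjoint violated pairs by the standard matching/alternating-path argument; each such pair lies in some axis-parallel line; by a counting argument the violated pairs spread over $\Omega(\epsilon/\polylog n)$-fraction of lines in some direction; and crucially one endpoint of each violated pair can be taken to be nonerased — or, if both endpoints are forced to be erased, re-choose the completion to "route around" it. Making this last re-routing argument precise, and controlling how erasures interact with the recursive halving of the grid in the Chakrabarty--Seshadhri-style analysis, is where the real work lies; I would expect to need a careful potential/charging argument showing that at each of the $O(\log n)$ recursion levels the surviving nonerased violated pairs still constitute a $\frac{1}{1-\delta}$-diluted version of the erasure-free case.

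Once step (1) is in hand, steps (2)--(4) are routine: the $1$-D subroutine and the line-sampling are standard, and the error probability is driven to a constant by the $\log^2\frac1\epsilon$ repetition factor together with a union bound over the $O(\log n)$ levels per direction and the $d$ directions, all of which is absorbed into the $\log^{O(d)} n$ term.
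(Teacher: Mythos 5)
The paper's proof takes a route quite different from yours, and your approach has a gap that appears to be fatal. The paper does \emph{not} perform dimension reduction to axis-parallel lines. Instead, it fixes a matching $M$ of $(1,2)$-violations among the nonerased points (such a matching of size $\Omega(\epsilon |P|)$ exists directly because Hamming and deletion distance coincide for $(2,1)$-freeness, Corollary~\ref{cor:hamdel3} -- so your worry about ``both endpoints forced to be erased'' and a re-routing argument never arises), buckets $M$ by the tuple of coordinate-wise dyadic scales $(\ell_1,\dots,\ell_d) \in \{0,\dots,\log_2 n\}^d$ at which the two endpoints separate, and for each of the $(\log_2 n + 1)^d$ buckets samples boxes at that scale and applies a median argument within a box and its diagonally adjacent box (Claim~\ref{claim:median_argument}). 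The $\log^2(1/\epsilon)$ factor comes from a geometric weighting across $O(\log(1/\epsilon))$ rounds in the sampler (Claim~\ref{claim:improved}), and erasure-resilience costs only a $(1-\delta)^{-1}$ factor because the matching already lives in $P$. There is no $1$-D subroutine, no line sampling, and crucially no per-dimension decomposition.

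The gap in your step (1) is twofold. First, the sub-step ``each such pair lies in some axis-parallel line'' is simply false: a violated pair $x \prec y$ with $g^\star(x) > g^\star(y)$ may differ in all $d$ coordinates, and relating $d$-dimensional monotonicity distance to $1$-D distances on lines requires a nontrivial shifting/routing argument, not a counting argument. Second, and more decisively, any ``dimension reduction survives erasures'' lemma of the strength you propose cannot be true: if a random axis-parallel line were $\Omega(\epsilon/\polylog n)$-far on its nonerased part whenever $f$ is ER-$\epsilon$-far, you would obtain a nonadaptive ER tester with $\poly(d)\cdot \polylog(n)/(\epsilon(1-\delta))$ queries, and on $\{0,1\}^d$ with $\delta = \Theta(\epsilon/d^{0.5-\kappa})$ this is polynomial in $d$ -- directly contradicting the Pallavoor--Raskhodnikova--Waingarten lower bound of $2^{d^\kappa}$ that the paper itself invokes. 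The paper's $\log^{O(d)} n$ (read: $(\log_2 n + 1)^{O(d)}$, which is $2^{\Theta(d)}$ on the hypercube) is not a lazy bound: the exponential-in-$d$ cost of enumerating all scale buckets is what lets the argument avoid any per-dimension reduction, and it is exactly what the lower bound says is unavoidable in your regime of $\delta$.
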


\subsubsection*{Monotonicity testing of partial functions}

Testing properties of partial functions is a problem closely related to erasure-resilient testing. A {\em partial} function is a function $f \colon P \to \R$, for a given (and known) subset $P \subseteq [n]^d$. For the line,
testing pattern freeness of partial functions is identical to
testing total functions (for any pattern), as the grid order induced on
$P$ is a total order (that is, identical to the line order). However,
for higher-dimensional grids ($d \geq 2$), the order induced on $P$ is
not identical to the grid order. In particular, while any tester for
total functions on the line can be applied to test properties of partial
functions with the same complexity (w.r.t.\ the domain size $\Abs{P}$),
this is no longer true for hypergrids. We design monotonicity
testers for partial functions on the $d$-dimensional grid with
query complexity $O(\log^{O(d)} n)$, which is independent of the size of $P$.
We emphasize that this result does not directly follow by treating the points in $[n]^d \setminus P$ as erased, and then using the erasure-resilient tester as a blackbox.

\begin{table}[htbp]
	\caption{Summary of the relevant results in pattern freeness testing}
	\label{table:results}
	\centering
	
	\renewcommand{\arraystretch}{1.2}
	\setlength{\tabcolsep}{4pt}
	\setlength{\arrayrulewidth}{0.6pt}
	\arrayrulecolor{black}
	
	\resizebox{0.85\textwidth}{!}{%
		\begin{tabular}{
				|>{\centering\arraybackslash}p{1.6cm}
				|>{\centering\arraybackslash}p{2.2cm}
				|>{\centering\arraybackslash}p{2.6cm}
				|>{\centering\arraybackslash}p{2.6cm}
				|>{\centering\arraybackslash}p{2.6cm}
				|>{\centering\arraybackslash}p{2.6cm}|
			}
			\hline
			\shortstack{\rule{0pt}{1.5em}Pattern\\length} &
			\shortstack{\rule{0pt}{1.5em}Pattern\\type} &
			\multicolumn{2}{c|}{\shortstack{$d = 1$\\(Line)}} &
			\multicolumn{2}{c|}{\shortstack{$d = 2$\\(Hypergrid)}} \\
			\cline{3-6}
			{} & {} & Nonadaptive & Adaptive & Nonadaptive & Adaptive \\
			\hline
			
			\rule{0pt}{3.5em}\shortstack{$k = 2$} &
			\multicolumn{5}{c|}{%
				\raisebox{0pt}[2.5em][2.5em]{%
					\begin{tabular}{@{}c@{}}
						$\Theta(\log n)$ \\[3pt]
						\cite{EKK+00,Fischer04,CS13a,CS14}
					\end{tabular}
				}%
			} \\ \hline
			
			\multirow{2}{*}{\rule{0pt}{5em}\shortstack{$k = 3$}} &
			\shortstack{Monotone\\patterns} &
			\raisebox{0pt}[2.7em][2.7em]{%
				\begin{tabular}{@{}c@{}}
					$\Theta(\log n)$ \\[3pt]
					\cite{Ben-EliezerCLW19}
				\end{tabular}
			} &
			\raisebox{0pt}[2.7em][2.7em]{%
				\begin{tabular}{@{}c@{}}
					$\Theta(\log n)$ \\[3pt]
					\cite{Ben-EliezerLW19}
				\end{tabular}
			} &
			\cellcolor{mygreen}\raisebox{0pt}[2.9em][2.9em]{%
				\begin{tabular}{@{}c@{}}
					$\operatorname{polylog}(n)$ \\[3pt]
					(\cref{thm:123_polylog})
				\end{tabular}
			} &
			\raisebox{0pt}[2.7em][2.7em]{\textbf{?}} \\
			\cline{2-6}
			
			& \shortstack{Nonmonotone\\patterns} &
			\raisebox{0pt}[3.8em][3.8em]{%
				\begin{tabular}{@{}c@{}}
					$\Theta(\sqrt{n})$ \\[3pt]
					\cite{Ben-EliezerC18, NewmanRRS19}
				\end{tabular}
			} &
			\raisebox{0pt}[3.8em][3.8em]{%
				\begin{tabular}{@{}c@{}}
					$\operatorname{polylog}(n)$ \\[3pt]
					\cite{NewmanRRS19}
				\end{tabular}
			} &
			\cellcolor{mygreen}\raisebox{0pt}[4.6em][4.6em]{%
				\begin{tabular}{@{}c@{}}
					$O(n^{4/3})$ \\[3pt]
					(\cref{obs:32}), \\[3pt]
					$\Omega(n)$ \\[3pt]
					(\cref{thm:132_na_lb_intro})
				\end{tabular}
			} &
			\cellcolor{mygreen}\raisebox{0pt}[4.6em][4.6em]{%
				\begin{tabular}{@{}c@{}}
					$O(n^{4/5 + o(1)})$ \\[3pt]
					(\cref{thm:3-pat-general}), \\[3pt]
					$\Omega(\sqrt{n})$ \\[3pt]
					(\cref{thm:132_a_lb_intro})
				\end{tabular}
			} \\ \hline
		\end{tabular}%
	}
\end{table}

\subsection{Related work}

The study of testing forbidden order-patterns  on real-valued arrays (i.e.,
real-valued functions on the line) was
initiated by Newman, Rabinovich, Rajendraprasad, and
Sohler~\cite{NewmanRRS19}. For monotone patterns $\pi$ of length $k$,
they designed a nonadaptive one-sided error $\epsilon$-tester for
$\pi$-freeness with query complexity ${({\epsilon}^{-1} \log
	n)}^{O(k^2)}$. For nonmonotone patterns, they gave an
$\Omega\left(n^{1 - \frac{2}{k+1}}\right)$ lower bound and an $O\left(n^{1 - \frac{1}{k}}\right)$ upper bound
in the nonadaptive setting, and an adaptive polylog-query tester for
$3$-patterns. Ben-Eliezer and
Canonne~\cite{Ben-EliezerC18} improved the bounds for nonadaptive
testing of nonmonotone
$k$-patterns to the essentially tight $\Theta\left({{n}^{1 - \frac{1}{k - 1}}}\right)$.
Subsequently,  the complexity of testing monotone patterns was improved in a
sequence of works~\cite{Ben-EliezerCLW19,Ben-EliezerLW19}, to
$O_{k,\epsilon}(\log n)$, which is optimal for constant $\epsilon$ even for the special case of testing $(2,1)$-freeness~\cite{Fischer04}. For general $k$-patterns $\pi$, Newman and Varma~\cite{NV22}
designed a ${\widetilde{O}}\left(n^{o(1)}\right)$-query adaptive
one-sided error $\epsilon$-tester for $\pi$-freeness.
ER-testing of general patterns on the line has not been considered per se, but the results in~\cite{NV22} can be made ER.

Yet another related direction is pattern freeness testing of functions over $[n]^d$ {\em with a constant-sized range}. Fischer and Newman \cite{FischerN07} designed a one-sided error $\epsilon$-tester with query complexity that depends only on $d$ and $\epsilon$, for arbitrary (including non-permutation) patterns of constant length. We mention that the dependence of the query complexity on $d$ and $\epsilon$ can even be doubly exponential in some cases. 

Lastly, we note that there is also a large amount of literature devoted to studying monotonicity ($(2,1)$-freeness) of Boolean
functions and real-valued functions over high-dimensional
domains~\cite{GGLRS00,DGLRRS99,FLNRRS02,CS16,CST14,CDST15,CWX17,KMS18,BCS18,PRV18,BCS20,BKR23,BKKM23,BCS23a,BCS23b},
but these are less relevant to the current work.

\subsection{Technical ideas and challenges}

All testing algorithms for $\pi$-freeness where $\pi$ is a permutation pattern,
i.e.,  $\pi \in \mathcal{S}_k$ where $\mathcal{S}_k$ is the symmetric
group on $k$ elements (including the restricted case of $\pi \in \mathcal{S}_2$ (i.e., monotonicity))
are crucially based on two ingredients.

The first ingredient is a relation between two different distances, the \emph{Hamming distance} and the \emph{deletion distance} (\cref{def:hamming}). All testers in the literature are designed to
distinguish between functions that are $\pi$-free (i.e., satisfy the property), and functions that are
far from being $\pi$-free with respect to the {\em deletion distance}.
However,  the standard definition of testing is with respect to the
Hamming distance. Fortunately, for the properties of $\pi$-freeness
on the line, and for monotonicity (in any dimension), these
two 
distances are equal. We show that for $d \geq 2$, this is not the
case in general, even for some $4$-patterns.  However, for $3$- and $2$-patterns, 
this equivalence still holds. This is partially the
reason why our results are limited to patterns of size $2$ and $3$.

The second ingredient for testing larger patterns is to perform some
sort of reduction to testing smaller patterns on partial domains, i.e., testing properties of partial functions. This is done
extensively in
\cite{NewmanRRS19,Ben-EliezerCLW19,Ben-EliezerLW19,NV22}. For a subdomain $P \subseteq [n]$ of the line, the order induced on $P$ is a
total order isomorphic to the order on a (smaller) line. Thus, when the domain is a line, testing partial functions is the same as testing total functions. However, for
$P \subseteq [n]^d, ~ d\geq 2$, the order on $P$ {\em is not}, in
general, isomorphic to a grid order, making the reduction more involved. Testing of partial
functions is a different problem in nature, which is of independent interest. 
When the density of the subdomain $P$ (i.e., $\Abs{P}/n^d$) is
a large constant, it may be regarded as an instance of
ER-testing. However, we use such reductions extensively
for low-density sets. Hence, we were required to develop an ER-monotonicity tester for partial
functions, and for (some special) partial functions for $3$-patterns. 
This, in turn, introduces some new challenges and techniques. We now describe
these in more detail for the $2$-dimensional case,
i.e., for functions $f:[n]^2 \to \R$.

\paragraph{Monotonicity testing.} 
For testing monotonicity of partial functions, we build on the idea of
~\cite{NewmanRRS19} which is referred to as \enquote{median-split and
	concatenate} (or the \emph{median argument} for short). The idea for the
line is as follows: suppose that a function $f$ is $\epsilon$-far from
$(1,2)$-freeness (that is, far from being monotone decreasing). The
median argument consists of sampling two disjoint intervals $L$ and $R$
on the line, where $L$ is on the `left' of $R$, and such that: (1) $f$
contains a large set of pairwise disjoint $(1,2)$-appearances  $A = \{(p_1, p_2) \in
L \times R, ~ f(p_1) < f(p_2) \}$, and (2)  one can sample a point $x$
such that for a large number of appearances $(p_1,p_2)$ in $A$, $f(p_1) <
f(x)$, and $f(x) < f(p_2)$.
Then, such a sampled point $x$ can be used to find a $(1,2)$-appearance of the type described above as follows:
sample a point $p \in L$, and a point $q \in R$, uniformly and independently at random. By (1) and (2), with high probability, $p$ and $q$ are the left and right coordinates of  some (not
necessarily the same) $(1,2)$-appearance in $A$, where $f(p) < f(x) < f(q)$. This is how a $(1,2)$-appearance $(p,q)$ is found.

Directly applying this argument for monotonicity on $[n]^d, ~ d\geq 2$
raises two difficulties.
Firstly, 
the probability of success (even for the line) is proportional to
$\epsilon^2$, which will result in a nonoptimal dependence on the
distance even for the $1$-dimensional case. Secondly, for partial functions
and in the ER setting, we must make sure that the value of the function
$f(x)$ for the `median point' $x$ described above is defined and not erased. We further note that for the argument above, one does not
need to know $x$, but merely assure its existence (w.h.p.). 
We show
that the median argument can be generalized to subgrids of $[n]^d$,
which will play the roles of the intervals $L$ and $R$ above. 
This can be done even for partial functions (regardless of the
density $\Abs{P}/n^d$), and in the presence
of any fraction of erasures (i.e., an ER-test).
To be able to use  
this in our tests for $3$-patterns, we require, and obtain, some extra guarantee from the
`median point' $x$ -- namely, that it is somewhat `typical' (see \cref{rem:needed-main}).

\paragraph{Testing $(1,2,3)$-freeness.} Testing $(1,2,3)$-freeness (and larger
monotone patterns) on the line was done in~\cite{NewmanRRS19}  by extending the
median argument as follows. 
It is shown that for two suitably chosen intervals 
$L$ and $R$ as above, 
there are many $(1,2,3)$-appearances that induce $(1,2)$-appearances in
$L$, and a corresponding $3$-appearance in $R$. 
Then, by the median argument, one can
find a suitable $(1,2)$-appearance in $L$, and complete it to a
$(1,2,3)$-appearance in $L \cup R$ by sampling a point in $R$. However, this is true only for specific ranges of distances
between points forming $(1,2,3)$-appearances\footnote{In later
	works, much more careful and elaborate partitions were used. But,
	these are also sensitive to the distances between the points forming the appearance.} and
cannot be directly generalized to $[n]^2$.

Instead,
we partition $[n]^2$ into $m^2$ blocks of $\left(\frac{n}{m}\right)^2$ size for a certain parameter $m$. Now, the
$(1,2,3)$-appearances in $f$ can be partitioned into a constant number of
configurations, according to the relative order of the blocks in which
the corresponding $1$-, $2$-, and $3$-legs fall. See e.g., \cref{fig_M3}. For some of these configurations, we can directly apply
the median argument and obtain a tester. For the others, a more sophisticated
argument is used to further partition a corresponding region and to
finally reduce the problem to monotonicity
testing. This is done by extending the
median argument to not only the corresponding values of the
points as described above, \emph{but also to the coordinates of the
	points}. These cannot be done simultaneously but can be achieved
using the `correct' order of sampling points, and the `typical nature of
the median point', as mentioned earlier. This is done, for instance, in
Case~\ref{step:3} of the algorithm given in \cref{sec:O(n)}.
Unlike in the $1$-dimensional case, two additional features to be taken care of are: (1) the reduction is to
monotonicity testing with subconstant distances, and (2) the reduced problem is on a partial domain which
is not isomorphic to a grid. 
Using, on top of this, a delicate recursion for an appropriate setting of $m$ brings the query complexity to $\polylog n$.

\paragraph{Testing $(1,3,2)$-freeness.} 
In this case, we build on the ideas of Newman and Varma~\cite{NV22} in which they designed sublinear-query $\pi$-freeness testers for general constant-length patterns $\pi$ for functions 
$f \colon [n] \to \R$.  Their basic tool is to look at the $2$-dimensional
`graph' of the function, and approximate it by a
$2$-dimensional grid of $m^2$ boxes which are $\{0,1\}$-tagged using what they call \textsf{Layering} and
\textsf{Gridding} (for a certain parameter $m \ll n$).
Then, using a theorem of Marcus-Tardos \cite{MarcusT04}, it
can be shown that one may assume that only $O(m)$ of the boxes are
nonempty (tagged `1'). Finally, with this reduced number of boxes, they
show that the forbidden appearances can be partitioned into a constant
number of configurations, each of which is testable by reducing to testing smaller forbidden patterns.

In this work, we follow the same grand scheme of ideas, adapted to $f \colon [n]^2 \to \R$. Now,
the `graph' of the function becomes a $3$-dimensional collection of points.
We partition the domain into $m^2$ blocks, and perform \textsf{Layering} and \textsf{Gridding} in a fashion similar to~\cite{NV22}. Since there is a corresponding generalization of the
Marcus-Tardos theorem~\cite{MarcusT04} to Boolean grids of the form $[n]^r$ given by Jang, Ne\v{s}et\v{r}il, and Ossona de Mendez~\cite{JNO21} (in our case, $r \colonequals d+1 = 3$), this implies that after $\widetilde{O}(m^2)$ queries, we either find a $\pi$-appearance, or
we partition the corresponding $3$-dimensional graph of the function into
a grid $[m]^3$ of boxes with at most $m^2$ nonempty boxes. Generally, we now
analyse the constantly many possible configurations as done in the
$1$-dimensional case, and design sublinear-query testers for each of those configurations. However, while doing so, we face
two new difficulties: In the $1$-dimensional case, the procedure \textsf{Gridding} results
in a grid $[m]^2$ of boxes with $O(m)$ nonempty boxes, and in which every row and
every column contains $O(1)$ nonempty boxes (after a simple
`cleaning'). This is crucial to carry out the reduction to testing patterns of smaller size. In
our case, the corresponding grid is a grid $[m]^3$ of boxes with
$O(m^2)$ nonempty boxes. Hence, on average, every column along any
dimension has $O(1)$ nonempty boxes. However, unlike in the $1$-dimensional
case, we can only guarantee this fact deterministically for every
vertical column, but not for horizontal columns.  This requires a
different strategy altogether.

An additional difficulty was hinted at previously for
monotonicity testing. As in~\cite{NV22}, and the testers for $(1,2,3)$-freeness in this work, most cases have complexity
growing with the parameter $m$, and this is true for the
\textsf{Gridding} procedure as well. However, there is one case -- namely, when every appearance has all
its points in one layer, in which one cannot reduce the testing to
monotonicity testing, and one can only guarantee
finding an appearance w.h.p.\ by directly sampling
$\poly(n)/\poly(m)$ points. Namely, the query complexity in this case is inversely proportional to $m$.
Optimizing w.r.t.\ $m$ still only results in an overall query
complexity of $\widetilde{O}(n)$. A way to decrease the complexity, which worked in~\cite{NV22} for a similar situation, was to use recursion in
the above special case, using smaller values of $m$. However, here,
the induced problem on a single layer (on which we would like to
perform recursion) is not identical to the original problem, but rather to a
test on a partial function on a domain of sublinear size. As
explained above, this is not the same as testing a total
function as the subdomain is not isomorphic to a rectangular grid anymore. A different kind of recursion has to be worked out, and this results 
in a complexity of $\widetilde{O}(n^{4/5+ o(1)})$.

\paragraph{Lower bounds for testing $(1,3,2)$-freeness.} To prove our lower bounds, the
high-level idea is to follow Newman, Rabinovich,
Rajendraprasad, and Sohler \cite{NewmanRRS19}, in which they proved an
$\Omega(\sqrt{n})$-query lower bound for {\em nonadaptively} 
testing $(1,3,2)$-freeness on the line $[n]$. We define a suitable \emph{intersection search} problem (\cref{problem:intersection_search-main}) for two \emph{$2$-dimensional} `monotone arrays' (w.r.t.\ $\prec$) and show a reduction to
to the problem of testing $(1,3,2)$-freeness on the
$2$-dimensional hypergrid. This reduction is blackbox, and as such, any lower bound for it
(nonadaptive or adaptive), implies a corresponding lower bound
for testing $(1,3,2)$-freeness. However, unlike in \cite{NewmanRRS19},
where the corresponding $1$-dimensional problem could be solved
adaptively in $O(\log n)$ queries, we show that our $2$-dimensional variant
requires $\Omega(\sqrt{n})$ queries, even for adaptive
randomized algorithms. The latter is shown conceptually by
observing that the intersection search problem `embeds'
in it independent instances of a variant of the \emph{birthday problem} (\cref{problem:birthday-search-main}).

\subsection{Conclusions and open directions}

We initiate a systematic study of order pattern freeness of real-valued functions over hypergrids. 
On the positive side, we design sublinear-query algorithms for testing $\pi$-freeness for patterns of length $3$ over
$2$-dimensional hypergrids. 
We also prove lower bounds to argue that our algorithms are almost optimal.
Additionally, we show that the equality between Hamming and deletion distances does not hold for larger patterns in general. This, in turn, implies that current ideas used in the testing literature are
insufficient to obtain sublinear-query $\pi$-freeness testers for larger patterns. 

In the following, we list a few additional interesting open directions that arise from our work.

\begin{enumerate}
	\item \textbf{Testing larger order patterns over high-dimensional hypergrids.} One natural direction is to design $\pi$-freeness testers for $\pi \in \mathcal{S}_3$ for hypergrids with dimension larger than $2$. We believe that our techniques might be generalizable to this case and we leave it as an open direction.
		As noted above, since the Hamming distance is not equal to the deletion distance, in general, for patterns of length $4$ over high-dimensional domains, current techniques do not result in sublinear-query testers.
	We leave open the testing of larger order patterns with respect to the Hamming distance for high-dimensional domains.
	Note that the question remains open even for the case of
	$2$-dimensional grids.
	
	\item \textbf{Testing monotone patterns.} 
	An open question is whether one can obtain efficient testers for monotone patterns of length $4$ or more over hypergrids of arbitrary dimension and improve our results. We remark (see \cref{clm:deletion}) that the Hamming distance and the deletion distance are equal for monotone patterns of any length, for functions over hypergrids of arbitrary dimension.
	
	\item \textbf{Optimality.} Determining the optimality of our $\pi$-freeness testers is a well-motivated problem. It is also important to know
	whether one can design ER monotonicity testers with better query complexity over high-dimensional hypergrids that work for all fractions of erasures. Additionally, lower bound techniques developed for pattern freeness might find use in proving lower bounds for other related combinatorial problems, as evidenced in the $1$-dimensional case~\cite{NV20}. 
	
	\item \textbf{Non-permutation patterns.} We consider only
	order-patterns that are permutations in this paper. It is equally interesting to 
	consider testing freeness of forbidden patterns that are not permutations. 
	In the $1$-dimensional case, any order pattern is a permutation, due
	to the fact that any subdomain of $[n]$ is a total
	order. However, for $[n]^d$, ~$d \geq 2$, there are order
	patterns (of length $3$ and higher) that are not permutations.
	The Hamming distance is not equal to the deletion distance even for such patterns of length $3$, in general. We briefly discuss testing freeness of such patterns in \cref{sec:last}.
\end{enumerate}

\subsection{Organization}

The paper is organized as follows.
\cref{sec:prelim} contains preliminary notations and claims regarding the relationship between the deletion and Hamming distances. Our erasure-resilient and partial function monotonicity testers for real-valued functions over hypergrids are presented in \cref{sec:mono-start-main}.
\cref{sec:123} contains the description of the $\polylog n$-query tester for $(1,2,3)$-freeness. Our $O(n^{4/5 + o(1)})$-query tester for $(1,3,2)$-freeness is presented in \cref{sec:132}. 
Lastly, \cref{sec:132_lbs-main} contains the adaptive and nonadaptive lower bounds for $(1,3,2)$-freeness.
Many of the proofs, technical details, and pseudocodes are deferred to the appendices (\crefrange{sec:mono-start}{sec:hamdel}).
In particular, \cref{sec:gridding} contains the procedures \textsf{Gridding} and \textsf{Layering} that are crucial components of our $(1,3,2)$-freeness testers. 


\section{Preliminaries}\label{sec:prelim}

For a function $f \colon [n]^d \to \R$, we denote the
range of $f$ by $R(f)$. The elements of the domain $[n]^d$ of $f$ are referred
to as \emph{indices} or \emph{points} and the elements of the range $R(f)$ are referred to as \emph{values} or $f$-values. For an index $x \in [n]^d$, we use $x_i$ to denote the $i$-th coordinate of $x$.
For two distinct indices $x, y \in [n]^d$, we say that $x \prec y$ if $x_i \leq y_i$ for all $i \in [d]$.

Let $\pi \in \mathcal{S}_k$ be a permutation of length $k$. The
permutation (also called an order pattern or simply a
pattern) is monotone if it is either $(1,2,\dots, k)$ or $(k, k-1,
\dots ,1)$. It is nonmonotone otherwise. The function $f \colon [n]^d \to
\R$ has a $\pi$-appearance if for some points $x^{(1)}
\prec \dots \prec x^{(k)}, $ for all $i, j \in [k]$, $\pi(i) <
\pi(j)$ implies that $f(x^{(i)}) < f(x^{(j)})$. 
Namely, the order $f$
induces on $\{f(x^{(i)})\mid ~ i \in [k]\}$ is identical to the order $\pi$
induces on $[k]$. 
For $i \in [k]$, the pair $(x^{(i)}, f(x^{(i)}))$ is called the 
$i$-th leg of the $\pi$-appearance and the pair $(x^{(\pi^{-1}(i))}, f(x^{(\pi^{-1}(i))}))$ is called the $i$-leg of the appearance.
For example, if $x_1, x_2, x_3$ are the indices of a $(1,3,2)$-appearance, then $(x_2, f(x_2))$ is the $3$-leg and $(x_3, f(x_3))$ is the $2$-leg.
The function $f$ is $\pi$-free if it has no $\pi$-appearance. Let $P^d_{\pi}$ denote the
property of $\pi$-freeness of functions $f \colon [n]^d \to
\R$. Thus, $f$ is monotone nondecreasing if it satisfies $P_\pi^d$ for $\pi=(2,1)$.

Most of the algorithms in this paper are for functions over the domain $[n]^2$. We use $x(p)$ and $y(p)$ to denote the $x$ and $y$ coordinates of the point $p \in [n]^2$, respectively. 

For ease of presentation, all of our algorithms are designed to have success probability $1 - 1/n^{\Omega(\log n)}$, and we ignore the dependence on polylogarithmic factors in $n$ in their query complexities.

\subsection{Deletion and Hamming distances}\label{sec:ham-del-intro}

A major tool in forbidden order pattern testing is the connection
between the deletion distance and the Hamming distance.

\begin{definition}[Deletion and Hamming distances]\label{def:hamming}
	Let $f \colon [n]^d \to \R$ and $\pi$ be an order pattern. The deletion distance of $f$ from being $\pi$-free
	is $\min \{|S|: ~ S \subseteq [n]^d, ~
	f|_{[n]^d\setminus S} ~ \mbox
	{is } \pi\mbox{-free} \}$. Namely, it is the cardinality of the smallest set $S \subseteq
	[n]^d$ that intersects each $\pi$-appearance in $f$.
	The Hamming distance of $f$ from being $\pi$-free is the minimum of $\dist(f,f') = |\{i:~ i \in [n]^d,~ f(i) \neq f(i') \}|$ over all
	functions $f': [n]^d \to \R$ that are $\pi$-free.
\end{definition}

For the $1$-dimensional case, namely for $f \colon [n] \to \R$  and 
any $\pi \in {\mathcal S}_k$, the Hamming distance of $f$ to $P_{\pi}^1$ is
equal to the deletion distance of $f$ to $P_{\pi}^1$. This fact is the
basis for every known tester for pattern
freeness in the $1$-dimensional case.
However, for $d \geq 2$ this equality is not true in general. Moreover, for
some patterns (and functions), there
is an unbounded gap between the two distances. We show that equality
between the two distances does hold for some
patterns, and, in particular, for all
patterns $\pi \in {\mathcal{S}}_k, ~ k \leq 3$.

\begin{claim}\label{clm:deletion}
	Let $k \geq 1$ and $\pi \in {\mathcal S}_k$ with $\{\pi(1),\pi(k)\}
	\cap \{1,k\} \neq \emptyset$.  Then the Hamming distance of $f \colon [n]^d \to \R$ to $P_{\pi}^d$ is equal to the deletion distance of $f$ to $P_{\pi}^d$.
\end{claim}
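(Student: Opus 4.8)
The plan is to establish the two inequalities separately. The bound $(\text{deletion distance}) \le (\text{Hamming distance})$ holds unconditionally: if $f'$ is $\pi$-free and $S \colonequals \{x \in [n]^d : f(x) \ne f'(x)\}$ realizes the Hamming distance, then $f|_{[n]^d\setminus S} = f'|_{[n]^d\setminus S}$ is a restriction of a $\pi$-free function and is therefore $\pi$-free, so $S$ witnesses that the deletion distance is at most $|S|$. The nontrivial inequality is $(\text{Hamming distance}) \le (\text{deletion distance})$, and this is where the hypothesis is used. First I would reduce to the case $\pi(1) = 1$: replacing $f$ by $-f$ replaces the pattern $\pi$ by its complement $i \mapsto k+1-\pi(i)$, and reversing all $d$ coordinates of the domain replaces $\pi$ by its reverse $i \mapsto \pi(k+1-i)$; each operation induces a bijection on functions preserving both distances (up to renaming the forbidden pattern), and together they turn each of the four cases $\pi(1)\in\{1,k\}$, $\pi(k)\in\{1,k\}$ into $\pi(1)=1$.

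Assume now $\pi(1)=1$, and let $\tau\in\mathcal S_{k-1}$ be the order pattern of $(\pi(2),\dots,\pi(k))$. Fix a minimum deletion set $S$, so $g\colonequals f|_{[n]^d\setminus S}$ is $\pi$-free. I would construct a $\pi$-free $f'$ agreeing with $f$ outside $S$ by reinserting the points of $S$ one at a time, in an order reversing a fixed linear extension of $\prec$ (so that when a point $s$ is processed, every point strictly $\succ s$ already has a value). When $s$ is processed, let $h$ be the current partial function (which is $\pi$-free by induction), put
\[
\beta(s)\colonequals \max\Bigl\{\ \min_{q\in Q} h(q)\ :\ Q \text{ is a }\tau\text{-appearance of } h \text{ all of whose points are strictly}\succ s\ \Bigr\},
\]
with $\beta(s)=-\infty$ if no such $Q$ exists, and set $f'(s)\colonequals \beta(s)$ when $\beta(s)>-\infty$, and otherwise set $f'(s)$ to be any value smaller than $\mu(s)\colonequals \min\{f(p): p\in[n]^d\setminus S,\ p\prec s\}$ (anything if that set is empty). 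In both branches $f'(s)\le\mu(s)$, the content being the bound $\beta(s)\le\mu(s)$.

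The induction step amounts to showing that no $\pi$-appearance $x^{(1)}\prec\cdots\prec x^{(k)}$ of $h\cup\{s\}$ can use $s$. If $s=x^{(1)}$, then $(x^{(2)},\dots,x^{(k)})$ is a $\tau$-appearance of $h$ lying strictly above $s$ (and such a $\tau$-appearance cannot exist at all when $\beta(s)=-\infty$), so $f'(s)=\beta(s)\ge\min_{j\ge2}f'(x^{(j)})$; since $\pi(1)=1$ forces $f'(x^{(1)})$ to be the strict minimum of the appearance, this is a contradiction. If $s=x^{(a)}$ with $a\ge2$, then $x^{(1)},\dots,x^{(a-1)}$ are $\prec s$ and hence have not been reinserted yet, so they lie in $[n]^d\setminus S$; since $\pi(1)=1$, the value $f(x^{(1)})$ is the global minimum of the appearance, whence $\mu(s)\le f(x^{(1)})<f'(x^{(a)})=f'(s)$, contradicting $f'(s)\le\mu(s)$.

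The one step that must be verified --- and I expect it to be the crux --- is the inequality $\beta(s)\le\mu(s)$. If it failed, there would be a $\tau$-appearance $Q=(q^{(1)},\dots,q^{(k-1)})$ of $h$ strictly above $s$ with $\min_i h(q^{(i)})=\beta(s)$ and a point $p_0\in[n]^d\setminus S$ with $p_0\prec s$ and $f(p_0)=\mu(s)<\beta(s)$. Then $p_0\prec s\prec q^{(1)}\prec\cdots\prec q^{(k-1)}$ is a chain of length $k$ on which $p_0$ carries the strictly smallest value while $(q^{(1)},\dots,q^{(k-1)})$ realizes $\tau$, so --- using $\pi(1)=1$ once more --- $(p_0,q^{(1)},\dots,q^{(k-1)})$ is a $\pi$-appearance of $h$, contradicting $\pi$-freeness of $h$. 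Iterating the reinsertion over all of $S$ produces a $\pi$-free $f'$ with $\dist(f,f')\le|S|=(\text{deletion distance})$, which completes the proof. The remaining points are routine: checking that an inequality such as $f'(x^{(j)})\le f'(x^{(1)})$ genuinely precludes a valid appearance even though function values need not be distinct, and disposing of the degenerate cases ($k=1$, $S=[n]^d$, or $\beta(s),\mu(s)$ infinite).
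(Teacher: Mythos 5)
Your proof is correct, and it follows the same high-level strategy as the paper's: fix a minimum deletion set $S$, note that $g = f|_{[n]^d\setminus S}$ is $\pi$-free, build a $\pi$-free total completion $f'$ of $g$ by assigning values on $S$, and derive the result from $\dist(f,f')\le |S|$. Your reduction to $\pi(1)=1$ via domain reversal and negation is also the paper's (they only gesture at it; you spell it out). Where you differ is in the construction of $f'$ and the bookkeeping needed to verify it. The paper simply sets $f^*(s) := \mu(s) = \min\{f(p):p\notin S,\ p\prec s\}$ for every $s\in S$ simultaneously (handling the trivial corner case $\bar 1\in S$ separately), and then shows by induction, over an \emph{arbitrary} enumeration of $S$, that reinserting each $i_j$ creates no new appearance: if $i_j$ is a $1$-leg, one replaces it by a ``shadow'' $p\prec i_j$, $p\notin S$, with $f(p)=f^*(i_j)$ and gets an appearance already present; if $i_j$ is a later leg, the monotonicity $f^*(i_j)\le f^*(p)$ for all $p\prec i_j$ (which holds whether or not $p\in S$, because the set $\{q\notin S : q\prec p\}$ shrinks as $p$ decreases) immediately contradicts $\pi(1)=1$. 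Your construction instead introduces $\beta(s)$ and assigns $f'(s)=\beta(s)$ or a value $<\mu(s)$, and you insist on a reverse linear extension so that in the ``later leg'' case the earlier legs are guaranteed to lie outside $S$. Both moves are sound but unnecessary: your own crux lemma $\beta(s)\le\mu(s)$ together with the induction step amounts to re-proving (in a roundabout way) what the paper gets directly from the choice $f'(s)=\mu(s)$, and the forced processing order buys you nothing that the paper's monotonicity observation doesn't already give for free. In short: same scaffold, but the paper's explicit formula for the completion makes the verification shorter and order-independent, whereas you trade that for an extra quantity $\beta$ and a constraint on the processing order.
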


The proof of \cref{clm:deletion} can be found in \cref{sec:hamdelproof}.

\begin{corollary} \label{cor:hamdel3}
	Let $\pi \in {\mathcal{S}}_k, ~k \leq 3$. Let $n,d \geq 1$. The Hamming distance of $f \colon [n]^d \to \R$ to $P_{\pi}^d$ is equal to the deletion distance of $f$ to $P_{\pi}^d$.
\end{corollary}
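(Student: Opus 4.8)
The plan is to derive \cref{cor:hamdel3} directly from \cref{clm:deletion} by a short finite case check. \cref{clm:deletion} requires $\{\pi(1),\pi(k)\}\cap\{1,k\}\neq\emptyset$, i.e., that one of the two endpoint positions of $\pi$ carries the smallest or the largest value. So it suffices to verify that every permutation of length at most $3$ meets this condition, and then invoke \cref{clm:deletion}.

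For $k=1$, $\pi=(1)$ and $\{\pi(1),\pi(k)\}=\{1\}$ meets $\{1,k\}=\{1\}$. For $k=2$, both $(1,2)$ and $(2,1)$ have $\{\pi(1),\pi(2)\}=\{1,2\}=\{1,k\}$. For $k=3$, we run through all six permutations: $(1,2,3)$ and $(3,2,1)$ give $\{\pi(1),\pi(3)\}=\{1,3\}$; $(1,3,2)$ and $(2,3,1)$ give $\{1,2\}$; $(2,1,3)$ and $(3,1,2)$ give $\{2,3\}$. Each of these intersects $\{1,3\}$. Hence \cref{clm:deletion} applies to every $\pi\in\mathcal{S}_k$ with $k\le 3$, giving the claimed equality of the Hamming and deletion distances of $f$ to $P_{\pi}^d$.

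There is no genuine obstacle here: all the content is in \cref{clm:deletion}, and \cref{cor:hamdel3} merely records that its hypothesis comes for free for short patterns. The reason it holds is that in a permutation of $\{1,\dots,k\}$ the values $1$ and $k$ sit in two distinct positions, so when $k\le 3$ at least one of them must sit in position $1$ or position $k$, there being at most one interior position (position $2$). For $k\ge 4$ this reasoning fails, as one may place $2,\dots,k-1$ at the two endpoints; this is consistent with the fact recalled in \cref{sec:ham-del-intro} that the Hamming and deletion distances genuinely diverge for some $4$-patterns.
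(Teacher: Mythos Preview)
Your proof is correct. Both your argument and the paper's derive the corollary from \cref{clm:deletion} via a short case check, so the approach is essentially the same. The organization differs slightly: the paper applies \cref{clm:deletion} only to the representatives $(1,2)$, $(1,2,3)$, and $(1,3,2)$ and then invokes the symmetries of reversing the domain and negating the $f$-values to cover the remaining patterns, whereas you verify directly that every $\pi\in\mathcal{S}_k$ with $k\le 3$ satisfies the hypothesis $\{\pi(1),\pi(k)\}\cap\{1,k\}\neq\emptyset$. Your route is marginally cleaner, since it avoids the (easy but unstated) check that those symmetries preserve both distances; and your pigeonhole remark---that with at most one interior position, the values $1$ and $k$ cannot both avoid the endpoints---gives a tidy explanation of why $k\le 3$ is exactly the threshold.
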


\begin{proof}
	The claim holds for $(1,2) \in {\mathcal{S}}_2$, $(1,2,3) \in {\mathcal{S}}_3$, and $(1,3,2) \in {\mathcal{S}}_3$, by \cref{clm:deletion}. All the other elements of ${\mathcal{S}}_2$ and ${\mathcal{S}}_3$ are equivalent to one of these via looking at the domain ${[n]}^d$ upside-down, or via multiplying all $f$-values by $-1$. 
\end{proof}

Due to \cref{cor:hamdel3}, henceforth, for patterns of length at most $3$, we simply say that $f$ is $\epsilon$-far from $\pi$-freeness without specifying the distance measure.

Unfortunately, a statement analogous to \cref{cor:hamdel3} does not hold in general for all
patterns of length $4$ or higher, even when $d = 2$. This is a
significant point of departure from pattern freeness on arrays
(i.e., $d = 1$) where such a statement holds for every
pattern. We show an example of a $4$-pattern $\pi$ and a
function $f \colon [n]^2 \to \R$ for which the multiplicative gap
between the Hamming and deletion distances from $\pi$-freeness is
$\Omega(n)$. This example can be straightforwardly generalized to longer patterns and for larger values of $d$.

\begin{claim} \label{clm:hamneqdelexample}
	There exists a $\pi \in {\mathcal{S}}_4$ and a function $f \colon [n]^2 \to \R$ such that the deletion distance of $f$ from $\pi$-freeness is $1$ and the Hamming distance of $f$ from $\pi$-freeness is $\Theta(n)$.
\end{claim}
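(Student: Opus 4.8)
The pattern $\pi = (2,4,1,3) \in \mathcal{S}_4$ has $\{\pi(1),\pi(4)\} = \{2,3\}$, disjoint from $\{1,4\}$, so \cref{clm:deletion} does not force equality of the two distances for it; the plan is to exhibit such a $\pi$ together with a suitable $f$. Concretely, I would construct $f \colon [n]^2 \to \R$ and a distinguished point $p$ with: \textbf{(i)} $f|_{[n]^2 \setminus \{p\}}$ is $\pi$-free; and \textbf{(ii)} for \emph{every} $v \in \R$, the function $g_v$ obtained from $f$ by resetting the value at $p$ to $v$ admits $\Omega(n)$ many $\pi$-appearances $A_1,\dots,A_t$ with $A_i \cap A_j = \{p\}$ for all $i \ne j$; moreover $f$ will differ from a fixed monotone (hence $\pi$-free) function in only $O(n)$ points. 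Granting this, $f$ is not $\pi$-free (apply (ii) with $v = f(p)$) while $f|_{[n]^2 \setminus \{p\}}$ is, so the deletion distance is exactly $1$; and resetting those $O(n)$ special points (including $p$) to their monotone values makes $f$ $\pi$-free, so the Hamming distance is $O(n)$.

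The matching $\Omega(n)$ lower bound on the Hamming distance is the heart of the argument and uses (ii) as a black box. Let $f'$ be \emph{any} $\pi$-free function and set $v := f'(p)$; then $f'$ agrees with $g_v$ at $p$, so $S := \{x \ne p : f'(x) \ne f(x)\}$ is exactly the set of non-$p$ points where $f'$ disagrees with $g_v$. Taking $A_1,\dots,A_t$ for this $v$ as in (ii), each $A_i$ must be destroyed in $f'$; since $f'$ agrees with $g_v$ on $\{p\}$ and off $S$, and $p \in A_i$, this forces $S$ to meet the set $A_i \setminus \{p\}$. The sets $A_i \setminus \{p\}$ are pairwise disjoint, so $|S| \ge t$ and $\dist(f,f') \ge |S| \ge t = \Omega(n)$; as $f'$ was arbitrary this proves the bound. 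The key point is that the $A_i$ need only be appearances of $g_v$, not of $f$ itself — this is precisely what lets the Hamming distance exceed the deletion distance even though the latter is $1$.

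For the construction, put $p$ at the centre, so that its down-set $D$ and up-set $U$ are large and $D \prec p \prec U$ pointwise. Around $p$ I place two families of $\Omega(n)$ disjoint gadgets. A \emph{max-gadget} is a triple $(a_i; c_i, d_i)$ with $a_i \prec p \prec c_i \prec d_i$ and fixed values (e.g.\ $f(c_i)=1,\ f(a_i)=2,\ f(d_i)=3$) chosen so that $\{a_i, p, c_i, d_i\}$ is a $\pi$-appearance with $p$ in the chain-position-$2$ (largest-value) slot precisely when $v$ exceeds the threshold $f(d_i)$. A \emph{min-gadget} is a triple $(a'_j, b'_j; d'_j)$ with $a'_j \prec b'_j \prec p \prec d'_j$ and values (e.g.\ $f(a'_j)=4,\ f(d'_j)=5,\ f(b'_j)=6$) chosen so that $\{a'_j, b'_j, p, d'_j\}$ is a $\pi$-appearance with $p$ in the chain-position-$3$ (smallest-value) slot precisely when $v$ is below the threshold $f(a'_j)$. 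Since the max-threshold lies below the min-threshold, for every $v$ at least one family is ``active'', and an active family supplies $\Omega(n)$ pairwise-$\{p\}$-disjoint $\pi$-appearances of $g_v$; this is (ii). Taking all gadget points distinct, with each gadget's points comparable within the gadget but pairs from distinct gadgets forming an antichain, the only $4$-chains among gadget points are ``a lower pair of one gadget followed by an upper pair of another gadget'', and a short check shows their value pattern is never $(2,4,1,3)$; this gives (i) restricted to the gadget points.

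The remaining step — assigning values to the $n^2 - O(n)$ other points so that no $\pi$-appearance through such a point avoids $p$ — is the main obstacle, and the difficulty is intrinsic: each gadget forces a length-$3$ chain among its points whose value pattern is ``$\pi$ with one entry deleted'', so a single extra point placed in the order-interval that $p$ is meant to occupy, with an appropriate value, re-creates a $\pi$-appearance off $p$; and the threshold constraints from (ii) pull these ``forbidden'' values in opposite directions for the two families, so a constant value on the bulk provably fails. Resolving this would require scaling the gadget values across gadgets and interleaving more completion thresholds, or using a non-constant, region-dependent assignment on the bulk of the domain, so that each potential off-$p$ completion is killed either by its value or by the absence of a bulk point in the requisite position. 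I expect this reconciliation to be the technical core; the rest is bookkeeping, and the whole construction generalizes straightforwardly to longer patterns and larger $d$.
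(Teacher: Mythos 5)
You have the right pattern, $\pi=(2,4,1,3)$, and the right high-level mechanism: a distinguished point $p$ that is forced to be the \emph{maximum} value of $\Theta(n)$ disjoint row-type appearances and simultaneously the \emph{minimum} value of $\Theta(n)$ disjoint column-type appearances, so that no single-value reassignment at $p$ can destroy both families. That matches the paper's argument exactly. But your proof is incomplete, and you say so yourself: you never succeed in assigning values to the $n^2-O(n)$ bulk points so that no spurious $\pi$-appearance forms off of $p$, and you flag this as "the technical core" and "intrinsic."

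That assessment is where the proposal goes wrong, and the reason is a design choice you made rather than an inherent difficulty. You place your gadgets so that $p$ sits in a genuine $2$-dimensional \emph{gap} $a_i \prec p \prec c_i$, which leaves room for a bulk point $q$ satisfying $a_i \prec q \prec c_i$; with your value scheme ($1,2,3$ on max-gadgets, $4,5,6$ on min-gadgets) such a $q$ re-creates the appearance unless its value is simultaneously $\leq 3$ and $\geq 4$, which indeed forces a non-constant bulk and the contortions you describe. The paper avoids this by confining each family to a single \emph{line}: the row-family lives entirely on one row of $[n]^2$ (values $2,\ldots,2,4,1,\ldots,1,3,\ldots,3$ left to right), the column-family lives entirely on one column (values $5,\ldots,5,7,\ldots,7,4,6,\ldots,6$ bottom to top), with $p$ at the crossing point carrying the single $4$. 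Because a row (resp.\ column) of $[n]^2$ is a chain under $\prec$, any point $q$ with $a \prec q \prec c$ for row points $a,c$ must itself lie on that row, hence carries a row value ($\leq 3$), not a bulk value; the same reasoning applies in the column. This collapses the bulk-assignment problem: the bulk may safely take a single constant value (e.g.\ $8$, larger than everything else), and a short case check shows no $\pi$-appearance avoids $p$. So the remaining step you regarded as the main obstacle disappears under the right placement of gadgets. In short: right pattern, right mechanism, but the construction you propose is unfinished, and the claim that a constant bulk value provably fails holds only for your gadget layout, not for the problem.
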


The proof of \cref{clm:hamneqdelexample} can be found in \cref{sec:last}.

A {\em matching} of
$\pi$-appearances in $f$ is a collection of $\pi$-appearances 
that are pairwise disjoint as sets of indices in $[n]^d$. 
The following claim is folklore and immediate from the fact that the size of a minimum
vertex cover of a $k$-uniform hypergraph is at most $k$ times the 
cardinality of a maximal matching.

\begin{observation}\label{obs:matching}
	Let $\pi \in \mathcal{S}_k$. If  $f \colon [n]^d \to \R$  is $\epsilon$-far from being
	$\pi$-free both in Hamming and deletion distances, then there exists a matching of $\pi$-appearances of size at least $\epsilon n^d /k$. 
\end{observation}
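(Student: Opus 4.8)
The plan is to rephrase the statement purely in the language of the $k$-uniform hypergraph $H$ whose vertex set is $[n]^d$ and whose hyperedges are exactly the index sets $\{x^{(1)}, \dots, x^{(k)}\}$ of the $\pi$-appearances of $f$. First I would note that, by \cref{def:hamming}, the deletion distance of $f$ from $\pi$-freeness is precisely the minimum size of a vertex cover of $H$: a set $S \subseteq [n]^d$ with $f|_{[n]^d \setminus S}$ $\pi$-free is exactly a set that meets the index set of every $\pi$-appearance, i.e.\ a vertex cover of $H$. Hence the hypothesis that $f$ is $\epsilon$-far from $\pi$-freeness in deletion distance translates to: every vertex cover of $H$ has size at least $\epsilon n^d$. (The assumption on the Hamming distance is not actually needed for this direction, since any value-modification certifying a given Hamming distance yields a deletion set of the same size, so the deletion distance never exceeds the Hamming distance; I would keep both hypotheses only for uniformity with how the observation is invoked.)

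Next I would pick a \emph{maximal} matching $M$ of $\pi$-appearances, i.e.\ an inclusion-maximal collection of pairwise index-disjoint $\pi$-appearances; such an $M$ exists trivially. Let $U \subseteq [n]^d$ be the union of the index sets of the appearances in $M$, so that $|U| \le k |M|$ because each appearance supplies $k$ indices. The key step is that $U$ is a vertex cover of $H$: if some $\pi$-appearance had all $k$ of its indices lying outside $U$, it would be index-disjoint from every appearance in $M$, and adjoining it to $M$ would contradict maximality. Combining $|U| \ge \epsilon n^d$ with $|U| \le k|M|$ gives $k|M| \ge \epsilon n^d$, that is, $|M| \ge \epsilon n^d / k$, which is the desired bound.

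I do not expect any real obstacle here; the argument is exactly the standard fact that in a $k$-uniform hypergraph the minimum vertex cover has size at most $k$ times the size of a maximal matching, specialized to the hypergraph of $\pi$-appearances. The only points worth stating carefully are the identification of deletion distance with minimum vertex cover and the maximality argument showing that the vertices spanned by a maximal matching form a vertex cover; both are routine.
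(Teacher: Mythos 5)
Your argument is exactly the one the paper invokes: it states that the observation is "folklore and immediate from the fact that the size of a minimum vertex cover of a $k$-uniform hypergraph is at most $k$ times the cardinality of a maximal matching," and you have simply spelled out that standard maximal-matching-to-vertex-cover bound together with the identification of deletion distance with minimum vertex cover. Your proof is correct and matches the paper's (implicit) proof; your side remark that only the deletion-distance hypothesis is actually used is also accurate.
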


We also need the following (folklore) lemma that generalizes the
birthday paradox. 
\begin{lemma}\label{lem:31}
	Let $N$ be a set and $M$ be a set of pointwise disjoint $k$-tuples on $N$ with $k=O(1)$. Then, a uniformly random sample (with repetitions) of $\frac{|N|}{|M|^{1/k}}$
	points contains a member of $M$ with probability $\Theta(1)$.
\end{lemma}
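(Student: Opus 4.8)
The plan is to apply the second moment method to the count $X$ of tuples of $M$ that are fully captured by the sample. I would write $n=|N|$, $m=|M|$, and take the sample size to be $s=\lceil n/m^{1/k}\rceil$ (the rounding will be immaterial). Since the tuples in $M$ are pairwise disjoint $k$-tuples of distinct points, $km\le n$, hence $m\le n$ and $s\ge n^{1-1/k}$, so I may assume $n$ is large enough relative to the constant $k$ that $s\ge 2k$; the remaining bounded-$n$ case is trivial by sampling all of $N$. Let $X$ be the number of tuples $T\in M$ all of whose $k$ coordinates appear among the $s$ sampled points, so that the target event ``the sample contains a member of $M$'' is exactly $\{X\ge1\}$. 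As $\Pr[X\ge1]\le 1$ always, it suffices to show $\Pr[X\ge1]=\Omega(1)$, and I would obtain this from the Cauchy--Schwarz (Paley--Zygmund) bound $\Pr[X\ge1]\ge \E[X]^2/\E[X^2]$ once $\E[X]=\Theta(1)$ and $\E[X^2]=O(1)$ are established.

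For the first moment, fix $T\in M$ and let $p$ be the probability that all $k$ coordinates of $T$ are sampled, so $\E[X]=mp$ by symmetry. The upper bound $p\le (s/n)^k$ follows from a union bound over the at most $s^k$ ways of designating which (distinct) draws hit the $k$ coordinates; since $s\le 2n/m^{1/k}$ this gives $\E[X]\le 2^k$. For the matching lower bound one cannot simply multiply single-coordinate probabilities, because with replacement the events ``coordinate $i$ is sampled'' are negatively correlated; instead I would condition on the number $Z\sim\mathrm{Bin}(s,k/n)$ of draws landing inside $T$ and use
\[
p\ \ge\ \Pr[Z=k]\cdot\Pr[\text{those }k\text{ draws are pairwise distinct}]\ =\ \binom{s}{k}\Big(\frac{k}{n}\Big)^{k}\Big(1-\frac{k}{n}\Big)^{s-k}\cdot\frac{k!}{k^{k}}\ =\ \frac{s(s-1)\cdots(s-k+1)}{n^{k}}\Big(1-\frac{k}{n}\Big)^{s-k}.
\]
Bounding $s(s-1)\cdots(s-k+1)\ge (s/2)^k$ (valid since $s\ge 2k$) and $(1-k/n)^{s-k}\ge e^{-2k}$ (valid since $n\ge 2k$ and $s\le n$) yields $p\ge 2^{-k}e^{-2k}(s/n)^{k}\ge 2^{-k}e^{-2k}/m$, hence $\E[X]=\Theta(1)$ with constants depending only on $k$.

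For the second moment, expand $\E[X^2]=\E[X]+\sum_{T\ne T'}\Pr[T\text{ and }T'\text{ are both fully sampled}]$ over ordered pairs of distinct tuples of $M$. Since such $T$ and $T'$ are disjoint, the event is that a fixed set of $2k$ points is entirely sampled, which by the same union bound has probability at most $(s/n)^{2k}\le 4^k/m^2$; there are fewer than $m^2$ such pairs, so the sum is at most $4^k$ and $\E[X^2]\le \E[X]+4^k=O(1)$. Feeding this into $\Pr[X\ge1]\ge \E[X]^2/\E[X^2]$ gives a positive constant depending only on $k$, which together with the trivial upper bound proves $\Pr[X\ge1]=\Theta(1)$. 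The only step I expect to require genuine care is the lower bound on $p$: the rest is bookkeeping with union bounds, but there one must get around the negative correlation introduced by sampling with replacement, and the hypotheses $k=O(1)$ and ($n$ large, so $s\ge 2k$) are exactly what make the conditioning argument and the falling-factorial estimate go through.
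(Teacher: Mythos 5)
The paper states this lemma as folklore (immediately before \cref{obs:32}) and does not supply a proof, so there is no in-paper argument to compare against. Your second-moment proof is correct and complete: the conditioning on $Z\sim\mathrm{Bin}(s,k/n)$ is indeed needed to sidestep the negative correlation that prevents a naive product lower bound under sampling with replacement; the falling-factorial and $(1-k/n)^{s-k}\ge e^{-2k}$ estimates go through given $s\ge 2k$ and $s\le n$ (both of which hold since $km\le n$ and $m\ge 1$); the second-moment bound uses disjointness of the tuples correctly; and the Paley--Zygmund step then yields a constant lower bound with the trivial $\Pr[X\ge 1]\le 1$ giving the matching upper bound. This is the standard second-moment proof of the generalized birthday bound and is an appropriate way to fill in a gap the paper leaves implicit.
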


A direct consequence of \cref{obs:matching} and \cref{lem:31} is that for any $\pi \in \mathcal{S}_k$ and any
$d$, if the deletion distance of any $f:[n]^d \to \R$ from $P_{\pi}^d$
equals (or is of the order of) the Hamming distance of $f$ from
$P_{\pi}^d$, then $P_{\pi}^d$ can be tested with sublinear 
query complexity (in the domain size), as stated below. In particular,
this is true for monotone $\pi$ of any constant size. Our goal, in the rest of the
paper, is to improve this much further for $k \leq 3$.

\begin{observation}
	\label{obs:32}
	Let $\pi \in \mathcal{S}_k$. If for every 
	$f \colon [n]^d \to \R$, the deletion and the Hamming distances of $f$ from
	$P_{\pi}^d$ are equal, then there is a nonadaptive one-sided error $\epsilon$-tester for $P_{\pi}^d$ that makes
	$O(n^{d(1- 1/k)}/\epsilon^{1/k})$ queries. 
	In particular, this is true for every monotone
	$k$-permutation and every $d$. Additionally, this is true for every
	permutation $\pi \in \mathcal{S}_3$ and $d = 2$, implying a $O(n^{4/3}/\epsilon^{1/3})$-query tester.
\end{observation}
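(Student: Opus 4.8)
The plan is to combine Observation \ref{obs:matching} with Lemma \ref{lem:31} in the obvious sampling algorithm, and then check one-sidedness. Suppose that for every $f \colon [n]^d \to \R$ the deletion and Hamming distances of $f$ from $P_\pi^d$ coincide. The tester is as follows: query a uniformly random multiset $Q \subseteq [n]^d$ of size $s \colonequals C \cdot n^{d(1-1/k)}/\epsilon^{1/k}$ for a suitable absolute constant $C = C(k)$, and accept if and only if the queried values contain no $\pi$-appearance (i.e., no $k$ points $x^{(1)} \prec \dots \prec x^{(k)}$ in $Q$ realizing $\pi$). This is clearly nonadaptive, since $Q$ is chosen before any query is answered, and it makes $s = O(n^{d(1-1/k)}/\epsilon^{1/k})$ queries.

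For one-sided error: if $f$ is $\pi$-free, then no subset of the domain — in particular no sub(multi)set of $Q$ — contains a $\pi$-appearance, so the tester accepts with probability $1$. For soundness, suppose $f$ is $\epsilon$-far from $P_\pi^d$ in Hamming distance; by the hypothesis it is then also $\epsilon$-far in deletion distance, so by Observation \ref{obs:matching} there is a matching $M$ of pairwise index-disjoint $\pi$-appearances with $|M| \geq \epsilon n^d / k$. Each member of $M$ is a $k$-tuple of points of $[n]^d$, and the members are pairwise disjoint as sets of indices, so $M$ is a set of pointwise disjoint $k$-tuples on $N \colonequals [n]^d$. Apply Lemma \ref{lem:31} with this $N$ and $M$: a uniform sample (with repetitions) of size $|N|/|M|^{1/k} = n^d / (\epsilon n^d/k)^{1/k} = k^{1/k} \cdot n^{d(1-1/k)}/\epsilon^{1/k}$ contains a member of $M$ with probability $\Theta(1)$. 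Since $s$ is this quantity times a large enough constant $C$, a standard repetition/amplification of the sampling (taking $C$ large enough to boost the constant-probability guarantee of Lemma \ref{lem:31}) ensures $Q$ contains a complete member of $M$ with probability at least $2/3$; when it does, those $k$ points of $Q$ form a $\pi$-appearance, so the tester rejects. Hence the tester is correct.

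The ``in particular'' claims are immediate: for a monotone $k$-permutation $\pi$ and any $d$, Hamming and deletion distances are equal (as noted after \cref{clm:deletion}; cf.\ item 2 of the open directions and \cref{clm:deletion} itself applies since $\{\pi(1),\pi(k)\} \cap \{1,k\} \neq \emptyset$ for $\pi \in \{(1,\dots,k),(k,\dots,1)\}$), so the general statement applies and yields an $O(n^{d(1-1/k)}/\epsilon^{1/k})$-query tester. For $\pi \in \mathcal{S}_3$ and $d = 2$, \cref{cor:hamdel3} gives the equality of the two distances, so the general statement applies with $k = 3$, $d = 2$, yielding query complexity $O(n^{2 \cdot 2/3}/\epsilon^{1/3}) = O(n^{4/3}/\epsilon^{1/3})$.

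There is essentially no hard step here — the statement is a routine packaging of the two quoted folklore facts — but the one point that needs a little care is the amplification: Lemma \ref{lem:31} only guarantees $\Theta(1)$ success probability for a sample of size exactly $|N|/|M|^{1/k}$, so to reach, say, probability $2/3$ one takes a constant number of independent such samples (equivalently, inflates the constant $C$) and unions them into the single multiset $Q$; disjointness of the tuples in $M$ is exactly what makes the per-sample success probability bounded below, and one should note that querying a slightly larger set only helps. No adaptivity is used anywhere, so the resulting tester is nonadaptive as claimed.
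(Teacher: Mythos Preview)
Your proposal is correct and follows exactly the approach the paper indicates: the paper presents \cref{obs:32} as ``a direct consequence of \cref{obs:matching} and \cref{lem:31}'' without spelling out the details, and your argument is precisely the intended unpacking of that sentence. The invocations of \cref{clm:deletion} and \cref{cor:hamdel3} for the ``in particular'' clauses are also the right justifications.
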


\section{Erasure-resilient monotonicity testing over high dimensions}\label{sec:mono-start-main}
In this section, we describe our erasure-resilient (ER) monotonicity tester for real-valued functions over hypergrids of arbitrary dimension. 
Such a tester for $2$-dimensional
hypergrids  is a key subroutine in our pattern freeness testers in~\cref{sec:123} and~\cref{sec:132}.

\begin{theorem} \label{thm:improved-mon-er-main}
	There is a one-sided error nonadaptive $\delta$-ER $\epsilon$-tester for monotonicity of functions $f \colon [n]^2 \to \R$ making $O\left(\log^{O(1)}(n)\cdot \dfrac{\log^2 (1/\epsilon)}{\epsilon (1-\delta)}\right)$ queries.
\end{theorem}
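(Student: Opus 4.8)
The plan is to use the one-sided error requirement to reduce the whole problem to \emph{finding} a violated pair, and to power that search with an erasure-robust two-dimensional version of the \enquote{median-split and concatenate} scheme of~\cite{NewmanRRS19}. Assume w.l.o.g.\ that the property is $(2,1)$-freeness (monotone nondecreasing), and let $P\subseteq[n]^2$ be the set of nonerased points, so $|P|\ge(1-\delta)n^2$. The tester will reject only after explicitly exhibiting two nonerased points $p\prec q$ with $f(p)>f(q)$, so it accepts every monotone completion of $f$ with probability $1$; it therefore suffices to show that when \emph{every} completion of $f$ is $\epsilon$-far from monotone on $P$, the tester outputs such a \emph{violated pair} with constant probability using $O\!\left(\polylog n\cdot \log^2(1/\epsilon)/(\epsilon(1-\delta))\right)$ nonadaptive queries. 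The first step is to observe that the repair argument underlying \cref{clm:deletion} works verbatim for partial functions, so Hamming distance equals deletion distance for monotonicity of $f|_P$; hence the deletion distance is at least $\epsilon|P|\ge\epsilon(1-\delta)n^2$, and a maximal matching $M$ of pairwise disjoint violated pairs inside $P$ satisfies $|M|\ge\tfrac12\epsilon(1-\delta)n^2$. Since naive uniform sampling would need $\Theta(n/\sqrt{\epsilon(1-\delta)})$ queries to hit a pair of $M$, the rest of the proof is about exploiting the geometry of $M$.

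The heart of the proof is a two-dimensional \emph{subgrid median lemma}. Work with the product dyadic partition of $[n]^2$: a level-$(\ell_1,\ell_2)$ box is the product of a level-$\ell_1$ dyadic interval of the $x$-axis with a level-$\ell_2$ dyadic interval of the $y$-axis, so there are only $O(\log^2 n)$ levels. Every violated pair $(p,q)\in M$ sits, at a suitable level, in two \enquote{sibling} boxes $B$ (containing $p$) and $B'$ (containing $q$) standing in one of a constant number of canonical relative positions (diagonally opposite children of a common parent box, or horizontally/vertically adjacent children). Averaging over levels, over relative positions, and then over the boxes at the chosen level, I would extract one box-pair $(B,B')$ carrying a $\Omega(1/\polylog n)$ fraction of its left box's mass in the form of a set $A\subseteq M$ of violated pairs with left legs in $B$ and right legs in $B'$. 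Because every pair of $A$ has $f(\text{left leg})>f(\text{right leg})$, there is a real threshold $v$ with $\bigl|\{p:(p,\cdot)\in A,\ f(p)>v\}\bigr|$ and $\bigl|\{q:(\cdot,q)\in A,\ f(q)<v\}\bigr|$ both $\Omega(|A|)$; in the axis-adjacent cases, where a random cross pair need not respect $\prec$, a second (coordinate) threshold $y^\ast$ or $x^\ast$ is obtained analogously, and the two thresholds are dealt with not simultaneously but by a further dyadic refinement inside $B$ — this is the main source of the extra $\polylog n$ factor over the one-dimensional $O(\log n)$ bound. The decisive point for erasure-resilience is that the tester never queries a witness of value $v$: only the \emph{existence} of $v$ is used. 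All the counts above range over nonerased points of $P$ inside \emph{known} boxes, so unknown erasures enter the analysis only through the global $(1-\delta)$ factor already present in $|M|$; in particular there is no line or sub-line whose complete erasure could defeat the argument, which is exactly the failure mode of the line-based approach of Dixit et al.~\cite{DixitRTV18}.

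The resulting nonadaptive tester is then: for each of the $O(\log^2 n)$ levels, each of the $O(1)$ canonical relative positions, and each rung of a geometric ladder of $O(\log(1/\epsilon))$ candidate value thresholds, repeat $O\!\left(\polylog n\cdot\log(1/\epsilon)/(\epsilon(1-\delta))\right)$ times the step: pick a uniformly random box $B$ at this level, form its counterpart $B'$ (and, in axis-adjacent cases, a random refinement of both), query a $\polylog n$-size uniform sample from each, and report a violation if two queried points form one. For the correct level, position, and rung, a random $B$ meets a good box-pair with probability $\Omega(\epsilon(1-\delta)/\polylog n)$, after which the samples contain a point of value $>v$ and one of value $<v$ with constant probability, yielding a violated pair. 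The one remaining issue is the loss flagged in the introduction: the plain median argument asks \emph{independently} for a high-value left leg and a low-value right leg, two $\Omega(\epsilon)$-probability events, which costs $\Theta(1/\epsilon^2)$. The fix is to fold the value ladder into the box-mass averaging so that at the correct rung one can select a box-pair in which the \emph{scarcer} of the two good legs already occupies an $\Omega(1)$-fraction of its box; then a single trial needs only one $\Omega(\epsilon(1-\delta)/\polylog n)$-probability event, and summing the $O(\log(1/\epsilon))$ rungs together with the $O(\log(1/\epsilon))$ value refinement inside the chosen box accounts for the $\log^2(1/\epsilon)$ factor. A union bound over all $\polylog n\cdot\polylog(1/\epsilon)$ (level, position, rung) choices, at the stated failure probability, finishes the argument; the same averaging incidentally delivers the \enquote{typical median point} guarantee used downstream by the $3$-pattern testers (\cref{rem:needed-main}), and replacing the quadtree by a $2^d$-ary tree extends everything to $f\colon[n]^d\to\R$ at the cost of a $\log^{O(d)}n$ factor, giving \cref{thm:mon-ER}.

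I expect the principal obstacle to be the subgrid median lemma itself. In one dimension the two intervals $L,R$ lie in a total order, so \enquote{$L$ entirely to the left of $R$} holds automatically for sibling intervals and the value threshold is the only quantity to control; in $[n]^2$ the order $\prec$ on boxes is merely partial, so one must simultaneously arrange that the chosen $B,B'$ are separated enough that random cross pairs respect $\prec$ (forcing either diagonal siblings or an auxiliary coordinate threshold), large enough that a $\polylog n$-size sample stays representative even under adversarially placed erasures, and yet jointly dense enough in $M$ to leave an $\Omega(1/\polylog n)$-fraction of the matching. Balancing these three competing demands — and carrying it out through a single coordinate-by-coordinate dyadic bookkeeping rather than the line-based decomposition that collapses in the erasure-resilient regime — is the delicate part of the argument.
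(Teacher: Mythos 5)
Your high-level plan --- reduce to a disjoint matching $M$ of violated pairs with $|M|\ge\tfrac12\epsilon(1-\delta)n^2$, bucket pairs by a product of dyadic scales on each axis, apply a one-threshold median argument inside the right box pair, and trade a geometric ladder against box-level averaging to bring $1/\epsilon^2$ down to $\log^2(1/\epsilon)/\epsilon$ --- matches the structure of the paper's proof (\cref{thm:basic-mon}, \cref{claim:median_argument}, \cref{claim:improved}, and the ER adaptation at the end of \cref{sec:simple-mon2}). Your reading of how $\delta$ enters (set $\epsilon_1 = \epsilon(1-\delta)/\ell^2$; the median witness $v$ is never queried, so erasures only discount $|M|$) is also correct.

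The one real misstep is the ``axis-adjacent'' worry and the auxiliary coordinate threshold $x^\ast/y^\ast$ you introduce to cure it: you flag this as the principal obstacle, but it is a non-issue if you carry out the bucketing you yourself describe. Fix a matched pair $(p,q)$ with $p\prec q$, and let $\alpha$ be the \emph{coarsest} dyadic level separating $x(p)$ from $x(q)$ and $\beta$ the coarsest separating $y(p)$ from $y(q)$. In the level-$(\alpha,\beta)$ grid, $p$ and $q$ automatically land in \emph{diagonally} opposite cells of a common parent (degenerating to a single column or row when $\alpha=0$ or $\beta=0$): by choice of $\beta$, $y(p)$ and $y(q)$ are separated at level $\beta$, so $B$ and $B'$ can never be horizontally adjacent, and likewise vertically by choice of $\alpha$. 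Consequently \emph{every} cross pair $p'\in B$, $q'\in B'$ satisfies $p'\prec q'$, and the only threshold required is on $f$-values. There is no coordinate refinement anywhere in the paper, and the $\polylog n$ overhead is simply the $\Theta(\log^2 n)$ count of $(\alpha,\beta)$ bucket types (also feeding back through $\epsilon_1=\epsilon(1-\delta)/\ell^2$), generalizing to $\Theta(\log^d n)$ over $[n]^d$. Keeping this straight matters: inserting a per-box dyadic refinement would change the complexity you actually arrive at and would obscure the uniformity-of-the-$1$-leg guarantee in \cref{rem:needed-main} that the downstream $3$-pattern testers rely on. On the $\epsilon$-dependence, your ``fold the ladder into the averaging'' idea points at the right trick, but is left loose; the paper's \cref{claim:improved} makes it a clean Markov-tail argument: group boxes geometrically by $\epsilon_B = |M(B,B')|/|B|$, and at scale $k$ sample $\Theta(\log(1/\epsilon_1)/(2^k\epsilon_1))$ boxes with $\Theta(2^k)$ points from each $B$ and $B'$, so that the cost per scale is flat and the $\log^2(1/\epsilon)$ is from summing over $\Theta(\log(1/\epsilon_1))$ scales, with correctness from $\mathbb{E}[\epsilon_B]\ge\epsilon_1/2$.
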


\cref{thm:improved-mon-er-main} asserts the correctness and complexity
of an ER monotonicity tester, which serves our purpose for $(1,2,3)$-freeness testing.
However, for testing $(1,3,2)$-freeness, we need the ability to test monotonicity of 
partial functions $f:P \to \R$, where $P$ is an arbitrary subset of $[n]^2$.
The subdomain $P$ is known to the tester. 
An ER monotonicity tester as described above can be used for 
partial function monotonicity testing. However, $P$ can be arbitrarily small in many cases and 
the inverse dependence of the query complexity on $|P|$ (from \cref{thm:improved-mon-er-main}) is not desirable. 
To address this issue, we design an alternate tester whose query complexity
does not depend on $\Abs{P}$. We stress that the tester crucially depends on the fact that $P$ is known (unlike in the erasure-resilient case).

\begin{theorem} \label{thm:improved-mon-partial-main}
	There is a one-sided error nonadaptive $\epsilon$-tester for monotonicity of partial functions $f \colon P \to
	\R$, where $P \subseteq [n]^2$, with query complexity
	$O\left(\log^{O(1)}(n)\cdot \dfrac{ \log^2 (1/\epsilon)}{{\epsilon}}
	\right)$.
\end{theorem}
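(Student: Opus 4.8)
The plan is to reduce testing monotonicity of $f\colon P \to \R$ to testing monotonicity of a suitable \emph{total} function on a grid, at the cost of only polylogarithmic overhead, and then invoke \cref{thm:improved-mon-er-main} (or its underlying machinery). The key difficulty, as emphasized in the introduction, is that the order induced by $\prec$ on $P$ is \emph{not} isomorphic to a grid order, so one cannot simply relabel coordinates. Instead I would work directly with the ambient grid $[n]^2$ but extend $f$ off $P$ cleverly. The first idea is a \emph{``fill-in''} completion: define $\tilde f \colon [n]^2 \to \R$ by setting $\tilde f(x) = \max\{ f(p) : p \in P,\ p \prec x\}$ (and $-\infty$, or a value below all $f$-values, if no such $p$ exists). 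This $\tilde f$ is monotone nondecreasing iff $f$ is monotone on $P$: if $f$ is monotone on $P$ then $\tilde f$ is a max of monotone step functions hence monotone; conversely any violation $p \prec q$, $f(p) > f(q)$ in $P$ survives as $\tilde f(p) \geq f(p) > f(q) = \tilde f(q)$ wait — one must be careful that $\tilde f(q) = f(q)$, which holds because $q \in P$ and $f$ monotone-fails only at the pair, so I would instead argue at the level of deletion distance rather than pointwise.

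The cleaner route is to work with \emph{deletion distance}, using \cref{cor:hamdel3} which tells us Hamming and deletion distances coincide for monotonicity in every dimension. So the second step is: show that if $f$ is $\epsilon$-far (in deletion distance over the subdomain $P$, i.e., one must delete $\geq \epsilon|P|$ points of $P$) from monotone, then there is a matching of at least $\epsilon|P|/2$ pairwise-disjoint violated pairs $(p,q)$ with $p \prec q$, $f(p)>f(q)$ — this is \cref{obs:matching} applied to the $2$-uniform violation hypergraph on $P$. Now the plan is to run the ER/dimension-reduction machinery of \cref{thm:improved-mon-er-main} not on $P$-as-a-grid but via a \emph{range-and-coordinate} bucketing: recursively split $[n]^2$ along coordinates into $O(\log^2 n)$ canonical dyadic rectangles, and within the induced bipartite structure, a constant fraction of the matched violated pairs must lie in some single (lower-rectangle, upper-rectangle) pair of comparable dyadic boxes. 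Restricted to such a pair of boxes, every $p$ in the lower box is $\prec$ every $q$ in the upper box, so the problem becomes: detect a value-inversion between two known point sets $P_L, P_R$ with $P_L$ entirely below $P_R$ in the order. This is exactly the \enquote{median-split and concatenate} situation described in the technical overview for monotonicity, and it can be solved with $O(\log(1/\epsilon)/\epsilon)$ samples per box-pair by sampling a point in $P_L$, a point in $P_R$, and (implicitly) arguing the existence of a median value separating a constant fraction of the matched pairs — crucially this needs no erasure-resilience since $P$ is fully known, we simply sample uniformly from $P_L$ and $P_R$.

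Putting it together: the tester iterates over all $O(\log^2 n)$ pairs of comparable dyadic boxes (or samples among them weighted appropriately), and in each runs the two-box value-inversion subroutine with $O(\log^2(1/\epsilon)/\epsilon)$ queries, for a total of $O(\log^{O(1)}(n)\cdot \log^2(1/\epsilon)/\epsilon)$ queries; one-sidedness is immediate because the tester only ever rejects upon explicitly exhibiting two queried points $p \prec q$ with $f(p) > f(q)$, and nonadaptivity holds because the entire query set (all dyadic box pairs, uniform samples within each) is determined in advance from $n$, $\epsilon$, and $P$. I expect the main obstacle to be the \emph{correctness of the box-pair reduction}: showing that a genuine matching of $\epsilon|P|$ violations, after being distributed over the $O(\log^2 n)$ dyadic box-pairs, still leaves some single box-pair carrying $\Omega(\epsilon|P|/\log^2 n)$ matched violations whose endpoints are \emph{each} a non-negligible fraction of the respective box's $P$-population — this requires a pigeonhole that simultaneously controls the number of violations and the local density, and is precisely the delicate \enquote{typical median point} issue flagged in the technical-ideas section; handling it likely needs a weighted or leveled charging argument rather than naive pigeonhole, together with the observation that one may assume WLOG (by the Marcus–Tardos / matching structure) that no box is overloaded.
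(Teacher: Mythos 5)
Your overall approach matches the paper's: bucket violated pairs into $O(\log^2 n)$ dyadic scales, reduce to a two-box ``$P_L$ below $P_R$'' inversion-detection problem, apply a median argument with samples drawn directly from $P\cap B$ and $P\cap B'$, and use a level-based charging scheme to bring the $\epsilon$-dependence down from $1/\epsilon^2$ to $\log^2(1/\epsilon)/\epsilon$. (Your initial ``fill-in completion'' idea rightly gets discarded, both because it doesn't preserve distance and because $\tilde f$ isn't queryable with one query.) So the architecture is the same as the paper's $\mathsf{TEST}^*(\alpha,\beta)$.

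However, the step you flag as ``the main obstacle'' is in fact a genuine gap, and the paper resolves it with one clean idea you don't supply: sample boxes $B$ from the distribution $\mathcal{D}_P$ given by $\Pr[B] = |P\cap B|/|P|$, and define the \emph{local} violation density as $\epsilon_B = |M(B,B')|/|P\cap B|$. Under $\mathcal{D}_P$ the expectation telescopes exactly,
\[
\E_{B\sim\mathcal{D}_P}[\epsilon_B] \;=\; \sum_B \frac{|P\cap B|}{|P|}\cdot\frac{|M(B,B')|}{|P\cap B|} \;=\; \frac{|M_{\alpha^*,\beta^*}|}{|P|} \;\geq\; \frac{\epsilon_1}{2},
\]
which gives you, at no extra cost, simultaneous control of ``enough violations'' and ``non-negligible fraction of the box's $P$-population,'' precisely the thing your pigeonhole couldn't buy. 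Without this weighting, a box could carry many violations but an enormous $|P\cap B|$, making $\epsilon_B$ tiny and the median argument useless; conversely a box with few $P$-points contributes little weight under $\mathcal{D}_P$ so it cannot hide the mass. Your parenthetical ``or samples among them weighted appropriately'' gestures at this but does not identify the weight or the identity, and your suggestion that ``one may assume WLOG (by the Marcus--Tardos / matching structure) that no box is overloaded'' is a red herring: Marcus--Tardos plays no role in the monotonicity tester (it appears only in the $(1,3,2)$-freeness machinery), and no such WLOG assumption is needed or used. Also, note that you cannot ``iterate over all'' box pairs within a fixed scale $(\alpha,\beta)$ --- there are $\Theta(2^{\alpha+\beta})$ of them, not $O(\log^2 n)$; sampling under $\mathcal{D}_P$ is what makes the argument work, and the $O(\log^2 n)$ count refers only to the number of scales.
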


The proofs of \cref{thm:improved-mon-er-main} and \cref{thm:improved-mon-partial-main} can be found in \cref{sec:mono} and \cref{sec:simple-mon2}.

For using the monotonicity testers described above in the $3$-pattern testing, we require
more. We need the testers to output not just an arbitrary violation to monotonicity, but rather a \enquote{typical} one as we remark below.

\begin{remark}\label{rem:needed-main}
	Let $M$ be a matching of $(1,2)$-pairs on a domain $D$. \cref{thm:improved-mon-er-main} describes a tester for monotonicity that finds a violation $(p_1,p_2)$ with
	$p_1$ and $p_2$ being the $1$-leg and the $2$-leg of some pairs in $M$ (but not
	necessarily a pair by itself) with constant probability.
	The tester induces a probability distribution
	$P_M$ on the $1$- and $2$-legs of the pair it produces and in
	particular, a distribution on the $1$-leg $p_1$. It is easy to see
	that this probability is uniform over the $1$-legs of pairs in $M$.
\end{remark}

\begin{remark}
	Our monotonicity testers work even when the input function is over a rectangular grid $[n_x]\times [n_y]$, and the dependence of the query complexity on the input size is then $\polylog (\max\{n_x,n_y\})$.
\end{remark}

\begin{remark}
	Our monotonicity testers can be generalized to functions defined on higher-dimensional hypergrids $[n]^d$, for any $d \geq 1$, thereby completing the proof of~\cref{thm:mon-ER}. The details can be found in \cref{sec:mon-general-d}.
\end{remark}

\section{Testing \texorpdfstring{$(1,2,3)$}{(1,2,3)}-freeness \texorpdfstring{of functions $f \colon [n]^2 \to \R$}{}}\label{sec:123}

Our main goal in this section is to present a
$\polylog n$-query, nonadaptive one-sided error
$\epsilon$-tester for $(1,2,3)$-freeness of real-valued functions on the $2$-dimensional grid $[n]^2$. 

\begin{theorem}\label{thm:3-pat-no-induction}
	There exists a nonadaptive one-sided error $\epsilon$-tester for $(1,2,3)$-freeness of functions $f \colon [n]^2 \to
	\R$ that uses $\log^{O(1)} n$ queries for any constant $\epsilon \in (0,1)$.
\end{theorem}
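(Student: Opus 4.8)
Since $\epsilon$ is constant, we are free to lose constant and $\polylog n$ factors throughout. If $f$ is $\epsilon$-far from $(1,2,3)$-freeness then, by \cref{cor:hamdel3}, it is $\epsilon$-far in deletion distance, so by \cref{obs:matching} there is a matching $M$ of pairwise index-disjoint $(1,2,3)$-appearances with $|M| \ge \epsilon n^2/3$. Our tester will only ever reject when it has actually observed a $(1,2,3)$-appearance among its queries, so completeness and one-sidedness are immediate; the whole task is to sample so that, whenever such an $M$ exists, a $(1,2,3)$-appearance is observed with constant probability. Fix a parameter $m$ (to be set, and varied in a recursion, later) and partition $[n]^2$ into an $m\times m$ array of axis-parallel blocks of side $n/m$. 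Each appearance of $M$ has its three legs in blocks $B_1 \preceq B_2 \preceq B_3$ (in the coordinatewise block order); its \emph{configuration} records, for each of the two axes, the weak order type of the three block-coordinates, in particular which legs share a block along which axis. There are only $O(1)$ configurations, so some configuration $\mathcal C$ carries $\Omega(\epsilon n^2)$ appearances of $M$. The tester runs an independent nonadaptive subroutine for every configuration and accepts iff all of them accept.

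For a \emph{generic} configuration, in which the three leg-blocks are pairwise distinct along both axes (so every point of the first block precedes, in $\prec$, every point of the second, which precedes every point of the third), the only task is to produce a value-increasing triple $p_1, p_2, p_3$ lying in one such block-triple. Generalizing the one-dimensional median argument, for the popular configuration there exist values $v_1 < v_2$ and regions $R_1, R_2, R_3$ (unions of blocks, with every point of $R_1$ preceding every point of $R_2$ preceding every point of $R_3$) so that an $\Omega(\epsilon)$-fraction of $R_1$ (resp.\ $R_2$, $R_3$) consists of $1$-legs of value $< v_1$ (resp.\ $2$-legs of value in $(v_1,v_2)$, $3$-legs of value $> v_2$) of appearances of $M$ in $\mathcal C$; a single uniform sample from each $R_i$ hits the relevant set with probability $\Omega(\epsilon)$, and any three such hits form a $(1,2,3)$-appearance, so $O(\epsilon^{-3})=O(1)$ samples suffice. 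Nonadaptivity is preserved: we query a batch of uniform points in each block and, in postprocessing, search over all candidate thresholds among queried values and all block-triples.

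The substantive configurations are the \emph{degenerate} ones, in which two leg-blocks coincide along some axis, so that the required position order between the corresponding legs is \emph{not} automatic---for instance $B_1=B_2$ with $B_3$ strictly dominating (one must exhibit a $(1,2)$-appearance \emph{inside} a single block, among points of value below that of a $3$-leg), or $B_1,B_2$ generic but $B_3$ sharing an axis with $B_2$. Here we extend the median argument so that it controls not only $f$-values but also coordinates: first, by a median/typicality argument, isolate a value threshold $v^*$ and a region so that many appearances of $\mathcal C$ have a $3$-leg of value $> v^*$ there and an induced $(1,2)$-subpair of values $< v^*$ in the complementary region; then the residual task of producing that $(1,2)$-subpair is exactly a monotonicity-testing instance, on the \emph{partial} domain cut out by the value threshold together with the block and position constraints, which need not be isomorphic to a grid. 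We discharge it with the nonadaptive partial-function monotonicity tester of \cref{thm:improved-mon-partial-main} (or the erasure-resilient tester of \cref{thm:improved-mon-er-main} when density is bounded below), invoking \cref{rem:needed-main} to guarantee that the violation it returns is \emph{typical}, so that an independently sampled $3$-leg completes it with constant probability when the samples are drawn in the right order. Two one-dimensional luxuries are lost and must be tracked: the induced monotonicity instance may have a sub-constant distance parameter (so a naive call costs $\poly(m)$), and it lives on a nontrivial partial domain.

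To keep the overall query complexity polylogarithmic despite the $\poly(m)$ blow-up above, we do not take $m$ polylogarithmic outright. In the single genuinely recursive configuration the reduced instance is, up to the partial-domain caveat just handled, again a $(1,2,3)$-freeness instance but on a domain of size $\poly(n/m)$; choosing $m$ so that the domain shrinks geometrically (e.g.\ $m=\sqrt n$) yields a recurrence $T(n)=\polylog n + T(\sqrt n)$, whose solution is $T(n)=\polylog n$, with $O(\log\log n)$ levels and $O(1)$ configurations plus one monotonicity sub-test per level. Assembling the configuration subroutines over all recursion levels, each nonadaptive and one-sided, gives the claimed tester. The main obstacle is the degenerate configurations: there the median argument must simultaneously govern the $f$-values and the spatial coordinates of the legs---which cannot be done ``in parallel'' and forces a carefully ordered sampling together with the typicality guarantee of \cref{rem:needed-main}---and it must bottom out in a monotonicity test on a partial, non-grid domain with a possibly sub-constant distance, which is precisely what makes both the partial-function monotonicity tester and the recursion on $m$ necessary.
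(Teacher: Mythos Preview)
Your high-level decomposition---grid into $m\times m$ blocks, case-split on the configuration of leg-blocks, handle the non-degenerate configurations by (two-sided) median arguments, reduce the degenerate ones to a monotonicity test on a partial domain via \cref{rem:needed-main}, and recurse on the ``all legs in one block'' case---matches the paper's structure. The gap is in the recursion.

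You correctly note that the degenerate configurations cost $\poly(m)$ per level (the induced monotonicity instance has distance $\widetilde\Theta(\epsilon/m)$ or $\widetilde\Theta(\epsilon/m^2)$, so the calls to \cref{thm:improved-mon-partial-main} cost $\widetilde O(m)$ or $\widetilde O(m^2)$). But then you set $m=\sqrt n$ and write the recurrence as $T(n)=\polylog n + T(\sqrt n)$. These two statements are inconsistent: with $m=\sqrt n$ the per-level work is $\poly(\sqrt n)$, not $\polylog n$, and unrolling gives $T(n)=\poly(\sqrt n)$, not $\polylog n$. Your remark that ``we do not take $m$ polylogarithmic outright'' is exactly backwards. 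The paper takes $m=\log^2 n$, so the non-recursive configurations cost $\widetilde O(m^2)=\polylog n$ per level; the price is that the block side only shrinks by a $\polylog n$ factor, so the recursion has depth $O(\log n)$ (not $O(\log\log n)$), and one must track that the proximity parameter degrades only by a factor $(1-1/\log n)$ per level so that it stays $\Theta(\epsilon)$ at the leaves. A second point you glossed over: recursing on a \emph{single} random block does not preserve farness with good probability; the paper samples $\widetilde\Theta(1/\epsilon)$ blocks and recurses on their union, using Hoeffding to certify the union is still $\epsilon'$-far (this is also why the recursive subproblems decouple into parallel instances and the tester stays nonadaptive).
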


We first present some required preliminaries. Then, we describe a
simpler $\widetilde{O}(n)$ tester which is later generalized to the final tester.

\subsection{Preliminaries for the \texorpdfstring{$(1,2,3)$}{(1,2,3)}-freeness tester}\label{sec:prelim-123}
Our goal is to design a one-sided error tester for
$(1,2,3)$-freeness. That is, we are only going to reject upon finding a
$(1,2,3)$-appearance. Thus, in what follows, we may assume that $f$ is $\epsilon$-far from being
$(1,2,3)$-free.
By~\cref{obs:matching}, $f$ contains a matching $M$ of $(1,2,3)$-appearances
of size $\Omega(\epsilon n^2)$.

\subsubsection{Grid of boxes} \label{sec:grid_of_boxes}

For a fixed parameter $m \leq n$, let $G_m^{(2)}$ be  a partition of $[n]^2$ into $m \times m$ boxes of size
$n^2/m^2$ each, in the natural way. These
boxes are arranged naturally in rows and columns in $G_m^{(2)}$. Here, a row refers to a set of points defined by a range of $y$-coordinates and a column refers to a set of points defined by a range of $x$-coordinates. 
For ease of description, we refer to the direction of increase of the $x$-coordinate as being from left to right and the direction of increase of the $y$-coordinate as being from bottom to top of the grid $[n]^2$.

A major tool that we use is to partition $M$ into a relatively small number of certain types of matchings, and try to find
a $\pi$-appearance in one of the types.
This is done as follows. Let $\bar{p}=(p_1,p_2,p_3)$ be a $\pi$-appearance in $M$. 
We say that $\bar{p}$ is in $M_0$ if  the legs of $(p_1,p_2,p_3)$ all belong to the same box in $G_m^{(2)}$.
The point $\bar{p}$ is in $M_1$ if its legs belong to boxes that are all in the same row (or column) in $G_m^{(2)}$, but not all in the same box.  The matching $M_2$  contains all appearances $\bar{p}=(p_1,p_2,p_3)$  where the boxes containing $p_1$ and $p_2$ do not share a row or column in $G_m^{(2)}$ (see \cref{fig_M2}). Finally,
$\bar{p}$ is in $M_3$ if $p_1,p_2$ share a row (or a column), and
$p_2,p_3$ share a column (or a row), altogether spanning $2$ rows
and $2$ columns (see \cref{fig_M3}).
With these definitions, the matching $M$ is partitioned into $M= M_0
\cup M_1 \cup M_2 \cup M_3$.
We note that the cases where we replace $p_1$
with $p_3$ are similar to the cases above. For instance, the case where $p_3$ is not
in the same row or column as $p_2$ is identical to $M_1$. Thus, these
cases cover all possible configurations of $(1,2,3)$-appearances.
Since  $|M| \geq \epsilon n^2$, it must be the case that 
\begin{enumerate}
	\item either $M_0$ has cardinality at least $\epsilon n^2 \cdot (1 - \frac{1}{\log n})$
	\item or one of $M_1, M_2, M_3$ has size at least $\epsilon n^2 /(3 \log n)$.
\end{enumerate}
We will now show how to develop an algorithm for each case.\footnote{It
	would be more natural to assume that one of the 4 cases has size at
	least $\epsilon n^2/4$. This will be enough for the $\widetilde{O}(n)$ algorithm, but
	this is not good enough for the $\polylog n$-query algorithm. Hence, we use this asymmetric assumption.}

\subsection{A \texorpdfstring{$\widetilde{O}(n)$}{O~(n)} tester for \texorpdfstring{$(1,2,3)$}{(1,2,3)}-freeness}\label{sec:O(n)}

The tester gets $\epsilon \in (0,1), n \in \mathbb{N}$ as inputs and has oracle access to a function $f \colon [n]^2 \to \mathbb{R}$.
Let $m \colonequals \sqrt{n}$, and let $G_m^{(2)}$ be the
corresponding $2$-dimensional grid of boxes.
The tester runs all of these procedures and rejects if any of them finds a $(1,2,3)$-appearance.

\subsubsection{\texorpdfstring{$M_0$}{M0} is large} \label{sec:M0_large-main}

We assume that $|M_0| \geq \epsilon n^2 \cdot (1 - \frac{1}{\log n})$. 
Let $\epsilon^{(i)} = \epsilon \cdot (1 - \frac{1}{\log n})^i$. 
A uniformly random box, in expectation, contains a matching of
$(1,2,3)$ appearances of size at least $\epsilon^{(1)} n^2/m^2$. By the
reverse Markov inequality, at least $\epsilon/(2\log n)$ fraction of boxes
each contain matchings of cardinality at least $\epsilon^{(2)}
n^2/m^2$.  
That is, if we sample $\Theta(\log^2 n)$ boxes uniformly and independently at random, then with probability at least $1 - n^{-\Omega(1)}$, one of them contains a matching of cardinality at least $\epsilon^{(2)} n^2/m^2$. Therefore, querying all the points in each of these boxes, will be sufficient to find a $(1,2,3)$-appearance. The query complexity is clearly $\widetilde{O}(n^2/m^2)$, which is $\widetilde{O}(n)$ by our setting of $m$.

\subsubsection{\texorpdfstring{$M_1$}{M1} is large} \label{sec:M1_large-main}

We assume that $M_1$ has cardinality at least
$\epsilon n^2 /(3 \log n)$.  Without loss of generality, we may assume
that each of these appearances have all their legs belonging to the
same row.  There are two main possibilities depending on whether the
legs occupy $2$ or $3$ boxes in the row.  Here, we outline the cases
where the legs occupy $2$ boxes. The case of the legs occupying $3$
boxes is similar and is omitted. Further, it could be that  the $1$-
and $2$-legs belong to the same box or the $2$- and $3$-legs belong
to the same box. We only consider the former case as the arguments to deal
with the latter case are analogous.

For two domain points $u, v \in [n]^2$, the $y$-spacing between $u$
and $v$ is the distance between the $y$-coordinates of $u$ and
$v$. The matching of $(1,2,3)$-appearances in which the $1$- and $2$-legs
are in the same box, but the $3$-leg is in a different box, can be further
split into these appearances for which 
the $y$-spacing between $2$- and $3$-legs is more than that between the $1$-
and $2$-legs. We denote this submatching by $M_{11}$, and denote by $M_{12}$
the submatching
for which the $y$-spacing between $1$- and $2$-legs in the
$(1,2,3)$-appearance is more than that between the $2$- and $3$-legs (see \cref{fig:M11_M12_2boxes}).

\begin{figure}[ht]
	\centering
	\includegraphics[scale=0.6]{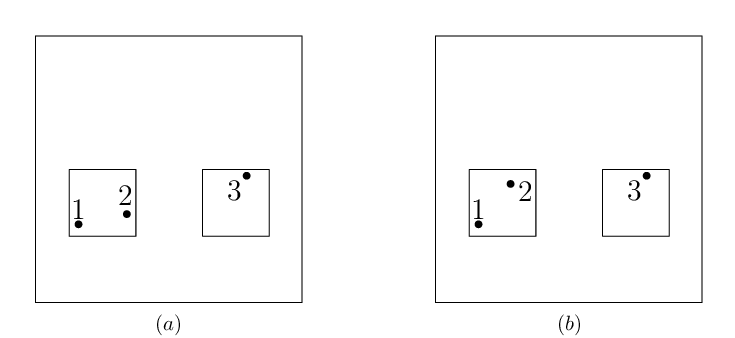}
	\caption{(a) $M_{11}$: $\Delta_y(2\text{-leg},3\text{-leg}) \geq \Delta_y(1\text{-leg},2\text{-leg})$ (b) $M_{12}$: $\Delta_y(2\text{-leg},3\text{-leg}) < \Delta_y(1\text{-leg},2\text{-leg})$}
	\label{fig:M11_M12_2boxes}
\end{figure}

\paragraph{The easy case: $M_{11}$ is large.} The subcase where $M_{11}$ is large (i.e., $\Abs{M_{11}} \geq \epsilon n^2/(c \log n)$ for some absolute constant $c$) is easy to handle.  At a high level, we can partition $M_{11}$
into logarithmically many buckets according to the $y$-spacing
between the $2$- and $3$-legs, much like in
\cite{NewmanRRS19}. For the right bucket, it is easy to first sample a box $B$ containing \emph{sufficiently many} $3$-legs and then
run a $(1,2)$-freeness tester to find a corresponding $(1,2)$-appearance in the same row as $B$. The existence of many such $(1,2)$-appearance
is guaranteed by the `median' argument.
A detailed discussion is deferred to the appendix (\cref{sec:M11_large}).

\paragraph{The hard case: $M_{12}$ is large.}
Suppose that the cardinality of $M_{12}$ is at least $\epsilon' n^2$,
where $\epsilon'$ is $\epsilon/(c \log n)$ for some absolute constant
$c$. The reason that this case is harder is because sampling an intended
$3$-leg in $B$ as above would determine a relatively small region for
the $2$-leg, but the $1$-leg could be in a much larger region, and so,
the corresponding $(1,2)$-appearances have too small a density.  For
this case, we will need to resort to the additional `typicality'
property of the output of the $(1,2)$-freeness tester.

We start by sampling one box $B$. With probability at least $\epsilon'/2$, a uniformly random box has at least $(\epsilon'/2) \cdot (n^2/m^2)$ many $(1,2)$-appearances. 
Thus, with probability at least $1-n^{-\Omega(1)}$, one out of $\widetilde{O}(1/\epsilon)$ sampled boxes satisfies this.
We condition on sampling such a box $B$. 

Let $R$ be the row containing $B$ and let $R'$ be the part of the row $R$ to the right of $B$. 
For each $k \in [\log n]$, we consider equipartitioning $R$ into $n/(2^k m)$ horizontal strips, each being a grid isomorphic to $[n] \times [2^k]$. 
Let $k' \in [\log n]$ be the scale that contains the largest number of the $(1,2,3)$-appearances 
whose $2$- and $3$-legs are on adjacent strips.  
That is, the row $R$ restricted to this scale contains a matching $M'$
of $(1,2,3)$-appearances of size at least $(\epsilon'/(2\log n)) \cdot
(n^2/m^2)$, where the $2$- and $3$-legs are on adjacent strips (but the
corresponding $1$-leg could be anywhere). Hence, a random strip of
this scale will contain at least $\epsilon'/(2\log n)$ fraction of
points that are the $2$-legs of appearances of that scale in $M_{12}$.

Now the argument is as follows: we first apply the monotonicity tester with proximity parameter\footnote{A monotonicity tester $T$ with \emph{proximity parameter} $\epsilon$ distinguishes the case that the queried function $f$ is monotone from the case that $f$ is $\epsilon$-far from monotone, with probability at least $2/3$.} $\Theta(\epsilon'/\log n)$
to $B$ to find a $(1,2)$-appearance corresponding to
$(1,2,3)$-appearances from $M_{12}$ of the fixed scale above. Further,
by the typicality feature stated in \cref{rem:needed-main}, this $2$-leg is
picked uniformly at random from all $2$-legs in the submatching of
$M_{12}$ restricted to the above fixed scale, and to the strip where
the $2$-leg is. In particular, the strip that contains the point
picked as the $2$-leg is a random strip among the strips of the above
fixed scale, and the value of the $2$-leg is below the median of all
$2$-legs of that scale in that strip.

Assuming the above, now the situation is much simpler: since by the
assumption above the corresponding strip contains at least $\epsilon ' /(4\log
n)$ fraction of points that are $2$-legs of corresponding
$(1,2,3)$-appearances, and assuming the median feature for the actual
picked $2$-leg, above, there are at least $\epsilon '/(8m \log n)$
fraction of points in the next strip of the same fixed scale and to
the right of $B$ that are the corresponding $3$-legs, and with value
above the value of our picked $2$-leg. Selecting a random point in
this strip will find one resulting in a $(1,2,3)$-appearance. Again,
working out the details (and using amplification to reach a high
success probability) results in a $\widetilde{O}(m/\epsilon^2)$-query algorithm for this
case. The pseudocode for this case can be found in the appendix (\cref{alg:M12-large-123}).

\subsubsection{\texorpdfstring{$M_2$}{M2} is large} \label{sec:M2_large-main}

$M_2$ is the matching of $(1,2,3)$-appearances where the boxes containing the $1$- and $2$-legs do not share a row or column. The discussion of the case where $M_2$ is large, i.e., $\Abs{M_2} \geq \epsilon n^2/(3\log n)$, is deferred to the appendix (\cref{sec:M2_large}).

\subsubsection{\texorpdfstring{$M_3$}{M3} is large}\label{step:3} 

In this case, the only possibilities are that the three boxes containing the legs of a $(1,2,3)$-appearance 
form a $\Gamma$-shape or an inverted $\Gamma$-shape, as shown in \cref{fig_M3}. The treatment of both cases is similar, so we assume that a majority of the $(1,2,3)$-appearances in
$M_3$ form a $\Gamma$-shape: that is, the $1,2$-legs are on the same column, and the $2,3$-legs are on the same row, as depicted in \cref{fig_M3}(a).

\begin{figure}[ht]
	\centering
	\includegraphics[scale=0.6]{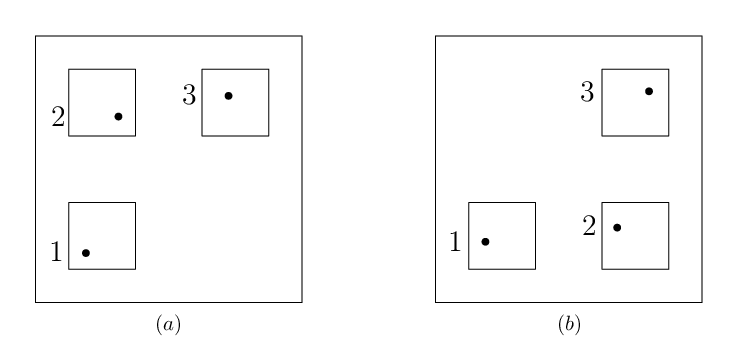}
	\caption{$M_3$: (a) $(1,2,3)$-appearance in a $\Gamma$-shape (b) $(1,2,3)$-appearance forming an inverted $\Gamma$-shape}
	\label{fig_M3}
\end{figure}

Consider a partition of $M_3$ as follows. For $\ell = 0,1,\ldots ,\log n$, let $(p_1,p_2,p_3) \in
M_3$ belong to $M^{(\ell)}$ for the smallest $\ell$ such that $x(p_2) -
x(p_1) < 2^\ell$, where $x(p)$ denotes the $x$-coordinate of the
point $p$.

Recall that by our assumption, $|M_3| \geq \epsilon n^2/(3 \log n)$. Let $\ell^* \in \{0,1,\ldots ,\log n\}$ be such that $|M^{(\ell^*)}| = \Omega(\epsilon
n^2/(\log^2 n))$.
We first consider the case when $\ell^* = 0$. That is, for $\Omega(\epsilon
n^2/(\log^2 n))$ many $(1,2,3)$-appearances in $M^{(\ell^*)}$  the $x$-coordinate of the $1$- and $2$-legs are equal. Now, a uniformly random box $B$ will contain the $2$-legs of $\Omega(\epsilon
n^2/(m^2\log^2 n))$ many $(1,2,3)$-appearances from $M^{(\ell^*)}$, in expectation.
In other words, with probability $\widetilde{\Omega}(\epsilon)$, a uniformly random box will contain the $2$-legs of $\Omega(\epsilon
n^2/(m^2\log^2 n))$ many $(1,2,3)$-appearances from $M^{(\ell^*)}$. Thus, in $\widetilde{\Theta}(1/\epsilon)$ iterations, with probability $1 - n^{-\Omega(1)}$, there exists a sampled box $B$ that contains the $2$-legs of $\Omega(\epsilon
n^2/(m^2\log^2 n))$ many $(1,2,3)$-appearances from $M^{(\ell^*)}$. In the following, we condition on such a box $B$ being sampled.

For this box $B$, let $R'$, as before, denote the row containing $B$ when restricted to the boxes to the right of $B$. We can see that $R'$ is $\Omega(\epsilon/(m\log^2 n))$-far from $(1,2)$-freeness, by virtue of the $2$- and $3$-legs of the $(1,2,3)$-appearances. Thus, running the $(1,2)$-freeness tester with proximity parameter $\Omega(\epsilon/(m\log^2 n))$ for $\widetilde{\Theta}(m/\epsilon)$ iterations, finds $\widetilde{\Theta}(m/\epsilon)$ many such $(2,3)$-appearances, with probability $1 - n^{-\Omega(\log n)}$. 

For each of the $\widetilde{\Theta}(m/\epsilon)$ many $(2,3)$-appearances
found above, the $2$-leg is a uniformly random point among the $2$-legs of $B$, by~\cref{rem:needed-main}.
Now, let $C$ be the column containing box $B$ and consider a partition of $C$ into subcolumns where each subcolumn, in this case, consists of 
points with the same $x$-coordinate.
These $n/m$ columns cover the set of all $2$-legs in $B$.
Let $(p_2, p_3)$ be one such $(2,3)$-appearance returned above. 
Since $p_2$ is a uniformly random point among the $2$-legs of $B$, there are, in expectation, $\Omega(\epsilon
n/(m\log^2 n))$ many $2$-legs in the subcolumn of $B$ containing $p_2$.
From this, one can conclude that, with probability $\Omega(\epsilon
/(m\log^2 n))$, (1) the point $p_2$ belongs to a subcolumn that has $\Omega(\epsilon
n/(m\log^2 n))$ many $2$-legs and (2) that $f(p_2)$ is at least the median value of $2$-legs in its subcolumn. 
Since we have $\widetilde{\Theta}(m/\epsilon)$ many such $(2,3)$-appearances, with probability at least $1 - n^{-\Omega(\log n)}$, one of them must satisfy the above conditions. Let $(p_2', p_3')$ denote that $(2,3)$-appearance. Now, it must be the case that
there are $\Omega(\epsilon n/(m\log^2 n))$ many $1$-legs in the subcolumn of $p_2'$ that can, together with $p_2'$, result in a $(1,2,3)$-appearance.
Since we sample $\widetilde{\Theta}(m/\epsilon)$ many points from the subcolumn of each such $2$-leg sampled, we succeed in finding a $(1,2,3)$-appearance with probability at least $1 - n^{-\Omega(\log n)}$. 

The argument for $\ell^* \geq 1$ is very similar to the one and is omitted for brevity.  The query complexity of a procedure resulting from the above idea is $\widetilde{O}(m^2 \poly(1/\epsilon))$, and its pseudocode can be found in the appendix (\cref{alg:M4-large-123}).

\begin{remark}\label{rem:750}
	In the description above, only the step that uses sampling to detect the $1$-leg is adaptive, as it is
	performed w.r.t.\ the $(2,3)$-appearances that are
	found. However, since such an appearance is guaranteed to be found
	with high probability, this adaptivity is not algorithmically
	needed. Hence the whole algorithm can be made nonadaptive with
	essentially the same complexity.
\end{remark}

\subsection{A recursive algorithm}\label{sec:recursive}

Our goal here is to improve the $\widetilde{O}(n)$ tester to a
$\polylog n$ query algorithm and prove \cref{thm:123_polylog}. 

Let $\epsilon^{(i)} = \epsilon \cdot (1 - \frac{1}{\log n})^i$. Let $m = \log^2 n$.
The idea is to use recursion in the case that $\epsilon^{(1)}$ fraction
of $(1,2,3)$-appearances each have all their legs in one box in $G_m^{(2)}$. This corresponds to the case that $M_0$ has large size in the description in the preceding sections. 
Consider a collection of $t = t^{(1)} = \Theta(\frac{\log^3 n}{\epsilon^{(1)}})$ boxes sampled uniformly and independently at random from $G_m^{(2)}$.
Let $X$ denote the size of the matching of $(1,2,3)$-appearances restricted to these boxes.
We know that $\E[X] = t \cdot \frac{\epsilon^{(1)} n^2}{m^2}$.
By the Hoeffding's inequality, we know that $X \geq t \cdot \frac{\epsilon^{(2)} n^2}{m^2}$, with probability
at least $1 - n^{-\Omega(1)}$. Let $B_1, \ldots, B_t$ denote the boxes sampled above and let $S$ denote the union of these boxes in $G_m^{(2)}$.  From the above, we can conclude that the region $S$ is $\epsilon^{(2)}$-far from
being $(1,2,3)$-free. We recursively call our algorithm on $S$. The important point is that
we only need to consider $(1,2,3)$-appearances that are fully contained within the boxes $B_1, \ldots, B_t$ and we can ignore any
appearance that has its legs belonging to different boxes. In other words, we are, in a sense, solving $t$ disjoint subproblems in parallel.

First, we consider a gridding of each of these boxes into $m \times m$ subboxes.
Suppose that at least $\epsilon^{(2)}/(3\log n)$ fraction of all the $(1,2,3)$-appearances in $S$ are such that
their legs all belong to subboxes in the same row but not all in the same subbox (as in the case when $M_1$ is large). 
It must be the case that there exists one box $B_i$ such that at least $\epsilon^{(2)}/(3t\log n)$ fraction of all the 
$(1,2,3)$-appearances have their legs belonging to rows of subboxes of $B_i$. Thus, running the procedure
for the case that $M_1$ is large on each box $B_i, i \in [t]$ separately with parameter $\epsilon^{(2)}/(3t\log n)$
will find such a $(1,2,3)$-appearance with an overall success probability of $1 - n^{-\Omega(1)}$. The number of queries is $\widetilde{O}(m^2 \cdot \poly(1/\epsilon^{(2)}))$. 
The other non-recursive cases can be dealt with in this same fashion. 

For the recursive case, we sample $t^{(2)} = \Theta(\frac{\log^3 n}{\epsilon^{(2)}})$ subboxes uniformly and independently at random from the region $S$ and as above, a Hoeffding's bound will guarantee that the union of these subboxes is $\epsilon^{(3)}$-far from
being $(1,2,3)$-free.

The base level of the recursion is when the region has size $\Theta(\log^4 n \cdot \poly(1/\epsilon))$, in which case, we query the entire region. 
Note that at the $i$-th recursion level, for $i \geq 1$, the size of the domain becomes $(n^2/m^{2i}) \cdot t^{(i)} = (n^2/m^{2i}) \cdot \Theta(\frac{\log^3 n}{\epsilon^{(i)}})$. Setting $m = \log^2 n$ bounds the recursion depth to at most $\log n$. The proximity parameter at the base level is $\epsilon/c$ for some absolute constant $c$.
The total query complexity in the non-recursive cases at the $i$-th recursion level is $\widetilde{O}(m^2 \cdot \poly(1/\epsilon^{(i)}))$ for all $i \geq 1$. Hence, the overall query complexity of the recursive algorithm is $\polylog n \cdot \poly(1/\epsilon)$. 

\begin{remark}
	As in \cref{rem:750}, this algorithm can also be made nonadaptive.
\end{remark}

This completes the proof of \cref{thm:123_polylog}, which we restate here for convenience.
\begin{theorem}
	There exists a one-sided error nonadaptive $\epsilon$-tester with query complexity $\log^{O(1)} n$ for $(1,2,3)$-freeness of functions $f \colon [n]^2 \to
	\R$ for constant $\epsilon \in (0,1)$. $\qed$
\end{theorem}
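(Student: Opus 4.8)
The plan is to prove \cref{thm:123_polylog} by establishing \cref{thm:3-pat-no-induction}, following exactly the structure already laid out in \cref{sec:123}: we design a one-sided error tester that only rejects upon finding an explicit $(1,2,3)$-appearance, so we may assume throughout that $f$ is $\epsilon$-far from $(1,2,3)$-freeness and invoke \cref{obs:matching} to obtain a matching $M$ of $(1,2,3)$-appearances of size $\Omega(\epsilon n^2)$. The crucial structural step is the partition $M = M_0 \cup M_1 \cup M_2 \cup M_3$ according to how the three legs of each appearance are distributed among the boxes of $G_m^{(2)}$ (same box; same row/column but not same box; the $1$- and $2$-leg boxes share no row or column; the $\Gamma$-shaped configuration). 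By the pigeonhole bound stated in \cref{sec:grid_of_boxes}, either $|M_0| \geq \epsilon n^2(1 - 1/\log n)$ or one of $|M_1|,|M_2|,|M_3| \geq \epsilon n^2/(3\log n)$. For the non-$M_0$ cases I would run the procedures already developed for the $\widetilde O(n)$ tester (\cref{sec:M1_large-main}, \cref{sec:M2_large-main}, \cref{step:3}), but with the box parameter set to $m = \log^2 n$ rather than $\sqrt n$; since the query complexity of each of those procedures is $\widetilde O(m^2 \cdot \poly(1/\epsilon))$ or $\widetilde O((n^2/m^2)\cdot\poly(1/\epsilon))$, this already drops to $\polylog n \cdot \poly(1/\epsilon)$ per level.

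The genuinely new ingredient is handling the case $|M_0|$ large via recursion, which is the plan of \cref{sec:recursive}. Here I would sample $t = \Theta(\log^3 n / \epsilon^{(1)})$ boxes of $G_m^{(2)}$ uniformly at random, let $S$ be their union, and use Hoeffding's inequality to argue that with probability $1 - n^{-\Omega(1)}$ the restriction of $M_0$ to $S$ still contributes $\geq t\cdot\epsilon^{(2)} n^2/m^2$ disjoint appearances, so $S$ is $\epsilon^{(2)}$-far from $(1,2,3)$-freeness. Crucially, since each appearance in $M_0$ lies entirely within one box, we only ever need to look for appearances within individual sampled boxes, i.e.\ we recurse on each $B_i$ with the (slightly degraded) parameter and combine. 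Recursing with the degrading sequence $\epsilon^{(i)} = \epsilon(1-1/\log n)^i$ — note $\epsilon^{(\log n)} = \Theta(\epsilon)$, so all these parameters stay within a constant factor of $\epsilon$ — and with $m = \log^2 n$, the domain size at recursion depth $i$ is $(n^2/m^{2i})\cdot\Theta(\log^3 n/\epsilon^{(i)})$, so after at most $\log n$ levels we reach a base case of size $\Theta(\log^4 n \cdot \poly(1/\epsilon))$, which we simply query in full.

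For the complexity bookkeeping I would sum the per-level contributions: at each of the $O(\log n)$ levels the non-recursive branches cost $\widetilde O(m^2\cdot\poly(1/\epsilon)) = \polylog n\cdot\poly(1/\epsilon)$, and there is no branching factor blow-up because the $t$ sampled boxes are handled with total (not per-box) success probability controlled by a union bound and the asymmetric split $\epsilon^{(2)}/(3t\log n)$ absorbs the factor $t$. Since $\epsilon$ is constant, everything collapses to $\log^{O(1)} n$ queries. One-sided error is immediate: every branch rejects only when it has exhibited three indices $x_1 \prec x_2 \prec x_3$ with $f(x_1) < f(x_2) < f(x_3)$; completeness (accepting $(1,2,3)$-free $f$ with probability $1$) is therefore automatic, and soundness is the probability-$1-n^{-\Omega(\log n)}$ guarantee accumulated above. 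By \cref{rem:750} and the remark at the end of \cref{sec:recursive}, the adaptivity in the $1$-leg-detection step is not needed (the $(2,3)$-appearance it depends on is found with overwhelming probability), so the whole tester is nonadaptive.

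The main obstacle I anticipate is controlling the interaction between the recursion and the degrading proximity parameters while keeping the success probability at $1 - n^{-\Omega(\log n)}$ across all $O(\log n)$ levels and all $t^{(i)}$ sampled subregions simultaneously — in particular, verifying that the asymmetric threshold $\epsilon^{(1)}$ for $M_0$ versus $\epsilon^{(i)}/(3\log n)$ for the other cases is exactly what makes the telescoping $\prod_i (1-1/\log n) \geq$ const work, so that the base-level proximity parameter is still $\Omega(\epsilon)$; a naive symmetric $\epsilon/4$ split would degrade to $\epsilon/4^{\log n}$ and be useless. The other delicate point, already flagged in \cref{sec:M1_large-main} and \cref{step:3}, is that several sub-procedures (the $M_{12}$ case, the $M_3$ case) reduce to monotonicity testing \emph{with subconstant proximity parameter} $\Theta(\epsilon/(m\polylog n))$ on a row or column, and rely on the typicality/median guarantee of \cref{rem:needed-main} — so one must be careful that the monotonicity tester of \cref{thm:improved-mon-er-main} indeed produces a uniformly-random $1$-leg (or $2$-leg) among the legs of the relevant matching, which is what licenses the subsequent conditioning on "the picked leg lies in a heavy subcolumn and above the median."
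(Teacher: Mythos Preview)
Your proposal is correct and follows essentially the same approach as the paper: the partition $M = M_0 \cup M_1 \cup M_2 \cup M_3$, the choice $m = \log^2 n$, the recursion on the $M_0$ case via sampling $t^{(i)} = \Theta(\log^3 n/\epsilon^{(i)})$ boxes with a Hoeffding bound, the asymmetric threshold ensuring $\epsilon^{(i)} = \epsilon(1-1/\log n)^i = \Theta(\epsilon)$ at depth $\log n$, and the nonadaptivity via \cref{rem:750} are all exactly as in the paper. You have also correctly flagged the two delicate points (the telescoping of the proximity parameters and the reliance on the typicality guarantee of \cref{rem:needed-main} in the $M_{12}$ and $M_3$ subroutines), which the paper handles in the same way you outline.
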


\section{Testing \texorpdfstring{$(1,3,2)$}{(1,3,2)}-freeness}\label{sec:132}

In this section we develop a sublinear-time tester for $(1,3,2)$-freeness and prove \cref{thm:3-pat-general}, where $\pi = (1,3,2)$ is
the only forbidden order in $\mathcal{S}_3$, other than $(1,2,3)$, up to
order-isomorphism. This case turns out to be harder than testing
$(1,2,3)$-freeness, much like the difference in the $1$-dimensional
case between the respective patterns. In particular, the sublinear-query
tester we develop is {\em adaptive}, as expected in view of the lower
bound on nonadaptive testers even for the $1$-dimensional case. Here, we need a different set of tools,
following~\cite{NV22}.

\subsection{Preliminaries}\label{sec:prelim-132-testing}

Let $f \colon [n]^2 \to \R$.  The set of values of $f$ along with the $2$-dimensional
domain naturally defines a $3$-dimensional box. This grid, the
\enquote{graph-grid} is isomorphic to $[n]^2 \times R(f)$ containing $n^2$
$f$-points of the form $(p,f(p)), ~ p \in [n]^2$.  For a parameter
$m \leq n$, let $G_m^{(2)}$ be
the $2$-dimensional $m \times m$ grid of boxes as defined in previous
sections.  One can then define a coarse $3$-dimensional
$m \times m \times m$ grid $G_m^{(3)}$ that partitions $[n]^2 \times R(f)$. In particular, $G_m^{(3)}$ partitions the set of
$f$-points, $\{(p,f(p)): ~ p \in [n]^2\}$ into $m^3$ parts. 
Let $I \subseteq [n]$ be an interval belonging to the partition of $x$-coordinates
in $G_m^{(3)}$ above. The region of the grid-graph with $x$-coordinates in the range $I$ 
is then referred to as an \emph{$x$-slice}. Each $x$-slice contains $m^2$ boxes. One can define $y$- and $z$-slices similarly. $z$-slices are also
referred to as \emph{layers}. The intersection of an $x$-slice and a $y$-slice is an \emph{$xy$-column} and contains $m$ boxes. One can define
$yz$ and $xz$-columns similarly.
We say
that a box $B \in G_m^{(3)}$ is nonempty if it contains a point
$(p,f(p))$ and empty otherwise.

We use \cite[Theorem 2.2]{JNO21}, restated for our
case, that generalizes the Marcus-Tardos Theorem
\cite{MarcusT04} as follows.

\begin{theorem}\cite{JNO21}\label{thm:MT} 
	For any integers $k,m, t$ such that $k \leq m$, there is a $\tau_t(k)$ such that if a function $g \colon [m]^t \to
	\{0,1\}$ evaluates to $1$ on more than $\tau_t(k) \cdot m^{t-1}$ entries, then, for any
	$k$-sized subposet $\pi_k$ of the $t$-dimensional grid $[k]^t$, there is a collection of
	$k$ $1$-entries in $[m]^t$ that form an isomorphic copy of $\pi_k$.   
\end{theorem}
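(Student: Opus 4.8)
The plan is to invoke \cite[Theorem 2.2]{JNO21} directly, after matching notation: that result states that any $g\colon[m]^t\to\{0,1\}$ which contains no isomorphic copy of a fixed $k$-element subposet $\pi_k$ of $[k]^t$ has at most $\tau_t(k)\cdot m^{t-1}$ entries equal to $1$, and the statement above is its contrapositive. Here $\tau_t(k)$ is whatever constant the proof of \cite{JNO21} yields — a rapidly growing but finite function of $k$ and $t$ — and since we only ever apply the theorem with $t=3$ and $k=O(1)$, finiteness is all we need. (This is the higher-dimensional, poset-valued generalization of the Marcus--Tardos theorem \cite{MarcusT04}.)

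For completeness, here is the template such a proof follows, should one want the argument rather than the citation. Let $f_t(m)$ be the maximum number of $1$'s in a $\pi_k$-free $g\colon[m]^t\to\{0,1\}$. The base case $t=1$ is immediate: every $k$-element subposet of the chain $[k]$ is the whole chain, and any $k$ ones on $[m]$ realize it, so $f_1(m)\le k-1$. For the inductive step, partition $[m]^t$ into $(m/s)^t$ axis-parallel blocks of side $s=s(k,t)$; call a block \emph{$i$-wide} if its $1$'s occupy at least $k$ distinct values in coordinate $i$, and \emph{narrow} if it is $i$-wide for no $i\in[t]$, so that a narrow block holds at most $(k-1)^t$ ones. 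Contracting each nonempty block to a single $1$ produces a $\pi_k$-free array on $[m/s]^t$, so there are at most $f_t(m/s)$ nonempty blocks, and hence
\[
\#\{\,1\text{'s of }g\,\}\;\le\;(k-1)^t\, f_t(m/s)\;+\;s^t\cdot\#\{\text{wide blocks}\}.
\]
The crux is to show that a $\pi_k$-free array has only $O_{k,t}(m^{t-1})$ wide blocks; this is done by a pigeonhole over the at most $\binom{s}{k}$ possible ``coordinate-$i$ footprints'' of an $i$-wide block, combined with a direct embedding argument showing that a long axis-parallel line of $i$-wide blocks already realizes $\pi_k$. Feeding this back gives $f_t(m)\le (k-1)^t f_t(m/s)+O_{k,t}(m^{t-1})$, and picking $s$ with $(k-1)^t<s^{t-1}$ makes the homogeneous part geometrically decaying, so $f_t(m)=O_{k,t}(m^{t-1})$, which is the asserted bound $\tau_t(k)\,m^{t-1}$.

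The step I expect to be the real obstacle — and the reason we simply cite \cite{JNO21} — is the wide-block count in dimension $t\ge 3$. In the two-dimensional case a single wide row (or column) of blocks already hosts an arbitrary $k\times k$ permutation matrix, so bounding wide rows is a clean pigeonhole; in higher dimensions a line of $i$-wide blocks pins down only the $i$-th coordinate, so assembling a genuinely $t$-dimensional pattern forces one to combine the wide-block families across the different coordinate directions coherently, and the bookkeeping that keeps $\tau_t(k)$ finite through this combination is the delicate part. None of this matters for our use: we invoke the theorem only to conclude that after $\widetilde{O}(m^2)$ queries into the tagged grid $[m]^3$, one either exhibits the forbidden configuration or is left with at most $\tau_3(k)\,m^2$ nonempty boxes.
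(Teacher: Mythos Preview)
Your proposal is correct and matches the paper exactly: the paper does not prove this theorem at all but simply cites it as \cite[Theorem 2.2]{JNO21}, restated for the case at hand. Your additional proof sketch goes beyond what the paper provides, which is merely the citation.
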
\label{thm:30}

We will use  \cref{thm:MT} with $k=3$ (corresponding to the forbidden $3$-pattern)
and $t=3$ corresponding to the $3$-dimensional Boolean grid of boxes that
corresponds to the graph of the function $f \colon [n]^2
\to \R$, in a similar way that $t=2$ was used in
\cite{NV22} for functions on the line. Namely, we view $G_m^{(3)}$
with its $f$-points as a {\em Boolean} grid  isomorphic to $[m]^3$
where a point (box) is $0$ if it is empty and $1$ otherwise.

The basic tool that allows this test is that the Hamming distance and
the deletion distance are equal for the property of
$(1,3,2)$-freeness by~\cref{cor:hamdel3}. Hence, in what follows,  we may
assume that $f$ is $\epsilon$-far from $(1,3,2)$-free, and contains a matching $M$ of $(1,3,2)$-appearances
with $|M| = \Omega(\epsilon  n^2)$, since we are going to develop a one-sided error tester for this property. We note that any
$(1,3,2)$-appearances in $f$ corresponds to a $(1,3,2)$-appearance in the
graph-grid (but not necessarily the other way, as in previous sections). 

\subsection{A \texorpdfstring{$\widetilde{O}(n)$}{O~(n)} tester for \texorpdfstring{$(1,3,2)$}{(1,3,2)}-freeness}\label{sec:O(n)-132}
Recall that a trivial deterministic tester for $(1,3,2)$-freeness takes
$O(n^2)$ queries. A baseline nonadaptive sublinear-query tester following \cref{lem:31} makes $O(n^{4/3})$ queries. Our aim in this section is to improve this by designing an $\widetilde{O}(n)$-query tester.

\begin{theorem}\label{thm:3-pat-n-tester}
	There exists a one-sided error
	$\epsilon$-tester with query complexity $\widetilde{O}(n)$ for $(1,3,2)$-freeness of functions $f \colon [n]^2 \to
	\R$ for constant $\epsilon \in (0,1)$.
\end{theorem}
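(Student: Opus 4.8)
The plan is to follow the same "Layering/Gridding + Marcus–Tardos + case analysis" skeleton used for $(1,2,3)$-freeness, but adapted to the $3$-dimensional graph-grid $G_m^{(3)}$ and the pattern $(1,3,2)$. First I would fix the parameter $m$ (to be optimized at the end; the claim is that $m = \sqrt{n}$ up to polylog factors will give $\widetilde{O}(n)$) and assume, via \cref{cor:hamdel3} and \cref{obs:matching}, that $f$ is $\epsilon$-far from $(1,3,2)$-free and hence carries a matching $M$ of $\Omega(\epsilon n^2)$ pairwise-disjoint $(1,3,2)$-appearances. Run the procedures \textsf{Layering} and \textsf{Gridding} on the graph-grid; this uses $\widetilde{O}(m^2)$ queries and either already exhibits a $(1,3,2)$-appearance (in which case reject), or produces the Boolean grid $[m]^3$ of boxes in which every \emph{vertical} ($z$-)column contains $O(1)$ nonempty boxes, and only $\widetilde{O}(m^2)$ boxes are nonempty in total. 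Apply \cref{thm:MT} with $k=3,\ t=3$: if more than $\tau_3(3)\cdot m^2$ boxes were nonempty we would find an isomorphic copy of the $(1,3,2)$ subposet among the nonempty boxes, which we can then probe to extract an actual appearance; otherwise we may assume there are at most $O(m^2)$ nonempty boxes, i.e. on average $O(1)$ nonempty boxes per column along each of the three axes.

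Next I would partition $M$ into a constant number of \emph{configuration classes} according to the relative positions (in $G_m^{(3)}$) of the three boxes containing the $1$-, $3$-, and $2$-legs of each appearance — analogous to the $M_0,M_1,M_2,M_3$ split in \cref{sec:123}, but now the boxes live in three dimensions. One class, $M_0$, consists of appearances all of whose legs lie in a single box of $G_m^{(3)}$; since the matching restricted to a uniformly random box has expected size $\Omega(\epsilon n^2/m^3)$ while a box has $n^2/m^2$ domain points, sampling $\widetilde{O}(1/\epsilon)$ boxes and querying one entirely costs $\widetilde{O}(n^2/m^2)$ — and here is exactly the case whose complexity is \emph{inversely} proportional to $m$, as flagged in the introduction. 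For the remaining classes, where the legs occupy two or three distinct boxes, I would reduce to monotonicity testing on partial domains: fix a "middle" ($3$-leg) box by sampling, use the partial-function / ER monotonicity tester of \cref{thm:improved-mon-partial-main,thm:improved-mon-er-main} (with the typicality guarantee of \cref{rem:needed-main}) to locate, inside an appropriate slice, a $(1,2)$-relation that is the $1$-leg/$2$-leg pair of some appearance whose $3$-leg sits above both in value, then sample a point in the correct column/slice to complete the triple. The scale (spacing) of the legs along each axis is bucketed logarithmically, costing only polylog factors; each such case then costs $\widetilde{O}(m^2\cdot\poly(1/\epsilon))$, mirroring the $M_1$, $M_2$, $M_3$ analyses for $(1,2,3)$.

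Balancing the two dominant terms — $\widetilde{O}(n^2/m^2)$ from the single-box case $M_0$ and $\widetilde{O}(m^2)$ from gridding and from the multi-box cases — gives $m = \Theta(\sqrt{n})$ and an overall query complexity of $\widetilde{O}(n)$, proving the theorem. The main obstacle, as the introduction warns, is the $M_0$ case: within a single box the graph-grid of the restricted function is again (a rescaled copy of) a $2$-dimensional hypergrid problem, but one cannot push the $m$-dependence below $n/m$ by a one-shot sampling argument, and one cannot reduce it to monotonicity testing because all three legs share one box. For the $\widetilde{O}(n)$ tester we simply pay the $\widetilde{O}(n^2/m^2)=\widetilde{O}(n)$ cost of reading a whole box; it is precisely the desire to recurse into this case with a smaller $m$ — where the induced subproblem is a test of a \emph{partial} function on a non-grid-isomorphic subdomain rather than the original problem — that later forces the more elaborate recursion and the final bound of $\widetilde{O}(n^{4/5+o(1)})$ in \cref{sec:132}. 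A secondary technical point to handle carefully is that, unlike the $1$-dimensional Gridding, here only vertical columns are guaranteed $O(1)$-sparse deterministically; horizontal columns are only sparse on average, so the case split over which axis the legs are "spread along" must treat the vertical-vs-horizontal distinction explicitly, and the reductions to monotonicity must be set up so that the relevant column is always a vertical one (or handled via an extra averaging step).
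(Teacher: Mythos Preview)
Your high-level plan matches the paper's skeleton, and the final $\widetilde{O}(n)$ bound is attainable, but your case decomposition differs from the paper's and leaves one subcase unjustified. The paper does not partition by 3D box first: it begins with a split in $G_m^{(2)}$, handling all 2D configurations as in \cref{sec:123} except the $\Gamma$-shape (where the $1$- and $3$-legs share a column and the $3$- and $2$-legs share a row), and only for that configuration performs \textsf{Gridding} to $G_m^{(3)}$. It then splits the $\Gamma$-matching by how many \emph{layers} ($z$-slices) the legs occupy. The bottleneck case---the one the introduction flags as having cost inversely proportional to $m$---is ``all three legs in one layer,'' and the paper handles it by picking a random layer and invoking the birthday bound (\cref{lem:31}) inside that layer's $O(n^2/m)$ points, at cost $\widetilde{O}((n^2/m)^{2/3}/\epsilon^{1/3}) = \widetilde{O}(n)$ for $m=\sqrt{n}$. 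It does \emph{not} try to reduce this case to monotonicity.

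Your ``$M_0$'' (all legs in a single box of $G_m^{(3)}$) is a strict subset of the paper's ``same layer'' case, and your read-the-whole-box idea handles it. The gap is the remainder: appearances whose legs lie in the same layer but in two or three \emph{different} 3D boxes. You fold these into ``multi-box, reduce to monotonicity, cost $\widetilde{O}(m^2)$,'' but this is precisely where the monotonicity reduction is not straightforward: since all three values sit in one layer, after the tester returns a $(3,2)$-pair you do not get $f(q_1)<f(q_2)$ for free from layer membership, so completing the triple is not the ``mirror of $M_1,M_2,M_3$'' you describe. (Also, the monotonicity violation in a $(1,3,2)$-appearance is the $(3,2)$-pair---the $3$-leg precedes the $2$-leg in domain but has larger value---not the $1$-leg/$2$-leg pair you name.) The paper sidesteps this entirely via birthday sampling; it is also this ``same-layer'' case, not your single-box $M_0$, that drives the recursion in \cref{thm:3-pat-general}.
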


\subsubsection{High-level outline}\label{sec:top-level}
Let $m \colonequals \sqrt{n}$. We consider the $2$-dimensional grid of boxes $G_m^{(2)}$ that partitions the
domain $[n]^2$. 
As in \cref{sec:123}, we can partition the matching $M$ into a
constant number of submatchings according to the configurations that the
$(1,3,2)$-appearances form. Moreover,  the cases in
which all the legs share a row, or share a column, are treated
identically as in \cref{sec:123}, and will
not be discussed here. 

The same is true for  the cases of the partial matchings in which the $1$-leg in
every appearance does not share a column or row with the $2$- or $3$-legs (analogous to the cases where $M_2$ is large in \cref{sec:123}).

The harder cases are when the $1$-leg and the $3$-leg share a column or a row (see \cref{Fig_new1}),
while the $2$-leg could be in a shared row (column) or in a third row (column). The harder cases among
these is when the $2$-legs share a row or a column with the
$3$-leg. There are two such configurations -- the $\Gamma$-shape (\cref{Fig_new1}(a)), and an inverted $\Gamma$-shape (\cref{Fig_new1}(b)). The treatment of these two cases is similar, so we only describe the case in which there is a large
matching $M_{\Gamma}$ of size $\Omega(\epsilon n^2)$ of $\Gamma$-shape $(1,3,2)$-appearances (i.e., the case shown in \cref{Fig_new1}(a)). The cases in which the $2$-leg does not share a column or row with the $3$-leg (see \cref{Fig_new1}(c) and \cref{Fig_new1}(d)) are similar.

\begin{figure}[ht]
	\centering
	\includegraphics[scale=0.6]{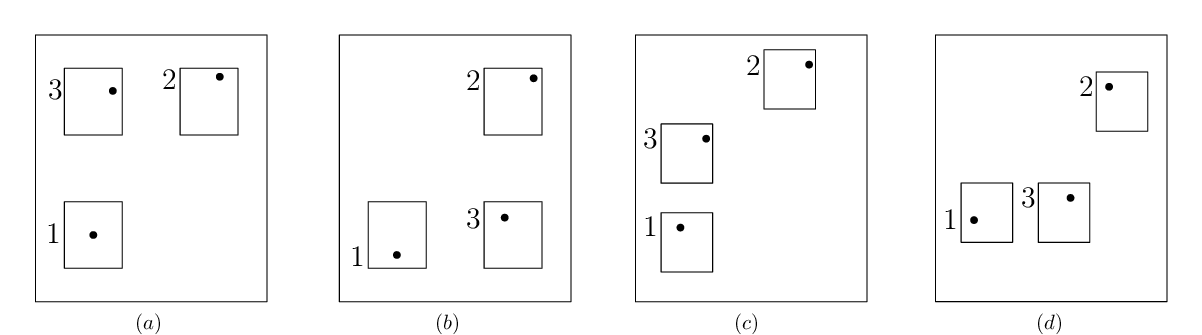}
	\caption{(a), (b) The harder cases -- Boxes in $G_m^{(2)}$ containing the legs of a $(1,3,2)$-appearance forming a (a) $\Gamma$-shape, and (b) an inverted $\Gamma$-shape.
		(c), (d) The easier cases -- Box in $G_m^{(2)}$ containing the $2$-leg is in a row and column different from that of the box containing the $3$-leg.}
	\label{Fig_new1}
\end{figure}

\vspace{0.5cm}
\noindent
{\bf The high level description:} Let $m \colonequals \sqrt{n}$.
\begin{enumerate}
	
	\item{\bf Gridding:} We first perform $\mathsf{Gridding}(S,I,m,\beta)$ (detailed in \cref{sec:subsec_gridding}) with $S \colonequals {[n]}^2$, $I \colonequals (-\infty,\infty)$, $m \colonequals \sqrt{n}$, and $\beta = \Theta(1)$. The algorithm $\mathsf{Gridding}(S,I,m,\beta)$ uses $\mathsf{Layering}(S, I, m)$ (detailed in \cref{sec:layering}) as a subroutine, which is used to obtain a \enquote{nice} partition of $R(f)$ into $m' \leq 2m = \Theta(\sqrt{n})$ layers (see \cref{def:nice-partition}). $\mathsf{Layering}$ is carried out by making $O(m' \cdot {\log}^4 {n}) = \widetilde{O}({m})$ queries. Via $\mathsf{Gridding}(S,I,m,\beta)$, we obtain the $3$-dimensional grid
	$G_m^{(3)}$ that partitions the graph-grid ${[n]}^2 \times R(f)$ into a set  $\mathcal{B}_3$ of $\Theta(m^3)$
	boxes with \enquote{well-controlled} densities (see \cref{cl:q_layering}). Each subbox $B \in \mathcal{B}_3$ is tagged by $\mathsf{Gridding}(S,I,m,\beta)$ as
	\emph{empty} if it contains no sampled points, 
	and \emph{nonempty} otherwise. Further, a nonempty box $B$ is tagged as \emph{dense} by $\mathsf{Gridding}$ if it estimates that $B$ contains $\Theta(n^2/m^2) = \Theta(n)$
	$f$-points. Note that the projection of any $B$ on the domain
	$[n]^2$ is of size $\Theta(n^2/m^2)$. Hence, \emph{dense} means that a constant
	fraction of the points in the projection of $B$ on the domain are
	indeed inside $B$. Otherwise, $B$ is \emph{not dense}. We prove that our
	estimates are accurate enough with high probability, and hence assume that the
	\emph{dense} boxes are indeed dense.
	
	We note that although there are $\Theta(m^3) = \Theta(n^{3/2})$ boxes in $G_m^{(3)}$,
	the process will only use \\$O\left(m \cdot {\log}^4 {n} + \frac{{\log}^4 {n}}{{\beta}^2} \cdot 4m^2\right) = \widetilde{O}(m^2) = \widetilde{O}(n)$ queries, and this process works as intended with high probability (at least $1 - 1/n^{\Omega(\log n)}$).
	
	\item As a result of $\mathsf{Gridding}$, we either find $\omega(m^2)$ nonempty
	boxes, or find out $O(m^2)$ dense boxes that cover all but a
	negligible fraction of the $f$-points.  In the first case, \cref{thm:MT} will guarantee the existence of a $(1,3,2)$-appearance
	in the already sampled points, in which case we stop and reject. Otherwise, we move on to the next item.
	
	\item\label{item:main} If we reach here, we have $O(m^2) = O(n)$ dense boxes in
	$\mathcal{B}_3$. We note that every slice ($x$-, $y$-, or $z$-slice) contains $O(m)$
	dense boxes, since, by averaging, we ignore other cases. By the same
	reasoning, every $xy$-column contains $\Theta(1)$ dense boxes (but
	this is not necessarily true for $xz$-columns or $yz$-columns).
	
	We ignore all $f$-points outside these
	boxes 
	and will show how to find a $(1,3,2)$-appearance with legs in the
	dense boxes.  Since we ignore an insignificant fraction of the $f$-points,
	there is still a matching $M'$ of size $\Theta(\epsilon n^2)$ of
	$\Gamma$-shaped $(1,3,2)$-appearances among the remaining points. 
	We now partition $M'$ into a constant number of
	submatchings, according to the configuration the appearances form
	in $G_m^{(3)}$. Each case will be treated separately, as
	explained in the next section, in $\widetilde{O}(n)$ queries.  
\end{enumerate}

\subsubsection{Finding a \texorpdfstring{$(1,3,2)$}{(1,3,2)}-appearances in a large \texorpdfstring{$\Gamma$}{Gamma}-shaped matching}
We assume in what follows that the matching $M$ of $\Gamma$-shaped
$(1,3,2)$-appearances is of size $\Omega(\epsilon n^2)$. A similar treatment can be carried out
when the matching of inverted-$\Gamma$-shaped appearances is large (by
possibly switching the $x$-coordinates and the $y$-coordinates in the
discussion), and for shapes as in \cref{Fig_new1}(c) and \cref{Fig_new1}(d).

\begin{enumerate}
	\item {\bf $M$ contains $\Omega(\epsilon n^2)$ appearances, where the
		legs of each appearance share a layer (that is,  a $z$-slice)}  
	
	In this case, a uniformly random
	layer will have $\Omega(\epsilon n^2/m)$ appearances in expectation. We choose a random
	layer $S_i, ~ i \in [m]$. Then, with probability $\Omega(\epsilon)$, it will
	contain $\Omega( \epsilon n^2/m)$ appearances. Thus, sampling
	$\widetilde{\Theta}(\frac{n^{4/3}}{m^{4/3}\epsilon^{1/3}}) =
	\widetilde{O}(n/\epsilon^{1/3})$  points uniformly and independently at random, will find
	an appearance with probability $1 - n^{-\Omega(\log n)}$, by \cref{lem:31}.
	
	\item {\bf $M$ contains $\Omega(\epsilon n^2)$ appearances, each
		of which has its legs on two layers}  
	\begin{itemize}
		\item Most appearances have their $1$-leg in a layer, and the
		$2,3$-legs in a different layer. See \cref{Fig_new2}.

		\begin{figure}[ht]
			\centering
			\includegraphics[scale=0.4]{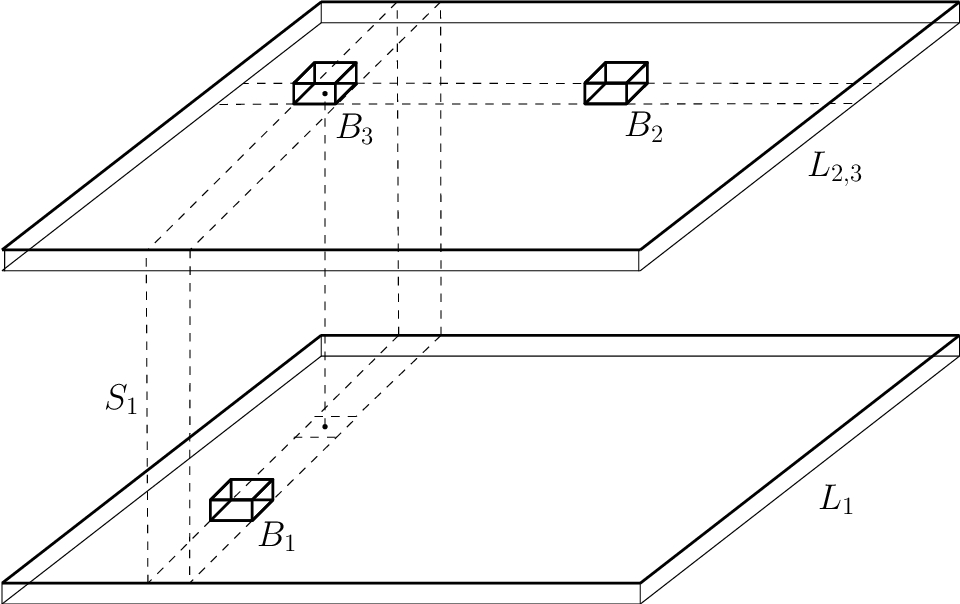}
			\caption{A typical occurrence: $1$-leg on layer $L_1$ and
				$2,3$-legs are on layer $L_{2,3}$}
			\label{Fig_new2}
		\end{figure}
		
		In this case, a typical dense box $B \in G_m^{(3)}$ will contain
		$N=\Omega(\epsilon n^2/m^2)$ 3-legs of $(1,3,2)$-appearances in
		$M$. Let $M(B_3)$ be this set of $N$ appearances. We choose such a box $B_3$, and assume it is \enquote{typical} (an
		event which will occur with probability
		$\Omega(\epsilon)$). Choosing $B_3$ specifies the layer
		$L_{2,3}$ in which the corresponding $2$-legs of the appearances
		in $M(B_3)$ lie, and the $x$-slice ($S_1$ in the
		figure) that should contain the $1$-leg.
		
		We note that the corresponding $x$-slice $S_1$ should contain $N$ many
		$1$-legs that correspond to the $3$-legs in $B_3$. However, we do not know
		the specific $B_1$ and $B_2$ that contain the $1,2$-legs.
		
		We choose a uniformly random point $p_3$ in $B_3$. With probability $\Omega(\epsilon)$, 
		it will be a corresponding $3$-leg, and with probability $1/4$,
		given that it is a $3$-leg, will have $x$-coordinate $x(p_3)$
		that is above but close to the median of the $x$-coordinates of $1$-legs in
		$S_1$ that correspond to the $N$ appearances in $B_3$. 
		Assuming
		that this happens, there
		are $\Omega(\epsilon n^2/m^2)$-appearances whose $3$-leg
		in $B_3$ has $x$-coordinate at least $x(p_3)$. 
		Each such
		$3$-leg, together with the corresponding $2$-leg,  forms a
		$(3,2)$-appearance in $L_{2,3}$, and to find one, we use our
		monotonicity tester with distance parameter $\Omega(\epsilon/m)$, restricted
		to the points with $x$-coordinate larger than $x(p_3)$. Once such
		an appearance $(p_3^*,p_2)$ is found, the slice $S_1$ should
		contain $N/2$ points $p_1$ that have $x$-coordinates lower than
		$x(p_3)$ and are $1$-legs of appearances in $M$. Namely, a
		random point satisfies this with probability
		$\Omega(\epsilon/m)$. Then, sampling
		$\tilde{\Theta}(m/\epsilon)$  such points will find a point $p_1$ that 
		together with $p_3^*,p_2$ forms a $(1,3,2)$-appearance.
		
		The resulting query complexity for this case is $\widetilde{O}(m/\epsilon)$, which is the query complexity of running the monotonicity tester (see \cref{thm:improved-mon-partial-main}) with parameter $\Omega(\epsilon/m)$.
				
		\item  Most appearances have their $1,2$-legs in a layer, and the
		$3$-legs in a different layer (see~\cref{Fig_new3}).
		We
		note that this is the only other possibility for a
		$\Gamma$-shape to have its two legs over two layers. 
		
		\begin{figure}[ht]
			\centering
			\includegraphics[scale=0.4]{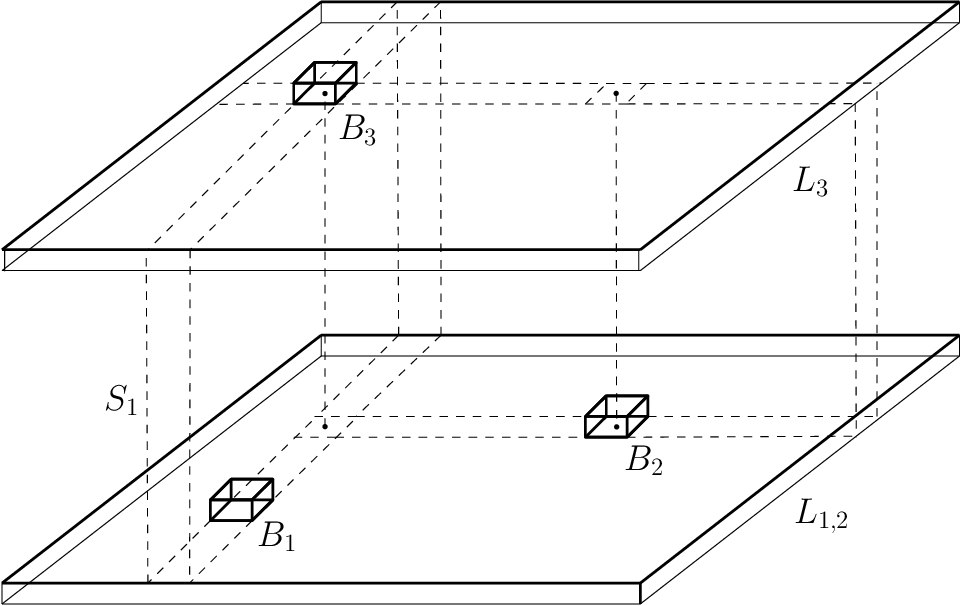}
			\caption{A typical occurrence: $1,2$-legs on layer $L_{1,2}$ and
				$3$-legs are on layer $L_{3}$}
			\label{Fig_new3}
		\end{figure}
		
		In this case, we present two solutions. The first, described in \textbf{A1}
		below, is conceptually easier and good enough for the
		$\widetilde{O}(n)$-algorithm. Its drawback is that its query complexity
		depends inversely on $m$ and hence requires a relatively large value of
		$m$. The second is proportional to $m$ and is therefore better for small values of
		$m$ which will allow for better recursion. 
		
		In both cases we choose a random layer $L_{1,2}$, which, with
		probability $\Omega(\epsilon)$, will contain $N=\Omega(\epsilon n^2/m)$
		$1,2$-legs of the appropriate $N$ appearances  $M(L_{1,2})$ in
		$M$.
		
		{\bf A1:} Next, we choose random $B_1,B_2$ that should contain the
		corresponding $1,2$-legs of such appearances,
		respectively. As there are $O(m^2)$ pairs of boxes, in expectation, a uniformly random pair of boxes will
		contain $\Omega(\epsilon n^2/m^3)$ 
		such appearances whose corresponding pairs of $1,2$ legs are in
		$B_1,B_2$ respectively. Now, fixing $B_1,B_2$, the $xy$-column containing
		$B_3$ is fixed, and by our assumption, contains $\Theta(1)$ dense
		boxes. Now, there are two ways to find a $(1,3,2)$-appearance in the
		corresponding boxes.
		We focus on the constant number of possible $B_3$, which, together with
		the fixed $B_1,B_2$ have $\Theta(n^2/m^2)$ points in total and contains
		$\Omega(\epsilon n^2/m^3)$ many $\pi$-appearances. Hence, by \cref{lem:31}, it is enough to sample $O(\frac{n^2}{m^2}
		\cdot (\frac{m^3}{\epsilon n^2})^{1/3}) =
		O(\frac{n^{4/3}}{\epsilon^{1/3}m})$ points uniformly and independently at random from this union of boxes
		in order to obtain a $\pi$-appearance with high probability. The sample complexity is $O(n^{5/6})$ for constant $\epsilon$ by our setting of $m$. 
				
		{\bf A2:}  We start as described above, except that the order of the
		sampling is different. After choosing $L_{1,2}$ we first choose $B_1$
		at random in $L_{1,2}$, which will contain, in expectation, $\Omega(\epsilon n^2/m^2)$ many $1$-legs from
		the matching $M$. Now, before choosing $B_2$, we first sample
		$\Theta(1/\epsilon)$ points from $B_1$.  We expect that at least
		one of these points is from the set of
		$1$-legs above, and further, that this point has value close to the median
		value of the $1$-legs.
		
		Assuming we have chosen such a $p_1$ successfully, let $M_H$ contain all
		appearances whose $1,2$-legs are in $L_{1,2}$ and the $1$-leg value is above $f(p_1)$. Since we assume
		that $f(p_1)$ is close to the median, we conclude that $|M_H| = \Omega
		(\epsilon n^2/m^2)$.  We now choose $B_2$, and correspondingly $B_3$
		(the latter from constantly many possibilities in its $xy$-column).  Since $B_2$ is chosen
		at random, we expect that there is a matching $M_H^1 \subseteq M_H$ of
		size $\Omega(\epsilon n^2/m^3)$ whose $3,2$-legs are in $B_3,B_2$,
		respectively.
		
		We now apply our partial function monotonicity tester (\cref{thm:improved-mon-partial-main}) in $B_2,B_3$ but restricted to
		points of $f$-values above $f(p_1)$, with distance parameter $\epsilon/m$, to
		find such a $(3,2)$-appearance $(p_3,p_2)$. This can be done in $\widetilde{O}(m/\epsilon)$ queries,
		and moreover, according to \cref{rem:needed-main}, the point $p_3$ is
		essentially a random point in $M_H^1$. In particular, its
		$x$-coordinate is larger than the $x$-coordinates of at least half the
		points in $M_H^1$. In turn, this implies that a random point in $B_1$
		is likely to be a $1$-leg of an appearance in $M_H^1$, but with
		$x$-coordinate smaller than $x(p_3)$ with probability
		$\Omega(1/m)$. Hence sampling $\widetilde{\Theta}(m)$ points will find such a
		point $p_1^*$ with high probability. Thus, $(p_1^*,p_3,p_2)$ form a
		$(1,3,2)$-appearance.
		
		We end this case by noting that the discussion above assumed that we knew such a
		\enquote{successful} $p_1$ to begin with. Since we do not know the correct $p_1$
		out of our sample, we try for all possible sampled points. Hence, the
		total query complexity in this case is $\widetilde{O}(m/\epsilon)$.
	\end{itemize}
	
	\item {\bf $M$ contains $\Omega(\epsilon n^2)$ appearances, each
		with its legs on three layers.}  
	This case, for a $\Gamma$-shaped appearance, is depicted in~\cref{Fig_new4}. 
	\begin{figure}[ht]
		\centering
		\includegraphics[scale=0.4]{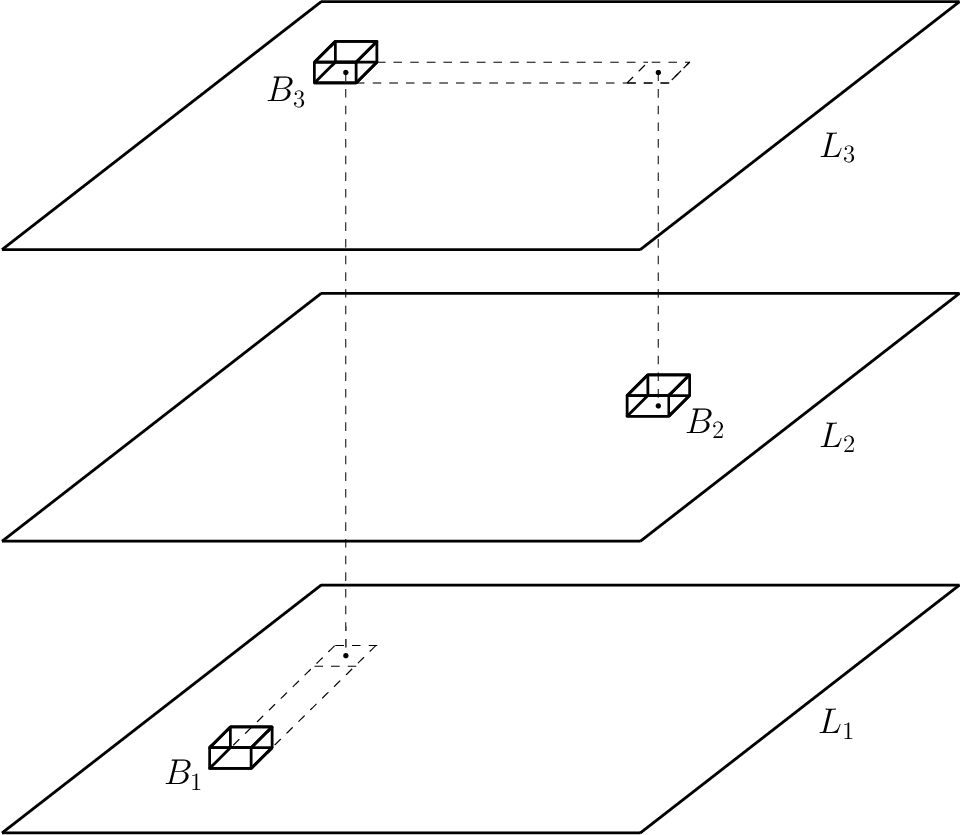}
		\caption{A typical occurrence: $1,2,3$ legs on layer $L_1,L_2,L_3$ respectively}
		\label{Fig_new4}
	\end{figure}

	In this case, a typical $x$-slice $S_1$ will contain $N=\Omega(\epsilon
	n^2/m)$ many $1$- and $3$-leg pairs from appearances in $M$.
	With constant probability, a random dense box $B_1$ in $S_1$ and a random index $p_1 \in B_1$ will have $x(p_1)$ below the median of the $x$-coordinates of the $1$-legs of
	these $N$ appearances.
	Assuming this happens, any
	$(3,2)$-appearance above the layer $L_1$ and with $x$-value larger
	than $x(p_1)$ will result in a $(1,3,2)$-appearance with
	$p_1$. Thus, using the monotonicity tester, this can be found in
	$\widetilde{O}(m/\epsilon)$ queries as the corresponding distance parameter is
	$\epsilon/m$. This is also the total query complexity for
	this case.
\end{enumerate}

Since all of these cases, as well as the \textsf{Layering} and \textsf{Gridding} are done using
$\widetilde{O}(n)$ queries, the claim on the query complexity of the test
follows. This proves~\cref{thm:3-pat-n-tester}. We remark that the queries made in Cases 2 and 3 above are adaptive, as they depend on which boxes are marked as dense by the \textsf{Gridding} phase.

\subsection{Improving the query complexity \texorpdfstring{to $\widetilde{O}\left(n^{\frac{4}{5} + o(1)}\right)$}{}}
In this section, we discuss the ideas to obtain our improved $(1,3,2)$-freeness tester and prove \cref{thm:3-pat-general}.

\begin{proof}[Proof of~\cref{thm:3-pat-general}]
From our discussion in the preceding section, the query complexity of all cases except Case 1 (where all legs are in a single layer), 
is upper bounded by $\widetilde{O}(m^2 \poly(1/\eps))$. The query complexity of Case 1 has an inverse dependence on $m$, 
and this is the case where recursion will help improve the complexity.

We focus on Case 1. Namely, we assume Case 1 happens at the top level of the recursion, and that we have
sampled a layer $L$ containing $\Omega(\epsilon n^2/m)$ appearances. 
Using the fact that the number of dense boxes in every layer is $O(m)$, we restrict our attention to the function restricted to the
part of the domain defined by the $O(m)$-many $xy$-columns defined by
these dense boxes in the current layer we focus on. 
The challenge in the recursive call is in performing \textsf{Layering} and \textsf{Gridding}, since we do not have a rectangular domain anymore. 
We instead do the following. We first perform \textsf{Layering} on this restricted function and form $O(m)$ layers. 
Now, we perform \textsf{Gridding} on the domains of each of the $O(m)$-many $\frac{n}{m} \times \frac{n}{m}$ sized rectangular subgrids separately with the same gridding parameter $m$. 
This results in $O(m^3)$ dense boxes. 
The overall query complexity for these steps is  $\widetilde{O}(m^3
\poly(1/\eps))$. In addition, within the same query complexity, all
cases except Case 1, can be done with respect to the function restricted
to the layer $L$.

Recursing to a depth of $d$ in the above fashion, 
the query complexity for \textsf{Gridding} and all cases except Case 1 at the lowest level of recursion is $\widetilde{O}(m^{d+2} \poly(1/\eps_d))$, where $\eps_d$
is the distance parameter at that level. Since the distance parameter drops by a constant factor in each successive level of recursion, we have $\eps_d = \eps/c^d$
for some constant $c$. 

At the $d^{\text{th}}$ recursion level, if we are in Case 1, we restrict our attention to $O(m^{d+1})$ dense boxes all in the same layer, where each dense box has asymptotically $\Theta(n^2/m^{2d+2})$ many points. The $(1,3,2)$-appearances that we look for are present in an $\eps_d$-fraction of these points. Thus, by \cref{lem:31}, making 
$\widetilde{O}\left(\frac{n^{4/3}}{m^{\frac{2d+2}{3}}}\cdot
\frac{1}{\eps_d^{1/3}}\right)$ queries will enable us to find these
appearances with high probability. Setting $m$ to be $n^{4/(5d+8)}$,
and plugging in the expression above results in
$\widetilde{O}\left(m^{\frac{4d+8}{5d+8}}\cdot \poly(1/\eps_d)\right)$ complexity. 
Finally, the total number of nodes in the recursion tree is at most $(\widetilde{\Theta}(1/\eps_d))^d$ and the number of queries in each node is upper bounded by the work done at the leaf nodes. 
Therefore, the overall query complexity for a $d$-level recursive procedure is $O(m^{\frac{4d+8}{5d+8}}\cdot ((\polylog n)/\eps_d)^d)$. By setting $d$ to be $\Theta(\log \log n)$, one can see that the second factor in the product is $n^{o(1)}$ and the first factor is an expression that tends to $n^{4/5}$. Thus, the overall complexity is $O(n^{\frac{4}{5} + o(1)})$.
\end{proof}

\section{Lower bounds for \texorpdfstring{$(1,3,2)$}{(1,3,2)}-freeness testers on hypergrids} \label{sec:132_lbs-main}

In this section, we prove the following lower bounds for nonadaptive and adaptive $\epsilon$-testers for $(1,3,2)$-freeness on the hypergrid $[n]^2$.

\begin{theorem} \label{thm:132_na_lb-main}
	Any one-sided error nonadaptive $\epsilon$-tester for $(1,3,2)$-freeness on $[n]^2$ has query complexity $\Omega(n)$, for every $\epsilon \leq 1/81$.
\end{theorem}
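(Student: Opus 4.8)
The plan is to reduce, in a blackbox fashion, from a two-dimensional intersection search problem, exactly as sketched in the technical overview, and to establish the reduction target's nonadaptive hardness. I would begin with the usual consequence of one-sided error: a one-sided tester can reject $f$ only after its query set $Q$ already exhibits a $(1,3,2)$-appearance inside $f|_Q$, since otherwise $f|_Q$ extends to a $(1,3,2)$-free total function which the tester is obliged to accept with probability $1$. By Yao's principle it therefore suffices to exhibit a distribution $\mathcal D$ over functions $f\colon[n]^2\to\R$ that are $\epsilon$-far from $(1,3,2)$-free for every $\epsilon\le 1/81$, such that for every fixed set $Q$ with $|Q|=o(n)$ the probability over $f\sim\mathcal D$ that $f|_Q$ contains a $(1,3,2)$-appearance is $o(1)$.

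I would obtain $\mathcal D$ from the intersection search problem (\cref{problem:intersection_search-main}), following Newman, Rabinovich, Rajendraprasad, and Sohler but in two dimensions. From two ``monotone arrays'' $a,b$ on $[n]^2$ (monotone w.r.t.\ $\prec$) satisfying the crossing promise, the reduction builds $f$ on a grid of side $\Theta(n)$ whose potential $1$-legs encode one array as a low, $\prec$-increasing block and whose potential $2$- and $3$-legs encode the other array, so that: (i) if $a,b$ do not cross, $f$ is genuinely $(1,3,2)$-free; (ii) if they cross, the crossings give $\Omega(n^2)$ pairwise disjoint $(1,3,2)$-appearances, so that (via \cref{obs:matching} and a tuning of constants) $f$ is $\epsilon$-far for all $\epsilon\le 1/81$; and (iii) any $(1,3,2)$-appearance of $f$ reads off a crossing witness using $O(1)$ entries of $a,b$. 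Since this reduction is blackbox and preserves nonadaptivity, a $q$-query nonadaptive one-sided $(1,3,2)$-tester yields a $q$-query nonadaptive algorithm for intersection search, so it is enough to prove that intersection search needs $\Omega(n)$ nonadaptive queries.

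For that last point I would place a distribution on intersection search instances under which the antichain along which $a$ and $b$ actually meet is uniform among $\Theta(n)$ candidate ``fibers,'' each fiber carrying an independent copy of the birthday-type search subproblem (\cref{problem:birthday-search-main}); the key features are that (a) querying $a,b$ outside the true fiber produces a transcript distributed exactly as in a no-crossing instance, so nothing is certified there, and (b) a nonadaptive algorithm cannot use the cheap adaptive ``zoom-in'' toward the right fiber and so must commit its budget $\sum_k q_k = q$ across the $\Theta(n)$ fibers in advance. A union bound over fibers then shows the success probability is at most $\sum_k \Pr[\text{fiber }k\text{ solved with }q_k\text{ queries}]$, which remains $o(1)$ unless $q=\Omega(n)$; the constant $1/81$ is inherited from the matching density of step~(ii).

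The step I expect to be the main obstacle is the simultaneous design of the reduction and the instance distribution. Making the no-crossing case \emph{literally} $(1,3,2)$-free forces one to rule out every cross-fiber and ``filler'' triple, while making the crossing case $\epsilon$-far forces the hidden crossings to be spread out enough to support a disjoint matching of size $\Omega(n^2)$; and this spreading must remain compatible with keeping the $\Theta(n)$ birthday subproblems mutually independent, since that independence is exactly what the final counting step relies on. Establishing the clean anti-concentration bound for each subproblem against an adversarially chosen but fixed nonadaptive query set, and then combining these bounds to exclude $q=o(n)$, is the technical heart of the proof.
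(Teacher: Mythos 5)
Your overall plan—black-box reduction from a two-dimensional intersection-search problem via Yao's principle, using the one-sidedness of the tester to reduce to finding a $(1,3,2)$-appearance—matches the paper, and the reduction step is essentially what the paper proves in \cref{lem:reduction_int-search_to_132-main}. However, the instance distribution you propose for the nonadaptive lower bound on intersection search does not give $\Omega(n)$, and this is a real gap.

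You build the distribution out of $\Theta(n)$ ``fibers,'' each an independent birthday-search instance (\cref{problem:birthday-search-main}), with the unique crossing fiber uniformly random. There are two incompatible readings, and neither yields the claim. If only one fiber carries crossings, then the total number of collisions between the two monotone arrays is $\Theta(n)$, so $\gamma=\Theta(1/n)$ in \cref{problem:intersection_search-main} and the reduced $f$ is only $\Theta(1/n)$-far from $(1,3,2)$-free; you cannot fix $\epsilon\le 1/81$ and still invoke \cref{obs:matching} to get $\Omega(n^2)$ disjoint appearances. If, on the other hand, every fiber carries a birthday instance with a genuine collision (so that $\gamma=\Theta(1)$, which is what the paper does in its \emph{adaptive} construction), then the union bound you write down is $\sum_k \Pr[\text{fiber }k\text{ solved with }q_k\text{ queries}]\le \sum_k q_k^2/\Theta(n)\le q^2/\Theta(n)$, since the success probability of birthday search on an $n$-element universe with $q_k$ queries is $O(q_k^2/n)$ regardless of adaptivity. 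That gives only $q=\Omega(\sqrt{n})$—the right bound for the \emph{adaptive} theorem (\cref{thm:132_a_lb_intro}), not for the nonadaptive one. Concentrating the budget on a single fiber always beats spreading it across fibers, so nonadaptivity buys you nothing in this construction.

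The paper's nonadaptive construction is genuinely different and is what you would need. It plants a uniformly random \emph{shift} $\veck\in[n]^2$ (together with a random planted set $S\subset[2n]^2$ of size $n^2$ and a random $\pm 1$ tie-breaker $X$) so that $A[\veci]=B[\vecj]$ precisely when $\veci\in S$ and $\vecj=\veci+\veck$. This yields $n^2$ collisions, hence $\gamma=1/9$ and constant farness. For a fixed nonadaptive query set $(Q_A,Q_B)$, a certificate can only exist if $\veck\in D:=\{\vecj-\veci : \veci\in Q_A,\ \vecj\in Q_B\}$, and $|D|\le |Q_A||Q_B|\le q^2/4$; since $\veck$ is uniform over $[n]^2$, the success probability is at most $q^2/(4n^2)$, giving $\Omega(n)$. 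The key point your argument misses is that the ``guess-the-shift'' structure has $n^2$ equiprobable candidates yet still supports $n^2$ planted collisions, something the antichain decomposition with $\Theta(n)$ fibers cannot achieve. The birthday/antichain idea is the right one for the adaptive bound, but for the nonadaptive bound you must replace it by the random-shift construction.
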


The existence of a $o(n)$-query \emph{adaptive} tester for $(1,3,2)$-freeness (\cref{thm:3-pat-general}) shows that adaptivity does indeed lead to more efficient testers. 
On the other hand, the following theorem states that even with adaptivity, testing $(1,3,2)$-freeness cannot be done much more efficiently.

\begin{theorem} \label{thm:132_a_lb-main}
	Any one-sided error adaptive $\epsilon$-tester for $(1,3,2)$-freeness on $[n]^2$ has query complexity $\Omega(\sqrt{n})$, for every $\epsilon \leq 4/81$.
\end{theorem}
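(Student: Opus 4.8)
The plan is to follow a two-step template: (i) give a blackbox reduction from an \emph{intersection search} problem on two monotone (w.r.t.\ $\prec$) two-dimensional arrays to the task of detecting a $(1,3,2)$-appearance, so that any adaptive one-sided error tester yields an adaptive randomized algorithm for intersection search with the same query complexity up to constants; and (ii) prove an $\Omega(\sqrt n)$ lower bound for intersection search directly, by showing it embeds independent copies of a birthday-type collision problem that is hard even adaptively. Step (ii) is where the genuinely new phenomenon appears: the one-dimensional analogue of intersection search is binary-searchable in $O(\log n)$ queries, whereas the two-dimensional version is not.

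For step (i), I would encode an instance of intersection search (two monotone staircases $\sigma,\tau$ on $[n]^2$, with the promise that their graphs either do not ``cross'' or cross somewhere) as a single function $f_{\sigma,\tau}\colon [n]^2\to\R$ assembled from three value bands (``low'', ``high'', ``middle'') laid out along the staircases so that: (a) if $\sigma$ and $\tau$ do not cross, then $f_{\sigma,\tau}$ is $(1,3,2)$-free — here the subtle point is ruling out spurious $(1,3,2)$ triples built from pieces of the two staircases that are incomparable in $\prec$, which is arranged for by making the bands increase monotonically along the staircases; and (b) if they cross, then $f_{\sigma,\tau}$ is $\epsilon$-far from $(1,3,2)$-free for $\epsilon$ as large as $4/81$ — in fact it carries a matching of $\Omega(\epsilon n^2)$ pairwise-disjoint $(1,3,2)$-appearances, each of which, once decoded, localizes a crossing. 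The constant $4/81$ comes from the fractions of $[n]^2$ allotted to the two staircases and to the value bands. Since the tester has one-sided error it rejects only after querying a genuine $(1,3,2)$-appearance, so a rejecting run directly exhibits a crossing; more to the point, the dichotomy \emph{pattern-free vs.\ $\epsilon$-far} mirrors the promise dichotomy \emph{no crossing vs.\ crossing}, so the tester is a distinguisher for intersection search and must make at least as many queries as any randomized such distinguisher.

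For step (ii), the key observation is that in $[n]^2$ the locus where two monotone arrays can cross is a long antichain, and the instances can be arranged so that finding a crossing is equivalent to finding a collision in a variant of the birthday problem: a hidden arrangement of tokens along an anti-diagonal band that is either close to a permutation (no crossing) or contains planted colliding pairs (crossing), embedded moreover as $\Theta(\sqrt n)$ mutually independent sub-instances. Distinguishing these by $q$ queries succeeds with probability only $O(q^2/n)$, by the classical birthday bound, whence $q = \Omega(\sqrt n)$; and adaptivity does not help precisely because the sub-instances are independent, so the answers to earlier queries carry no information about where a colliding pair sits. I would formalize this via Yao's principle: exhibit one distribution over YES/NO instances on which every deterministic $o(\sqrt n)$-query algorithm errs with constant probability, then compose with the reduction of step (i) to conclude.

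The main obstacle is step (i): engineering $f_{\sigma,\tau}$ over the \emph{two-dimensional} domain so that its $(1,3,2)$-appearances are in faithful correspondence with crossings — simultaneously guaranteeing genuine pattern-freeness in the no-crossing case (no accidental appearance exploiting incomparability in $\prec$) and a full $\Omega(\epsilon n^2)$-matching of appearances in the crossing case, while keeping the layout compatible with the birthday embedding of step (ii) so that the reduction and the collision lower bound compose cleanly. Once this correspondence is pinned down, the birthday bound and the Yao argument in step (ii) are routine, and the higher-dimensional generalization to $\Omega(n^{(d-1)/2})$ follows by the same template with the anti-diagonal band replaced by its $(d-1)$-dimensional analogue.
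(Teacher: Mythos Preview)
Your high-level template matches the paper's: reduce an intersection-search problem on two $\prec$-monotone two-dimensional arrays to finding a $(1,3,2)$-appearance, then lower-bound intersection search via an antichain/birthday argument. You also correctly isolate why 2D differs from 1D (the locus of potential matches is an antichain, killing binary search). But two specifics are off, and the first one changes the shape of the argument.

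In step (i), the paper's intersection-search is a pure \emph{search} problem, not a YES/NO promise: the input is two $3n\times 3n$ arrays $A,B$, each $\prec$-monotone, with the promise that they share at least $\gamma\cdot 9n^2$ values; the task is to \emph{output} $(\veci,\vecj)$ with $A[\veci]=B[\vecj]$. The reduction builds $f\colon[9n]^2\to\R$ that is \emph{always} $(\gamma/9)$-far from $(1,3,2)$-free, with a bijection between its $(1,3,2)$-appearances and intersection pairs---there is no ``no-crossing $\Rightarrow$ pattern-free'' case, and your worry about spurious triples from incomparable pieces evaporates. The concrete construction is not ``bands along staircases'': one places two interleaved copies of the reversal $A^r$, shifted by $\pm\tfrac14$, in the block $[6n]\times[3n]$, and $B$ in the block $\{6n{+}1,\dots,9n\}^2$, zeros elsewhere; then the only $(1,3,2)$-appearances are triples with values $(y{-}\tfrac14,\,y{+}\tfrac14,\,y)$ arising from each shared value $y$. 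A one-sided tester rejects only by exhibiting such a triple, hence solves the search problem in the same number of queries. In step (ii), your count is wrong: the hard distribution uses $\Theta(n)$ antichains $S_r=\{(i_1,i_2):i_1{+}i_2=r\}$ for $r\in\{n{+}2,\dots,3n{+}1\}$, each of size at least $n{+}1$---not $\Theta(\sqrt n)$ sub-instances. On each $S_r$ both arrays carry the same value set, but $B$'s arrangement is an independent uniform permutation, so finding a match on $S_r$ is birthday-search with success probability at most $q_r^2/(4(n{+}1))$; a union bound over antichains together with $\sum_r q_r^2\le q^2$ gives success at most $q^2/(4n{+}4)$, whence $q=\Omega(\sqrt n)$ by Yao. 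Again there is no ``permutation vs.\ planted collisions'' dichotomy to distinguish; every instance in the hard distribution already satisfies $\gamma\ge 4/9$, and the hardness is entirely in the search.
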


To prove \cref{thm:132_na_lb-main} and \cref{thm:132_a_lb-main}, the
high level idea is to follow ~\cite{NewmanRRS19}, in which they prove an
$\Omega(\sqrt{n})$-query lower bound for {\em nonadaptively} 
testing $(1,3,2)$-freeness on the line $[n]$. Namely, we reduce an \enquote{intersection-search} problem (\cref{problem:intersection_search-main}) for two \emph{$2$-dimensional}
$3n \times 3n$ `monotone arrays' (w.r.t.\ $\prec$), to the problem of testing $(1,3,2)$-freeness on the
$2$-dimensional hypergrid $[9n]^2$. This reduction (\cref{lem:reduction_int-search_to_132-main}) is blackbox deterministic, and as such, any lower bound for it
(against nonadaptive/adaptive randomized algorithms), implies a corresponding lower bound
for testing $(1,3,2)$-freeness. However, unlike in \cite{NewmanRRS19},
where the corresponding $1$-dimensional problem could be solved
adaptively in $O(\log n)$ queries, we show that our $2$-dimensional variant
requires $\Omega(\sqrt{n})$ adaptive queries. This is shown by
observing that \cref{problem:intersection_search-main} `embeds'
in it $\Omega(n)$ independent instances of a variant of the \enquote{birthday problem}
(\cref{problem:birthday-search-main}), each of which are to be solved on unstructured sets of size $\Omega(n)$.

\begin{problem}[Intersection-search for two $2$-dimensional arrays] \label{problem:intersection_search-main}
	The input to this problem consists of two $3n \times 3n$ arrays $A$ and $B$, and a constant $\gamma$ with $0 \leq \gamma \leq 1$ (which we call the intersection parameter). $A$ and $B$ each consist of $9n^2$ distinct integers sorted in ascending order, with respect to the grid (partial) order $\prec$. It is promised that $\Abs{\{A[\veci] \colon \veci \in [3n]^2\} \cap \{B[\veci] \colon \veci \in [3n]^2\}} \geq \gamma \cdot (9n^2)$. The goal is to find $\veci,\vecj \in [3n]^2$ such that $A[\veci] = B[\vecj]$.
\end{problem}

\begin{problem}[Birthday-search]\label{problem:birthday-search-main}
	The input is a pair of $m$-sized arrays $A,B$, 
	each of which contain the numbers $[m]$, and are permuted according to some \emph{unknown} permutations $\pi_A, \pi_B$ respectively. That is, the
	input is the pair of permutations $\pi_A$ and $\pi_B$. The goal is
	to find a pair of indices $(i,j)$ such that $A[i]=B[j]$. 
\end{problem}

It is quite clear that birthday-search on $m$-size arrays can be
solved (with high probability) nonadaptively by querying $O(\sqrt{m})$ random indices in each
array. It is folklore that adaptivity does not help in this case.

\begin{claim}
	\label{cl:birth-main}
	Any adaptive algorithm for birthday-search on $m$-sized arrays
	requires $\Omega(\sqrt{m})$ queries to achieve a success probability
	of $2/3$. $\qed$
\end{claim}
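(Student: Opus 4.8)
The plan is to invoke Yao's minimax principle and analyze a simple hard distribution. Since a randomized adaptive algorithm that succeeds with probability at least $2/3$ on every input also succeeds with probability at least $2/3$ on a random input drawn from any fixed distribution, it suffices to exhibit a distribution $\mathcal{D}$ over instances on which every \emph{deterministic} adaptive algorithm making fewer than $c\sqrt{m}$ queries fails with probability more than $1/3$. For $\mathcal{D}$ I would take $\pi_A$ and $\pi_B$ to be independent, uniformly random permutations of $[m]$; every such pair is a legal birthday-search instance (both arrays contain all of $[m]$, so a collision always exists), and this choice is exactly what makes the problem hard: a query to $A$ reveals one value together with its location in $A$ but says nothing about where that value sits in $B$, and symmetrically for $B$.

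Next I would fix a deterministic adaptive algorithm and expose the randomness sequentially. At each step the algorithm queries either an index of $A$ or an index of $B$; by the product structure of $\mathcal{D}$ and the uniformity of $\pi_A,\pi_B$, conditioned on all answers seen so far, the value returned by a fresh query to $A$ is uniform over $[m]\setminus V_A$ and independent of everything revealed about $B$, where $V_A$ (resp.\ $V_B$) is the set of values already observed in $A$ (resp.\ in $B$); symmetrically for $B$. The key observation is that the algorithm can output a \emph{verified} collision only if at some step a freshly revealed value coincides with a value already seen in the other array, i.e.\ only if $V_A\cap V_B$ ever becomes nonempty. Since $|V_A|,|V_B|\le q$ throughout, the step that reveals a new value places it in the opposite array's seen-set with probability at most $q/(m-q)$, and a union bound over the at most $q$ steps gives $\Pr[V_A\cap V_B\neq\emptyset \text{ at termination}]\le q^2/(m-q)$.

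It then remains to rule out the algorithm guessing an \emph{unverified} output pair $(i^*,j^*)$, meaning one in which $i^*$ was never queried in $A$ or $j^*$ was never queried in $B$. If $i^*$ was not queried in $A$, then conditioned on the entire transcript the value $A[i^*]$ is uniform over a set of size at least $m-q$, hence equals the (possibly already known) value $B[j^*]$ with probability at most $1/(m-q)$; the case where $j^*$ was not queried in $B$ is symmetric. Combining the two cases, the overall success probability is at most $q^2/(m-q)+1/(m-q)$, which is below $1/3$ once $q\le \sqrt{m}/3$ and $m$ is large; absorbing small $m$ into the constant, any algorithm with success probability $2/3$ must make $\Omega(\sqrt{m})$ queries, proving \cref{cl:birth-main}.

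The only genuinely delicate point — and the one step I would write out in full — is the conditional-uniformity claim for the value returned by a fresh query, together with the bookkeeping that at most $q$ queries reveal at most $q$ new values on each side; everything past that is a routine union bound. I would also emphasize why the distribution cannot be simplified: if one fixed $\pi_A$ to be the identity and drew only $\pi_B$ at random, a single query $B[j]=v$ would immediately yield the collision $(v,j)$, so making \emph{both} permutations random and independent is precisely what decouples the two arrays and forces the $\sqrt{m}$ lower bound.
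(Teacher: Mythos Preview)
Your proof is correct. Note that the paper does not actually prove this claim: it is stated as folklore and marked with a \qed\ immediately after the statement, so there is nothing to compare against. Your argument via Yao's principle with both permutations independent and uniform, the sequential conditional-uniformity analysis bounding $\Pr[V_A\cap V_B\neq\emptyset]\le q^2/(m-q)$, and the separate treatment of unverified guesses, is exactly the standard proof of this folklore fact. Your closing remark that randomizing only one of the two permutations collapses the problem to a single query is also correct and worth keeping.
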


The intersection-search problem can be reduced to the problem of testing $(1,3,2)$-freeness of real-valued functions on $2$-dimensional hypergrids. We will now formally state this reduction, and defer its proof to the appendix (\cref{lem:reduction_int-search_to_132}).

\begin{lemma} \label{lem:reduction_int-search_to_132-main}
	Let $(A,B,\gamma)$ be an input instance of \cref{problem:intersection_search-main}. Then, there exists a function $f \colon [9n]^2 \to \mathbb{R}$ that is at least $\gamma/9$-far from $(1,3,2)$-freeness, and there is a one-to-one correspondence between valid solutions of \cref{problem:intersection_search-main}, and $(1,3,2)$-appearances in $f$.
\end{lemma}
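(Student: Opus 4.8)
The plan is to build $f \colon [9n]^2 \to \R$ by "stacking" the two arrays $A$ and $B$ along the antidiagonal so that each original grid position of $A$ becomes the $1$-leg and each original grid position of $B$ becomes a potential $(3,2)$-appearance. Concretely, I would embed $A$ into the lower-left region $R_A$ of $[9n]^2$, say the points $\veci \in [3n]^2$ (or a suitably rescaled copy), and place $f$ on $R_A$ equal to the $A$-values, which are monotone with respect to $\prec$ by hypothesis. I would then embed $B$ into an upper-right region $R_B$ that lies strictly above $R_A$ in the $\prec$ order, but where each $B$-entry is replaced by a small \emph{two-point gadget}: a pair of nearby points $q', q''$ with $q' \prec q''$ in the domain but $f(q') > f(q'')$, both values placed in a narrow window around the original $B$-value, so that locally on $R_B$ the function has an inverted ("$3$ then $2$") pattern. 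The regions must be laid out so that every point of $R_A$ is $\prec$ every point of $R_B$; since the domain is $[9n]^2$ and each array is $3n \times 3n$, there is ample room to do this (e.g., $R_A$ in the bottom-left $3n\times 3n$ subgrid and $R_B$ in the top-right, with coordinates scaled up by $3$). The value scheme is: $A$-values occupy one "band" of reals, and the $B$-gadget windows occupy bands chosen so that a gadget window around value $v$ sits between $v-1/2$ and $v+1/2$, disjoint from all other gadget windows.

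The key structural claim is the one-to-one correspondence. A $(1,3,2)$-appearance $x_1 \prec x_2 \prec x_3$ needs $f(x_1) < f(x_3) < f(x_2)$. By the layout, $x_2$ and $x_3$ cannot both be in $R_A$ (there $f$ is $\prec$-monotone, so no "$3$ before $2$" is possible), and not both in $R_B$ across different gadgets without the right value relation; careful choice of the gadget windows and bands forces $x_2,x_3$ to be the two points $q'',q'$ (in domain order $q'\prec q''$, so $x_2=q'$? — I need the leg labels to line up: the $3$-leg precedes the $2$-leg, so $x_2$ carries the larger value and $x_3$ the smaller) of a single $B$-gadget, and $x_1$ to lie in $R_A$ with $f(x_1)$ below that gadget's window. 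This last condition says $A[\veci] < B[\vecj]$ for the corresponding indices — but since $A$-values and $B$-values are integers and a $B$-gadget window of width $1$ around $B[\vecj]$ contains no integer other than $B[\vecj]$ itself, and since I would actually want \emph{equality} $A[\veci]=B[\vecj]$ to be the triggering condition, I must instead place the $A$-value copies \emph{slightly perturbed}: replace $A[\veci]$ by $A[\veci] - \mu$ for a tiny $\mu \in (0,1/2)$, and likewise arrange the gadget so that a $(1,3,2)$-appearance with $x_1 \in R_A$ is created \emph{exactly} when $A[\veci] - \mu$ falls inside the window of $B$-gadget $\vecj$, i.e., exactly when $A[\veci]=B[\vecj]$. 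So a solution $(\veci,\vecj)$ to the intersection-search problem corresponds bijectively to a $(1,3,2)$-appearance using $x_1$ in the cell of $R_A$ for $\veci$ and $x_2,x_3$ the gadget points for $\vecj$.

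Finally, for the distance bound: the promise gives at least $\gamma \cdot 9n^2$ common values, hence at least that many disjoint $(1,3,2)$-appearances (each gadget and each $R_A$-cell used at most once). Since the domain $[9n]^2$ has $81n^2$ points, by \cref{cor:hamdel3} (Hamming $=$ deletion distance for $(1,3,2)$) the function is $\frac{\gamma\cdot 9n^2}{81n^2} = \gamma/9$-far from $(1,3,2)$-freeness. I would carry out the steps in this order: (1) fix the coordinate embedding of $R_A, R_B$ into $[9n]^2$ and verify $R_A \prec R_B$; (2) define the gadgets and the value/band assignment, with the perturbation $\mu$; (3) prove that every $(1,3,2)$-appearance in $f$ has the claimed form, by case-analysis on where the three legs sit — this is the main obstacle, since one must rule out "spurious" appearances mixing two different gadgets, or using two points of $R_A$, using the disjointness of the value windows and the monotonicity of $A$; (4) establish the bijection with intersection-search solutions; (5) count disjoint appearances and divide by $81n^2$ to get the $\gamma/9$ bound. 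The delicate point throughout is to choose the relative scales of the $A$-band, the $B$-windows, and $\mu$ so that \emph{no} $(1,3,2)$-appearance arises from anything other than the intended $A[\veci]=B[\vecj]$ coincidence, while \emph{every} such coincidence does produce one; this requires the value windows to be pairwise disjoint and narrower than the spacing of distinct integers, and $\mu$ smaller than half that width.
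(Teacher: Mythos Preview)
Your construction has a genuine gap that breaks the one-to-one correspondence. You place the two-point gadget on the $B$ side (positions $x_2 \prec x_3$ in the upper-right region, with $f(x_2)>f(x_3)$ both near $B[\vecj]$) and a single perturbed $A$-value at position $x_1$ in the lower-left. But in a $(1,3,2)$-appearance $x_1\prec x_2\prec x_3$ the value constraint is $f(x_1)<f(x_3)<f(x_2)$: once $f(x_3)<f(x_2)$ is provided by the gadget, the only remaining condition on $x_1$ is the \emph{one-sided} inequality $f(x_1)<f(x_3)$. With $f(x_1)=A[\veci]-\mu$ and $f(x_3)=B[\vecj]-\delta$ this becomes $A[\veci]<B[\vecj]+(\mu-\delta)$, which for integer entries is $A[\veci]\le B[\vecj]$ (or $<$), \emph{not} $A[\veci]=B[\vecj]$. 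Since you also arrange $R_A\prec R_B$ pointwise, \emph{every} pair $(\veci,\vecj)$ with $A[\veci]\le B[\vecj]$ yields a $(1,3,2)$-appearance, so the map from appearances to intersection-search solutions is very far from injective. Your sentence ``created exactly when $A[\veci]-\mu$ falls inside the window of $B$-gadget $\vecj$'' is where the error enters: it is the $2$-leg (middle value, at position $x_3$) that must fall inside a window, not the $1$-leg.

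The fix, and what the paper does, is to place the gadget on the \emph{other} side. The value that is sandwiched in a $(1,3,2)$-pattern is the $2$-leg, which sits at the last domain position $x_3$; the sandwiching values (the $1$- and $3$-legs) sit at $x_1,x_2$. So the gadget should live in the lower-left region and consist of the pair $\bigl(A[\veci]-\tfrac14,\,A[\veci]+\tfrac14\bigr)$ at adjacent positions $x_1\prec x_2$, while $B$ is embedded plainly in the upper-right as the $x_3$ values. Then the constraint $f(x_1)<f(x_3)<f(x_2)$ becomes $A[\veci]-\tfrac14<B[\vecj]<A[\veci]+\tfrac14$, which for integers forces $B[\vecj]=A[\veci]$ exactly. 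A second point you would also have to address: to rule out spurious $(1,3)$-pairs taken from two \emph{different} $A$-gadgets, the $A$-array must be embedded in \emph{reversed} (monotone decreasing) order; otherwise, since $A$ is monotone increasing, two distinct $A$-entries at comparable positions would themselves supply a $(1,3)$-pair. With these two changes your outline goes through and the $\gamma/9$ distance bound follows as you wrote.
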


Observe that \cref{problem:intersection_search-main} can be solved by a simple nonadaptive randomized algorithm that makes a total of $O(n)$ queries to the arrays. One can simply query $O(n)$ indices in $A$ and $O(n)$ indices in $B$ uniformly and independently at random. With probability $\Theta(1)$, this will ensure that we sampled $\veci, \vecj \in [3n]^2$, such that $A[\veci] = B[\vecj]$, provided $\gamma = \Theta(1)$. This is nothing but a variant of the birthday problem. We will prove that this bound is tight and then directly conclude \cref{thm:132_na_lb-main}, thanks to the reduction \cref{lem:reduction_int-search_to_132-main}.

\begin{lemma} \label{lem:intersection_search_na_lb-main}
	Let $\mathcal{A}$ be any nonadaptive randomized algorithm for the intersection-search problem \cref{problem:intersection_search-main}, with intersection parameter $\gamma \geq 1/9$. If $\mathcal{A}$ makes $q$ queries and $q < n^2 / 2$, then the success probability of $\mathcal{A}$ is at most $q^2 / (4n^2)$.
\end{lemma}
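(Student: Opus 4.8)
The plan is to exhibit a hard distribution over instances $(A,B,\gamma)$ with $\gamma \geq 1/9$, on which any fixed (deterministic nonadaptive) set of $q$ queries succeeds with probability at most $q^2/(4n^2)$; by Yao's principle this bounds the success probability of any randomized nonadaptive $\mathcal{A}$. The distribution is built so that the intersection-search problem decomposes into $\Omega(n)$ essentially independent copies of birthday-search on sets of size $\Omega(n)$. Concretely, I would partition $[3n]^2$ into $\Theta(n)$ ``antichain-like'' blocks — e.g., the $3n$ diagonals $D_k = \{\veci \in [3n]^2 : i_1 + i_2 = k\}$, of which $\Theta(n)$ have size $\Omega(n)$ — and fill $A$ and $B$ so that the common value set, restricted to the $k$-th block, is a block of $\Omega(n)$ integers, placed in $A$ according to one permutation of that block and in $B$ according to an independent permutation, while keeping both $A$ and $B$ globally sorted with respect to $\prec$ (this is possible because the integers assigned to block $D_k$ all lie strictly below those of $D_{k+1}$, and within an antichain any arrangement respects $\prec$). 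Choosing the value ranges so that all of $A$'s and $B$'s entries coincide on these blocks gives $\gamma = 1$ (well above $1/9$); one can instead make only a $1/9$-fraction of blocks ``active'' to get $\gamma = 1/9$ if a smaller intersection parameter is wanted for the statement.

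The key structural point is then: a query to a cell of $A$ (resp. $B$) in block $D_k$ reveals a single element of the $k$-th ``birthday set'', and because the permutations across different blocks are independent and the within-block permutation is uniform, this query gives no information about any other block and is uniform within its block. Hence the only way to find $\veci,\vecj$ with $A[\veci]=B[\vecj]$ is to query, within some single block $D_k$, an index in $A$ and an index in $B$ whose (hidden) values collide. For a fixed block of size $s$, if the algorithm spends $a$ queries in $A\cap D_k$ and $b$ queries in $B\cap D_k$, the collision probability is $O(ab/s)$ (this is exactly the birthday-collision bound, and is where \cref{cl:birth-main}'s underlying calculation is used). Summing over blocks and using $\sum_k (a_k+b_k) = q$ together with $s = \Omega(n)$, the total success probability is at most $O\!\big(\sum_k a_k b_k / n\big) \leq O\!\big((\sum_k a_k)(\sum_k b_k)/n\big) = O(q^2/n)$; being careful with the constants (block size exactly $\Theta(n)$, and $ab \le (a+b)^2/4$) is what yields the clean bound $q^2/(4n^2)$, and the hypothesis $q < n^2/2$ is what keeps this a nontrivial probability bound (and ensures no block is exhausted).

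The main obstacle I anticipate is twofold. First, making the two-dimensional sortedness constraint compatible with the ``independent permutation per block'' structure: I need to verify that one genuinely can realize independent uniform permutations within each antichain block while both arrays remain globally $\prec$-monotone, and that the adversary's choice of which cell of $D_k$ holds which value is truly uniform and independent across blocks conditioned on the observed answers — this is the crux of the information-theoretic argument and must be stated as a precise claim about the posterior distribution after $q$ nonadaptive queries. Second, bookkeeping the constants: the reduction \cref{lem:reduction_int-search_to_132-main} loses a factor of $9$ in the distance, and the block count, block sizes, and the $1/9$ intersection parameter all need to be chosen consistently so that the final exponent and the constant $1/4$ come out exactly as claimed. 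Neither is conceptually deep, but the second is fiddly and the first requires a clean lemma-level statement rather than hand-waving. Once \cref{lem:intersection_search_na_lb-main} is in hand, \cref{thm:132_na_lb-main} follows immediately: a nonadaptive $(1,3,2)$-freeness tester making $o(n)$ queries would, via \cref{lem:reduction_int-search_to_132-main}, solve intersection-search with $\gamma \geq 1/9$ using $o(n)$ queries and success probability bounded away from $0$, contradicting the lemma (and since the reduction is one-to-one on appearances/solutions, a one-sided-error tester must actually locate an appearance, i.e., solve the search problem).
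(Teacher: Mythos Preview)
Your construction has a genuine quantitative gap: it only yields a bound of $q^2/\Theta(n)$, not $q^2/(4n^2)$. In fact your own chain of inequalities already shows this --- you write $O(\sum_k a_k b_k / n) \le O((\sum_k a_k)(\sum_k b_k)/n) = O(q^2/n)$ and then assert this ``yields the clean bound $q^2/(4n^2)$'', but the denominator is $n$, not $n^2$. There are only $\Theta(n)$ antichain blocks and each has size $\Theta(n)$, so the per-block birthday bound is $a_k b_k/\Theta(n)$ and summing can never produce an extra factor of $n$ in the denominator. What you have described is, essentially, the paper's hard distribution for the \emph{adaptive} lower bound (\cref{lem:intersection_search_a_lb-main}), which indeed gives $q^2/(4n+4)$ and hence $\Omega(\sqrt{n})$; it cannot give $\Omega(n)$.

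The paper's nonadaptive construction is genuinely different and exploits nonadaptivity in a way your antichain idea does not. One picks a uniformly random \emph{two-dimensional shift} $\veck \in [n]^2$ (together with a random $0/1$ array $X$ to hide information, and a random set $S\subset[2n]^2$ of size $n^2$), and builds $A,B$ so that $A[\veci]=B[\vecj]$ iff $\veci\in S$ and $\vecj=\veci+\veck$. A nonadaptive algorithm fixes its query sets $Q_A,Q_B$ in advance, and can succeed only if $\veck$ lies in the difference set $\{\vecj-\veci:\veci\in Q_A,\vecj\in Q_B\}$, which has size at most $|Q_A|\cdot|Q_B|\le q^2/4$; since $\veck$ is uniform over $n^2$ points, the success probability is at most $q^2/(4n^2)$. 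The crucial point is that there are $n^2$ possible shifts (a genuinely two-dimensional unknown), whereas your antichain decomposition only hides a one-dimensional amount of randomness per query.
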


The proof of \cref{lem:intersection_search_na_lb-main} is deferred to the appendix (\cref{lem:intersection_search_na_lb}).
\cref{lem:intersection_search_na_lb-main} implies that any nonadaptive randomized algorithm must query $\Omega(n)$ indices from $A$ and $B$ so as to solve \cref{problem:intersection_search-main} with probability $\Theta(1)$. The $\Omega(n)$-query lower bound \cref{thm:132_na_lb-main} directly follows from \cref{lem:reduction_int-search_to_132-main} and \cref{lem:intersection_search_na_lb-main}. $\hfill\qed$


The adaptive lower bound \cref{thm:132_a_lb-main} is obtained by first proving an $\Omega(\sqrt{n})$-query lower bound for \emph{adaptive} randomized algorithms for \cref{problem:intersection_search-main} that succeed with probability $\Theta(1)$. We state this formally below and defer its proof to the appendix (\cref{lem:intersection_search_a_lb}).

\begin{lemma} \label{lem:intersection_search_a_lb-main}
	Let $\mathcal{A}$ be any adaptive randomized algorithm for the intersection-search problem \cref{problem:intersection_search-main}, with intersection parameter $\gamma \geq 4/9$. If $\mathcal{A}$ makes $q$ queries, then the success probability of $\mathcal{A}$ is at most $q^2 / (4n + 4)$.
\end{lemma}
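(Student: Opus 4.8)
The plan is to embed $\Omega(n)$ independent copies of the birthday-search problem (\cref{problem:birthday-search-main}) on unstructured sets of size $\Omega(n)$ inside a hard distribution on instances of \cref{problem:intersection_search-main}, and then invoke (the proof technique behind) \cref{cl:birth-main} on whichever copy the algorithm tries to solve. Concretely, I would first build a random instance $(A,B,\gamma)$ as follows. Partition one of the $3n\times 3n$ arrays, say along an antichain-respecting layering, into $\Omega(n)$ blocks $C_1,\dots,C_r$ with $r = \Omega(n)$ and $|C_t| = \Omega(n)$, chosen so that (i) every $C_t$ is an antichain-like ``slab'' of the grid order in $A$ and, after sorting, occupies a contiguous value-window $W_t \subseteq \mathbb{R}$, and (ii) the analogous slabs $D_1,\dots,D_r$ in $B$ occupy the \emph{same} value-windows $W_t$. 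This is exactly the $2$-dimensional analogue of the construction used for the $1$-dimensional $(1,3,2)$ lower bound in \cite{NewmanRRS19}, lifted so that ``which slab'' is forced but ``which index inside the slab'' is hidden by independent random permutations $\pi^{(t)}_A,\pi^{(t)}_B$ on the $|C_t|$ positions. Because the value-windows coincide block-by-block, an element of $A$ equals an element of $B$ only if they lie in the same slab $t$; within slab $t$ the set of common values (and hence solutions) is determined by $\pi^{(t)}_A$ and $\pi^{(t)}_B$, an independent birthday-search instance of size $\Theta(n)$. I would size the overlap so that $\gamma \geq 4/9$ still holds, e.g.\ by letting a constant fraction of each slab's values be shared.

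Next I would run the standard adaptive-to-nonadaptive / information-theoretic argument. Fix any deterministic $q$-query decision tree $T$ (Yao's principle lets us handle randomized $\mathcal{A}$ by fixing its coins and averaging over the input distribution). At any node, a query to $A$ at index $\veci$ reveals which slab $t$ that index lies in (this is determined by the fixed partition, not the random permutations) and the value $A[\veci]$, which only reveals $\pi^{(t)}_A$ at one position; symmetrically for $B$. Crucially, queries into slab $t$ reveal \emph{nothing} about the permutations of any other slab $t'\neq t$, and within slab $t$ the conditional distribution of the unqueried positions is still uniform. So if the algorithm makes $q_t^A$ queries into slab $t$ of $A$ and $q_t^B$ into slab $t$ of $B$, the probability it has witnessed a collision in slab $t$ is, by the birthday bound, $O(q_t^A q_t^B / |C_t|) = O(q_t^A q_t^B / n)$. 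Summing over $t$ and using $\sum_t q_t^A \leq q$, $\sum_t q_t^B \leq q$ together with the AM--GM/Cauchy--Schwarz inequality $\sum_t q_t^A q_t^B \leq (\sum_t q_t^A)(\sum_t q_t^B) \leq q^2$ (actually $\sum_t q_t^A q_t^B \le (\max_t q_t^B)\sum_t q_t^A$ or just crude bounding suffices), the total success probability is $O(q^2/n)$; tightening the constants in the slab sizes and overlap gives the stated bound $q^2/(4n+4)$.

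The subtle point — and the step I expect to be the main obstacle — is ensuring that an adaptive algorithm cannot ``shortcut'' by using structural information other than the hidden permutations: that is, that learning $A[\veci]$ for queried $\veci$ really does not help locate a matching $\vecj$ in $B$ except through the within-slab birthday structure. This requires the value-windows $W_t$ to be genuinely separated across slabs (so cross-slab collisions are impossible, not merely unlikely) and the partition to be ``canonical'' so that both $A$ and $B$ agree on it with no randomness — otherwise a query would leak information about the partition itself. Getting this while simultaneously (a) keeping both $A$ and $B$ globally $\prec$-monotone as required by \cref{problem:intersection_search-main}, and (b) maintaining $\gamma \geq 4/9$, is the delicate combinatorial part; I would handle it by taking the slabs to be the ``diagonal layers'' $\{\veci : \veci_1 + \veci_2 = \text{const}\}$ (or a coarsening thereof into $\Theta(n)$ groups), which are antichains, respect $\prec$, and can be assigned disjoint increasing value-windows in the natural way in both arrays. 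Once the construction is pinned down, the remaining calculation is the routine birthday-bound summation sketched above, and the reduction \cref{lem:reduction_int-search_to_132-main} then upgrades it to \cref{thm:132_a_lb-main}.
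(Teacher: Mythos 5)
Your proposal matches the paper's proof: the paper also uses the diagonal antichains $S_r = \{\vecx : x_1 + x_2 = r\}$ for $r \in \{n+2,\dots,3n+1\}$ as the independent birthday-search slabs, keeps $A$ canonical with $B$ permuted within each diagonal by independent uniform $\pi_r$, separates value-windows by the dominant $6n(i_1+i_2)$ term so cross-slab collisions are impossible, and concludes via Yao's principle and a per-slab birthday bound summed over slabs. The concerns you flag as "the delicate combinatorial part" are handled exactly as you anticipate, so this is essentially the paper's argument.
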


Hence, any adaptive randomized algorithm must query $\Omega(\sqrt{n})$ indices from $A$ and $B$ so as to solve \cref{problem:intersection_search-main} with probability $\Theta(1)$. The $\Omega(\sqrt{n})$-query lower bound \cref{thm:132_a_lb-main} directly follows from \cref{lem:reduction_int-search_to_132-main} and \cref{lem:intersection_search_a_lb-main}. $\hfill\qed$

\bibliographystyle{alphaurl}
\bibliography{references_soda26}

\newpage
\begin{appendices}
	\crefalias{section}{appendix}
	\crefalias{subsection}{appendix}
	\crefalias{subsubsection}{appendix}
	
	\section{Erasure-resilient monotonicity testing over high dimensions}\label{sec:mono-start}
	In this section, we describe an ER monotonicity tester for real-valued functions over hypergrids of arbitrary dimension. 
	Such a tester for $2$-dimensional
	hypergrids  is a key subroutine in our pattern freeness testers in~\cref{sec:123} and~\cref{sec:132}.
	
	\begin{theorem} \label{thm:improved-mon-er}
		There is a one-sided error $\delta$-erasure-resilient $\epsilon$-tester for monotonicity of functions $f \colon [n]^2 \to \R$ making $\widetilde{O}\left(\frac{\log^2(1/\epsilon)}{\epsilon (1-\delta)}\right)$ queries.
	\end{theorem}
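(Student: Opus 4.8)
The plan is to realize an erasure-resilient version of the \enquote{median-split-and-concatenate} tester of~\cite{NewmanRRS19}, lifted from the line to $2$-dimensional subgrids. The tester will reject only after it has actually seen a violating pair $p_1\prec p_2$ with $f(p_1)>f(p_2)$, so one-sidedness and immunity to false rejection caused by erasures are automatic. First I would isolate the combinatorial core. Since the Hamming and deletion distances to monotonicity coincide on any subposet (the $k=2$ case of \cref{clm:deletion} holds verbatim on an arbitrary poset, as a monotone partial function extends to a monotone total function), an input every completion of which is $\epsilon$-far from monotone on its nonerased part gives rise, via the erasure-resilient analogue of~\cref{obs:matching}, to a matching $M$ of $\Omega(\epsilon(1-\delta)n^2)$ pairwise-disjoint violating pairs, all among nonerased points. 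Thus the only effect of erasures is to dilute the density of useful points by the factor $(1-\delta)$, which is exactly what the $1/(1-\delta)$ in the bound pays for.

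Next I would bucket $M$ geometrically. For each scale-pair $(k_1,k_2)\in\{0,\dots,\lceil\log n\rceil\}^2$, tile $[n]^2$ under a uniformly random shift into blocks of size $2^{k_1}\times 2^{k_2}$, and assign each violating pair to the unique scale-pair matching the coordinatewise separation of its legs. Pigeonholing over the $O(\log^2 n)$ scale-pairs isolates a dominant one carrying $\Omega(\epsilon(1-\delta)n^2/\log^2 n)$ pairs, and a short calculation shows that, over the random shift, each such pair lands with its two legs in \emph{diagonally adjacent} blocks $L\prec R$ (meaning every point of $L$ precedes every point of $R$ in $\prec$) with probability at least $1/4$; hence for a typical shift a constant fraction of the dominant bucket does so. Restricting to such \enquote{good} block-pairs removes the comparability obstruction entirely: any sampled $p\in L$, $q\in R$ with $f(p)>f(q)$ is then a bona fide violation, with no coordinate test needed, and the degenerate scale-$0$ cases (a coordinate of the two legs agrees) are the sub-case of width-$1$ blocks. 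A reverse-Markov averaging then makes a $\Omega(\epsilon(1-\delta)/\log^2 n)$ fraction of good block-pairs carry $\Omega(\epsilon(1-\delta)/\log^2 n)\cdot|L|$ disjoint violating pairs; a single uniform sample of a point in $[n]^2$ lands a right-leg of such a heavy block-pair with probability $\widetilde\Omega(\epsilon(1-\delta))$, so $\widetilde O(1/(\epsilon(1-\delta)))$ uniform samples locate one.

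The heart of the proof is the median step inside one good block-pair $(L,R)$: here $L$ and $R$ play the roles of the two intervals in the $1$-dimensional argument, and the \enquote{median point} $x$ — a point whose value lies strictly between $f(p_1)$ and $f(p_2)$ for many pairs of the sub-matching — is only required to \emph{exist}; it is never queried, which is precisely what keeps the argument valid under erasures (and later on partial domains, where $x$ need not lie in the known subdomain). One then \enquote{concatenates}: a sampled above-threshold point of $L$ and a sampled below-threshold point of $R$ — legs of possibly different pairs of the sub-matching — form a violation. Run naively, this wins only when the uniform $p\in L$ is an above-median left-leg \emph{and} the uniform $q\in R$ is a below-median right-leg, two events each of density $\Theta(\epsilon(1-\delta)/\log^2 n)$, giving success probability quadratic in the density and hence a suboptimal $1/\epsilon^2$ dependence already on the line. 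I would repair this with a value-aware refinement of the median argument — roughly, a dyadic bucketing of the relevant sub-matching according to the values (equivalently, the within-block value-ranks) of its legs, performed on the $L$-side and on the $R$-side — which makes conditioning on hitting a leg already force the value onto the correct side of the running median up to a constant factor, restoring success probability $\widetilde\Omega(\epsilon(1-\delta))$ at the cost of a $\log^2(1/\epsilon)$ factor (one $\log(1/\epsilon)$ per side). Summing over scale-pairs and over the $\widetilde O(1/(\epsilon(1-\delta)))$ samples needed to locate a heavy good block-pair yields the claimed $\widetilde O(\log^2(1/\epsilon)/(\epsilon(1-\delta)))$ query bound; and since $x$ is never queried and the one adaptive-looking step — sampling near a discovered leg — succeeds with high probability regardless of earlier answers, all randomness may be fixed in advance and the tester is nonadaptive.

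The step I expect to be the main obstacle is exactly this refinement, i.e., turning the $1/\epsilon^2$ loss of the vanilla median argument into the near-optimal $\log^2(1/\epsilon)/\epsilon$. One must track, simultaneously, control of the \emph{values} of the legs (so they stay on the correct side of the running median) together with control of their \emph{positions} (so the sampled points still land in the right block-pair), while the block sizes, and hence the local distance parameter, keep changing across the scale bucketing; and one must verify that the induced distribution on the surviving legs is close enough to uniform for the typicality guarantee of~\cref{rem:needed-main} to be usable by the downstream $3$-pattern testers. A secondary technical point is propagating the success probabilities cleanly through the $O(\log^2 n)$ scale-pairs and the refinement, together with the standard amplification to failure probability $n^{-\Omega(\log n)}$, so that the factor hidden inside $\widetilde O(\cdot)$ is genuinely only $\polylog n$.
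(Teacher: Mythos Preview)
Your overall framework matches the paper's: bucket violating pairs by scale-pair, work inside diagonally-adjacent block pairs $L\prec R$, run the median argument, and absorb erasures as a $(1-\delta)$ dilution of the matching. Your random-shift tiling in place of the paper's fixed canonical dyadic partition is a harmless variant, and your observation that the median point need only \emph{exist} and is never queried is exactly why the argument is erasure-resilient for free.

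The gap is in the refinement from $1/\epsilon^2$ to $\log^2(1/\epsilon)/\epsilon$. You misdiagnose where the quadratic loss comes from: it is \emph{not} that a single sample in $L$ and a single sample in $R$ each hit a correct-side leg with probability $\sim\epsilon$ --- the basic test already draws $\Theta(1/\epsilon_1)$ points per block and succeeds with constant probability \emph{once a heavy block-pair is fixed} (\cref{claim:median_argument}). The quadratic loss is the product of (i) $\Theta(1/\epsilon_1)$ block-samples needed to land in a heavy pair (reverse Markov only guarantees a $\Theta(\epsilon_1)$ fraction of heavy blocks) and (ii) $\Theta(1/\epsilon_1)$ point-queries inside each candidate. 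Your \enquote{value-aware} bucketing of legs by value-rank does not touch (i), and cannot push (ii) below $\Omega(1/\epsilon_B)$ since that many samples are needed just to hit a leg in a block of density $\epsilon_B$. The paper's fix (\cref{claim:improved}) is instead a work-investment strategy over the \emph{distribution of block densities}: for each dyadic density level $k\in[2r]$ with $r=\Theta(\log(1/\epsilon_1))$, sample $r_k=\Theta(r/(2^k\epsilon_1))$ blocks and spend $\Theta(2^k)$ points inside each, so $\sum_k r_k\cdot\Theta(2^k)=\Theta(r^2/\epsilon_1)$; a pigeonhole on $\mathbb{E}[\epsilon_B]\geq\epsilon_1/2$ forces some level $k$ to carry enough mass that its budget suffices. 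Thus the $\log^2(1/\epsilon)$ is $r$ density levels times the extra factor $r$ inside each $r_k$, not \enquote{one $\log$ per side}. For the ER statement one then samples blocks uniformly and sets the working parameter to $\epsilon_1=\Theta(\epsilon(1-\delta)/\log^2 n)$, after which the analysis is verbatim.
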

	
	\subsection{A simple tester} \label{sec:mono}
	
	We first describe a simple one-sided error erasure-resilient monotonicity tester for real-valued functions over the $2$-dimensional hypergrid ${[n]}^2$. For convenience, we describe a $(1,2)$-freeness tester, which can be easily modified to be a monotonicity (i.e., $(2,1)$-freeness) tester. Specifically, we prove the following theorem and later discuss the additional ideas needed to show \cref{thm:improved-mon-er}.
	
	\begin{theorem} \label{thm:basic-mon}
		There is a one-sided error $\delta$-erasure-resilient $\epsilon$-tester for $(1,2)$-pattern freeness of functions $f \colon [n]^2 \to \R$ with query complexity $\widetilde{O}\left(\frac{1}{{\epsilon}^2 (1 - \delta)}\right)$ for all $\epsilon, \delta \in (0,1)$.
	\end{theorem}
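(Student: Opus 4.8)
The plan is to implement, over two-dimensional subgrids and robustly to erasures, the \enquote{median-split and concatenate} strategy of~\cite{NewmanRRS19}. Let $N \subseteq [n]^2$ denote the (unknown) set of non-erased points, so $|N| \ge (1-\delta)n^2$. Since the tester will reject only upon querying two non-erased points $p \prec q$ with $f(p) < f(q)$ --- that is, an actual $(1,2)$-appearance --- one-sided error is automatic, and it suffices to show that when $f$ is $\epsilon$-far from $(1,2)$-freeness in the ER sense, the tester outputs such a witness with the required probability. By the partial-function analogues of \cref{cor:hamdel3} and \cref{obs:matching}, the deletion distance of $f|_N$ from $(1,2)$-freeness equals its Hamming distance and is $\ge \epsilon|N|$, so there is a matching $M$ of $(1,2)$-appearances, all of whose legs lie in $N$, with $|M| \ge \epsilon|N|/2$. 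The point that governs the erasure-resilient cost is that $M$ is $(\epsilon/2)$-\emph{dense inside $N$}, rather than merely $\Omega(\epsilon(1-\delta))$-dense inside $[n]^2$; exploiting this is what keeps the dependence on $1/(1-\delta)$ linear.

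First I would decompose $M$ into aligned box pairs. For each pair $(p_1,p_2) \in M$ and each coordinate $k$ with $p_2[k] > p_1[k]$, record the dyadic scale $\lfloor \log_2(p_2[k]-p_1[k]) \rfloor$, using a symbol \enquote{$=$} when $p_1[k]=p_2[k]$; this assigns one of $O(\log^2 n)$ labels, so some class $M^\ast$ has $|M^\ast| \ge |M|/O(\log^2 n)$. Tiling $[n]^2$ by the axis-parallel boxes of the corresponding dimensions, every pair of $M^\ast$ has its $1$-leg and $2$-leg in boxes that agree in each \enquote{$=$} coordinate and differ by $1$ or $2$ in each other coordinate; a further $O(1)$-way pigeonhole on these offsets and on small residues extracts a sub-matching $M' \subseteq M^\ast$ of size $\ge |M|/O(\log^2 n)$ together with box pairs $(L_1,R_1),(L_2,R_2),\dots$ in which the $L_i$ are pairwise disjoint, the $R_i$ are pairwise disjoint, $L_i \prec R_i$ holds pointwise in every non-\enquote{$=$} coordinate, and each pair of $M'$ has its $1$-leg in some $L_i$ and its $2$-leg in the matching $R_i$. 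If a \enquote{$=$} coordinate occurs, then all pairs of $M'$ share a value in that coordinate and the problem decomposes, line by line, into the one-dimensional median problem, handled by first sampling a heavy line and invoking the $1$-dimensional construction of~\cite{NewmanRRS19}; for the remainder assume all coordinates differ, so $L_i \prec R_i$ fully.

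Next I would run the median step with a careful erasure accounting. For a box pair $(L_i,R_i)$ carrying $n_i$ pairs of $M'$, let $v_i^\ast$ be the median of the $f$-values of their $1$-legs; since within a family of strict inequalities $f(p_1) < f(p_2)$ the $j$-th smallest left value is strictly below the $j$-th smallest right value, at least $n_i/2$ of those $1$-legs have value $\le v_i^\ast$ and at least $n_i/2$ of the matching $2$-legs have value $> v_i^\ast$, so any \enquote{low} $1$-leg $p \in L_i$ paired with any \enquote{high} $2$-leg $q \in R_i$ gives $p \prec q$ and $f(p) < f(q)$. The tester, for each of the $O(\log^2 n)$ scales and each of $\widetilde{O}(1/\epsilon)$ repetitions, samples a uniform point of $[n]^2$, queries it, resamples until it is non-erased (expected cost $O(1/(1-\delta))$), and takes the box $L$ it lands in; this draws $L$ with probability proportional to $|L\cap N|$, so that the expected fraction of $L\cap N$ consisting of low $1$-legs is $\Omega(\epsilon/\log^2 n)$ --- carrying no $(1-\delta)$ factor, precisely because $M'$ was dense inside $N$. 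It then draws $\widetilde{O}(1/\epsilon)$ non-erased points from $L$ and from the partner $R$ (again by rejection sampling, whose per-sample cost amortises to $O(1/(1-\delta))$ over the random choice of box via a reverse-Markov / Cauchy--Schwarz estimate) and checks every queried pair. When the scale realises $M'$ and the sampled $L$ is good --- which, by reverse Markov, happens with probability $\Omega(\epsilon/\log^2 n)$ --- the $\widetilde{O}(1/\epsilon)$ samples from $L$ and $R$ produce a low $1$-leg and a high $2$-leg with high probability; amplifying over the repetitions and scales drives the failure probability down to $1/n^{\Omega(\log n)}$. Multiplying the number of repetitions, the samples per box, and the $O(1/(1-\delta))$ cost per non-erased sample yields the claimed $\widetilde{O}\!\left(\tfrac{1}{\epsilon^2(1-\delta)}\right)$ queries; and since the rejection sampling may be replaced by a fixed-size batch, the tester can be taken nonadaptive.

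The hardest part, I expect, is the interplay between the box decomposition and the erasure bookkeeping: obtaining a scale decomposition that genuinely guarantees $L_i \prec R_i$ (so that \emph{every} low/high pair is a bona fide $(1,2)$-appearance and the reduction is lossless), correctly folding in the degenerate constant-coordinate cases, and --- most delicately --- arranging the sampling so that the density of useful points and the cost of hitting non-erased points are controlled \emph{simultaneously}, which is exactly what pins the dependence on the erasure fraction at $1/(1-\delta)$ rather than $1/(1-\delta)^2$. A secondary point is that the tester must also return a \enquote{typical} witness in the sense of \cref{rem:needed-main} --- a $1$-leg distributed uniformly over the relevant legs --- which the sampling above already delivers but which has to be tracked through the amplification and through the reduction to the one-dimensional subproblems.
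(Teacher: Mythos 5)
Your proposal follows essentially the same scheme as the paper's proof in \cref{sec:mono}: a dyadic-scale decomposition of the matching $M$ into $O(\log^2 n)$ submatchings, a pigeonhole into aligned box pairs $(L,R)$, the median argument of \cref{claim:median_argument} inside each pair, and rejection sampling to cope with erasures. The one substantive place where you depart from the paper is that you draw the left box $L$ proportional to its non-erased mass $|L \cap N|$ (by sampling a uniform non-erased point of $[n]^2$ and taking the box it lands in), whereas the paper's test $T(\alpha,\beta,\gamma)$ draws the box uniformly from $\mathcal{B}$ and only then rejection-samples inside it. Your choice is actually better motivated: the paper's remark that \enquote{in expectation, we invest $O(1/(1-\delta))$ queries in order to query a single point} is literally true for a globally uniform sample, but once a box $B$ is fixed the retry cost inside $B$ is $1/(1-\delta_B)$, and $\E_B[1/(1-\delta_B)]$ over a \emph{uniform} box $B$ is not controlled by $1/(1-\delta)$; sampling $L$ proportionally to $|L\cap N|$ collapses this expectation to exactly $n^2/|N| \le 1/(1-\delta)$, as you observe. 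You correctly flag that the partner box $R$ is the remaining delicate point, since it is not sampled proportionally and its retry cost does not amortize the same way — a bookkeeping issue the paper also leaves implicit (a clean way out is to sample from the full box $R$ with no retries and measure the threshold on $|M(L,R)|$ against the full $|R|$). Two small remarks: your detour for the degenerate \enquote{$=$} coordinate (pigeonholing to a heavy line and invoking the one-dimensional construction) is more complicated than the paper's device of simply setting that box side length to zero; and the \enquote{typicality} guarantee of \cref{rem:needed-main} is not part of what \cref{thm:basic-mon} itself asserts, so you need not track it here. Overall: the same approach, with slightly more careful erasure bookkeeping on the left-box side, and no substantive gap.
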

	
	Let $f \colon {[n]}^2 \to \mathbb{R}$. For simplicity, let us initially assume that $f$ has \emph{no} erased points. Since our goal is to design a one-sided error tester, we may also assume that $f$ is $\epsilon$-far from being $(1,2)$-free.
	The tester we describe rejects upon finding a $(1,2)$-appearance. 
	By~\cref{obs:matching}, there exists a matching $M$ of $(1,2)$-appearances of size $\Abs{M} \geq \epsilon n^2 / 2$. Throughout this section, fix such a matching $M$.
	
	Given a tuple $(p,q)$ with $p, q \in [n]^2$ and $p \prec q$, we place $(p,q)$ in the \emph{bucket} $(\alpha,\beta)$, where $\alpha, \beta \in \set{0,1,\ldots,\log_2 n}$ in the following manner -- at level $i$, partition $[n]$ in $2^{i}$ equal parts in the natural way (for instance, at level $2$, the parts are $[1,n/4], (n/4,n/2], (n/2,3n/4]$, and $(3n/4,n]$), and stop at the lowest level $\alpha$ in which the $x$-coordinates $x(p)$ and $x(q)$ of $p$ and $q$ fall into different parts. That is, the natural partition of the array into $2^{\alpha}$ parts is the \emph{coarsest} partition that separates $x(p)$ and $x(q)$. Note that since $p \prec q$, this would ensure that $x(q)$ belongs to the part of $[n]$ immediately to the right of the part which contains $x(p)$, provided $x(p) < x(q)$. If $x(p) = x(q)$, we set $\alpha \colonequals 0$. Similarly, let $\beta$ be the lowest level that separates the $y$-coordinates $y(p)$ and $y(q)$ of $p$ and $q$. Then, we say that the tuple $(p,q)$ belongs to the bucket $(\alpha,\beta)$. Denoting $\ell = \log_2 n + 1$, we see that every such tuple belongs to one of $\ell^2$ buckets.
	
	Then we define
	\[
	M_{\alpha,\beta} \colonequals \set{(p,q) \in M \colon (p,q) \text{ belongs to the bucket } (\alpha,\beta)}.
	\]
	
	The sets $M_{\alpha, \beta}$ partition $M$ into $O({\log}^2 {n})$ submatchings. An immediate conclusion is that for some ${\alpha}^*, {\beta}^* \in \set{0,1,\ldots,\log_2 n}$, $\Abs{M_{{\alpha}^*,{\beta}^*}} \geq \epsilon n^2 / 2 {\ell}^2$, i.e., $\Abs{M_{{\alpha}^*,{\beta}^*}} = \Omega(\epsilon n^2 / {\log}^2 n)$.
	
	Indeed, we do not have prior knowledge of such $M$, $\alpha^*$, and $\beta^*$. Therefore, we first design a tester for a specified pair of values $\alpha$ and $\beta$ within the range $\mathcal{S} \colonequals \set{0,1,\ldots,\log_2 n}$ and a proximity parameter $\gamma \in (0,1)$. Our final tester is obtained by running this tester for all possible values of $\alpha, \beta \in \mathcal{S}$ and for the specific proximity parameter $\Omega(\epsilon / {\log}^2 n)$ (see \cref{thm:basic-mon}). The erasure-resilience of the tester comes easily at a cost of $O(1/(1-\delta))$, where $\delta$ is an upper bound on the fraction of erasures in the domain.
	
	\paragraph{The basic test $T(\alpha, \beta, \gamma)$ given $\alpha,\beta \in \mathcal{S}$ and $\gamma \in (0,1)$.} Let $\alpha, \beta \in \mathcal{S}$. We partition the domain ${[n]}^2$ into \emph{boxes} of side lengths $n/2^{\alpha}$ and $n/2^{\beta}$ along the $x$-axis and the $y$-axis respectively, in the natural way, when $\alpha, \beta \geq 1$.\footnote{If $\alpha = 0$, then the side length of each box on the $x$-axis is set to $0$. The case $\beta = 0$ is taken care of similarly. Note that at least one of $\alpha$ and $\beta$ is nonzero, as we only place tuples $(p,q)$ with $p \neq q$ in our buckets.} Let $\mathcal{B}$ be this set of boxes. For each box $B \in \mathcal{B}$, let $x_{\max}(B)$ and $y_{\max}(B)$ denote the maximum $x$-coordinate and $y$-coordinate values of $B$ respectively. We define the \emph{diagonally adjacent box} $B'$ of $B$ as follows:\footnote{If $\alpha = 0$, we set $x_{\max}(B') \colonequals x_{\max}(B) = x(p)$. The case $\beta = 0$ is taken care of similarly.}
	
	\begin{equation*} \label{defn:adjacent_region}
		B' \colonequals \set{p \in {[n]}^2 \colon x_{\max}(B) < x(p) \leq x_{\max}(B) + n/2^{\alpha} \text{, and } y_{\max}(B) < y(p) \leq y_{\max}(B) + n/2^{\beta}}
	\end{equation*}
	
	For each edge $(p,q) \in M_{\alpha,\beta}$ with $p \in B$, it follows that $q \in B'$. Let $M(B,B')$ denote the set of all $(p,q) \in M_{\alpha,\beta}$ with $p \in B$. Then, $M_{\alpha, \beta} = {\bigcup}_{B \in \mathcal{B}} M(B,B')$. Further, for any pair of distinct boxes $B_1, B_2 \in \mathcal{B}$, the sets $M(B_1,B_1')$ and $M(B_2, B_2')$ are disjoint and so, the sets $M(B,B')$ form a partition of $M_{\alpha, \beta}$.
	
	\begin{enumerate}
		\item[] \noindent The test $T(\alpha,\beta, \gamma)$ is as follows:
		\item Sample $B \in \mathcal{B}$ uniformly at random.
		\item  TEST$(B,B',\gamma)$: Sample $4/\gamma$ points uniformly at random from $B$ and $4/\gamma$ points uniformly at random from $B'$. Then query $f$ at all of these $8/\gamma$ points and \textbf{reject} if a $(1,2)$-appearance is found among these queried points.\label{stp:testBB'}
	\end{enumerate}
	
	We first prove the following claim about the procedure TEST$(B,B',\gamma)$, which is Step~\ref{stp:testBB'} of test $T(\alpha,\beta,\gamma)$.
	
	\begin{claim} \label{claim:median_argument}
		If $M(B,B') \geq \gamma \cdot \Abs{B}$, then $\operatorname{TEST}(B,B',\gamma)$ finds a $(1,2)$-appearance in $f$ with probability $\Theta(1)$. Its query complexity is $O(1/\gamma)$.
	\end{claim}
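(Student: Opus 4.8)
The plan is to prove the claim by the \enquote{median argument}. First I would unpack the hypothesis (correcting the obvious typo: the assumption is $\Abs{M(B,B')}\ge\gamma\Abs{B}$). Write $N\colonequals\Abs{M(B,B')}\ge\gamma\Abs{B}$, so that $M(B,B')$ consists of $N$ pairwise index-disjoint $(1,2)$-pairs $(p_i,q_i)$, $i\in[N]$, with $p_i\in B$, $q_i\in B'$ and $f(p_i)<f(q_i)$. The one structural fact to record is that, because $B'$ is the \emph{diagonally adjacent} box of $B$, every $p\in B$ and every $q\in B'$ satisfy $p\prec q$: indeed $x(p)\le x_{\max}(B)< x(q)$ and $y(p)\le y_{\max}(B)< y(q)$ (in the degenerate cases $\alpha=0$ or $\beta=0$ the corresponding coordinate is equal rather than strictly smaller, which still respects $\prec$, and at least one of $\alpha,\beta$ is nonzero). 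Hence, to exhibit a $(1,2)$-appearance it suffices to sample a point $p\in B$ and a point $q\in B'$ with $f(p)<f(q)$; the $\prec$-relation is automatic.

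Next comes the median step. Order the indices so that $f(p_1)\le f(p_2)\le\dots\le f(p_N)$ (breaking ties by a fixed rule) and set $t^\ast\colonequals f(p_{\lceil N/2\rceil})$. Let $P^\ast\colonequals\{p_i: i\le\lceil N/2\rceil\}\subseteq B$ and $Q^\ast\colonequals\{q_i: i\ge\lceil N/2\rceil\}\subseteq B'$. Since the $p_i$ are distinct, $\Abs{P^\ast}=\lceil N/2\rceil\ge N/2\ge\gamma\Abs{B}/2$, and every point of $P^\ast$ has $f$-value at most $t^\ast$. Since the $q_i$ are distinct (the pairs are disjoint), $\Abs{Q^\ast}=\lfloor N/2\rfloor+1\ge N/2$; moreover, for each $i\ge\lceil N/2\rceil$ we have $f(q_i)>f(p_i)\ge f(p_{\lceil N/2\rceil})=t^\ast$, so every point of $Q^\ast$ has $f$-value strictly above $t^\ast$. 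As $B'$ is a translate of $B$ intersected with $[n]^2$, we have $\Abs{B'}\le\Abs{B}$ (if $B'$ is empty the hypothesis $N\ge\gamma\Abs{B}>0$ is vacuous), so $\Abs{Q^\ast}/\Abs{B'}\ge\Abs{Q^\ast}/\Abs{B}\ge\gamma/2$. Thus a uniformly random point of $B$ lands in $P^\ast$ with probability at least $\gamma/2$, and a uniformly random point of $B'$ lands in $Q^\ast$ with probability at least $\gamma/2$.

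Finally I would assemble the sampling bound. Among the $4/\gamma$ independent uniform samples drawn from $B$, the probability that none lies in $P^\ast$ is at most $(1-\gamma/2)^{4/\gamma}\le e^{-2}$, and symmetrically for the $4/\gamma$ samples from $B'$ missing $Q^\ast$. By a union bound, with probability at least $1-2e^{-2}=\Theta(1)$ the procedure queries some $p\in P^\ast$ and some $q\in Q^\ast$; such a pair satisfies $p\prec q$ and $f(p)\le t^\ast< f(q)$, hence is a $(1,2)$-appearance and $\operatorname{TEST}(B,B',\gamma)$ rejects. The total number of queries is $8/\gamma=O(1/\gamma)$. The only genuinely delicate point is the choice of threshold: one needs a single value $t^\ast$ that simultaneously lies above the $f$-values of a constant fraction of left endpoints and below those of a constant fraction of right endpoints, and what makes this possible is exactly that each matched pair already satisfies $f(p_i)<f(q_i)$ — taking $t^\ast$ to be the median of the left-endpoint values turns this per-pair inequality into the required global separation. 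Everything else is a routine birthday-type estimate.
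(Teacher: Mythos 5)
Your proof is correct and follows essentially the same median argument as the paper: take the median $t^*$ of the $1$-leg values, observe that roughly half the $1$-legs lie at or below $t^*$ while roughly half the $2$-legs lie strictly above it (using $f(p_i)<f(q_i)$ per pair), and then apply the standard coupon/birthday estimate with $4/\gamma$ samples on each side. You also spell out two steps the paper leaves implicit — that $p\prec q$ holds automatically for any $p\in B$, $q\in B'$ by the definition of the diagonally adjacent box, and that $\Abs{B'}\le\Abs{B}$ is needed to bound the hitting probability in $B'$ — which is a harmless and arguably welcome bit of extra rigor.
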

	
	\begin{proof}
		The query complexity is clear from the description of
		$\operatorname{TEST}(B,B',\gamma)$. Let the $1$-legs of edges in
		$M(B,B')$ be $p_1,\ldots,p_k$ where $k
		\geq \gamma \cdot \Abs{B}$  and $f(p_1) \leq  \cdots \leq  f(p_k)$.
		Let $Q$
		denote the set of corresponding $2$-legs (in $M(B,B')$) to these $1$-legs. Recall
		that the points in $Q$ are all in $B'$.
		Let $m \colonequals (k+1)/2$. Then there are at least
		$m$ indices $i$ such that $f(p_i) \leq f(p_m)$ and at least $m$
		indices $q \in Q$ for which $f(p_m) \leq f(q)$ (those that correspond
		to $1$-legs $p_i, ~ i > m$). Let $Q_H$ contain these $q$'s. Then, any
		pair $\{(p_i,q) \colon ~ i \leq m, ~ q \in Q_H \}$ forms a
		$(1,2)$-appearance. Thus, the probability that our sample of
		$4/\gamma$ points from $B$ intersects $\{p_i, ~ i \leq m\}$ is
		at least $1-(1-\frac{\gamma}{2})^{4/\gamma} \geq 1 - e^{-2}$, and similarly, the same holds for
		the probability that we sample a point in $Q_H$. Altogether, with
		$\Theta(1)$ probability, we succeed in finding a $(1,2)$-appearance.
	\end{proof}
	
	Let ${\alpha}^*, {\beta}^* \in \mathcal{S}$ be such that $\Abs{M_{{\alpha}^*,{\beta}^*}} \geq \epsilon n^2 / 2 {\ell}^2$. We let ${\epsilon}_1 \colonequals \epsilon / {\ell}^2$. We will now show that $T({\alpha}^*, {\beta}^*,\epsilon_1/4)$ finds a $(1,2)$-appearance in $f$ with probability $\Theta({\epsilon}_1)$.
	
	\begin{lemma} \label{lem:one_iteration}
		The algorithm $T({\alpha}^*,{\beta}^*, \epsilon_1/4)$  finds a $(1,2)$-appearance in $f$ with probability $\Theta({\epsilon}_1)$. Its query complexity is $O(1/{\epsilon}_1)$.
	\end{lemma}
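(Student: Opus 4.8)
The plan is to bootstrap from the per-box guarantee of \cref{claim:median_argument} by an averaging argument over the boxes of $\mathcal{B}$. The query-complexity half is immediate: $T(\alpha^*,\beta^*,\epsilon_1/4)$ samples a single box $B$ at no query cost and then invokes $\operatorname{TEST}(B,B',\epsilon_1/4)$, which by \cref{claim:median_argument} makes $O(1/\epsilon_1)$ queries. So the work lies entirely in the probability estimate, and for that I would proceed in two stages: first show that a uniformly random box $B$ is \enquote{heavy} for $M_{\alpha^*,\beta^*}$ with probability $\Omega(\epsilon_1)$, and then invoke \cref{claim:median_argument} conditioned on that event.

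For the first stage, recall that the sets $\{M(B,B')\}_{B \in \mathcal{B}}$ partition $M_{\alpha^*,\beta^*}$, that all boxes have the same size $\Abs{B} = n^2/\Abs{\mathcal{B}}$, and that $\Abs{M_{\alpha^*,\beta^*}} \geq \epsilon n^2/(2\ell^2) = \epsilon_1 n^2/2 = (\epsilon_1/2)\cdot\Abs{\mathcal{B}}\cdot\Abs{B}$. Averaging over the choice of $B$ then gives $\E_B[\Abs{M(B,B')}] \geq (\epsilon_1/2)\Abs{B}$. The observation that makes this averaging usable is that $\Abs{M(B,B')}\leq \Abs{B}$ for every box $B$: the pairs in $M(B,B')$ are pairwise disjoint and each has its $1$-leg in $B$, so $(p,q)\mapsto p$ embeds $M(B,B')$ into $B$. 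Hence $Y\colonequals \Abs{M(B,B')}/\Abs{B}$ is a $[0,1]$-valued random variable with $\E[Y]\geq \epsilon_1/2$, and the reverse-Markov estimate (for a $[0,1]$-valued $Y$, $\Pr[Y\geq t]\geq \E[Y]-t$) yields $\Pr_B[\Abs{M(B,B')}\geq (\epsilon_1/4)\Abs{B}]\geq \epsilon_1/4$.

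For the second stage, call $B$ \emph{good} if $\Abs{M(B,B')}\geq (\epsilon_1/4)\Abs{B}$; the previous paragraph shows a random $B$ is good with probability $\Omega(\epsilon_1)$. Conditioned on sampling a good $B$, \cref{claim:median_argument} with $\gamma=\epsilon_1/4$ asserts that $\operatorname{TEST}(B,B',\epsilon_1/4)$ finds a $(1,2)$-appearance with probability $\Theta(1)$, so unconditionally $T(\alpha^*,\beta^*,\epsilon_1/4)$ succeeds with probability $\Omega(\epsilon_1)$. This lower bound is all the final tester needs (it amplifies by running $\widetilde{\Theta}(1/\epsilon_1)$ independent copies, which is where the extra $1/\epsilon$ factor in \cref{thm:basic-mon} comes from), and it is also essentially tight in the worst case over $\epsilon$-far functions, which is why the statement records it as $\Theta(\epsilon_1)$. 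The only step that needs more than bookkeeping is the reverse-Markov estimate, and there the sole point to be careful about is the bound $Y \le 1$, i.e.\ $\Abs{M(B,B')}\le\Abs{B}$; with that in hand I expect no real obstacle, since the substance of the construction is already packaged in \cref{claim:median_argument}.
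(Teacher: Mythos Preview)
Your proposal is correct and follows essentially the same route as the paper: average $\Abs{M(B,B')}$ over boxes, apply reverse Markov to get a good box with probability $\Omega(\epsilon_1)$, then invoke \cref{claim:median_argument}. If anything you are slightly more explicit than the paper in justifying the reverse-Markov step via the bound $\Abs{M(B,B')}\le\Abs{B}$, which the paper leaves implicit.
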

	
	\begin{proof}
		The query complexity is clear from the description of
		$T(\alpha,\beta,\gamma)$. Since
		$\Abs{M_{{\alpha}^*,{\beta}^*}} \geq {\epsilon}_1 n^2/2$ and
		$\Abs{M_{{\alpha}^*,{\beta}^*}} = \sum_{B \in \mathcal{B}}
		\Abs{M(B,B')}$, we have
		${\mathbb{E}}_{B \in \mathcal{B}}[\Abs{M(B,B')}] \geq
		{\epsilon}_1 \Abs{B} / 2$.  Hence, by the reverse Markov inequality, the
		tester samples a box $B$ for which
		$\Abs{M(B,B')} \geq {\epsilon}_1 \Abs{B} / 4$ with probability
		at least ${\epsilon}_1/4$. Conditioning on the event that
		$B$ indeed has this many appearances,~\cref{claim:median_argument}
		implies that the success probability in finding a violation is
		$\Theta(1)$. Hence, $T({\alpha}^*,{\beta}^*, \epsilon_1/4)$
		finds a $(1,2)$-appearance in $f$ with probability $\Theta({\epsilon}_1)$.
	\end{proof}
	
	We will now describe an erasure-resilient tester for $(1,2)$-freeness based on \cref{lem:one_iteration}.
	
	\begin{proof}[Proof of \cref{thm:basic-mon}]
		Let $\epsilon_1$ as defined above. Our tester runs
		$T(\alpha, \beta, \epsilon_1/4)$, independently 
		$\Theta\left(\frac{1}{{\epsilon}_1} \log\log n\right)$ times, for each
		$\alpha, \beta \in \mathcal{S}$, to
		have a $\Theta(1)$ success probability. Every time we
		need to query a point and it happens to be erased, we repeat the
		process by choosing another point independently. So in expectation,
		we invest $O(1/(1 - \delta))$ queries in order to
		query a single point, and this reduces the test to the case where no
		point is erased. Hence the tester makes a total of
		$O\left(\frac{{\log}^6 {n}\log \log n}{{\epsilon}^2 (1 - \delta)}\right)$
		queries.
	\end{proof}

	\subsection{Testing monotonicity of partial functions \texorpdfstring{over ${[n]}^2$ and improving dependence on $\epsilon$}{}}\label{sec:simple-mon2}
	
	\cref{thm:basic-mon} asserts the correctness and complexity
	of an erasure-resilient monotonicity tester, which serves our purpose for $(1,2,3)$-freeness testing.
	However, for testing $(1,3,2)$-freeness, we need the ability to test monotonicity of 
	partial functions $f \colon P \to \R$, where $P$ is an arbitrary subset of $[n]^2$.
	The subdomain $P$ is known to the tester. 
	An erasure-resilient monotonicity tester as described above can be used for 
	partial function monotonicity testing. However, $P$ can be arbitrarily small in many cases and 
	the inverse dependence of the query complexity on $|P|$ (from \cref{thm:basic-mon}) is not desirable. 
	To address this issue, we design an alternate tester whose query complexity
	does not depend on $\Abs{P}$. Furthermore, we
	improve upon the quadratic dependence of the query complexity on
	$1/\epsilon$. We stress that the tester crucially depends on the fact that $P$ is known (unlike in the erasure-resilient case).
	
	\begin{theorem} \label{thm:improved-mon-partial}
		There is a one-sided error $\epsilon$-tester for
		$(1,2)$-pattern freeness of partial functions $f
		\colon P \to \R$, where $P \subseteq [n]^2$, with
		query complexity $\widetilde{O}\left(\frac{ \log^2
			(1/\epsilon)}{{\epsilon}}\right)$. 
	\end{theorem}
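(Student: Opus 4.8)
The plan is to build on the simple erasure-resilient tester of \cref{thm:basic-mon}, retaining its dyadic ``double-bucketing'' but replacing the uniform-box sampling with a scheme tailored to the known, possibly sparse, non-rectangular subdomain $P$. Since we only reject upon finding a $(1,2)$-appearance, assume $f\colon P\to\R$ is $\epsilon$-far from $(1,2)$-free; because Hamming and deletion distances coincide for monotonicity-type patterns on any poset (the argument behind \cref{clm:deletion} is poset-local), the analog of \cref{obs:matching} yields a matching $M$ of $(1,2)$-appearances with $|M|\ge \epsilon|P|/2$. Exactly as in the simple tester, assign each pair $(p,q)\in M$ to the bucket $(\alpha,\beta)$, where $\alpha$ is the coarsest dyadic level separating $x(p),x(q)$ and $\beta$ the coarsest separating $y(p),y(q)$; this partitions $[n]^2$ into anchor boxes, each paired with a diagonally adjacent box $B'$ in which every point lies strictly above every point of $B$, and produces a dominant bucket $M_{\alpha^\ast,\beta^\ast}$ of size $\Omega(\epsilon|P|/\log^2 n)$.

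The new ingredient for partial domains is a second, density-based bucketing. For an anchor box $B$ let $k_B$ be the number of appearances of $M_{\alpha^\ast,\beta^\ast}$ with $1$-leg in $B$, and group the anchor boxes by the value of $k_B/|B\cap P|$ into dyadic classes. Because $\sum_B k_B=\Omega(\epsilon|P|/\log^2 n)$ while $\sum_B|B\cap P|\le|P|$, classes of density below $\epsilon/\polylog(n)$ collectively carry less than half the matching and can be discarded, so only $O(\log(1/\epsilon))$ classes (on top of the $\polylog n$ scales) matter, and the dominant class has $1$-leg density $\approx 2^{-t^\ast}$ with $2^{-t^\ast}=\Omega(\epsilon/\polylog n)$. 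This lets the median argument go through despite $|B\cap P|$ being unknown and wildly varying: for a box $B$ in the dominant class, order its $1$-legs by $f$-value, let $v_B$ be the median; at least half the $1$-legs in $B$ have value $\le v_B$, and since each pair is increasing, at least half the $2$-legs in $B'$ have value $\ge v_B$. Hence (i) a uniformly random point of $P$ is, with probability $\Omega(\epsilon/\polylog n)$, a ``good $2$-leg'' (a $2$-leg of $M_{\alpha^\ast,\beta^\ast}$ anchored at a dominant-class box $B$ with value $\ge v_B$), and given $(\alpha^\ast,\beta^\ast)$ its anchor $B$ is determined; (ii) conditioned on such a $B$, a uniformly random point of $B\cap P$ is a ``good $1$-leg'' with probability $\ge k_B/(2|B\cap P|)=\Omega(\epsilon/\polylog n)$; and any good-$1$-leg/good-$2$-leg pair at the same $B$ is a $(1,2)$-appearance (after routine tie-breaking for equal values). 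We do not know $(\alpha^\ast,\beta^\ast,t^\ast)$, so we run the scheme for all $\polylog(n)\cdot\log(1/\epsilon)$ choices; erasure-resilience is no longer at stake, since $P$ is known and we never query outside $P$, which is exactly why the $|P|$- and $(1-\delta)$-dependence of \cref{thm:basic-mon} disappears.

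Naively, phases (i) and (ii) cost $\widetilde\Theta(1/\epsilon)$ each, and since one does not know which of the $\widetilde\Theta(1/\epsilon)$ candidate $2$-legs is good, pairing each with within-box $1$-leg samples threatens a product $\widetilde\Theta(1/\epsilon^2)$. The improvement to $\widetilde O(\log^2(1/\epsilon)/\epsilon)$ comes from amortizing the two phases by applying the box-sampling/median primitive hierarchically instead of jumping to the finest scale: one spends $\widetilde O(1/\epsilon)$ queries once to locate a coarse subregion (a union of boxes intersected with $P$) carrying a constant fraction of its $\prec$-induced matching, and then performs $O(\log(1/\epsilon))$ rounds, each using the primitive at a constant number of scales together with geometric sampling to pass to a subregion on which the residual density of violated pairs roughly doubles, at an extra cost of $\widetilde O(\log(1/\epsilon))$ queries per round; after $O(\log(1/\epsilon))$ rounds the density is constant and a $\polylog$-size sample closes out. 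This pays the $1/\epsilon$ only once, for a total of $\widetilde O(\log^2(1/\epsilon)/\epsilon)$, and the procedure is nonadaptive because $P$ — hence the set of candidate anchor boxes at every granularity — is fixed in advance.

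The main obstacle is precisely the one flagged in the paper's overview: the hierarchical subproblems are monotonicity-testing instances for partial functions on non-rectangular, possibly sparse subdomains (unions of boxes meeting $P$), not on grids, so one must maintain, across levels, both that the residual $\prec$-matching remains an $\Omega(1)$-fraction of the current subdomain and that the combined scale-plus-density bucketing can be re-run losing only $\polylog(n,1/\epsilon)$ per level. Checking that the density bucketing behaves well under the coarsening/restriction operations, and that adaptivity can genuinely be avoided, is the crux; everything else is the standard median/birthday-paradox bookkeeping of \cref{claim:median_argument} and \cref{lem:31}.
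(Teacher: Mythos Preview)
Your setup is right and matches the paper: the dyadic $(\alpha,\beta)$-bucketing, the matching $M_{\alpha^\ast,\beta^\ast}$ of size $\Omega(\epsilon_1|P|)$ with $\epsilon_1=\epsilon/\log^2 n$, the median argument inside a box pair $(B,B')$, and the crucial use of the known $P$ to sample boxes proportionally to $|B\cap P|$ (equivalently, to sample a random point of $P$). Your second, density-based bucketing of boxes by $\epsilon_B \colonequals k_B/|B\cap P|$ into dyadic classes is also exactly what the paper does.

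Where you go astray is the step that cuts $1/\epsilon^2$ down to $\log^2(1/\epsilon)/\epsilon$. You propose a hierarchical ``density-doubling'' recursion, passing to smaller and smaller partial subdomains while keeping the violation density up, and you correctly flag that controlling this over non-rectangular, shrinking subdomains is the crux --- and you do not actually carry it out. The paper avoids this entirely by a direct \emph{balancing} argument that you are missing. The point is that once you fix a density class $k$ (boxes with $\epsilon_B\in(2^{-k},2^{-k+1}]$), the two costs move in opposite directions: the probability $p_k$ that a $\mathcal{D}_P$-random box lands in this class satisfies $p_k\cdot 2^{-k+1}\ge$ (this class's contribution to $\mathbb{E}[\epsilon_B]$), and for the dominant class this is $\Omega(\epsilon_1/r)$ with $r=\Theta(\log(1/\epsilon_1))$, so $p_k=\Omega(2^k\epsilon_1/r)$; meanwhile, once you hit such a box, $\Theta(2^k)$ points inside $B$ and $B'$ suffice for the median argument. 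Thus for class $k$ you spend $r_k=\Theta(r/(2^k\epsilon_1))$ box-samples times $q_k=\Theta(2^k)$ points per box, and the product $r_k q_k=\Theta(r/\epsilon_1)$ is \emph{independent of $k$}. Summing over the $O(r)$ classes gives $O(r^2/\epsilon_1)=O(\log^2(1/\epsilon_1)/\epsilon_1)$, and then over the $\log^2 n$ values of $(\alpha,\beta)$ gives the stated bound. No recursion, no subregion localization, and nonadaptivity is immediate.

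In short: replace your hierarchical paragraph by ``for each $k\in[2r]$, sample $r_k\approx r/(2^k\epsilon_1)$ boxes from $\mathcal{D}_P$ and $q_k\approx 2^k$ points from each $B\cap P$ and $B'\cap P$'', and observe that if no $k$ has $p_k\ge 2^k\epsilon_1/(16r)$ then $\mathbb{E}[\epsilon_B]<\epsilon_1/2$, a contradiction. That is the whole proof of \cref{claim:improved}.
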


	Recall that $f$ has a matching $M$ of violations of size at least
	$\epsilon |P|/2$.
	Just as in~\cref{sec:mono}, let $n$ be a power of $2$, $\ell \colonequals
	\log_2 n + 1$, $\mathcal{S} \colonequals \set{0,1,\ldots,\log_2 n}$, $\alpha, \beta \in \mathcal{S}$, 
	${\epsilon}_1 \colonequals \epsilon / {\ell}^2$, so that there exists a pair $(\alpha^*,
	\beta^*)$ and a corresponding submatching
	$M_{{\alpha}^*, {\beta}^*}$ of size $\Abs{M_{{\alpha}^*, {\beta}^*}} \geq {\epsilon}_1
	\Abs{P} / 2$. Further, 
	we partition the domain ${[n]}^2$ (regardless of $P$) into
	boxes of appropriate sizes, as in~\cref{defn:adjacent_region},  and
	define the \emph{adjacent regions} of boxes exactly as we did there. This implies
	that all the  $1$- and $2$-legs of each $(1,2)$-appearance in $M_{{\alpha}^*, {\beta}^*}$ are in a box $B
	\in \mathcal{B}$ and the corresponding diagonally adjacent box $B'$ (see \cref{defn:adjacent_region}) respectively. As in~\cref{sec:mono}, we only describe a basic tester ${\operatorname{TEST}}^* (\alpha, \beta)$ for the scales $\alpha, \beta$.
	
	We note that we did not make any assumptions on $P$ so far, and the
	partition of the domain into boxes is oblivious of $P$. In particular,
	some of the boxes may not contain any points of $P$. For a box $B$,
	let $P(B) \colonequals P \cap B$. We define the following natural
	probability distribution ${\mathcal{D}}_P$ on the set $\mathcal{B}$ of
	boxes by ${\operatorname{Pr}}_{{\mathcal{D}}_P}(B) \colonequals
	\frac{\Abs{P(B)}}{\Abs{P}}$. 
	
	\paragraph{The improved algorithm ${\operatorname{TEST}}^* (\alpha, \beta)$.} \label{para:improved-mon-tester} 
	
	\begin{enumerate}
		\item Let $\ell \colonequals  \log_2 n + 1, \epsilon_1 \colonequals \frac{\epsilon}{{\ell}^2}, r \colonequals \log_2 {\left(\frac{4}{{\epsilon}_1}\right)}$. 
		
		\item For each $k \in [2r]$, perform the following steps:

		\begin{enumerate}
			\item \label{item-sample-boxes} Sample $r_k \colonequals \frac{20r}{2^k {\epsilon}_1}$ boxes $B_1,\ldots,B_{r_k}$ from $\mathcal{B}$ independently according to the distribution ${\mathcal{D}}_P$.
			
			\item For each pair $(B_i, B_i')$, sample $q_k \colonequals 4 \cdot 2^k$ points from $B_i \cap P$ and $q_k$ points from $B_i' \cap P$ independently and uniformly at random, and query $f$ at all of these points. If a violation to monotonicity is found, we \textbf{reject}.
		\end{enumerate}
	\end{enumerate}
	
	Just as in~\cref{sec:mono}, we fix a matching $M$ of violations and ${\alpha}^*, {\beta}^* \in \mathcal{S}$ for which $\Abs{M} \geq \frac{\epsilon \Abs{P}}{2}$ and $\Abs{M_{{\alpha}^*,{\beta}^*}} \geq \frac{\epsilon \Abs{P}}{2 {\ell}^2}$. We will now show that ${\operatorname{TEST}}^* ({\alpha}^*, {\beta}^*)$ finds a violation with high probability, and compute its query complexity.
	
	\vspace{0.6 cm}
	\noindent
	
	\begin{claim} \label{claim:improved}
		The algorithm ${\operatorname{TEST}}^* ({\alpha}^*, {\beta}^*)$ finds a violation with probability $\Theta(1)$. Its query complexity is $O\left(\frac{{\log}^2 \left(\frac{1}{{\epsilon}_1}\right)}{{\epsilon}_1}\right)$.
	\end{claim}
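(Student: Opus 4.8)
The plan is to upgrade the single-scale median argument of \cref{claim:median_argument} into a multi-scale, \enquote{density-doubling} argument, the scales being exactly the indices $k$ over which $\operatorname{TEST}^*$ runs its inner loop. For a box $B\in\mathcal{B}$ with $\Abs{P(B)}>0$ define its local density $\den(B)\colonequals\Abs{M(B,B')}/\Abs{P(B)}$, and for $k\geq1$ let $\mathcal{B}_k\colonequals\set{B \colon 2^{-k}<\den(B)\leq 2^{-k+1}}$. First I would discard the boxes of tiny density: those with $\den(B)\leq\epsilon_1/4$ together carry at most $(\epsilon_1/4)\sum_B\Abs{P(B)}=(\epsilon_1/4)\Abs{P}=\tfrac12\Abs{M_{\alpha^*,\beta^*}}$ of the matching (using $\Abs{M_{\alpha^*,\beta^*}}\geq\epsilon_1\Abs{P}/2$), so the boxes with $\den(B)>\epsilon_1/4$ still carry a submatching of total size $\geq\epsilon_1\Abs{P}/4$; each such box lies in $\mathcal{B}_k$ for some $k=O(\log(1/\epsilon_1))$. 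Averaging over these $O(\log(1/\epsilon_1))$ scales produces a scale $k^*$ with $\sum_{B\in\mathcal{B}_{k^*}}\Abs{M(B,B')}\geq\Omega\!\left(\epsilon_1\Abs{P}/\log(1/\epsilon_1)\right)$.

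The next step is to show that iteration $k=k^*$ of $\operatorname{TEST}^*(\alpha^*,\beta^*)$ succeeds with constant probability. Since $\Abs{P(B)}\geq 2^{k^*-1}\Abs{M(B,B')}$ for every $B\in\mathcal{B}_{k^*}$, summing gives $\Pr_{B\sim\mathcal{D}_P}[B\in\mathcal{B}_{k^*}]=\Abs{P}^{-1}\sum_{B\in\mathcal{B}_{k^*}}\Abs{P(B)}\geq\Omega\!\left(2^{k^*}\epsilon_1/\log(1/\epsilon_1)\right)=:\rho$. The inner loop at scale $k^*$ draws $r_{k^*}=\Theta(\log(1/\epsilon_1)/(2^{k^*}\epsilon_1))=\Theta(1/\rho)$ boxes i.i.d.\ from $\mathcal{D}_P$, so with probability $1-(1-\rho)^{r_{k^*}}=\Omega(1)$ it draws some box $B_i\in\mathcal{B}_{k^*}$ (the constant $20$ in the definition of $r_k$ is precisely what makes this probability a large absolute constant). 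Conditioned on that, $\Abs{M(B_i,B_i')}>2^{-k^*}\Abs{P(B_i)}$, and the loop queries $q_{k^*}=4\cdot2^{k^*}$ uniform points from each of $B_i\cap P$ and $B_i'\cap P$ — exactly the regime of \cref{claim:median_argument} with $\gamma=2^{-k^*}$. I would then replay that argument verbatim: sorting the $1$-legs of $M(B_i,B_i')$ by $f$-value, the \enquote{low} half of the $1$-legs and the \enquote{high} half of the matched $2$-legs each have density $\geq 2^{-k^*-1}$ inside $P(B_i)$ resp.\ $P(B_i')$; every cross pair is a genuine $(1,2)$-appearance because $B_i'$ is diagonally adjacent to $B_i$ (hence $\prec$-above it) and because of the sorting of $f$-values; and a sample of $4\cdot2^{k^*}$ points hits a fixed $2^{-k^*-1}$-dense set with probability $\geq1-e^{-2}$. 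Independence of the two samples then yields a violation with probability $(1-e^{-2})^2$, and chaining with the good-box event gives overall success probability $\Theta(1)$ for $\operatorname{TEST}^*(\alpha^*,\beta^*)$.

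For the query complexity, the point is that the two parameters are calibrated to cancel: iteration $k$ makes $2r_kq_k=2\cdot(20\log_2(4/\epsilon_1)/(2^k\epsilon_1))\cdot(4\cdot2^k)=O(\log(1/\epsilon_1)/\epsilon_1)$ queries, independent of $k$, and there are $O(\log(1/\epsilon_1))$ iterations, so the total is $O(\log^2(1/\epsilon_1)/\epsilon_1)$. (Scales $k$ so large that $r_k<1$ are vacuous for the correctness argument and can be treated as sampling a single box; in any case only scales with $2^{-k}\gtrsim\epsilon_1$ matter.)

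I expect the only real obstacle to be the cross-scale bookkeeping rather than any new idea: one must pick the density buckets so that the low-density tail is charged against precisely half of the available matching budget, and then verify that $r_k$ (boxes per scale) and $q_k$ (points per box) are set so that each scale costs the same $\Theta(\log(1/\epsilon_1)/\epsilon_1)$ queries while, at the pigeonhole scale, a sufficiently dense box is hit \emph{and} the median argument of \cref{claim:median_argument} fires inside it. The remaining fiddliness is in the boundary cases of the bucketing (a density exactly equal to a power of two, and boxes with $\Abs{P(B)}=0$, which carry zero $\mathcal{D}_P$-mass and may simply be ignored).
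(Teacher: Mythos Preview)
Your proposal is correct and follows essentially the same approach as the paper: bucket boxes by their local density $\epsilon_B=\Abs{M(B,B')}/\Abs{P(B)}$ into dyadic scales, locate a scale $k^*$ carrying enough $\mathcal{D}_P$-mass, observe that $r_{k^*}$ is calibrated so that a box in $\mathcal{B}_{k^*}$ is hit with constant probability, and then invoke \cref{claim:median_argument} with $\gamma=2^{-k^*}$ using the $q_{k^*}=4\cdot 2^{k^*}$ samples. The only cosmetic difference is that the paper finds the good scale by contradiction (assuming every $p_k<2^k\epsilon_1/(16r)$ and showing this forces $\mathbb{E}[\epsilon_B]<\epsilon_1/2$), whereas you first discard the $\epsilon_1/4$-sparse boxes and then pigeonhole directly over the remaining $O(\log(1/\epsilon_1))$ buckets; the two arguments are equivalent and yield the same constants up to absolute factors.
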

	
	\begin{proof}
		The total number of queries is bounded by $\underset{k \in [2r]}{\sum} r_k \cdot (2q_k) = (2r) \cdot \frac{20r}{{\epsilon}_1} \cdot 8$, so that the query complexity is $O\left(\frac{{\log}^2 \left(\frac{1}{{\epsilon}_1}\right)}{{\epsilon}_1}\right)$ as claimed.
		
		To estimate the success probability, for each $B \in \mathcal{B}$, we denote by ${\epsilon}_B \colonequals \frac{\Abs{M(B,B')}}{\Abs{P(B)}}$ the normalized number of violations from $M_{{\alpha}^*, {\beta}^*}$ with their $1$-leg in $B$ and $2$-leg in $B'$. Then
		\begin{gather}
			\underset{B \in \mathcal{B}}{\mathbb{E}} [{\epsilon}_B]  = \underset{B \in \mathcal{B}}{\sum} {\operatorname{Pr}}_{{\mathcal{D}}_P} (B) \cdot {\epsilon}_B = \underset{B \in \mathcal{B}}{\sum} \frac{\Abs{P(B)}}{\Abs{P}} \cdot \frac{\Abs{M(B,B')}}{\Abs{P(B)}} \geq \frac{{\epsilon}_1}{2}, \label{eqn:E(eps_B)}
		\end{gather}
		where the expectation is taken with respect to the distribution ${\mathcal{D}}_P$ on $\mathcal{B}$.
		
		Now, for each $k \in [2r]$, let $p_k$ be the probability that $B$ (chosen according to ${\mathcal{D}}_P$) has $2^{-k} < {\epsilon}_B \leq 2^{-k+1}$.
		
		Assume that for some fixed $k$, we have $p_k \geq \frac{2^k {\epsilon}_1}{16r}$. We will bound the success probability by the event that the algorithm will succeed in iteration $k$. Indeed, at iteration $k$, the algorithm will pick a box $B$ with ${\epsilon}_B > 2^{-k}$ with probability at least $p_k$. Since $r_k \geq \frac{1}{p_k}$ by our assumption on $p_k$, with probability $\Theta(1)$, at least one of the sampled boxes $B$ in the $k^{\text{th}}$ iteration will have ${\epsilon}_B > 2^{-k}$. Conditioned on this,~\cref{claim:median_argument} (with $\gamma = 2^{-k}$) then implies that for such a $B$, picking $q_k$ points each from $B$ and $B'$ uniformly at random and querying them will ensure that we find a violation with probability $\Theta(1)$. Altogether, this proves that the success probability is $\Theta(1)$ under the above assumption on $p_k$.
		
		We are now left with the case that, for every $k \in [2r]$, $p_k < \frac{2^k{\epsilon}_1}{16r}$. In this case, we have
		\[
		\underset{B \in \mathcal{B}}{\mathbb{E}} [{\epsilon}_B] \leq \underset{k \in [2r]}{\sum} p_k \cdot 2^{-k+1} + 1 \cdot 2^{-2 \log {\left(\frac{4}{{\epsilon}_1}\right)}}
		\]
		
		where the last term accounts for an upper bound on the expected value of ${\epsilon}_B$ over all the boxes for which ${\epsilon}_B < 2^{-2 \log {\left(\frac{4}{{\epsilon}_1}\right)}}$. 
		
		Simplifying the bound above gives us $\underset{B \in \mathcal{B}}{\mathbb{E}} [{\epsilon}_B] \leq 2 \cdot 2r \cdot \frac{{\epsilon}_1}{16r} + {\left(\frac{{\epsilon}_1}{4}\right)}^2 < \frac{{\epsilon}_1}{2}$, contradicting \cref{eqn:E(eps_B)}. This completes the proof.
	\end{proof}
	
	\cref{claim:improved} directly implies \cref{thm:improved-mon-partial}, and can be easily adapted and used in place of \cref{lem:one_iteration} to get \cref{thm:improved-mon-er}, which is an improved version of \cref{thm:basic-mon}.
	
	\begin{proof}[Proof of \cref{thm:improved-mon-er}] \label{proof:making-partial-ER}
		
		We can obtain a basic erasure-resilient monotonicity tester ${\operatorname{ER-TEST}}^* (\alpha, \beta)$ corresponding ot each bucket $(\alpha, \beta)$ by only making a couple of modifications to the basic tester ${\operatorname{TEST}}^* (\alpha, \beta)$ for partial functions given in the proof of \cref{thm:improved-mon-partial}. In particular, (i) in Step~\ref{item-sample-boxes}, the boxes $B_1,\dots,B_{r_k}$ should be sampled independently and \emph{uniformly at random} (as opposed to sampling according to the distribution $\mathcal{D}_P$ in the case of partial functions, since we do not have access to the list of nonerased points apriori), and (ii) the distance parameter $\epsilon_1$ must be set to $\frac{\epsilon \cdot (1 - \delta)}{\ell^2}$, where $\delta \in (0,1)$ is an upper bound on the fraction of erased points in the domain $[n]^2$, and $\ell \colonequals \log_2 n + 1$. Then, as was done in the case of partial functions, the basic tester ${\operatorname{ER-TEST}}^* (\alpha, \beta)$ is run independently for each of the $\ell^2$ buckets $(\alpha, \beta)$, where $\alpha, \beta \in \set{0,1,\dots,\log_2 n}$.
		
		In this case, observe that $\underset{B \in_R \mathcal{B}}{\mathbb{E}} [\epsilon_B] \geq \frac{\epsilon_1}{2}$, where $\epsilon_B \colonequals \frac{\Abs{M(B,B')}}{\Abs{B}}$ denotes the  normalized distance parameter, $\epsilon_1 \colonequals \frac{\epsilon \cdot (1 - \delta)}{\ell^2}$, and the expectation is taken with respect to the uniform distribution on the set of boxes $\mathcal{B}$. The rest of the proof is analogous to \cref{claim:improved}. Therefore, we have a $\delta$-ER $\epsilon$-tester for monotonicity of real-valued functions on $[n]^2$ with query complexity $\ell^2 \cdot O\left(\frac{{\log}^2 \left(\frac{1}{{\epsilon}_1}\right)}{{\epsilon}_1}\right)$, which is $\widetilde{O}\left(\frac{\log^2(1/\epsilon)}{\epsilon (1-\delta)}\right)$.
	\end{proof}
	
	For using the monotonicity testers described above in $3$-pattern testing, we require
	more. We need the testers to output not just an arbitrary violation to monotonicity, but rather a \enquote{typical} one as we remark below.
	
	\begin{remark}\label{rem:needed}
		Let $M$ be a matching of $(1,2)$-pairs on a domain $D$. \cref{thm:improved-mon-er} describes a tester for monotonicity that finds a violation $(p_1,p_2)$ with
		$p_1$ and $p_2$ being the $1$-leg and the $2$-leg of some pairs in $M$ (but not
		necessarily a pair by itself) with constant probability.
		The tester induces a probability distribution
		$P_M$ on the $1$- and $2$-legs of the pair it produces and in
		particular, a distribution on the $1$-leg $p_1$. It is easy to see
		that this probability is uniform over the $1$-legs of pairs in $M$.
	\end{remark}
	
	\begin{remark}
		Our monotonicity testers work even when the input function is over a rectangular grid $[n_x] \times [n_y]$, and the dependence of the query complexity on the input size is then $\polylog (\max\{n_x,n_y\})$.
	\end{remark}
	
	\subsection{Generalizing to any dimension} \label{sec:mon-general-d}
	
	The ideas described in \cref{sec:mono} and \cref{sec:simple-mon2} to construct partial function and ER monotonicity testers for functions $f \colon [n]^2 \to \R$ generalize in a natural way when the domain is $[n]^d$, for any $d \geq 1$. We simply extend the notion of bucketing described in \cref{sec:mono} to pairs of points in $[n]^d$. To this end, given a tuple $(p,q)$ with $p,q \in [n]^d$ and $p \prec q$, we place $(p,q)$ in the bucket $(\ell_1,\ldots,\ell_d)$ where $\ell_1,\ldots,\ell_d \in \set{0,1,\ldots,\log_2 n}$ if, for each $i \in [d]$, the coarsest partition of $[n]$ that separates $p[i]$ and $q[i]$ (i.e., the $i$-coordinates of $p$ and $q$ respectively) is constructed at level $\ell_i$ (i.e., when $[n]$ is partitioned into $2^{\ell_i}$ equal parts). If $p[i] = q[i]$, we set $\ell_i \colonequals 0$. Given this bucketing scheme, the \emph{boxes} into which we partition the domain $[n]^d$ will have side-length $n/2^{\ell_i}$ in the $i$-coordinate if $\ell_i \geq 1$ and $0$ otherwise, for each $i \in [d]$.
	
	Let $f \colon [n]^d \to \R$ be $\epsilon$-far from being $(1,2)$-free. Then, there exists a matching $M$ of $(1,2)$-appearances of size $\Abs{M} \geq \epsilon n^d / 2$, which we fix. With the bucketing scheme described above, there then exists a submatching $M_{\mathbf{\ell^*}}$ of size $\Abs{M_{\mathbf{\ell^*}}} \geq \epsilon n^d / 2 {(\log_2 n + 1)}^d$ which consists of all the $(1,2)$-appearances in $M$ that belong to the bucket $\mathbf{\ell^*} \colonequals (\ell_1^*,\ldots,\ell_d^*)$.
	
	We can then obtain a basic monotonicity tester  ${\operatorname{TEST}}^* (\ell_1,\ldots,\ell_d)$ (that corresponds to the bucket $(\ell_1,\ldots,\ell_d)$) for partial functions $f \colon P \to \R$, where $P \subseteq [n]^d$, that is analogous to the algorithm ${\operatorname{TEST}}^* (\alpha, \beta)$ for $d = 2$ which is given in \cref{sec:simple-mon2}. The only modification to be made is to the distance parameter $\epsilon_1$, which should now be set to $\frac{\epsilon }{\ell^d}$, where $\ell \colonequals \log_2 n + 1$. The final tester is then simply ${\operatorname{TEST}}^* (\ell_1,\ldots,\ell_d)$ run independently on each of the $\ell^d$ bucket types $(\ell_1,\ldots,\ell_d)$, where $\ell_1,\ldots,\ell_d \in \set{0,1,\ldots,\log_2 n}$.
	
	The final query complexity of this monotonicity tester for real-valued partial functions defined on some given subset $P$ of $[n]^d$ is then $\ell^d \cdot O\left(\frac{{\log}^2 \left(\frac{1}{{\epsilon}_1}\right)}{{\epsilon}_1}\right)$, which is $O\left(\frac{\log^{O(d)} n}{\epsilon} \cdot \log^2 \frac{1}{\epsilon}\right)$. The query complexity is independent of $\Abs{P}$, as desired. We have proved the following.
	
	\begin{theorem} \label{thm:partial-mon-final}
		There is a one-sided error $\epsilon$-tester for monotonicity of partial functions $f \colon P \to \R$, where $P$ is a (known) subset of $[n]^d$, with query complexity $O\left(\frac{\log^{O(d)} n}{\epsilon} \cdot \log^2 \frac{1}{\epsilon}\right)$, for all $\epsilon \in (0,1)$.
	\end{theorem}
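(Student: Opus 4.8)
The plan is to replay the two-dimensional construction of \cref{sec:simple-mon2} coordinate by coordinate; essentially nothing beyond bookkeeping is needed. First I would extend the dyadic bucketing scheme to $[n]^d$: given a violated pair $(p,q)$ with $p \prec q$, for each coordinate $i \in [d]$ let $\ell_i \in \set{0,1,\ldots,\log_2 n}$ be the smallest level at which the partition of $[n]$ into $2^{\ell_i}$ equal blocks separates $p[i]$ from $q[i]$ (with $\ell_i \colonequals 0$ if $p[i] = q[i]$), and assign $(p,q)$ to the bucket $\mathbf{\ell} = (\ell_1,\ldots,\ell_d)$. Since $p \prec q$, in every coordinate in which they differ, $q[i]$ lies in the block immediately to the right of the block containing $p[i]$ at level $\ell_i$; hence, partitioning $[n]^d$ into axis-parallel boxes of side length $n/2^{\ell_i}$ in coordinate $i$, every pair in bucket $\mathbf{\ell}$ whose $1$-leg lies in a box $B$ has its $2$-leg in the \emph{unique} diagonally adjacent box $B'$ obtained from $B$ by shifting one block forward in each coordinate. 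This is precisely the structural property used in \cref{sec:mono} to define adjacent boxes, and it is all that \cref{claim:median_argument} requires; that claim — sample $O(1/\gamma)$ points each from $B \cap P$ and $B' \cap P$ uniformly and look for a $(1,2)$-appearance — carries over verbatim.

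Next, since $f \colon P \to \R$ is $\epsilon$-far from $(1,2)$-freeness, \cref{obs:matching} gives a matching $M$ of violations with $\Abs{M} \geq \epsilon\Abs{P}/2$; as $M$ splits among at most $\ell^d$ buckets (with $\ell \colonequals \log_2 n + 1$), some bucket $\mathbf{\ell^*}$ carries a submatching $M_{\mathbf{\ell^*}}$ of size at least $\epsilon_1\Abs{P}/2$ for $\epsilon_1 \colonequals \epsilon/\ell^d$. I would then run, for every one of the $\ell^d$ bucket types, the $d$-dimensional analogue ${\operatorname{TEST}}^*(\mathbf{\ell})$ of the partial-function tester behind \cref{thm:improved-mon-partial}: a dyadic scan over $k \in [2r]$ with $r \colonequals \log_2(4/\epsilon_1)$, where at scale $k$ one samples $\Theta(r/(2^k\epsilon_1))$ boxes from $\mathcal{B}$ according to the $P$-density distribution $\mathcal{D}_P$ (i.e.\ $\operatorname{Pr}(B) \propto \Abs{P \cap B}$) and then $\Theta(2^k)$ points each from $B \cap P$ and $B' \cap P$ uniformly at random. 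Correctness of the run on $\mathbf{\ell^*}$ is exactly \cref{claim:improved} transcribed: writing $\epsilon_B = \Abs{M(B,B')}/\Abs{P \cap B}$, the $\mathcal{D}_P$-average of $\epsilon_B$ is at least $\epsilon_1/2$; if some scale $k$ has $\operatorname{Pr}_{\mathcal{D}_P}[\epsilon_B > 2^{-k}] \geq 2^k\epsilon_1/(16r)$ then the boxes sampled at that scale hit such a box with constant probability, and \cref{claim:median_argument} (with $\gamma = 2^{-k}$) then finds a violation; otherwise $\sum_k p_k 2^{-k+1}$ plus the $(\epsilon_1/4)^2$ tail would contradict the lower bound on $\underset{B}{\mathbb{E}}[\epsilon_B]$.

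For the query count, a single run of ${\operatorname{TEST}}^*(\mathbf{\ell})$ costs $\sum_k \Theta(r/(2^k\epsilon_1)) \cdot \Theta(2^k) = O(r^2/\epsilon_1) = O(\log^2(1/\epsilon_1)/\epsilon_1)$, and running all $\ell^d$ of them and substituting $\epsilon_1 = \epsilon/\ell^d$, $\ell = \log_2 n + 1$, gives $\ell^{2d}\cdot O((d\log\log n + \log(1/\epsilon))^2/\epsilon)$, which is $O\!\big(\tfrac{\log^{O(d)} n}{\epsilon}\cdot\log^2\tfrac1\epsilon\big)$, with no dependence on $\Abs{P}$. One-sidedness is immediate since the tester rejects only after seeing an actual $(1,2)$-appearance, and running the whole procedure $O(\log^2 n)$ times independently boosts the probability of detecting an $\epsilon$-far function from $\Theta(1)$ to $1 - n^{-\Omega(\log n)}$. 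There is no real obstacle here — the only point needing care, and the sole place the hypothesis that $P$ is \emph{known} enters, is that the box partition is defined obliviously of $P$, so that individual boxes may meet $P$ sparsely or not at all; this is exactly why boxes are drawn from $\mathcal{D}_P$ rather than uniformly and why one scans over the unknown density scale $k$, and it is the ingredient that fails in the erasure-resilient setting where $P$ is revealed only upon querying.
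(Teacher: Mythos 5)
Your proposal is correct and follows essentially the same route as the paper's own proof in \cref{sec:mon-general-d}: extend the per-coordinate dyadic bucketing to $[n]^d$, observe that within each bucket every $1$-leg's partner lies in a unique diagonally adjacent box so that \cref{claim:median_argument} and the dyadic scan of \cref{claim:improved} carry over verbatim with $\epsilon_1 = \epsilon/\ell^d$, and run the resulting $\operatorname{TEST}^*(\mathbf{\ell})$ on all $\ell^d$ bucket types. Both the structure and the query-complexity bookkeeping match the paper; no gaps.
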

	
	One can then obtain a basic ER-tester ${\operatorname{ER-TEST}}^* (\ell_1,\ldots,\ell_d)$ (that corresponds to the bucket $(\ell_1,\ldots,\ell_d)$) for functions $f \colon [n]^d \to \R$, from the tester ${\operatorname{TEST}}^* (\ell_1,\ldots,\ell_d)$ for partial functions, just like it was done in the proof of \cref{thm:improved-mon-er} towards the end of \cref{sec:simple-mon2}. Namely, the boxes should be sampled independently and uniformly at random, and the distance parameter $\epsilon_1$ should be set to $\frac{\epsilon \cdot (1 - \delta)}{\ell^d}$, where $\delta \in (0,1)$ is an upper bound on the fraction of erased points in the domain $[n]^d$, and $\ell \colonequals \log_2 n + 1$. The final query complexity of this $\delta$-ER $\epsilon$-tester for monotonicity of real-valued functions on $[n]^d$ is then $\ell^d \cdot O\left(\frac{{\log}^2 \left(\frac{1}{{\epsilon}_1}\right)}{{\epsilon}_1}\right)$, which is $O\left(\frac{\log^{O(d)} n}{\epsilon (1 - \delta)} \cdot \log^2 \frac{1}{\epsilon}\right)$. We have proved the following.
	
	\begin{theorem} \label{thm:er-mon-final}
		There is a $\delta$-erasure-resilient $\epsilon$-tester for monotonicity of functions $f \colon [n]^d \to \R$ with one-sided error, and query complexity $O\left(\frac{\log^{O(d)} n}{\epsilon (1 - \delta)} \cdot \log^2 \frac{1}{\epsilon}\right)$, for all $\epsilon, \delta \in (0,1)$.
	\end{theorem}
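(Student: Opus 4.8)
The plan is to derive the erasure-resilient tester from the partial-function monotonicity tester of \cref{thm:partial-mon-final} (which I take as given) by two purely local modifications, mirroring the two-dimensional reduction carried out at the end of \cref{sec:simple-mon2}. First I would recall the $d$-dimensional dyadic bucketing from that proof: a pair $(p,q)$ with $p\prec q$ is placed in bucket $(\ell_1,\dots,\ell_d)$, where $\ell_i\in\{0,1,\dots,\log_2 n\}$ is the coarsest level at which the partition of $[n]$ into $2^{\ell_i}$ equal parts separates the $i$-th coordinates of $p$ and $q$ (and $\ell_i=0$ if these agree); since $p\prec q$, the $i$-th coordinate of $q$ then lies in the part of $[n]$ immediately to the right of the one containing $p$'s. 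Each bucket type induces a partition $\mathcal B$ of $[n]^d$ into boxes of side $n/2^{\ell_i}$ in coordinate $i$ (degenerate when $\ell_i=0$), together with, for each $B\in\mathcal B$, a unique diagonally adjacent box $B'$ such that both legs of any matched pair of that type with $1$-leg in $B$ land in $B$ and $B'$ respectively. There are at most $\ell^d$ bucket types, $\ell\colonequals\log_2 n+1$, and the partial-function tester ${\operatorname{TEST}}^*(\ell_1,\dots,\ell_d)$ -- the dyadic sweep over $k=1,\dots,2r$ with $r\colonequals\log_2(4/\epsilon_1)$, sampling $r_k\colonequals 20r/(2^k\epsilon_1)$ boxes from $\mathcal D_P$ and $q_k\colonequals 4\cdot 2^k$ points of $P$ from each of $B$ and $B'$ -- is run once per bucket type.

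For the erasure-resilient tester ${\operatorname{ER\text{-}TEST}}^*(\ell_1,\dots,\ell_d)$ I would change exactly two things: sample the $r_k$ boxes \emph{uniformly at random} from $\mathcal B$ instead of from $\mathcal D_P$ (the latter requires knowing the erased set, which we do not), and set $\epsilon_1\colonequals\epsilon(1-\delta)/\ell^d$; a queried point that turns out to be erased is simply discarded. One-sided error is immediate, since rejection happens only after exhibiting a genuine $(1,2)$-appearance among queried (hence non-erased) points. For soundness, suppose every completion of $f$ is $\epsilon$-far from monotone on its at least $(1-\delta)n^d$ non-erased points; then \cref{obs:matching} applied to the non-erased part gives a matching $M$ of $(1,2)$-appearances among non-erased points with $|M|\ge\epsilon(1-\delta)n^d/2$, so some bucket type $(\ell_1^*,\dots,\ell_d^*)$ carries a submatching $M_{\ell^*}$ with $|M_{\ell^*}|\ge\epsilon_1 n^d/2$. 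For that type, since the sets $M(B,B')$ partition $M_{\ell^*}$ and $|\mathcal B|=n^d/|B|$, a \emph{uniformly} random $B$ satisfies $\mathbb{E}[\epsilon_B]\ge\epsilon_1/2$ where $\epsilon_B\colonequals|M(B,B')|/|B|$ -- and this is precisely the accounting that dictates the choice $\epsilon_1=\epsilon(1-\delta)/\ell^d$. The dyadic binning of $\epsilon_B$ from the proof of \cref{thm:partial-mon-final} then produces a scale $k\le 2r$ with $\Pr_B[\epsilon_B\in (2^{-k},2^{-k+1}]]\ge 2^k\epsilon_1/(16r)$, so in iteration $k$ the $r_k$ sampled boxes hit such a $B$ with constant probability, whereupon \cref{claim:median_argument} with $\gamma=2^{-k}$ finds a violation with constant probability: its median argument only needs to hit one matched $1$-leg below the median and one matched $2$-leg above it, and these are non-erased, so drawing the $q_k$ samples uniformly from the possibly partly erased box $B$ rather than from $B\cap P$ costs nothing.

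Running ${\operatorname{ER\text{-}TEST}}^*$ over all $\ell^d$ bucket types, each amplified by $O(\log\log n)$ independent repetitions, gives overall success probability $\Theta(1)$. One run costs $\sum_{k=1}^{2r}r_k\cdot 2q_k=O(r^2/\epsilon_1)=O(\log^2(1/\epsilon_1)/\epsilon_1)$ queries, so the total is $\ell^d\cdot O(\log^2(1/\epsilon_1)/\epsilon_1)=O\!\big(\tfrac{\log^{O(d)} n}{\epsilon(1-\delta)}\,\log^2\tfrac1\epsilon\big)$, absorbing the $\ell^d$ and $(\log\log n)$ factors into $\log^{O(d)} n$. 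The one genuinely new point compared with the partial-function analysis -- and the step I expect to require the most care -- is the joint effect of the two modifications: one must check that after replacing $\mathcal D_P$-sampling of boxes by uniform sampling, and $(B\cap P)$-sampling of points by uniform sampling inside $B$, both $\mathbb{E}_B[\epsilon_B]\ge\epsilon_1/2$ and the median argument still go through. They do, precisely because the lost factor of $(1-\delta)$, namely the non-erased density, has been pushed into $\epsilon_1$ at the outset, and because the points the median argument relies on are matched and hence never erased.
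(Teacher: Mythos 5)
Your proposal is correct and follows essentially the same route as the paper's proof: deriving the erasure-resilient tester from the partial-function tester of \cref{thm:partial-mon-final} by switching box sampling from $\mathcal{D}_P$ to uniform and rescaling the distance parameter to $\epsilon_1 = \epsilon(1-\delta)/\ell^d$, with the accompanying re-normalization $\epsilon_B = |M(B,B')|/|B|$ and the observation that the median argument still succeeds when points are drawn uniformly from the (possibly partly erased) box because the matched legs are non-erased. Your discussion of the joint effect of the two modifications is a slightly more explicit articulation of what the paper leaves implicit, but it is the same argument.
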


	\section{Missing proofs from \texorpdfstring{\cref{sec:123}}{Section 4}} \label{sec:123_missing_proofs}
	
	In this section, we provide the pseudocodes and omitted proofs for some of the cases of our $(1,2,3)$-freeness tester outlined in \cref{sec:123}.
	
	Let $f \colon [n]^2 \to \R$ be $\epsilon$-far from $(1,2,3)$-free. Then, $f$ contains a matching $M$ of $(1,2,3)$-appearances of size $\Abs{M} \geq \epsilon n^2 / 3$. This matching $M$ is partitioned into submatchings $M = M_0 \cup M_1 \cup M_2 \cup M_3$, as explained in \cref{sec:grid_of_boxes}. The subcase in which $M_0$ is large, i.e., $\Abs{M_0} \geq \epsilon n^2 \cdot \left(1 - \frac{1}{\log n}\right)$ has been described in \cref{sec:M0_large-main} and \cref{sec:recursive}. The subcase in which $M_3$ is large, i.e., $\Abs{M_3} \geq \epsilon n^2 / (3 \log n)$, has been described in \cref{step:3}.
	In what follows, we complete the description of the other cases.
	
	\subsection{\texorpdfstring{$M_1$}{M1} is large}
	
	Let $M_1 \subseteq M$ denote the set of $(1,2,3)$-appearances with all their legs in the same row (or column) of boxes in $G_m^{(2)}$, but not all in the same box.
	
	The matching of $(1,2,3)$-appearances such that the $1$- and $2$-legs
	are in the same box, but the $3$-leg is in a different box, can be further
	split into these appearances for which 
	the $y$-spacing between $2$- and $3$-legs is more than that between the $1$-
	and $2$-legs. We denote this submatching as  $M_{11}$, and denote by $M_{12}$
	the submatching
	for which the $y$-spacing between the $1$- and $2$-legs in the
	$(1,2,3)$-appearance is more than that between the $2$- and $3$-legs. This leads to two subcases.
	
	\subsubsection{\texorpdfstring{$M_{11}$}{M11} is large} \label{sec:M11_large}
	
	When $M_{11}$ is large, \cref{alg:M11-large-123} will find a $(1,2,3)$-appearance with probability at least $1 - n^{-\Omega(\log n)}$, by making $\widetilde{O}(m/{\epsilon}^3)$ queries.
	
	\begin{algorithm} [ht]
		\begin{algorithmic}
		\Statex	\item Repeat $\widetilde{\Theta}(1/\epsilon)$ times:
			\begin{enumerate}
				\item Sample a uniformly random box $B$. Let $R$ denote the row containing $B$.
				\item For each scale $k \in [\log (n/m)]$, 
				\begin{enumerate}
					\item Divide $R$ equally into $n/(2^k m)$ horizontal strips.
					\item Repeat $\widetilde{\Theta}(1/\epsilon)$ times:
					\begin{enumerate}
						\item Sample and query a uniformly random point $p_3$ in $B$.
						\item Let $H'$ be the strip containing $p_3$ and let $H$ be the horizontal strip immediately below $H'$. 
						\item Run a $(1,2)$-freeness tester with proximity parameter $\widetilde{\Theta}(\epsilon/m)$ in the restriction of $H$ to the region of the row $R$ to the left of box $B$.
					\end{enumerate}
				\end{enumerate}
				\item Return any $(1,2,3)$-appearance from among the points queried.
			\end{enumerate}
			\caption{$(1,2,3)$-freeness testing: $M_{11}$ is large}
			\label{alg:M11-large-123}
		\end{algorithmic}
	\end{algorithm}
	
	Suppose that the cardinality of $M_{11}$ is at least $\epsilon' n^2$,
	where $\epsilon'$ is $\epsilon/(c \log n)$ for some absolute constant
	$c$. This case is the easy case in which partitioning $M_{11}$
	into the exponentially growing buckets  according to the $y$-spacing
	between the $2$- and $3$-legs, much like in
	\cite{NewmanRRS19}, allows us to use the `median' argument. The
	details follow.

	By the assumption on the size of $M_{11}$, with probability at least $\epsilon'/2$, a uniformly random box has at least $(\epsilon'/2) \cdot (n^2/m^2)$ many $3$-legs of appearances in $M_{11}$. 
	Thus, with probability at least $1-n^{-\Omega(1)}$, one out of $\widetilde{\Theta}(1/\epsilon)$ sampled boxes contains this number of $3$-legs.
	Let $B$ denote such a box and 
	let $R$ be the row containing $B$. Let $R'$ be the part of the row $R$ to the left of $B$. 
	
	For each $k \in [\log n]$, we consider equipartitioning $R$ into $n/(2^k m)$ horizontal strips, each being a grid isomorphic to $[n] \times [2^k]$. 
	Let $k' \in [\log n]$ be the scale that contains the largest number of the $(1,2,3)$-appearances in $M_{11}$ whose $1, 2$ legs are on a strip and the corresponding $3$-leg is in the strip immediately above. That is, this scale contains at least $(\epsilon'/(2\log n)) \cdot (n^2/m^2)$ many $1$- and $2$-legs, and their corresponding $3$-legs, of appearances in $M_{11}$.
	
	The restriction to $B$ of a uniformly random horizontal strip from
	this scale contains, in expectation, at least $(\epsilon'/(2\log n))
	\cdot (n/m) \cdot 2^{k'}$ many $3$-leg appearances. Thus with (high)
	probability $p=\Omega(\epsilon'/\log n)$,
	a sampled strip when restricted to $B$ contains at least
	$(\epsilon'/(4\log n)) \cdot (n/m) \cdot 2^{k'}$ many $3$-leg appearances.
	Let $H$ denote such a horizontal strip and let $H'$ denote the
	horizontal strip immediately above $H$.  Sampling a random point from
	$H' \cap B$ will find a $3$-leg from $M_{11}$ with probability
	$\Omega(p)$. Further, with probability $p/2$ this point will have value higher
	than the median value of the $3$-legs in $H' \cap B$. Assuming the
	above happens, we only need to find a $(1,2)$-appearance in $H$, to
	the left of $B$ and below the sampled point in $B$ that is assumed to
	be a $3$-leg. Now, by the density argument above, this region is
	$p/(2m \log n)$-far from being $(1,2)$-free. Hence, the $(1,2)$-freeness
	tester applied on this region will find such corresponding
	$(1,2)$-appearance with high probability.  A successful
	$3$-leg as above will correspond to a $(1,2,3)$-appearance. 
	
	Working out the details, the success probability for one such sampled
	point in $B$, assuming that the right scale is known, is $\widetilde{\Omega}(\epsilon^3
	/m)$. Amplification can be done for all scales resulting in
	a high overall success probability and $\widetilde{O}(m/\epsilon^3)$ queries.

	\subsubsection{\texorpdfstring{$M_{12}$}{M12} is large} \label{sec:M12_large}
	
	When $M_{12}$ is large, \cref{alg:M12-large-123} will find a $(1,2,3)$-appearance with probability at least $1 - n^{-\Omega(\log n)}$, by making $\widetilde{O}(m/{\epsilon}^2)$ queries.
	
	\begin{algorithm}
		\begin{algorithmic}
		\Statex \item Repeat $\widetilde{\Theta}(1/\epsilon)$ times:
			\begin{enumerate}
				\item Sample a uniformly random box $B$. Let $R$ denote the row containing $B$ and let $R'$ be the restriction of the row to the boxes right of $B$.
				\item For each scale $k \in [\log (n/m)]$, 
				\begin{enumerate}
					\item Divide $R$ equally into $n/(2^k m)$ horizontal strips.
					\item Repeat $\widetilde{\Theta}(1)$ times:
					\begin{enumerate}
						\item Run the $(1,2)$-freeness tester on $B$ with proximity parameter $\epsilon'/(4\log n)$.
						\item If the tester returns a $(1,2)$-appearance, let $H$ be the horizontal strip containing the $2$-leg of the $(1,2)$-appearance.
						\item Let $H'$ be the strip immediately above $H$ (if it exists) and sample $\widetilde{\Theta}(m/\epsilon)$ points from $H' \cap R'$. 
					\end{enumerate}
				\end{enumerate}
				\item Return any $(1,2,3)$-appearance from among the points queried.
			\end{enumerate}
			\caption{$(1,2,3)$-freeness testing: $M_{12}$ is large}
			\label{alg:M12-large-123}
		\end{algorithmic}
	\end{algorithm}
	
	This is the hard case, and the proof of the correctness of \cref{alg:M12-large-123} can be found in \cref{sec:M1_large-main}.
	
	\subsection{\texorpdfstring{$M_2$}{M2} is large} \label{sec:M2_large}
	
	$M_2$ is the matching of $(1,2,3)$-appearances where the boxes containing the $1$- and $2$-legs do not share a row or column. Assume that $M_2$ has cardinality at least $\epsilon n^2/(3\log n)$. We partition $M_2 = M_{22} \cup M_{23}$ where in $M_{22}$ the legs
	span $2$ rows of boxes (the case of $2$ columns is similar). $M_{23}$ contains those
	$(1,2,3)$-appearances that span $3$ rows and $3$ columns.

	\paragraph{$M_{22}$ is large.} Assume first that $|M_{22}| \geq 
	\epsilon n^2/(6\log n)$.
	There are only two cases as shown in (a) and (b) of \cref{fig_M2}, and both cases are treated identically as follows.
	By our assumption, a uniformly random box $B$ will contain, in expectation, $\Omega(\epsilon n^2/m^2)$
	$2$-legs from  such $(1,2,3)$-appearances. 
	Therefore, with probability $\Omega(\epsilon)$, a uniformly random box
	$B$ will contain $\Omega(\epsilon n^2/(m^2\log n))$ many $2$-legs from  such
	$(1,2,3)$-appearances, which we call the `good event' and condition on it. 
	Let $R'$ be the restriction of the row containing $B$ to the boxes to the right of $B$ including $B$.
	Conditioned on the good event, $R'$ will have $\Omega(\epsilon n^2/(m^2\log n))$
	pairs that are the $2$- and $3$-legs of such $(1,2,3)$-appearances whose $2$-leg is in $B$.
	In other words, $R'$ is $\Omega(\epsilon/(m\log n))$-far from being $(2,3)$-free where the $2$-legs are all in $B$. 
	Hence, with probability at least $\Theta(1)$,  our $(1,2)$-freeness
	tester with proximity parameter $\epsilon' = \Theta(\epsilon/(m\log n))$ will
	find such a $(2,3)$-appearance. Moreover, by \cref{rem:needed-main},
	conditioned on finding a $(2,3)$-appearance from $M_{22}$, with probability
	$\widetilde{\Omega}(1)$, the $2$-leg found will have value at least as large as the values of the $3/4$ fraction of the $2$-legs in $B$ (i.e., the $2$-leg found will have value in the \enquote{top quartile} of the values of all the $2$-legs in $B$ w.h.p.).
	Assuming this happens, let the legs of the $(2,3)$-appearance that is
	found be denoted by $(p_2, p_3)$.
	
	Let $R''$ denote the region of the grid formed by
	taking the union of all boxes to the left of $B$ and below
	$B$.
	Now, there are
	$\Omega(\epsilon n^2/(m^2\log n))$ points $p_1$ in the region $R''$ such that $p_1 \prec p_2$ and $f(p_1) < f(p_2)$. Hence, sampling $\widetilde{\Theta}(m^2/\epsilon)$ points in the region $R''$ will find such a point
	$p_1$ with probability $1 - 1/n^{\Omega(1)}$. Such a point $p_1$ together with
	$(p_2,p_3)$ will form a $(1,2,3)$-appearance. 
	
	By a union bound over all bad events, we find a $(1,2,3)$-appearance with probability $\Omega(\epsilon)$. Thus, $\widetilde{\Theta}(1/\epsilon)$ iterations of the procedure outlined above guarantees the desired success probability. 
	
	The pseudocode for the procedure is given in~\cref{alg:M22-large-123}, and it has query complexity $\widetilde{\Theta}(m^2/\epsilon^2)$.
	
	\begin{figure}[ht]
		\centering
		\includegraphics[scale=0.6]{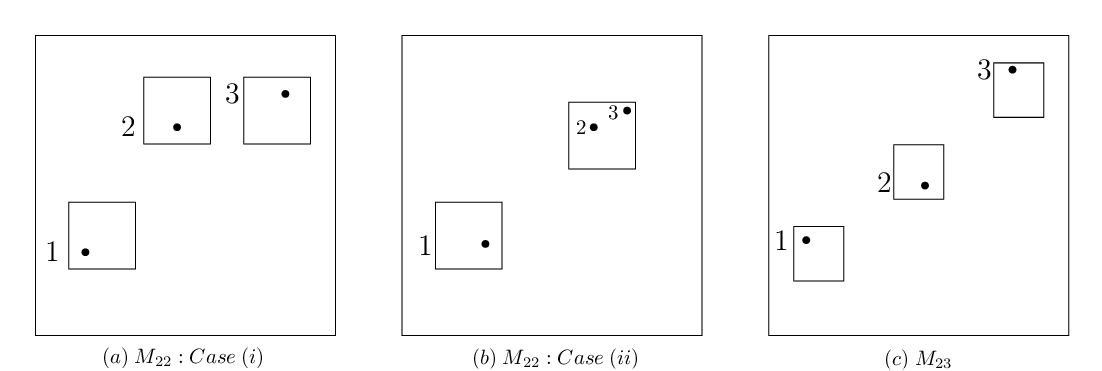}
		\caption{(a), (b) $M_{22}$: Boxes containing legs of $(1,2,3)$-appearances spanning $2$ rows of $G_m^{(2)}$
			(c) $M_{23}$: Boxes containing legs of $(1,2,3)$-appearances spanning $3$ rows and $3$ columns of $G_m^{(2)}$}
		\label{fig_M2}
	\end{figure}

	\begin{algorithm}[ht]
		\begin{algorithmic}
		\Statex	\item Repeat $t = \widetilde{\Theta}(1/\epsilon)$ times:
			\begin{enumerate}
				\item Sample a box $B$ uniformly at random from $G_{m}^{(2)}$. 
				\item Let $R'$ denote the restriction of the row containing
				$B$ to all the boxes to the right of $B$ and including
				$B$. Let $R''$ denote the region of the grid formed by
				taking the union of all boxes to the left of $B$ and below
				$B$. Namely, if $B$ is to contain the $2$-leg of an
				appearance from $M_{32}$, then the
				$3$-leg is in $R'$ and the $1$-leg is in $R''$.
				
				\item Run $t' = \widetilde{\Theta}(1/\epsilon)$ independent executions of a $(1,2)$-freeness tester on $R'$ with proximity parameter $\epsilon' = \Omega(\epsilon/m)$. 
				
				\item Sample $t'' = \widetilde{\Theta}(m^2/\epsilon)$ points uniformly and independently at random from $R''$.
				
				\item \textbf{Reject} if a $(1,2,3)$-appearance is found in the union of all points sampled in the steps above.
			\end{enumerate}
			\caption{$(1,2,3)$-freeness tester: $M_{22}$ is large}
			\label{alg:M22-large-123}
		\end{algorithmic}
	\end{algorithm}

	\paragraph{$M_{23}$ is large.}
	
	The case when $|M_{23}| = \Omega(\epsilon n^2/\log n)$ (see \cref{fig_M2}(c)) is similar, except that the violation
	$(p_2,p_3)$ should not be restricted to the row of the sampled box containing the $2$-leg, but rather to the part
	of the grid consisting of boxes to the right and above the sampled box. This decreases the density of the matching to
	$\widetilde{\Omega}(\epsilon / m^2)$ but the overall complexity is the same as above.
	
	\subsection{\texorpdfstring{$M_3$}{M3} is large} \label{sec:M3_large}
	
	$M_3$ is the matching of $(1,2,3)$-appearances where the boxes containing the $1$- and $2$-legs share a column (resp. row), and the boxes containing the $2$- and $3$-legs share a row (resp. column). That is, the boxes containing the legs of each such $(1,2,3)$-appearance form a \enquote{$\Gamma$} (resp. \enquote{inverted $\Gamma$}) shape. When $M_3$ is large, \cref{alg:M4-large-123} will find a $(1,2,3)$-appearance with probability at least $1 - n^{-\Omega(\log n)}$, by making $\widetilde{O}(m^2 \poly(1/\epsilon))$ queries. The detailed discussion of this case can be found in \cref{step:3}.
	
	\begin{algorithm}[ht]
		\begin{algorithmic}
		\Statex	\item For each $\ell \in \{0,1,\ldots ,\log n\}$:
			\begin{enumerate}
				\item Let $w = 2^{\ell -1}$ for $\ell \geq 1$ and $w = 1$ for $\ell = 0$.
				\item Repeat $t = \widetilde{\Theta}(1/\epsilon)$ times:
				\begin{enumerate}
					\item Sample a box $B$ uniformly at random from the grid $G_m^{(2)}$.
					\item Let $R'$ denote the restriction of $G_m^{(2)}$ to $B$ and the
					boxes to the right of $B$ and in the same row.
					Let $C$ denote the restriction of $G_m^{(2)}$  to $B$ and all the boxes below $B$ and in the same column.
					\item Run $\widetilde{\Theta}(m/\epsilon)$ independent iterations of the $(1,2)$-freeness tester with proximity parameter $\epsilon' = \Theta(\epsilon/(m\log n))$ on $R'$. \label{stp:12freenessGamma}
					
					\item Consider a partition of $C$ into subcolumns of width $w$ each.
					
					\item For each $(1,2)$-appearance $(p_1, p_2)$ returned in
					Step (b)iii such that $p_1 \in B$, perform the following:
					\begin{enumerate}
						\item Let $p_1$ belong to a subcolumn $C'$ of $C$ and let $C''$ be the subcolumn immediately to the left of $C'$.
						\item If $\ell = 0$, then sample $\widetilde{\Theta}(m/\epsilon)$ indices uniformly and independently at random from the region of the subcolumn $C'$ below $B$. 
						\item If $\ell \neq 0$, then sample $\widetilde{\Theta}(m/\epsilon)$ indices uniformly and independently at random from the region of the subcolumn $C''$ below $B$. 
					\end{enumerate}
				\end{enumerate}
				\item \textbf{Reject} if  a $(1,2,3)$-appearance is found.
			\end{enumerate}
			\caption{$(1,2,3)$-freeness tester: $M_3$ is large}
			\label{alg:M4-large-123}
		\end{algorithmic}
	\end{algorithm}

\section{Missing proofs from \texorpdfstring{\cref{sec:132_lbs-main}}{Section 6}} \label{sec:132_lbs}
	
	In this section, we provide the missing proofs in \cref{sec:132_lbs-main}, for the following lower bounds for nonadaptive and adaptive $\epsilon$-testers for $(1,3,2)$-freeness on the hypergrid $[n]^2$.
	
	\begin{theorem} \label{thm:132_na_lb}
		Any one-sided-error nonadaptive $\epsilon$-tester for $(1,3,2)$-freeness on $[n]^2$ has query complexity $\Omega(n)$, for every $\epsilon \leq 1/81$.
	\end{theorem}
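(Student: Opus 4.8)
The plan is to prove \cref{thm:132_na_lb} by chaining two ingredients that are already isolated in the main text: the black‑box reduction \cref{lem:reduction_int-search_to_132-main}, which turns an instance of the intersection‑search problem (\cref{problem:intersection_search-main}) into a function whose $(1,3,2)$‑appearances are in bijection with the solutions of the instance, and \cref{lem:intersection_search_na_lb-main}, which says that any nonadaptive randomized algorithm solving intersection‑search with intersection parameter $\gamma\ge 1/9$ and $q<n^2/2$ queries succeeds with probability at most $q^2/(4n^2)$. The only care needed in the composition is the one‑sided‑error property of the tester.

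\textbf{Step 1: the reduction.} Given an instance $(A,B,\gamma)$ — two $3n\times 3n$ arrays, each a set of $9n^2$ distinct integers sorted with respect to $\prec$, sharing at least $\gamma\cdot 9n^2$ values — I would tile $[9n]^2$ into a constant number of $\prec$‑comparable axis‑aligned blocks: a \enquote{low} block carrying a scaled copy of $A$ (values in a window $W$); a \enquote{high} block placed $\prec$‑above it carrying a scaled copy of $B$ (values in the \emph{same} window $W$, identified value‑for‑value on the shared set after a tiny perturbation that makes the $A$‑ and $B$‑copies of a shared value strictly $\prec$‑comparable); a block $\prec$‑between them filled by a monotone staircase of values far above $W$; and the remaining blocks filled by monotone staircases of values far below $W$. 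Any $(1,3,2)$‑appearance $x_1\prec x_2\prec x_3$ reads low/high/medium in value while increasing in $\prec$; the sortedness of each of $A$, $B$, and the scaffolding blocks rules out appearances that stay inside one block or cross blocks in an unintended pattern, so the only surviving configuration has $x_1$ in the $A$‑copy, $x_2$ in the high scaffolding block, and $x_3$ in the $B$‑copy, and the layout is arranged so that this forces the cells read at $x_1,x_3$ to be a matched pair $(\veci,\vecj)$ with $A[\veci]=B[\vecj]$. Each matched pair yields one pairwise‑disjoint appearance, so the deletion distance — equal, by \cref{cor:hamdel3}, to the Hamming distance — of $f$ to $(1,3,2)$‑freeness is at least $\gamma\cdot 9n^2$, i.e.\ $f$ is at least $\gamma/9$‑far over the domain $[9n]^2$. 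Taking $\gamma=1/9$ gives $\eps$‑farness for every $\eps\le 1/81$.

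\textbf{Step 2: the nonadaptive lower bound for intersection‑search.} For \cref{lem:intersection_search_na_lb-main} I would fix a hard distribution over instances with $\gamma=1/9$: decompose $[3n]^2$ into $\Theta(n)$ pairwise‑disjoint $\prec$‑chains, let the shared set be a fixed set $S$ of $n^2$ values, spread $S$ evenly across the chains of $A$ and of $B$, assign values increasingly along each chain (which automatically respects sortedness), and permute the positions of the shared values within each chain independently and uniformly. A nonadaptive algorithm commits to sets $Q_A,Q_B\subseteq[3n]^2$ with $|Q_A|+|Q_B|=q$ and succeeds only if $A[\veci]=B[\vecj]$ for some $\veci\in Q_A$, $\vecj\in Q_B$; under this distribution every fixed pair $(\veci,\vecj)$ is matched with probability at most $1/n^2$, so a union bound over the at most $q^2/4$ pairs gives success probability at most $q^2/(4n^2)$, as claimed. (This is where \cref{problem:intersection_search-main} \enquote{contains} many independent copies of the birthday problem; for the nonadaptive statement the union bound already suffices.)

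\textbf{Step 3: composition and the main obstacle.} Let $T$ be a one‑sided‑error nonadaptive $\eps$‑tester ($\eps\le 1/81$) for $(1,3,2)$‑freeness on $[N]^2$ making $q$ queries, and write $N=9n$. Apply the reduction to a random hard instance; the resulting $f$ is $1/81$‑far, so $T$ rejects with probability at least $2/3$. Since $T$ is one‑sided and the reduction sends intersection‑free instances to $(1,3,2)$‑free functions, $T$ can reject only when its queried subfunction already contains a $(1,3,2)$‑appearance, i.e.\ (Step 1) only when its queries reveal a matched pair. As each query to $f$ reads one cell of $A$ or $B$ (or lands in the deterministic scaffolding), $T$ yields a nonadaptive intersection‑search algorithm making $O(q)$ array queries and succeeding with probability at least $2/3$. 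If $q\ge n^2/2$ then $q=\Omega(n)$ trivially; otherwise \cref{lem:intersection_search_na_lb-main} forces $2/3\le O(q^2)/(4n^2)$, hence $q=\Omega(n)=\Omega(N)$, which is \cref{thm:132_na_lb}. The substantive obstacle is Step 1: unlike on the line, the partial order on $[n]^2$ admits far more potential $(1,3,2)$‑appearances — internal to a monotone block, or skipping between blocks in unintended ways — so the real work is choosing the value windows and scaffolding staircases so that exactly the $A$–$B$ matched appearances survive while the distance is as claimed; Steps 2 and 3 are then a union bound and a routine reduction.
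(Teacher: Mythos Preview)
Your Step~3 (the composition) is correct and is exactly how the paper derives the theorem from \cref{lem:reduction_int-search_to_132-main} and \cref{lem:intersection_search_na_lb-main}. The issues are in your sketches of those two lemmas.

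\textbf{Step 1 does not give the claimed bijection.} In your layout, $x_1$ sits in the $A$-block (value in $W$), $x_2$ in the scaffolding block (value far above $W$), and $x_3$ in the $B$-block (value in $W$). The $(1,3,2)$ constraint $f(x_1)<f(x_3)<f(x_2)$ then reduces to $A[\veci]<B[\vecj]$ --- the inequality $f(x_3)<f(x_2)$ is automatic. So you get a $(1,3,2)$-appearance for \emph{every} pair $(\veci,\vecj)$ with $A[\veci]<B[\vecj]$, not only for matched pairs $A[\veci]=B[\vecj]$. A one-sided tester could then reject on such an appearance without having found a solution to intersection-search, and the reduction collapses. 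No choice of ``layout'' fixes this while keeping $A$ and $B$ embedded as monotone copies in $\prec$-comparable blocks.

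The paper's reduction is genuinely different: it embeds the \emph{reversal} $A^r$ (hence a $\prec$-decreasing array) in a $[6n]\times[3n]$ block, doubling each entry as $A^r[\veci]\pm\frac14$ at the adjacent cells $(2i_1-1,i_2)$ and $(2i_1,i_2)$; it places $B$ (increasing) in the far corner $\{6n{+}1,\dots,9n\}^2$; and it fills the rest with $0$. The only way to realise $f(x_1)<f(x_3)<f(x_2)$ is then $(y-\tfrac14,\,y+\tfrac14,\,y)$ with the first two legs an adjacent $A^r$-pair and the third leg a $B$-cell of value exactly $y$, which forces $A[\veci]=B[\vecj]=y$. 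The reversal-plus-doubling is the idea your sketch is missing.

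\textbf{Step 2 also has a gap.} Assigning values increasingly along each chain of a chain decomposition of $[3n]^2$ does \emph{not} automatically make the array sorted with respect to the full partial order $\prec$: points on different chains can still be $\prec$-comparable, and nothing in your construction controls those comparisons. The paper's hard distribution avoids this by keeping both arrays globally sorted via an explicit formula and hiding the matches with a uniformly random shift $\veck\in[n]^2$ together with a random $0/1$ mask; a nonadaptive algorithm with query sets $Q_A,Q_B$ can succeed only if $\veck$ lands in the difference set $Q_B-Q_A$, which has size at most $q^2/4$.
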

	
	\begin{theorem} \label{thm:132_a_lb}
		Any one-sided-error adaptive $\epsilon$-tester for $(1,3,2)$-freeness on $[n]^2$ has query complexity $\Omega(\sqrt{n})$, for every $\epsilon \leq 4/81$.
	\end{theorem}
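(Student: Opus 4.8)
The plan is to obtain \cref{thm:132_a_lb} by composing two ingredients stated above: the blackbox reduction of \cref{lem:reduction_int-search_to_132-main}, which converts any one-sided error $(1,3,2)$-freeness tester into a solver for two-dimensional intersection-search (\cref{problem:intersection_search-main}), and an $\Omega(\sqrt n)$ lower bound for \emph{adaptive randomized} intersection-search solvers with intersection parameter $\gamma \geq 4/9$ (\cref{lem:intersection_search_a_lb-main}). Concretely, given a one-sided error adaptive $\epsilon$-tester $T$ with $\epsilon \leq 4/81$ and query complexity $q$, I would instantiate \cref{lem:reduction_int-search_to_132-main} with $\gamma = 4/9$: from any instance $(A,B,4/9)$ it produces a function $f$ that is at least $4/81 \geq \epsilon$-far from $(1,3,2)$-free, together with a value-preserving bijection between $(1,3,2)$-appearances of $f$ and valid solutions $A[\veci] = B[\vecj]$. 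Running $T$ on $f$ (each oracle call to $f$ being answered by the single query to $A$ or to $B$ dictated by the reduction, which is blackbox) rejects with probability $\geq 2/3$, and since $T$ has one-sided error it can reject only by exhibiting a $(1,3,2)$-appearance, hence a solution; so $T$ gives an adaptive intersection-search solver making $q$ queries and succeeding with probability $\geq 2/3$. By \cref{lem:intersection_search_a_lb-main} this forces $q^2/(4n+4) \geq 2/3$, i.e.\ $q = \Omega(\sqrt n)$, and since the construction lives on $[9n]^2$, padding to $[n]^2$ by a $(1,3,2)$-free value-inflated extension (and adjusting constants) transfers the bound to $[n]^2$. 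The substantive work is therefore the adaptive lower bound for intersection-search, sketched next.

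To prove that lower bound I would describe a distribution over instances $(A,B,4/9)$ that \enquote{embeds} $\Omega(n)$ mutually independent copies of birthday-search (\cref{problem:birthday-search-main}), each over a ground set of size $\Theta(n)$, with the property that every valid solution lies inside exactly one copy. The structural observation is that $\prec$ imposes no relation between two cells of $[3n]^2$ on the same anti-diagonal $D_k = \{(i,j) : i+j = k\}$, so a $\prec$-monotone distinct-valued array arises from \emph{any} choice of value-blocks $W_2 < W_3 < \dots < W_{6n}$ with $|W_k| = |D_k|$ by placing the integers of $W_k$ onto $D_k$ in an arbitrary order. For the $\Theta(n)$ \enquote{long} anti-diagonals, those of size $\Theta(n)$, which together cover all but at most $n^2$ of the $9n^2$ cells and hence more than $\frac{4}{9}\cdot 9n^2$ of them, I would give $A$ and $B$ the \emph{same} block $V_k$, permuted independently and uniformly at random in $A$ and in $B$; for each short anti-diagonal near a corner I would give $A$ and $B$ \emph{disjoint} blocks drawn from the value-range allocated to that diagonal, so $A$ and $B$ never agree there. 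Then $A$ and $B$ are $\prec$-monotone with distinct entries, $\Abs{\{A\}\cap\{B\}} \geq \frac{4}{9}\cdot 9n^2$, every coincidence $A[\veci]=B[\vecj]$ forces $\veci,\vecj \in D_k$ for a common long $k$, and for each such $k$ the pair $(A|_{D_k}, B|_{D_k})$ is exactly an independent birthday-search instance on $V_k$.

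Given this embedding, I would finish with a deferred-decisions argument. Run the adaptive solver and let $q^{(k)}$ be the number of its queries landing on the $A$- or $B$-part of the $k$-th copy. Conditioned on no coincidence having been found so far, the answers seen in copies $j \neq k$ are independent of copy $k$, so within copy $k$ the solver behaves as an adaptive birthday-search algorithm on $\Theta(n)$ elements; as in (the proof of) \cref{cl:birth-main}, after $t$ queries such an algorithm has found a match with probability at most $t^2/(4\Abs{V_k}) = O(t^2/n)$. Summing over the $\Omega(n)$ copies, the total success probability is at most $\sum_k (q^{(k)})^2/(4\Abs{V_k}) \leq \big(\sum_k q^{(k)}\big)^2/(4n) = q^2/(4n)$, up to the precise constant recorded in \cref{lem:intersection_search_a_lb-main}, which is $o(1)$ unless $q = \Omega(\sqrt n)$. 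Together with the first paragraph this yields \cref{thm:132_a_lb}.

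The step I expect to be the main obstacle is making the \enquote{embedding of independent birthday instances} fully rigorous. One must choose the value-blocks and the corner regions so that simultaneously: (i) both arrays are $\prec$-monotone, distinct-valued, and meet the $\gamma$-overlap promise; (ii) \emph{every} agreement between $A$ and $B$ is confined to a single long anti-diagonal, so no accidental agreement occurs on, or across, the short anti-diagonals; and (iii) the induced sub-instances are genuinely mutually independent as random objects, so that the deferred-decisions bound applies verbatim. Reconciling the full freedom within each anti-diagonal with the ordering constraints across anti-diagonals (and the fact that a tiny anti-diagonal must \emph{not} carry any of the overlap, else it is solvable in $O(1)$ queries) is the delicate point; once it is settled, the query-budget accounting and the reduction from $(1,3,2)$-testing are routine.
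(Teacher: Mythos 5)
Your proposal follows the paper's proof essentially verbatim: compose the blackbox reduction (\cref{lem:reduction_int-search_to_132-main}) with an $\Omega(\sqrt n)$ adaptive lower bound for intersection-search (\cref{lem:intersection_search_a_lb-main}), and establish the latter via Yao's principle over a distribution that places independent random permutations on the $\Theta(n)$ anti-diagonals of size $\Theta(n)$ so each carries an independent birthday-search instance, with the short anti-diagonals assigned disjoint (parity-separated) value ranges and the conclusion following from $\sum_k (q^{(k)})^2 \le q^2$. The ``delicate point'' you flag is handled in the paper exactly as you anticipate, by the explicit arithmetic construction $A[\veci] = 6n(i_1+i_2) + 2i_1$, $B[\veci] = 6n(i_1+i_2) + 2\pi_r(i_1)$ on long anti-diagonals and an odd offset elsewhere.
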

	
	Following \cite{NewmanRRS19}, we reduce 
	\cref{problem:intersection_search} (\enquote{intersection-search}) for two \emph{$2$-dimensional}
	$3n \times 3n$ `monotone arrays' (w.r.t. $\prec$), to the problem of testing $(1,3,2)$-freeness on the
	$2$-dimensional hypergrid $[9n]^2$.
	
	Throughout this section, We denote the coordinates of an index $\vecv$ of a $2$-dimensional array by $(v_1,v_2)$. Vector addition and subtraction are coordinate-wise.
	
	\begin{problem}[Intersection-Search for two $2$-dimensional arrays] \label{problem:intersection_search}
		The input to this problem consists of two $3n \times 3n$ arrays $A$ and $B$, and a constant $\gamma$ with $0 \leq \gamma \leq 1$ (which we call the intersection parameter). $A$ and $B$ each consist of $9n^2$ distinct integers sorted in ascending order, with respect to the grid (partial) order $\prec$. It is promised that $\Abs{A \cap B} \geq \gamma \cdot (9n^2)$. The goal is to find $\veci,\vecj \in [3n]^2$ such that $A[\veci] = B[\vecj]$.
	\end{problem}
	
	The intersection-search problem can be reduced to the problem of testing $(1,3,2)$-freeness of real-valued functions on $2$-dimensional hypergrids. We will now describe this reduction.
	
	\begin{lemma} \label{lem:reduction_int-search_to_132}
		Let $(A,B,\gamma)$ be an input instance of \cref{problem:intersection_search}. Then, there exists a function $f \colon [9n]^2 \to \mathbb{R}$ that is at least $\gamma/9$-far from $(1,3,2)$-freeness, and there is a one-to-one correspondence between valid solutions of \cref{problem:intersection_search}, and $(1,3,2)$-appearances in $f$.
	\end{lemma}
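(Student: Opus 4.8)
The plan is to construct, from a given instance $(A,B,\gamma)$, an explicit function $f\colon[9n]^2\to\R$, and then to establish the two assertions of the lemma: a bijection between the $(1,3,2)$-appearances of $f$ and the pairs $(\veci,\vecj)\in[3n]^2\times[3n]^2$ with $A[\veci]=B[\vecj]$ (equivalently, with the common values of $A$ and $B$, since entries are distinct within each array), and the distance bound. The construction follows the spirit of the one-dimensional reduction of \cite{NewmanRRS19}: I would realize $[9n]^2$ as a $3\times 3$ arrangement of translated copies of $[3n]^2$, place a $\prec$-monotone image of $A$ in one block and a $\prec$-monotone image of $B$ in an antipodal block (after a harmless rescaling of all values of $A,B$ by a common integer factor, so that distinct values sit at distance at least $3$ apart while equal values of $A$ and $B$ stay equal), insert into the remaining blocks a family of auxiliary ``spike'' values — one for each value occurring in $A$ — and give every other domain point a value below all of these. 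The spike values are slightly offset from the $A$-values (a sub-unit offset after the rescaling) and are placed so that the only $(1,3,2)$-appearances that can use a spike as their largest leg are the ``bridging'' triples: (an $A$-point of some value $v$) $\prec$ (the spike attached to value $v$) $\prec$ (a $B$-point whose value is squeezed by the rescaling to be exactly $v$). Crucially, each value of $f$ is then determined by a single entry of $A$ or $B$, so the reduction is blackbox with $O(1)$ query overhead, as needed for the downstream lower bounds.

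For the bijection I would carry out a case analysis over which of the $3\times 3$ blocks contains each of $p_1\prec p_2\prec p_3$ in a candidate appearance $f(p_1)<f(p_3)<f(p_2)$. Since a $(1,3,2)$-appearance requires a strict descent $f(p_2)>f(p_3)$ with $p_2\prec p_3$, and since the $A$- and $B$-images are $\prec$-monotone while the remaining points carry a common minimal value, essentially the only surviving configuration is the bridging one above; the rescaling turns the two value inequalities into $v\le B[\vecj]\le v$, i.e.\ $B[\vecj]=v=A[\veci]$. In the other direction, each common value $v$ yields exactly one bridging triple, and distinct common values yield triples on disjoint sets of points. Hence the $(1,3,2)$-appearances of $f$ correspond bijectively to the solutions of \cref{problem:intersection_search}.

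The distance bound is then immediate. By the promise there are at least $\gamma\cdot 9n^2$ common values, hence at least $\gamma\cdot 9n^2$ pairwise point-disjoint $(1,3,2)$-appearances, so every set of domain points meeting all appearances has size at least $\gamma\cdot 9n^2=(\gamma/9)\cdot(9n)^2$. Thus the deletion distance of $f$ from $(1,3,2)$-freeness is at least $(\gamma/9)\cdot|[9n]^2|$, which by \cref{cor:hamdel3} equals its Hamming distance, so $f$ is $\gamma/9$-far from $(1,3,2)$-free.

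The step I expect to be the main obstacle is the design of the spike placement (and, more generally, of the $3\times 3$ layout) so that the bridging triples are the \emph{only} $(1,3,2)$-appearances: unlike the one-dimensional situation, where the domain is totally ordered, here one must arrange on a single two-dimensional partial order that the spike attached to value $v$ can complete an appearance \emph{only} together with the $A$-point of value $v$, and not with $A$-points of other values. This is precisely what forces the reduction into $[9n]^2$ rather than $[3n]^2$, and ruling out every spurious block configuration — conceptually routine once the layout is fixed, but unavoidably the bulk of the argument — is the technical heart of the proof.
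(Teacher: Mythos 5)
Your overall plan — reduce to a bijection between bridging triples and common values, then get the distance bound from a matching — is the right shape and matches the paper, but the concrete construction you sketch does not produce the bridging triples it is supposed to. You take the $f$-value at an $A$-point with entry $v$ to be (rescaled) $v$, you require the rescaling to keep equal $A$- and $B$-values equal, and you put the spike at a sub-unit offset above $v$. A $(1,3,2)$-appearance $p_1\prec p_2\prec p_3$ needs the \emph{strict} chain $f(p_1)<f(p_3)<f(p_2)$. In your bridging triple ($A$-point $\prec$ spike $\prec$ $B$-point), that reads $\text{rescaled-}v < \text{rescaled-}w < \text{rescaled-}v+\delta$ with $\delta<1$. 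If $w=v$ the first inequality fails; if $w\neq v$ the rescaling puts $w$ at least $3$ away, so the second fails. So the construction as written has \emph{no} bridging triples at all — you even flag this yourself when you note the inequalities collapse to $v\le B[\vecj]\le v$, i.e.\ equality, which does not give a $\pi$-appearance.

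The paper's fix is to put the offsets on the $A$-side, not on a separate spike: it writes two interleaved copies of the \emph{reversal} $A^r$ into the lower-left $6n\times 3n$ block, on adjacent columns, at values $A^r[\cdot]-\tfrac14$ and $A^r[\cdot]+\tfrac14$, and places $B$ unchanged in the upper-right $3n\times 3n$ block (with $0$ filler elsewhere). A common value $y$ then gives the triple $(y-\tfrac14,\, y+\tfrac14,\, y)$ at indices that chain under $\prec$, which is a genuine $(1,3,2)$-appearance. Two further points you should not skip: (i) $A$ must be \emph{reversed} so the $A$-region is $\prec$-decreasing — otherwise the $A$-region alone, or the $A$-region together with $B$, can contain increasing pairs that combine into spurious $(1,3,2)$-appearances; and (ii) the uniqueness of the correspondence (no spurious appearances) is carried by the combination of the reversal, the integer-valuedness of $A,B$, and the $\pm\tfrac14$ offsets, and it is not something you can defer — you explicitly leave it as ``the technical heart,'' which is a fair self-assessment but means the proof is incomplete as submitted. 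Finally, invoking \cref{cor:hamdel3} for the distance bound is unnecessary: deletion distance is always at most Hamming distance, so a lower bound on the former already lower-bounds the latter, which is all the lemma needs.
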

	
	\begin{proof} \label{proof:reduction_int-search_to_132}
		Let $(A,B,\gamma)$ be an input instance of \cref{problem:intersection_search} -- i.e., $A$ and $B$ are strictly increasing $3n \times 3n$ arrays, and have at least $\gamma \cdot (9n^2)$ elements in common. Let $A^r$ be the reversal of $A$ (i.e., $A^r[\veci] \colonequals A[\vec{i'}]$, where $\vec{i'} = (3n - i_1 + 1, 3n - i_2 + 1)$, if $\veci = (i_1, i_2)$).
		
		We now define a function $f \colon [9n]^2 \to \mathbb{R}$ in terms of the values in $A$ and $B$ as follows:
		
	\[
	f(\veci)=
	\begin{cases}
		A^r[\veci]-\tfrac14, & \text{if }\veci=(2i_1-1,i_2),\ i_1,i_2\in[3n],\\
		A^r[\veci]+\tfrac14, & \text{if }\veci=(2i_1,i_2),\ i_1,i_2\in[3n],\\
		B[\veci],            & \text{if }\veci=(6n+i_1,\,6n+i_2),\ i_1,i_2\in[3n],\\
		0,                   & \text{otherwise.}
	\end{cases}
	\]
		
		In the definition above, $\veci = (i_1,i_2)$. Note that for every valid solution $(\veci,\vecj)$ of \cref{problem:intersection_search} (i.e., for all indices $\veci,\vecj \in [3n]^2$, such that $A[\veci] = B[\vecj] = y$), the indices $(2i_1 - 1, i_2)$, $(2i_1, i_2)$, and $(6n + j_1, 6n + j_2)$, form a $(1,3,2)$-appearance in $f$, with $f$-values $y - 1/4$, $y + 1/4$, and $y$, respectively. It is also easy to check that these are the \emph{only} kinds of $(1,3,2)$-appearances in $f$, as $A^r$ is a strictly decreasing array consisting of positive integers, and $B$ is a strictly increasing array of integers. Since every solution to \cref{problem:intersection_search} uniquely defines one $(1,3,2)$-appearance in $f$ and vice versa, $f$ consists of at least $\gamma \cdot (9n^2)$ pairwise disjoint $(1,3,2)$-appearances, so that $f$ is $\epsilon$-far from $(1,3,2)$-freeness, where $\epsilon \geq (\gamma \cdot (9n^2))/(81n^2) = \gamma/9$.
	\end{proof}
	
	Now that we have the reduction \cref{lem:reduction_int-search_to_132}, we upper bound the success probability of any nonadaptive randomized algorithm for \cref{problem:intersection_search}, which would then directly give us \cref{thm:132_na_lb}.
	
	Note that we only consider algorithms which output a pair of indices $\veci, \vecj \in [3n]^2$ only if it actually queries $A[\veci]$ and $B[\vecj]$, and $A[\veci] = B[\vecj]$. It outputs \enquote{fail} if it does not actually hit such indices.  
	
	\begin{lemma} \label{lem:intersection_search_na_lb}
		Let $\mathcal{A}$ be any nonadaptive randomized algorithm for the intersection-search problem \cref{problem:intersection_search}, with intersection parameter $\gamma \geq 1/9$. If $\mathcal{A}$ makes $q$ queries and $q < n^2 / 2$, then the success probability of $\mathcal{A}$ is at most $q^2 / (4n^2)$.
	\end{lemma}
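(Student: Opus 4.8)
The plan is to invoke Yao's minimax principle: fix a distribution $\mathcal{D}$ over valid instances $(A,B,\gamma)$ of Problem~\ref{problem:intersection_search} with $\gamma\ge 1/9$, and prove that every \emph{deterministic} nonadaptive algorithm making $q<n^2/2$ queries finds a witness with probability at most $q^2/(4n^2)$ over $\mathcal{D}$; averaging over the internal coins of $\mathcal{A}$ then yields an instance on which $\mathcal{A}$ succeeds with probability at most $q^2/(4n^2)$. A deterministic nonadaptive algorithm is just a fixed pair of query sets $Q_A,Q_B\subseteq[3n]^2$ with $|Q_A|+|Q_B|=q$, and (since it may only report a pair it has actually queried on both sides) it succeeds on $(A,B)$ iff there exist $\veci\in Q_A$ and $\vecj\in Q_B$ with $A[\veci]=B[\vecj]$.

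For $\mathcal{D}$ I would place the (at least $n^2$) common values into $A$ and into $B$ via \emph{independent} random monotone layouts engineered so that: (i) each common value occupies a ``central'' cell whose value, under the layout, is essentially uniform over a set of $\Omega(n^2)$ admissible values; and (ii) the remaining ``junk'' values sit at near-corner cells whose values are deterministically pinned, with the junk value-sets of $A$ and $B$ disjoint. Morally this splits the intersection into $\Omega(n)$ independent ``birthday'' sub-instances on unstructured sets of size $\Omega(n)$, with a hidden correspondence between the $A$-side and the $B$-side, so that a nonadaptive algorithm cannot point its $B$-queries at the positions of the common values it has seen in $A$. The payoff is the \emph{key estimate}: for all $\veci,\vecj\in[3n]^2$,
\[
  \Pr_{(A,B)\sim\mathcal{D}}\!\bigl[A[\veci]=B[\vecj]\bigr]=\sum_{v}\Pr[A[\veci]=v]\,\Pr[B[\vecj]=v]\le \frac{1}{n^2},
\]
because junk cells contribute $0$ by (ii), while for two central cells each factor is spread over $\Omega(n^2)$ values with maximum atom $O(1/n^2)$ by (i), so the inner product is $O(1/n^2)$ (constants tuned to give exactly $1/n^2$). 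A union bound over the at most $|Q_A|\cdot|Q_B|$ queried pairs then closes the argument:
\[
  \Pr_{\mathcal{D}}[\mathcal{A}\text{ succeeds}]\le\sum_{\veci\in Q_A}\sum_{\vecj\in Q_B}\Pr_{\mathcal{D}}\!\bigl[A[\veci]=B[\vecj]\bigr]\le\frac{|Q_A|\,|Q_B|}{n^2}\le\frac{(|Q_A|+|Q_B|)^2}{4n^2}=\frac{q^2}{4n^2},
\]
by AM--GM; the hypothesis $q<n^2/2$ is needed only to guarantee that the layouts can host enough well-spread common values.

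The hard part will be constructing $\mathcal{D}$ so that the key estimate genuinely holds. Monotonicity forces the value at any fixed cell into an interval of admissible ranks, and the obvious choice---a uniformly random linear extension of $[3n]^2$, i.e.\ a random square standard Young tableau---\emph{concentrates} the entry of a central cell (its limit-shape fluctuations are $o(n)$), giving only the weaker bound $O(1/n)$ per pair and hence merely an $\Omega(\sqrt n)$ lower bound, which is exactly the adaptive regime of Lemma~\ref{lem:intersection_search_a_lb-main}. Pushing the per-pair bound down to $O(1/n^2)$ demands a genuinely ``shuffled'' family of monotone layouts that spreads each central cell's value across its full $\Omega(n^2)$ admissible range, while still producing a legal monotone array of $9n^2$ distinct integers with the two arrays' junk sets disjoint; checking monotonicity and the spread simultaneously, and pinning the constant to $1/n^2$, is the technical crux.
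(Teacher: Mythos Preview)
Your high-level shape---Yao's principle, then a union bound $\sum_{\veci\in Q_A}\sum_{\vecj\in Q_B}\Pr[A[\veci]=B[\vecj]]\le |Q_A||Q_B|/n^2\le q^2/(4n^2)$---is exactly right and matches the paper. The gap is the construction of $\mathcal{D}$, which you yourself flag as ``the technical crux'' and do not carry out. More specifically, you box yourself in by insisting on \emph{independent} layouts for $A$ and $B$ so that $\Pr[A[\veci]=B[\vecj]]$ factors as $\sum_v\Pr[A[\veci]=v]\Pr[B[\vecj]=v]$; this forces you to spread each central cell's value over $\Omega(n^2)$ atoms, and as you correctly observe, natural monotone ensembles (random linear extensions/Young tableaux) concentrate and only give $O(1/n)$ per pair.

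The paper avoids this obstacle entirely by making $A$ and $B$ \emph{correlated} rather than independent. Concretely: draw a random $0/1$ array $X$ on $[3n]^2$, a random set $S\subset[2n]^2$ of size $n^2$, and a random shift $\veck\in[n]^2$; set $A[\veci]=6n(i_1+i_2)+2i_1+X[\veci]$, and let $B$ be the $\veck$-shifted copy of $A$ with the $X$-bit kept on $S$ and flipped off $S$. One checks monotonicity easily, and now $A[\veci]=B[\vecj]$ \emph{forces} $\vecj=\veci+\veck$ (and $\veci\in S$). Hence the success event is contained in $\{\veck\in D\}$ with $D=\{\vecj-\veci:\veci\in Q_A,\vecj\in Q_B\}$, and $\Pr[\veck\in D]\le|D|/n^2\le |Q_A||Q_B|/n^2$. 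No per-cell value-spread is needed; the random \emph{shift}, not the random \emph{value}, supplies the $1/n^2$. Your factored-probability route is not wrong in spirit, but the missing idea is that you do not need independence---a single hidden shift parameter in $[n]^2$ already delivers the per-pair bound you want.
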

	
	\begin{proof} \label{proof:intersection_search_na_lb}
		We use Yao's minimax principle \cite{Yao77} to prove the lemma. We define a distribution $\mathcal{D}$ on the valid inputs to \cref{problem:intersection_search}, and then show that any \emph{nonadaptive deterministic} algorithm for this problem will succeed (i.e., output a pair $(\veci,\vecj)$ such that $A[\veci] = B[\vecj]$) with probability at most $q^2 / (4n^2)$ -- i.e., the proportion of inputs (under the distribution $\mathcal{D}$) for which the algorithm outputs a pair of indices is at most $q^2 / (4n^2)$.
		
		We now describe the distribution $\mathcal{D}$ from which we sample an input instance to \cref{problem:intersection_search}, with $\gamma = 1/9$ -- i.e., two monotone increasing (with respect to $\prec$) $3n \times 3n$ integer arrays $A$ and $B$, with exactly $n^2$ common elements. To this end, (i) pick a $3n \times 3n$ array $X$ with $0$-$1$ entries uniformly at random, (ii) independently pick a set $S \subset [2n]^2$ of size $n^2$, uniformly at random, and (iii) independently pick an index $\veck = (k_1,k_2) \in [n]^2$ uniformly at random.
		
		The idea behind these random choices is to construct the input arrays $A$ and $B$ such that $A[\veci] = B[\vecj]$ if and only if $\veci \in S$, and $\vecj = \veci \mathbf{+} \veck$, where `$\mathbf{+}$' refers to coordinate-wise addition. Since $S$ is picked uniformly at random, and $q < n^2 / 2$, it will not be possible for any nonadaptive algorithm to determine $S$ completely. Furthermore, we'll use the randomly chosen $0$-$1$ entries in the array $X$, to make sure that any nonadaptive deterministic algorithm outputs a pair $(\veci,\vecj)$ with $A[\veci] = B[\vecj]$, only if it actually hits the indices $\veci$ and $\vecj$. Choosing $\veck$ randomly ensures that such hits are unlikely, if $q = o(n)$.
		
		Given these random choices, the two input arrays $A$ and $B$ are defined as follows:
		\[
		A[\veci] \colonequals 6n(i_1+i_2)+2i_1+X[\veci],
		\quad \text{for } \veci=(i_1,i_2)\in[3n]^2.
		\]
		and
		\begin{align*}
			B[\veci] \colonequals
			\begin{cases} 
				6n((i_1 - k_1) + (i_2 - k_2)) + 2(i_1 - k_1) + X[\veci \mathbf{-} \veck], & \text{if } \veci \mathbf{-} \veck \in S, \\
				6n((i_1 - k_1) + (i_2 - k_2)) + 2(i_1 - k_1) + (1 - X[\veci \mathbf{-} \veck]), & \text{if } \veci - \veck \in [3n]^2 \setminus S, \text{ and}\\
				6n((i_1 - k_1) + (i_2 - k_2)) + 2(i_1 - k_1),  & \text{otherwise}
			\end{cases}
		\end{align*}
		
		It can be verified that the arrays $A$ and $B$ consist of $9n^2$ distinct integers each, and are sorted in ascending order with respect to the grid (partial) order $\prec$ on the indices. Further, $A$ only consists of positive integers. Note that $A[\veci] = B[\vecj]$ for some indices $\veci,\vecj \in [3n]^2$, if and only if $\veci \in S$, and $\vecj = \veci \mathbf{+} \veck$. Hence, $A$ and $B$ have exactly $n^2$ common values. Also, given $A$ and $B$, for every $\veci \in [2n]^2$, any nonadaptive algorithm can find such a common value only if it queries these indices, when the input is drawn from this distribution.
		
		Let $\mathcal{A}$ be a nonadaptive deterministic algorithm for \cref{problem:intersection_search}, with input $(A,B)$ drawn from the given distribution $\mathcal{D}$, and $\gamma = 1/9$. Let $\mathcal{A}$ query the array $A$ at the set of indices $Q_A \subset [3n]^2$, and query $B$ at the set of indices $Q_B \subset [3n]^2$. Let $q = \Abs{Q_A} + \Abs{Q_B}$. Let $D = \set{\vecj \mathbf{-} \veci \: \colon \: \veci \in Q_A, \vecj \in Q_B}$. Then, if $\mathcal{A}$ does indeed output a pair of indices $(\veci,\vecj)$ (i.e., it succeeds), then it must be the case that $\veck \in D$. Hence, the success probability of $\mathcal{A}$ is at most $\operatorname{Pr}[\veck \in D]$, which is at most $q^2 / (4n^2)$, as $\Abs{Q_A} \cdot \Abs{Q_B} \leq q^2 / 4$, and $\veck$ has been picked uniformly at random from $[n]^2$. This completes the proof.
	\end{proof}
	
	Hence, any nonadaptive randomized algorithm must query $\Omega(n)$ indices from $A$ and $B$ so as to solve \cref{problem:intersection_search} with probability $\Theta(1)$, provided $\gamma = \Theta(1)$. The $\Omega(n)$-query lower bound \cref{thm:132_na_lb} directly follows from \cref{lem:reduction_int-search_to_132} and \cref{lem:intersection_search_na_lb}. $\hfill\qed$
	
	
	The adaptive lower bound \cref{thm:132_a_lb} is directly obtained by first proving an $\Omega(\sqrt{n})$-query lower bound for \emph{adaptive} randomized algorithms for \cref{problem:intersection_search}, that succeed with probability $\Theta(1)$.
	
	\begin{lemma} \label{lem:intersection_search_a_lb}
		Let $\mathcal{A}$ be any adaptive randomized algorithm for the intersection-search problem \cref{problem:intersection_search}, with intersection parameter $\gamma \geq 4/9$. If $\mathcal{A}$ makes $q$ queries, then the success probability of $\mathcal{A}$ is at most $q^2 / (4n + 4)$.
	\end{lemma}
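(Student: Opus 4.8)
The plan is to invoke Yao's minimax principle \cite{Yao77}: it suffices to construct a distribution $\mathcal{D}$ over valid instances $(A,B)$ of \cref{problem:intersection_search} with intersection parameter $\gamma = 4/9$ (any smaller $\gamma$ only weakens the hypothesis) such that every \emph{deterministic} adaptive algorithm making $q$ queries outputs a genuine solution with probability at most $q^2/(4n+4)$ over $\mathcal{D}$. By the one-sided convention, an algorithm may output a pair $(\veci,\vecj)$ only after querying both $A[\veci]$ and $B[\vecj]$ and observing $A[\veci]=B[\vecj]$, so ``success'' is the purely combinatorial event that the set of queried cells contains a matching pair. The conceptual heart of the argument --- and the reason adaptivity buys almost nothing here, in sharp contrast to the totally ordered $1$-dimensional case of \cite{NewmanRRS19} where the analogous problem is solvable adaptively with $O(\log n)$ queries --- is that the only freedom a $\prec$-sorted $3n\times 3n$ array enjoys is \emph{within antichains}; we arrange the instance to decompose into $\Theta(n)$ mutually independent copies of birthday-search (\cref{problem:birthday-search-main}) on sets of size $\Omega(n)$, laid along the anti-diagonals of the grid order, and adaptivity is useless both inside and across these copies.

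Concretely, partition $[3n]^2$ into the anti-diagonals $\Delta_c\colonequals\{(i_1,i_2):i_1+i_2=c\}$, each of which is an antichain for $\prec$; call $\Delta_c$ \emph{long} if $\Abs{\Delta_c}$ exceeds a suitable threshold of order $n$ (there are $\Theta(n)$ long anti-diagonals, and together they contain more than $4n^2$ cells). Sample $A$ by assigning to each $\Delta_c$, in a uniformly random internal order, the block of consecutive integers dictated by the prefix sums of the $\Abs{\Delta_{c'}}$; since any two distinct $\prec$-comparable indices lie on distinct anti-diagonals with the smaller index on the earlier one, $A$ is automatically a valid $\prec$-sorted array of distinct integers. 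Sample $B$ by re-randomizing, independently, the internal order of every long anti-diagonal, and on the short anti-diagonals by placing a fresh set of integers disjoint from those of $A$ (feasible while preserving $\prec$-sortedness, e.g.\ by rescaling $A$'s values and using the resulting gaps). Then $B$ is sorted for the same reason, $A$ and $B$ agree \emph{as sets} on every long block and nowhere else, so $\Abs{A\cap B}$ exceeds $4n^2\ge\gamma\cdot 9n^2$ and the instance is valid. The decisive structural fact is that distinct anti-diagonals carry disjoint value blocks, so any solution $(\veci,\vecj)$ must have $\veci,\vecj$ on the same long $\Delta_c$; restricted to that $\Delta_c$, the values of $A$ and of $B$ are two \emph{independent} uniformly random permutations of a fixed set of size $\Abs{\Delta_c}$, and finding a matching pair is precisely an instance of \cref{problem:birthday-search-main}, with these sub-instances independent across $c$.

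For the analysis, fix a deterministic adaptive algorithm and reveal entries in the order they are queried, maintaining the invariant that, conditioned on ``no matching pair discovered so far'', the permutation on each anti-diagonal is still uniform on its unrevealed values and the anti-diagonals remain mutually independent. Hence the $t$-th query, landing on some $\Delta_c$ on (say) the $A$-side, returns a value uniform over the unrevealed values of $\Delta_c$ and creates a collision with probability at most (number of $B$-cells of $\Delta_c$ queried so far) divided by (number of unrevealed values of $\Delta_c$). Writing $a_c,b_c$ for the numbers of $A$- and $B$-queries falling on $\Delta_c$, summing over the at most $q$ queries, and bounding the denominators from below using the size of the long anti-diagonals, the success probability is $O(\sum_c a_c b_c/n)$. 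Since $\sum_c(a_c+b_c)\le q$, convexity gives $\sum_c a_c b_c\le\tfrac14\big(\sum_c(a_c+b_c)\big)^2\le q^2/4$, so the success probability is $O(q^2/n)$, and a careful choice of the long-anti-diagonal threshold together with careful bookkeeping of the lower-order terms pins the constant at $q^2/(4n+4)$. Yao's principle transfers the bound to randomized adaptive algorithms, which together with the reduction \cref{lem:reduction_int-search_to_132} proves \cref{thm:132_a_lb-main} (and one checks consistency with the folklore \cref{cl:birth-main}).

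The step I expect to be the main obstacle is making the step-by-step revealing argument fully rigorous: one must verify that conditioning on ``no collision yet'' genuinely preserves both the uniform-on-unrevealed-values distribution on each anti-diagonal and the independence across anti-diagonals --- this is exactly the statement that the algorithm cannot extract any exploitable information, and is where the $2$-dimensional partial order does the work --- and one must handle that the per-anti-diagonal query counts $a_c,b_c$ are themselves chosen adaptively. A secondary, bookkeeping-level difficulty is extracting the precise constant $4n+4$ (rather than $cn$ for some other small $c$), which forces care about the exact sizes of the anti-diagonals used, about the denominators in the revealing argument, and about the regime $q=O(\sqrt n)$ outside of which the claimed bound is vacuous; and of course one must confirm that the constructed $A,B$ are legitimately $\prec$-sorted with distinct entries and that their intersection meets the $\gamma$-promise.
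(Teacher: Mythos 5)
Your proposal follows the paper's proof almost line by line: the same decomposition of $[3n]^2$ into anti-diagonal antichains, focusing on the $\Theta(n)$ anti-diagonals of size at least $n+1$ (yielding intersection $4n^2+n > (4/9)\cdot 9n^2$), the same embedding of independent birthday-search instances per anti-diagonal, and the same Yao/convexity bound $\sum_c a_c b_c \le q^2/4$ divided by the minimum anti-diagonal size $n+1$. The only cosmetic difference is that the paper fixes $A[\veci] = 6n(i_1+i_2)+2i_1$ deterministically and randomizes only $B$ via permutations $\pi_r$ on each long anti-diagonal (placing odd-offset values on short ones so that they never match), whereas you randomize both arrays symmetrically---this is equivalent---and the paper is, if anything, terser than you are about the step-by-step revealing argument justifying that adaptivity cannot help.
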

	
	\begin{proof} \label{proof:intersection_search_a_lb}
		As was done in \cref{lem:intersection_search_na_lb}, we use Yao's minimax principle to prove the lemma. We define a distribution $\mathcal{D}$ on the valid inputs to \cref{problem:intersection_search}, and then show that any \emph{adaptive deterministic} algorithm for this problem will succeed with probability at most $q^2 / (4n + 4)$ -- i.e., the proportion of inputs (under the distribution $\mathcal{D}$) for which the algorithm outputs a pair of indices is at most $q^2 / (4n + 4)$. We first describe the intuition behind the construction of such a distribution.
		
		Let $(A,B,\gamma \geq 4/9)$ be an input instance of \cref{problem:intersection_search}. Note that the set of indices $[3n]^2$ may be partitioned into $6n - 1$ antichains ${\set{S_r}}_{r = 2}^{6n}$, where $\mathrm{S}_r \colonequals \set{\vecx,\vecy \in [3n]^2 \: \colon \: x_1 + x_2 = y_1 + y_2 = r}$. Note that $S_{3n+1}$ is the main diagonal.
		
		The idea is to construct the distribution in such a
		way that the corresponding antichains $S_r$ in $A$ and $B$
		contain the same set of elements for each $r$, and
		for different $r$'s, the antichains are
		disjoint. This ensures that all solutions to the
		intersection-search problem belong to the same antichain. We will focus only on
		$r \in \set{n+2,\ldots,3n+1}$, for which $\Abs{S_r} =
		r - 1 = \Theta(n)$. Hence, the union of all these
		antichains constitute $4n^2 + n$ elements, implying
		that  $\gamma > 4/9$.
		
		To ensure that the matrices $A$ and $B$ are monotone with
		respect to $\prec$, we will pick the elements in $S_r$ (in both $A$ and $B$,
		as explained), much larger than the ones in $S_{r-1}$. Further, we will make
		sure that the elements outside our antichains of interest cannot be
		matched. Now, since each $S_r$ is an antichain, any order on its
		elements is consistent with the grid order $\prec$. Therefore, to
		find a match, one really needs to solve birthday-search (\cref{problem:birthday-search-main})
		on any one of the $S_r$ ($r \in \set{n+2,\ldots,3n+1}$).
		
		The input distribution $\mathcal{D}$ may be formally defined as follows -- for each $r \in \set{n+2,\ldots,3n+1}$, independently sample a permutation $\pi_r \in {\mathcal{S}_{r-1}}$ uniformly at random (i.e., a random permutation on $r - 1$ letters). Then, the input arrays $A$ and $B$ are defined as follows:
		
		\[
		A[\veci] \colonequals 6n(i_1+i_2)+2i_1, \quad \text{for } \veci=(i_1,i_2)\in[3n]^2.
		\]
		
		and
		
		\begin{align*}
			B[\veci] \colonequals
			\begin{cases} 
				6n(i_1 + i_2) + 2\pi_r(i_1), & \text{if } r = i_1 + i_2 \in \set{n+2,\ldots,3n+1},\\
				6n(i_1 + i_2) + 2i_1 + 1,  & \text{otherwise}
			\end{cases}
		\end{align*}
		
		It can be verified that $A$ and $B$ consist of $9n^2$ distinct elements each, are sorted in ascending order. $A$ and $B$ contain at least $4n^2$ common elements, and for each such element $y = A[\veci] = B[\vecj]$, $\veci$ and $\vecj$ belong to the same antichain $S_{i_1 + i_2}$ ( $= S_{j_1+j_2}$), where $n+2 \leq i_1 + i_2 = j_1 + j_2 \leq 3n + 1$.
		
		Let $\mathcal{A}$ be any adaptive deterministic algorithm for \cref{problem:intersection_search}. Consider an antichain $S_r$, where $r \in \set{n+2,\ldots,6n}$. Let $\mathcal{A}$ query $A$ at the indices $Q_A^{(r)} \subset S_r$, and query $B$ at the indices $Q_B^{(r)} \subset S_r$. Let $q_r = \Abs{Q_A^{(r)}} + \Abs{Q_B^{(r)}}$. Then, the probability (under the given input distribution $\mathcal{D}$) that $\mathcal{A}$ finds a valid pair of indices on $S_r$, with $r \in \set{n+2,\ldots,3n+1}$, is at most $\frac{\Abs{Q_A^{(r)}} \cdot \Abs{Q_B^{(r)}}}{\Abs{S_r}} \leq \frac{q_r^2}{4n+4}$, as $\Abs{Q_A^{(r)}} \cdot \Abs{Q_B^{(r)}} \leq q_r^2/4$, and $\Abs{S_r} \geq n+1$. The success probability is $0$ if $r \not\in \set{n+2,\ldots,3n+1}$. A union bound over these good events over all the antichains (i.e., $r \in \set{n+2,\ldots,6n}$) finishes the proof.
	\end{proof}
	
	Hence, any adaptive randomized algorithm must query $\Omega(\sqrt{n})$ indices from $A$ and $B$ so as to solve \cref{problem:intersection_search} with probability $\Theta(1)$, provided $\gamma = \Theta(1)$. The $\Omega(\sqrt{n})$-query lower bound \cref{thm:132_a_lb} directly follows from \cref{lem:reduction_int-search_to_132} and \cref{lem:intersection_search_a_lb}. $\hfill\qed$
	
	
	\section{\texorpdfstring{\textsf{Layering}}{Layering} and \texorpdfstring{\textsf{Gridding}}{Gridding}}\label{sec:gridding}
	
	In this section, we describe an algorithm that we call
	\textsf{Gridding}, which is a common subroutine in all of our algorithms. This
	machinery is from \cite{NV22}, tailored to the $2$-dimensional domain $[n]^2$.
	The output of \textsf{Gridding}, given oracle access to the function $f \colon [n]^2 \to \R$ and a parameter $m \leq n$, is an $m \times m \times m$ grid of boxes
	that partitions either the grid $[n]^2 \times R(f)$ or a region inside of it into boxes,
	with the property that the density of each box, which we define below, is \enquote{well controlled}.

	\begin{definition}[Density of a box]\label{def:density}
		Consider index and value subsets $S = X \times Y \subseteq [n]^2$ and $I \subseteq R(f)$, respectively, where $X \subseteq [n]$ and $Y \subseteq [n]$.
		The density of $\bx(S,I)$, denoted by $\den(S,I)$, is 
		the number of points in $\bx(S,I)$ normalized by its size $|S|$. 
	\end{definition}

	\begin{definition}[Nice partition of a box]\label{def:nice-partition}
		For index and value sets $S = X \times Y \subseteq [n]^2$ and $I \subseteq R(f)$ and a parameter $m \leq n$, we say that
		${\mathcal I} = \{I_1, I_2, \dots, I_{m'}\}$ forms a \textsf{nice} $m$-partition of $\bx(S,I)$ if
		\begin{itemize}
			\item $m' \leq 2m $,
			\item $I_1, \ldots ,I_{m'}$ are pairwise disjoint, and $\bigcup_{j
				\in [m']} I_j = I$. In particular, 
			the largest value in $I_j$ is less than the  smallest value in $I_{j'}$ for $j < j'$.
			\item for $j \in [m']$, either $\den(S,I_j) < 4/m$ $\mathrm{OR}$ $I_j$ contains exactly one value and $\den(S,I_j) \geq 1/2m$. 
			In the first case, we say that $\bx(S,I_j)$ is a
			\emph{multi-valued layer} of $\bx(S,I)$, and in the second case, we say that $\bx(S,I_j)$ is a \emph{single-valued layer} of $\bx(S,I)$.
		\end{itemize}
	\end{definition}

	\subsection{\texorpdfstring{\textsf{Layering}}{Layering}} \label{sec:layering}
	The main part of \textsf{Gridding} is an algorithm
	\textsf{Layering} which is described in~\cref{proc:layering}. A similar algorithm was used by Newman and Varma~\cite{NV20} for estimating the length of the longest increasing subsequence in an array.
	$\mathsf{Layering}(S,I,m)$ is given $S = X \times Y \subseteq [n]^2$, $I \subseteq R(f)$, and $m \leq n$ as inputs, and it outputs, with probability at least
	$1 - 1/n^{\Omega(\log n)}$, a set  $\mathcal I$ of intervals that is  a nice
	$m$-partition of  $\bx(S,I)$.  It works by  sampling
	$\widetilde{O}(m^2)$ points from $\bx(S,I)$ and outputs the set $\mathcal{I}$ based on these samples.
	Note that the sets $X$, $Y$, and $I$ are either contiguous index/value intervals.
	Additionally, we always apply the algorithm \textsf{Layering} to boxes of
	density $\Omega(1/\log n)$.

	\begin{algorithm}
		\caption{$\mathsf{Layering}(S,I,m)$}
		\label{proc:layering}
		\begin{algorithmic}[1]
			\State Sample a set of $m \log^4 n$  
			indices from $S$ uniformly and independently at random.
			\State Let $U$ denote the set of points in the sample that belong to $\bx(S,I)$. If $u \colonequals |U| < m \log^2 n$, then \textbf{FAIL}. \label{stp:layering-fail}
			
			\State We sort the multiset of values $V = \{f(p): ~ p \in U \}$ to form a strictly increasing sequence $\mathsf{seq}
			= (v'_1 <
			\ldots < v'_q)$, where, for each $i \in [q]$, we associate a weight $w_i$
			that equals the multiplicity of $v_i'$ in the multiset $V$ of
			values.
			\Comment{{\sf Note that
					$\sum_{i \in [q]} w_i = u$.}}
			
			\State We now partition the sequence $W = (w_1, \ldots ,w_q)$ into maximal
			disjoint contiguous subsequences $W_1, \ldots W_{m'}$ such that for each $j \in [m']$, either $\sum_{w \in
				W_j} w <  2u/m$, or $W_j$ contains only one member $w$ for
			which $w > u/m$. 
			\Comment{{\sf This can be done greedily as follows. 
					If $w_1 > u/m$, then $W_1$ will contain
					only $w_1$, and otherwise, $W_1$ will contain the maximal subsequence
					$(w_1, \ldots, w_i)$ whose
					sum is at most $2u/m$. We then delete the members of
					$W_1$ from $W$ and repeat the process. 
					For $i \in [m']$, let $w(W_i)$ denote the total weight in $W_i$.}}
			
			Correspondingly, we obtain a
			partition of the sequence $\seq$ of sampled values into at most $m'$ subsequences $\{\seq_j\}_{j \in [m']}$. 
			Some subsequences
			contain only one value of weight at least $u/m$ and are called \emph{single-valued}.  
			The remaining subsequences are called \emph{multi-valued}.
			
			For a subsequence $\seq_j$, let $\alpha_j = \min(\seq_j)$ and $\beta_j = \max(\seq_j)$.
			Let $\beta_0 = \inf(I)$.
			Note that $\alpha_j \leq \beta_j$  and $\beta_{j-1} < \alpha_j$ for all $j \in [m']$.
			
			\State For $j \in [m']$, we associate with the subsequence $\seq_j$, an interval $I_j \subseteq
			\mathbb{R}$, where $I_j = (\beta_{j-1}, \beta_j] \cap I$, and an approximate density
			$\widetilde{\den}(S,I_j) = w(W_j)/u$. The interval is multi-valued or single-valued depending on whether its corresponding sequence is multi-valued or single-valued, respectively.
			
			\State \textbf{Return} the set $\mathcal{I} = \bigcup_{\ell \in [m']} I_\ell$ of the $m'$ intervals.
		\end{algorithmic}
	\end{algorithm}

	\begin{claim}\label{clm:layering}
		If 
		$\den(S,I) > 1/\log n$, then with probability 
		$1 - 1/n^{\Omega(\log n)}$, \text{Layering}$(S,I,m)$ returns a collection of intervals
		$\mathcal I = \{I_j\}_{j=1}^{m'}$ such that $\mathcal I$  is a nice $m$-partition of  $\bx(S,I)$. Furthermore, it 
		makes a total of $m\log^4 n$ queries. 
	\end{claim}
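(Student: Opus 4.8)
The query bound is immediate: \textsf{Layering}$(S,I,m)$ issues oracle calls only in Step~1, where it evaluates $f$ on exactly $m\log^4 n$ sampled indices and nowhere else, so it makes $m\log^4 n$ queries regardless of the random choices. It remains to show that, when $\den(S,I) > 1/\log n$, the returned collection $\mathcal I$ is a \textsf{nice} $m$-partition of $\bx(S,I)$ in the sense of \cref{def:nice-partition}, with probability $1 - 1/n^{\Omega(\log n)}$. I would split this into (i) the algorithm does not abort, (ii) the structural conditions and the bound $m'\le 2m$, and (iii) the density dichotomy.

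For (i): each of the $m\log^4 n$ sampled indices lies in $\bx(S,I)$ independently with probability $\den(S,I) > 1/\log n$, so $\E[u] > m\log^3 n$, and a Chernoff bound gives $u \ge m\log^3 n/2 > m\log^2 n$ except with probability $\exp(-\Omega(m\log^3 n)) = 1/n^{\Omega(\log n)}$; condition on $u$ attaining at least this value. For (ii): the structural bullets of \cref{def:nice-partition} other than the density dichotomy are built into Step~5 --- the intervals $I_j=(\beta_{j-1},\beta_j]\cap I$ are pairwise disjoint, ordered, and cover $I$ by construction, since the $\beta_j$ are the sorted block boundaries and $\beta_0=\inf(I)$. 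The bound $m'\le 2m$ follows from a routine analysis of the greedy partition in Step~4: a block $W_j$ with $w(W_j)\le u/m$ must be multi-valued, so by maximality $w(W_j)$ plus the first weight of $W_{j+1}$ exceeds $2u/m$, forcing $w(W_{j+1}) > u/m$; hence of every two consecutive blocks at least one has weight exceeding $u/m$, so more than $(m'-1)/2$ of the blocks do, and since the weights sum to $u$ there are fewer than $m$ such, whence $m'\le 2m$.

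The crux is (iii), the density dichotomy: every output $I_j$ must satisfy $\den(S,I_j)<4/m$, or else be a single value with $\den(S,I_j)\ge 1/2m$. By construction the greedy already guarantees $w(W_j)<2u/m$ in the former case and $w(W_j)>u/m$ with $|I_j|=1$ in the latter, so it suffices to show that w.h.p.\ the conditional empirical density $w(W_j)/u$ estimates $\den(S,I_j)$ up to a constant factor; passing between $w(W_j)/u$ and the absolute density $\den(S,I_j)$ uses $\den(S,I)\le 1$ together with the lower bound on $u$ from (i). The one genuine subtlety --- and the step I expect to be the main obstacle --- is that the intervals $I_j$ are \emph{determined by} the random sample, so Chernoff cannot be applied to them directly. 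The fix is a union bound over the at most $\poly(n)$ candidate value-intervals $(a,b]$ with endpoints in $R(f)\cup\{\inf(I)\}$ (recall $|R(f)|\le n^2$): for each \emph{fixed} such $J$, the number of the $m\log^4 n$ samples landing in $\bx(S,J)$ is a sum of independent indicators with mean $m\log^4 n\cdot\den(S,J)$, and a Chernoff bound shows it deviates by more than the needed amount --- multiplicative when $\den(S,J)=\Omega(1/(m\,\polylog n))$, additive otherwise --- with probability $\exp(-\Omega(\log^2 n)) = 1/n^{\Omega(\log n)}$, which survives the $\poly(n)$ union bound. On the resulting good event, every $I_j$ the algorithm produces is one of these well-estimated intervals, and the dichotomy follows. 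Beyond this, the only work is the bookkeeping of constants (and deciding which intervals need the multiplicative versus the additive tail bound), which is routine.
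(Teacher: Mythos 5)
Your proposal is correct and follows essentially the same route as the paper's own proof: Chernoff for non-failure of Step~2, the greedy analysis showing no two light (weight $\leq u/m$) blocks can be adjacent to get $m' \leq 2m$, and a union bound over the polynomially many candidate value-intervals (the paper's family $\mathcal{B}$ of closed intervals with both endpoints in the image of $f$ on $S$, versus your half-open intervals with endpoints in $R(f)\cup\{\inf I\}$ — the same idea) combined with a Chernoff bound for each fixed candidate to deal with the sample-dependence of the output intervals. You correctly identify that sample-dependence as the crux, and the paper addresses it in exactly the way you propose, just more tersely.
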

	
	\begin{proof}
		
		Since $\den(S,I) > 1/\log n$, at least $m \log^2 n$ of the sampled points fall in
		$\bx(S,I)$, with probability at least $1 - \exp(-(m\log^3 n)/8)$, by a Chernoff bound. Hence, the algorithm does not fail in \cref{stp:layering-fail}. In the rest of the analysis, we condition on this event happening.
		
		The total number of intervals of weight at least $u/m$ is at most $m$ since the total weight is $u$. Other intervals have
		weight less than $u/m$ and for each such interval $I_j$, it must
		be the case that $I_{j-1}$ and $I_{j+1}$ are of weight at least $u/m$. It follows that $m' \leq 2m$.
		
		We now prove that the family $\mathcal{I}$ output by \textsf{Layering} is a nice $m$-partition of $\bx(S,I)$.
		It is clear from the description of~\cref{proc:layering} that the intervals output by the algorithm are disjoint.  
		Let $\mathcal{B} = \{[a,b]: a,b \in I \text{ and } \exists v,w \in S \text{ such that } f(v) = a, f(w) =b\}$ denote the set of all true intervals of points 
		from $\bx(S,I)$.  
		Consider an interval $[a,b] \in \mathcal{B}$ such that $\den(S,[a,b]) \ge 4/m$. The probability that 
		less than $2u/m$ points from the sample have values in the range $[a,b]$ is at most $1/n^{\Omega(\log n)}$ by a Chernoff bound.
		Conditioning on this event \emph{not} occurring implies that for every $I_j, j\in [m']$ output as a multi-valued interval by the algorithm, we have $\den(S,I_j) < 4/m$. 
		Finally, for a single-valued interval $[a,a]  \in \mathcal{B}$ such that $\den(S, [a,a]) < 1/2m$, with probability at least $1 - 1/n^{\Omega(\log n)}$, we have $\widetilde{\den}(S,[a,a]) \leq \frac{3}{2}\den(S, [a,a]) < 3/4m$, where $\widetilde{\den}(S,I')$ denotes the estimated density (as estimated in~\cref{proc:layering}) for a layer $\bx(S,I')$ when $I' \subseteq I$. 
		Conditioning on this event occurring implies that for every $I_j, j\in [m']$ output as a single-valued interval by the algorithm, we have $\den(S,I_j) \geq 1/2m$.
		
		The claim about the query complexity is clear from the description of the algorithm. 
	\end{proof}
	
	\subsection{\texorpdfstring{\textsf{Gridding}}{Gridding}} \label{sec:subsec_gridding}
	Next, we describe the algorithm \textsf{Gridding}.
	\begin{algorithm}[ht]
		\caption{$\mathsf{Gridding}(S,I,m,\beta)$}
		\label{proc:preprocess}
		\begin{algorithmic}[1]
			\Require $X \subseteq [n]$ is a union of disjoint $x$-stripes, $Y \subseteq [n]$ is a union of disjoint $y$-stripes, so that $S \colonequals X \times Y$ is a disjoint union of rectangles in the $xy$-plane, $I \subseteq R(f)$ is a disjoint union of intervals of values in
			$R(f)$,
			$m$ is a parameter defining the `coarse' grid size,
			$\beta < 1$ is a density threshold.
			
			\hspace{-1.5cm}
			{\bf Output:} A grid of boxes $G_{m'}^{(3)},~ m' \leq 2m $ in
			which there will be $\widetilde{O}({m}^2)$
			marked boxes.

			\State Call \textsf{Layering} (\cref{proc:layering}) on 
			inputs $S,I,m$. 
			This returns, with high probability, a set $\mathcal{I}$ of
			$m' \leq 2m$ value intervals $I = \bigcup_{j \in [m']} I_j$ that forms a nice $m$-partition  of $\bx(S,I)$.

			\State Partition $X$ into $m'$ contiguous
			intervals $X_1, \ldots X_{m'}$ each of size $|X|/m'$ and partition $Y$ into $m'$ contiguous
			intervals $Y_1, \ldots Y_{m'}$ each of size $|Y|/m'$. This defines the
			grid  of boxes $G_{m'}^{(3)}=\{\bx(X_i \times Y_j, I_k):~ (i,j,k)\in [m']^3\}$ inside
			the larger box $\bx(S,I)$. For each $(i,j) \in [m']^2$, denote by $S_{i,j}$ the rectangle $X_i \times Y_j$ in the $xy$-plane.

			\State Sample and query, independently at random, $\frac{\log^4 n}{\beta^2}$ points from each rectangle
			$S_{i,j}, (i,j) \in [m']^2.$   For each $(i,j,k) \in [m']^3$, if $\bx(S_{i,j}, I_k)$
			contains a sampled point, then tag that box as \emph{marked}.
			If $\bx(S_{i,j},I_k)$ contains at least $3\beta/4$ fraction of the sampled
			points in the rectangle $S_{i,j}$, tag that box as {\em dense}.
			
			\State \textbf{Return} the grid $G_{m'}^{(3)}$ along with the tags on the various boxes.
		\end{algorithmic}
	\end{algorithm}

	We note that initially, at the topmost recursion level of the algorithm
	for $\pi$-freeness, we call \textsf{Gridding} with $S=[n]^2,~ I = (-\infty,
	+\infty)$ and our preferred $m$ which is typically $m \colonequals n^{\delta}$,
	for some small $\delta <1$.

	We prove in~\cref{cl:q_layering} that running
	$\mathsf{Gridding}(S,I,m)$ results in  a partition of $\bx(S,I)$ into a grid of
	boxes $G_{m'}^{(3)}$ in which either the marked boxes 
	contain a $\pi$-appearance, or the union of
	points in the marked boxes 
	contain {\em all} but an $\eta$ fraction of the points in
	$[n]^2 \times R(f)$, for $\eta \ll \epsilon$.
	Additionally, with high probability, all boxes that are tagged {\em
		dense} have density at
	least $\frac{1}{8}$-th of the threshold $\beta$ for marking a box as dense.
	
	\begin{claim}\label{cl:q_layering}
		$\mathsf{Gridding}(S,I,m,\beta)$ returns a grid
		of boxes $G_{m'}^{(3)}$ that decomposes $\bx(S,I)$. It makes $\widetilde{O}(m^2/\beta^2)$ queries, and with high probability,
		\begin{itemize}
			\item The set of intervals corresponding to the layers of $G_{m'}^{(3)}$ form a nice $m$-partition of $I$.
			\item  For every $(i,j) \in [m']^2,$
			either $\bx(S_{i,j},I)$
			contains at least  $\frac{\log^2 n}{100\beta^2}$ marked boxes, or the number of
			points in the marked boxes in $\bx(S_{i,j},I)$ is at least  $
			(1-\frac{1}{\log^2 n}) \cdot |S_{i,j}|.$
			\item Every box that is tagged dense has density at
			least $\beta/8$, and every box of density at least $\beta$ is
			tagged as dense.    
		\end{itemize}
	\end{claim}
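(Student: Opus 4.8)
The plan is to dispatch the four assertions of \cref{cl:q_layering} separately; three are routine concentration-plus-union-bound arguments, and the marked-box dichotomy is the only item requiring a genuine idea.

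For the query bound and the first item (nice partition) I would simply invoke \cref{clm:layering}: \textsf{Gridding} makes one call to \textsf{Layering}$(S,I,m)$ ($m\log^4 n$ queries) and then $\log^4 n/\beta^2$ queries in each of the $\leq 4m^2$ rectangles $S_{i,j}$, for a total of $\widetilde O(m^2/\beta^2)$; and since \textsf{Gridding} is only ever run on a box of density $\Omega(1/\log n)$, \cref{clm:layering} gives, with probability $1-1/n^{\Omega(\log n)}$, that the intervals \textsf{Layering} returns (which are exactly the layers of $G_{m'}^{(3)}$) form a nice $m$-partition of $\bx(S,I)$. For the \emph{dense} tags (third item): a point drawn uniformly from $S_{i,j}$ lands in $\bx(S_{i,j},I_k)$ with probability exactly $\den(S_{i,j},I_k)$, and the sample size $N=\log^4 n/\beta^2$ is large enough that a two-sided Chernoff bound — multiplicative lower tail when the density is $\geq\beta$, and an upper-tail bound of the form $\Pr[\mathrm{Bin}(N,p)\geq a]\leq (eNp/a)^a$ when it is $<\beta/8$, exploiting that the threshold fraction $3\beta/4$ is at least six times $\beta/8$ — shows the box is tagged correctly except with probability $e^{-\Omega(\log^4 n)}$; a union bound over the $\leq (m')^3\leq 8n^3$ boxes keeps the overall failure probability at $1/n^{\Omega(\log n)}$.

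The real content is the second item, and the subtlety is that for a fixed column $(i,j)$ \emph{neither} alternative is unlikely to fail on its own: if the per-layer densities are spread over $\approx m$ layers then almost all the mass ends up uncovered (the second alternative fails), while if they are concentrated in a few layers then only a few boxes get marked (the first alternative fails); only the \emph{conjunction} is rare. My plan is to reveal the $N$ column samples one at a time, tracking the total mass $Z_t$ of the layers not yet hit (so $Z_0=1$, and $Z_t$ is non-increasing, hence $Z_{t-1}\geq Z_N$ always), and to use that the $t$-th sample hits a fresh box with conditional probability exactly $Z_{t-1}$, so the number of marked boxes is $R=\sum_{t=1}^N X_t$ with $X_t\sim\mathrm{Bern}(Z_{t-1})$. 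Stopping the first time $Z_t$ falls to $\eta\colonequals 1/\log^2 n$ and coupling each fresh-box indicator against an i.i.d.\ $\mathrm{Bern}(\eta)$ variable $X'_t$ (so $X_t\geq X'_t$ up to the stopping time, where $Z_{t-1}>\eta$), on the event $\{Z_N>\eta\}$ the stopping time equals $N$ and hence $R\geq\sum_{t=1}^N X'_t$; therefore $\Pr[R<L \text{ and } Z_N>\eta]\leq \Pr[\mathrm{Bin}(N,\eta)<L]$ where $L\colonequals \log^2 n/(100\beta^2)$. Since $\mathbb{E}\,\mathrm{Bin}(N,\eta)=\log^2 n/\beta^2 = 100L$, the multiplicative Chernoff lower tail gives $\Pr[\mathrm{Bin}(N,\eta)<L]\leq e^{-\Omega(\log^2 n/\beta^2)}=1/n^{\Omega(\log n)}$ (with $\beta<1$ only helping), and a union bound over the $\leq 4n^2$ columns finishes the item.

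I expect this coupling-and-multiplicative-tail step to be the only delicate point. The natural first attempt — applying Azuma to $\sum(X_t-Z_{t-1})$ with the crude increment bound of $1$ — yields only $\Pr[R<L \text{ and } Z_N>\eta]\leq e^{-\Omega(1/\beta^2)}$, which is merely a constant for $\beta=\Theta(1)$; it is precisely the multiplicative tail for the dominated binomial that makes the threshold $\log^2 n/(100\beta^2)$ line up with the slack $1/\log^2 n$. Finally, for recursive calls with $I\subsetneq R(f)$ one restricts attention to the samples whose value lies in $I$ and replaces $|S_{i,j}|$ by the number of points in $\bx(S_{i,j},I)$ throughout; nothing else in the argument changes. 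Combining the four events, each holding with probability $1-1/n^{\Omega(\log n)}$, gives the claim.
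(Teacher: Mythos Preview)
Your proof is correct and follows the same approach as the paper's. The paper's argument for the second item is terser and somewhat informal --- it simply asserts that, on the event the final uncovered mass exceeds $1/\log^2 n$, ``each query independently hits a box that is not marked by any of the previous queries with probability greater than $1/\log^2 n$'' and then applies Chernoff --- and your explicit stopping-time coupling to i.i.d.\ $\mathrm{Bern}(\eta)$ variables is precisely the rigorous way to justify that step.
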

	\begin{proof}
		The bound on the query complexity as well as the first item follows directly from~\cref{clm:layering}. The third item follows by a simple application of the Chernoff bound followed by a union bound over all rectangles.

		For
		the second item, fix a rectangle $S_{i,j}$ of $G_{m'}^{(3)}$. 
		Let $T \subseteq [m']$ be the set of all $k \in [m']$ such that $\bx(S_{i,j},I_k)$ gets marked during Step 3 in $\textsf{Gridding}$. If $\sum_{k \in T}\den(S_{i,j}, I_k) \geq
		1-1/(\log^2 n)$, we are done.
		Otherwise, each query independently hits a box 
		that is not marked by any of the previous queries with probability greater than $1/(\log^2 n)$.
		Thus, the expected number of boxes marked is at least $\frac{\log^2 n}{\beta^2}$. Chernoff bound implies that, with probability at least $1 - n^{-\Omega(\log n)},$ at least $\frac{\log^2 n}{100\beta^2}$
		boxes are marked. 
		The union bound over all the rectangles implies the second item.  
	\end{proof}

	\section{Deletion and Hamming distances} \label{sec:hamdel}
	
	In this section, we prove relationships between deletion and Hamming distances from pattern freeness for various kinds of order patterns that we have discussed in the paper. In \cref{sec:hamdelproof}, we prove (\cref{clm:deletion}) that these distances are equal for certain types of permutation patterns over high-dimensional hypergrids. We also describe an example (\cref{clm:hamneqdelexample}) involving a $4$-pattern for which deletion and Hamming distances are significantly different even for the case of $2$-dimensional grids.
	
	In the case of patterns which are \emph{not} permutations, these distances need not be equal even for order patterns of length $3$. In \cref{sec:last}, we give an example of a \enquote{$1,2,3$-fork} pattern for which the gap between the Hamming and deletion distances is large.
	
	\subsection{Permutation patterns} \label{sec:hamdelproof}
	
	First, we prove (\cref{clm:deletion}) that deletion and Hamming distances are equal for all permutation patterns $\pi$ (of any length) for which $\{\pi(1),\pi(k)\}
	\cap \{1,k\} \neq \emptyset$, and for all $f \colon [n]^d \to \R$. We note that this case subsumes all monotone patterns (of any length), and all patterns of length at most $3$ -- this is clear either by looking at the pattern upside down or by multiplying all $f$-values by $-1$.
	
	\begin{claim}{\emph{(Restatement of \cref{clm:deletion})}}
		Let $k \geq 1$ and $\pi \in {\mathcal S}_k$ with $\{\pi(1),\pi(k)\}
		\cap \{1,k\} \neq \emptyset$. Let $n,d \ge 1$. The Hamming distance of $f \colon [n]^d \to \R$ to $P_{\pi}^d$ is equal to the deletion distance of $f$ to $P_{\pi}^d$.
	\end{claim}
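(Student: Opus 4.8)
The plan is to prove the two inequalities $\mathrm{HamDist}\ge\mathrm{DelDist}$ and $\mathrm{HamDist}\le\mathrm{DelDist}$ separately. The first is immediate: if $f'$ is $\pi$-free and attains the Hamming distance, then for $S\colonequals\{x\in[n]^d:f(x)\ne f'(x)\}$ the restriction $f|_{[n]^d\setminus S}=f'|_{[n]^d\setminus S}$ is a restriction of a $\pi$-free function, hence $\pi$-free, so $S$ is a valid deletion set and $\mathrm{DelDist}\le|S|=\mathrm{HamDist}$. For the reverse inequality I would take an optimal deletion set $S$, so that $g\colonequals f|_{[n]^d\setminus S}$ is $\pi$-free, and prove that $g$ extends to a $\pi$-free total function $f'\colon[n]^d\to\R$; since $f'$ then agrees with $f$ off $S$, this gives $\mathrm{HamDist}\le|S|=\mathrm{DelDist}$.

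Before doing the extension I would first reduce to the case $\pi(1)=1$. Negating all function values turns $\pi$ into $i\mapsto k+1-\pi(i)$, and reflecting the grid ($x_i\mapsto n+1-x_i$ in every coordinate) reverses $\prec$ and turns $\pi$ into $i\mapsto\pi(k+1-i)$; both operations are bijections on functions that preserve (a relabelling of) the set of $\pi$-appearances, and hence preserve \emph{both} distances. Composing them appropriately sends each of the hypotheses $\pi(1)\in\{1,k\}$ and $\pi(k)\in\{1,k\}$ to $\pi(1)=1$, so we may assume $\pi(1)=1$; the case $k=1$ is degenerate, so assume $k\ge2$.

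The crux is a one-point extension lemma: \emph{if $h\colon E\to\R$ is $\pi$-free for some $E\subseteq[n]^d$ and $s\in[n]^d\setminus E$, then there is a value $v$ with $h\cup\{(s,v)\}$ still $\pi$-free.} Granting this, the full extension follows by induction on $|S|$, adding back one deleted point at a time (after adding $s$ we have a $\pi$-free function on $[n]^d\setminus(S\setminus\{s\})$ and invoke the inductive hypothesis). To prove the lemma I would use the structural consequence of $\pi(1)=1$: writing $\sigma\colonequals(\pi(2)-1,\dots,\pi(k)-1)\in\mathcal S_{k-1}$, a function $h$ has a $\pi$-appearance iff for some point $p$ the restriction of $h$ to $U_p\colonequals\{q:q\succ p,\ h(q)>h(p)\}$ has a $\sigma$-appearance — one simply prepends, or deletes, the $\pi(1)=1$ leg. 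In particular, for a $\pi$-free $h$ every $U_p$ is $\sigma$-free. Now set $L\colonequals\{p\in E:p\prec s\}$. If $L=\emptyset$, pick $v$ larger than every value of $h$: a $\pi$-appearance through $s$ would place $s$ in position $\ge2$ (impossible, as it needs a point of $E$ below $s$) or in position $1$ (impossible, since $\pi(1)=1$ forces that leg to carry the strictly smallest value, which $v$ does not, using $k\ge2$). If $L\ne\emptyset$, let $p^*$ minimise $h$ over $L$ and set $v\colonequals h(p^*)=\min_{p\in L}h(p)$. If $x^{(1)}\prec\cdots\prec x^{(k)}$ were a $\pi$-appearance through $s=x^{(j)}$: for $j\ge2$, the leg $x^{(1)}\in L$ would need value $<v$ (as $\pi(1)=1$), impossible; for $j=1$, the legs $x^{(2)},\dots,x^{(k)}$ lie in $\{q\in E:q\succ s,\ h(q)>v\}\subseteq U_{p^*}$ and form a $\sigma$-appearance there, contradicting $\sigma$-freeness of $U_{p^*}$.

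The only real obstacle — and the only place the hypothesis on $\pi$ enters — is choosing this single value $v$ so that it simultaneously blocks $s$ from a ``low'' position (ensured by $v\le\min_{p\in L}h(p)$) and from position $1$ (ensured by $\{q\succ s:h(q)>v\}\subseteq U_{p^*}$ being $\sigma$-free); these two requirements are compatible precisely because $\pi(1)=1$ makes position $1$ the unique value-minimising slot and supplies the reduction to $\sigma$. When $\{\pi(1),\pi(k)\}\cap\{1,k\}=\emptyset$, neither extreme choice of $v$ works and no such reduction is available, which is exactly the phenomenon behind \cref{clm:hamneqdelexample}. I would close by noting that the argument uses nothing about $[n]^d$ beyond $\prec$ being a partial order on a finite set, so it holds for all $n,d\ge1$.
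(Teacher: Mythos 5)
Your proof is correct, and it is built on the same core idea as the paper's: after taking an optimal deletion set $S$, refill each deleted point with the minimum of the (relevant) values below it, so that it can neither be the $1$-leg of a new $\pi$-appearance (the $1$-leg can be swapped for the witness of the minimum) nor any other leg (since $\pi(1)=1$ forces the $1$-leg's value strictly below, which the minimum blocks). The organizational route differs in a way worth noting. The paper defines the completed function $f^*$ \emph{all at once}, setting $f^*(i_j)=\min\{f(p):p\notin S,\ p\prec i_j\}$ using only the untouched points, and then proves by induction on $j$ that restoring $i_1,\dots,i_j$ preserves $\pi$-freeness; it must also assume $\bar 1\notin S$ (handled separately) so that the minimum is taken over a nonempty set. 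You instead isolate a standalone one-point extension lemma and restore deleted points \emph{one at a time}, choosing the new value as the minimum of the \emph{current} (already partially extended) function over the points below; your $L=\emptyset$ case absorbs the boundary issue cleanly without special-casing $\bar 1$. The two constructions can actually assign different values to the restored points, but both are valid. Your version is somewhat more modular (the extension lemma is reusable and your detour through $\sigma$ and $U_{p^*}$, while not strictly necessary — one can just prepend $p^*$ directly — makes the role of $\pi(1)=1$ transparent), and your reduction to $\pi(1)=1$ spells out the two symmetries explicitly rather than calling the other cases ``analogous.'' The paper's version is slightly more economical once the $\bar 1$ caveat is in place.
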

	
	\begin{proof}\label{prf:hamdelproof}
		Let $f \colon {[n]}^d \to \mathbb{R}$, and $\pi \in {\mathcal{S}}_k$ with $\pi(1) = 1$. The other cases are analogous, either by looking at the domain upside down, or by multiplying the $f$-values by $-1$. It is easy to see that the deletion distance of $f$ to $P_{\pi}^d$ is at most the Hamming distance -- if changing the $f$-values on some $S \subseteq {[n]}^d$ makes it $\pi$-free, then $\restr{f}{{[n]}^d \setminus S}$ is $\pi$-free. We will prove the reverse inequality.
		
		Let $\Delta$ be the deletion distance of $f$ to $P_{\pi}^d$, and let
		$S \colonequals \set{i_1,\ldots,i_{\Delta}}$ be a set of indices in
		${[n]}^d$ such that $\restr{f}{{[n]}^d \setminus S}$ is $\pi$-free.
		
		Define a function $f^* \colon {[n]}^d \to \mathbb{R}$ such that $\restr{f^*}{{[n]}^d \setminus S} \colonequals \restr{f}{{[n]}^d \setminus S}$ and for all $j \in [\Delta]$,
		\[
		f^*(i_j) \colonequals \min \set{f(p) \colon p \not\in S \text{ and } p \prec i_j}
		\]
		
		We will now show that $f^*$ is
		$\pi$-free. This will imply that the Hamming distance of $f$ to
		$P_{\pi}^d$ is at most $\Delta$, which will complete the proof.  Note that we may assume that $\bar{1} = (1,\ldots,1) \not\in S$, for if $\bar{1} \in S$, setting $f^*(\bar{1}) \colonequals \max \set{f(p) \colon p \not\in S}$ would ensure that $\bar{1}$ cannot be a leg of any $\pi$-appearance in $f^*$.
		
		To show $f^*$ is $\pi$-free, we let $j \in \set{0,\ldots,\Delta}$, and prove that $\restr{f^*}{{[n]}^d \setminus \set{i_{j+1},\ldots,i_{\Delta}}}$ is
		$\pi$-free. For $j = \Delta$, we interpret this to mean $f^*$ is $\pi$-free, which is what we set out to prove. The proof proceeds by induction on $j$. The base case $j = 0$ follows directly from the definition
		of $f^*$ as
		$\restr{f^*}{{[n]}^d \setminus S} = \restr{f}{{[n]}^d \setminus S}$
		, and $\restr{f}{{[n]}^d \setminus S}$ is $\pi$-free.
		
		Now, let us assume that $j \geq 1$ and $\restr{f^*}{{[n]}^d \setminus \set{i_j,\ldots,i_{\Delta}}}$ is $\pi$-free but that $\restr{f^*}{{[n]}^d \setminus \set{i_{j+1},\ldots,i_{\Delta}}}$ is not $\pi$-free. Suppose $i_j$ was the $1$-leg of a $\pi$-appearance $(i_j,p_2,\ldots,p_k)$ in $\restr{f^*}{{[n]}^d \setminus \set{i_{j+1},\ldots,i_{\Delta}}}$, where $p_2,\ldots,p_k \in {[n]}^d \setminus \set{i_{j+1},\ldots,i_{\Delta}}$ and $i_j \prec p_2 \prec \ldots \prec p_k$. From the way we defined $f^*(i_j)$, there exists $p \prec i_j$  such that $f^*(p) = f^*(i_j)$, so that $(p,p_2,\ldots,p_k)$ is a $\pi$-appearance in $\restr{f^*}{{[n]}^d \setminus \set{i_j,\ldots,i_{\Delta}}}$, contradicting the induction hypothesis. Furthermore, $i_j$ cannot be the $t$-leg (for any $2 \leq t \leq k$) of any $\pi$-appearance in ${[n]}^d \setminus \set{i_{j+1},\ldots,i_{\Delta}}$ since, for all $p \prec i_j$, $f^*(i_j) \leq f(p)$. This completes the proof.
	\end{proof}
	
	We will now give an example of a pattern $\pi \in {\mathcal{S}}_4$ and a function $f \colon [n]^2 \to \R$ for which the difference between the Hamming and deletion distances of $f$ from $\pi$-freeness is $\Theta(n)$. This example generalizes straightforwardly to longer patterns and higher dimensions.
	
	\begin{claim}{\emph{(Restatement of \cref{clm:hamneqdelexample})}}
		There exists a $4$-pattern $\pi \in {\mathcal{S}}_4$ and a function $f \colon [n]^2 \to \R$ such that the deletion distance of $f$ from $\pi$-freeness is $1$ and the Hamming distance of $f$ from $\pi$-freeness is $\Theta(n)$.
	\end{claim}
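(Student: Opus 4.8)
The plan is to concentrate all the structure near a single point $x$ whose removal kills every $\pi$-appearance — so the deletion distance is $1$ — while guaranteeing that, no matter what value we assign to $x$, it still completes $\Omega(n)$ pairwise disjoint $\pi$-appearances, which forces the Hamming distance to be $\Omega(n)$. I would take $\pi=(2,4,1,3)$: its first and last entries lie in $\{2,3\}$ (so the hypothesis of \cref{clm:deletion} fails), it is a simple permutation, and — the property used most directly — it contains no increasing subpattern of length $3$, so inserting a single value into a $\prec$-increasing sequence can never create $\pi$ (each length-$3$ subpattern of $(2,4,1,3)$ obtained by deleting one position is non-monotone). Fix a box $\mathcal B\subseteq[n]^2$ of side length $\Theta(\sqrt n)$ centered at a point $x$, so $|\mathcal B|=\Theta(n)$. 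Set $f(p)=n\cdot p_1+p_2$ for every $p\notin\mathcal B$; this is strictly $\prec$-increasing, so $\restr{f}{[n]^2\setminus\mathcal B}$ is $\pi$-free, and the exterior points comparable to points of $\mathcal B$ have values either below a threshold $\alpha$ (those below $\mathcal B$) or above a threshold $\beta$ (those above $\mathcal B$). Put $f(x)=v_0$ for some $v_0\in(\alpha,\beta)$, and assign values in $(\alpha,\beta)$ to the $\Theta(n)$ remaining points of $\mathcal B\setminus\{x\}$, organized into $\Theta(n)$ pairwise disjoint ``gadgets''. Half of them are \emph{$x$-min} gadgets: a triple $(F_1\prec F_2,\,F_4)$ with $F_1,F_2\in\mathcal B$ below $x$, $F_4\in\mathcal B$ above $x$, and fixed values $f(F_1)<f(F_4)<f(F_2)$ in $(v_0,\beta)$; then $(F_1,F_2,x,F_4)$ is a $\pi$-appearance exactly when $f(x)<f(F_1)$. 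The other half are \emph{$x$-max} gadgets: a triple $(F_1,\,F_3\prec F_4)$ with $F_1\in\mathcal B$ below $x$, $F_3,F_4\in\mathcal B$ above $x$, fixed values $f(F_3)<f(F_1)<f(F_4)$ in $(\alpha,v_0)$; then $(F_1,x,F_3,F_4)$ is a $\pi$-appearance exactly when $f(x)>f(F_4)$. Because all $x$-min thresholds lie above $v_0$ and all $x$-max thresholds below $v_0$, for every real $v$ at least one of the two families is entirely active, so $f$ with $x$ reset to $v$ has $\Omega(n)$ pairwise-disjoint-except-at-$x$ $\pi$-appearances.

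There are three things to check. \textbf{Deletion distance $1$.} I would argue $\restr{f}{[n]^2\setminus\{x\}}$ is $\pi$-free: a $4$-chain avoiding $x$ either lies wholly outside $\mathcal B$ (monotone, hence not $\pi$), mixes $\mathcal B$ with its exterior, or lies wholly in $\mathcal B$. The key point for the mixed case is that any point that is comparable-above some point of $\mathcal B$ and comparable-below some point of $\mathcal B$ already lies in $\mathcal B$; hence a mixed chain has all its interior positions inside $\mathcal B$, so no exterior point can sit in the ``$x$-slot'' of a gadget and complete it, and the exterior contributes only a global minimum or maximum to such a chain, which cannot realize $\pi$ since $\pi^{-1}(1)=3\ne1$ and $\pi^{-1}(4)=2\ne4$. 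The purely-$\mathcal B$ case is handled by the value assignment inside $\mathcal B$ (no quadruple of gadget points realizes $(2,4,1,3)$). Since $f$ itself is not $\pi$-free (at $v_0$ all gadgets are active), the deletion distance is exactly $1$. \textbf{Hamming $\ge\Omega(n)$.} If $f'$ is $\pi$-free and $T=\{p:f'(p)\ne f(p)\}$, put $v=f'(x)$ (with $v=v_0$ if $x\notin T$); then $f'$ agrees with ``$f$ with $x$ reset to $v$'' off $T\setminus\{x\}$, so $T\setminus\{x\}$ must meet each of the $\Omega(n)$ pairwise-disjoint-except-at-$x$ $\pi$-appearances of that function, whence $|T|=\Omega(n)$. \textbf{Hamming $\le O(n)$.} Replacing the values on all $\Theta(n)$ points of $\mathcal B$ by $n\cdot p_1+p_2$ yields a function that is $\prec$-increasing off $x$; by the ``insert one value into an increasing sequence'' remark such a function is $\pi$-free, so the Hamming distance is $O(n)$.

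The main obstacle is the value assignment inside $\mathcal B$: making each of the $\Omega(n)$ gadgets fire throughout its whole relevant half-line of $x$-values while simultaneously keeping every quadruple of gadget points from realizing $(2,4,1,3)$ on its own; the forced comparabilities through $x$ make this the entire crux, and it is exactly here that $\{\pi(1),\pi(4)\}\cap\{1,4\}=\emptyset$ is needed — the two gadget families are ``$x$ is the local minimum'' and ``$x$ is the local maximum'' appearances, which is impossible for patterns whose extreme-valued positions are at the ends (precisely the patterns covered by \cref{clm:deletion}) and forces $|\pi|\ge 4$. Finally, the construction generalizes in the obvious way: replace $\mathcal B$ by a $d$-dimensional box of volume $\Theta(n)$ for larger $d$, and pad $\pi$ on both ends (keeping its first and last entries in $\{2,\dots,k-1\}$) for longer patterns, obtaining the same $\Omega(n)$ separation between the two distances.
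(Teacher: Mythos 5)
You have the right pattern ($\pi=(2,4,1,3)$ — the same one the paper uses) and the right high-level idea: make a single point $x$ simultaneously play a maximum-type leg role for one family of appearances and a minimum-type role for another, so that every possible new value at $x$ keeps $\Omega(n)$ pairwise-disjoint-except-at-$x$ appearances alive. That is exactly what the paper's construction does. But your specific construction has two gaps, one of which you flag and one of which you do not.

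\textbf{A concrete bug you did not flag.} The threshold separation you use — that exterior points comparable to $\mathcal B$ split into those with $f<\alpha$ (below $\mathcal B$) and those with $f>\beta$ (above $\mathcal B$) — is false for $f(p)=n p_1+p_2$ with a $\Theta(\sqrt n)\times\Theta(\sqrt n)$ box. Writing $\mathcal B=[a_1,b_1]\times[a_2,b_2]$, the exterior point $(b_1,a_2-1)$ is $\prec$-below the interior point $(b_1,a_2)$, so it is ``below $\mathcal B$''; its value is $n b_1+a_2-1$. The exterior point $(a_1,b_2+1)$ is $\prec$-above $(a_1,b_2)$, so it is ``above $\mathcal B$''; its value is $n a_1+b_2+1$. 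Their difference is $n(b_1-a_1)+(a_2-b_2)-2=\Theta(n^{3/2})>0$, so a ``below'' exterior point has a strictly \emph{larger} value than an ``above'' exterior point. Consequently your mixed-chain argument — that an exterior point at a $\prec$-end of the chain is automatically the value-extremum, so it cannot be in position $\pi^{-1}(1)=3$ or $\pi^{-1}(4)=2$ — breaks: an exterior ``below'' point at $\prec$-position $1$ need not have the smallest value among the four, and $\pi$ precisely asks for its value rank to be $2$, which is now attainable. A thin $\Theta(1)\times\Theta(n)$ or $\Theta(n)\times\Theta(1)$ region (as the paper uses) does give a clean separation; a $\sqrt n\times\sqrt n$ box does not.

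\textbf{The gap you did flag.} The interior value assignment inside $\mathcal B$ is unspecified, and you yourself call it ``the entire crux.'' Verifying that one can pack $\Theta(n)$ disjoint $x$-min and $x$-max gadgets so that they all fire over the right half-lines of $x$-values yet no four gadget points form a $\pi$-appearance on their own — especially given the forced comparabilities within each quadrant of $\mathcal B$ relative to $x$ — is the content of the claim, not a detail. Without it the proof is an outline, not a proof.

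For comparison, the paper sidesteps both issues by using two one-dimensional sets through the marked $\mathbf 4$: a row carrying values $2,\ldots,2,\mathbf 4,1,\ldots,1,3,\ldots,3$, a column carrying $5,\ldots,5,7,\ldots,7,\mathbf 4,6,\ldots,6$, and the constant value $8$ on the remaining quadrants. Constant exterior values trivially prevent mixed $\pi$-appearances (they create ties, and the $8$'s dominate), and the row/column arrangement is explicit enough to check directly that every $\pi$-appearance goes through $\mathbf 4$ while modifying $\mathbf 4$ alone cannot simultaneously kill the $\Theta(n)$ row appearances ($\mathbf 4$ is their $4$-leg) and the $\Theta(n)$ column appearances ($\mathbf 4$ is their $1$-leg). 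If you want to salvage your version, replace the exterior monotone function by a piecewise-constant one as in the paper, and collapse $\mathcal B$ to a union of a single row and a single column through $x$; the gadgets then become explicit and the ``crux'' disappears.
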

	
	\begin{proof}[Proof of \cref{clm:hamneqdelexample}]
		Let $\pi \colonequals (2,4,1,3) \in {\mathcal{S}}_4$. Consider the function $f \colon {[n]}^2 \to \mathbb{R}$ whose $f$-values are represented by the matrix
		\[
		\begin{bmatrix}
			\; & \; & \; & 6 & \; & \; & \; & \; & \; & \;\\
			\; & \scalebox{1.5}{8} & \; & \vdots & \; & \; & \scalebox{1.5}{8} & \; & \; & \;\\
			\; & \; & \; & 6 & \; & \; & \; & \; & \; & \;\\
			2 & \cdots & 2 & \mathbf{4} & 1 & \cdots & 1 & 3 & \cdots & 3\\
			\; & \; & \; & 7 & \; & \; & \; & \; & \; & \;\\
			\; & \; & \; & \vdots & \; & \; & \; & \; & \; & \;\\
			\; & \scalebox{1.5}{8} & \; & 7 & \; & \; & \scalebox{1.5}{8} & \; & \; & \;\\
			\; & \; & \; & 5 & \; & \; & \; & \; & \; & \;\\
			\; & \; & \; & \vdots & \; & \; & \; & \; & \; & \;\\
			\; & \; & \; & 5 & \; & \; & \; & \; & \; & \;
		\end{bmatrix}
		\]
		
		Let the number of occurrences of $1$, $2$, and $3$ on the \enquote{special} row in the matrix each be $\Theta(n)$. Similarly, let the number of occurrences of $5$, $6$, and $7$ on the \enquote{special} column in the matrix each be $\Theta(n)$. There are thus $\Theta(n)$ $\pi$-appearances on the \enquote{$2,\mathbf{4},1,3$}-row. Similarly, on the \enquote{$5,7,\mathbf{4},6$} column, there are $\Theta(n)$ $\pi$-appearances.
		
		The deletion distance of $f$ from being $\pi$-free is $1$ as the deletion of the $\mathbf{4}$ marked in bold in the matrix is sufficient to remove all the $\pi$-appearances. However, the Hamming distance of $f$ from being $\pi$-free is $\Theta(n)$ since the modification of the $\mathbf{4}$ cannot simultaneously remove the $\pi$-appearances on the row and the column to which it belongs. This is because the $\mathbf{4}$ in the matrix is simultaneously the $4$-leg of each of the $\pi$-appearances along the row, and the $1$-leg of each of the $\pi$-appearances along the column. Hence, we must modify $\Theta(n)$ entries either on the row or the column to remove all the $\pi$-appearances.
	\end{proof}
	
	\subsection{Non-permutation patterns} \label{sec:last}
	
	As mentioned earlier, when $d = 1$, every pattern is a permutation pattern. 
	However, when $d \geq 2$, order patterns that are not permutations
	exist, since the partial order $\prec$ with which we work, is not a total order on the domain $[n]^d$ of the function being tested. As already remarked, even for $d = 2$, there are $3$-patterns (which are not permutations)
	for which the Hamming distance may be arbitrarily larger than the
	deletion distance. We will give an example of such a pattern which we call the \enquote{$1,2,3$-fork} pattern.
	
	A function $f \colon [n]^d \to \R$ is said to contain a $1,2,3$-fork pattern if there exist $p_1, p_2, p_3 \in [n]^d$ such that $p_2 \prec p_1$, $p_2 \prec p_3$, $p_1$ and $p_3$ are not comparable with respect to $\prec$, and $f(p_1) < f(p_2) < f(p_3)$. As usual, we call $p_1$, $p_2$, and $p_3$ the $1$-leg, $2$-leg, and the $3$-leg of the $1,2,3$-fork pattern that they form, respectively. For the rest of this section, let $\pi$ denote the $1,2,3$-fork pattern.
	
	\begin{claim}
		There exists a function $f \colon [n]^2 \to \R$ such that the deletion distance of $f$ from $\pi$-freeness is $1$ and the Hamming distance of $f$ from $\pi$-freeness is $\Theta(n)$.
	\end{claim}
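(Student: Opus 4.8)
The plan is to build an explicit $f$ together with a distinguished point $p^*$ that lies in \emph{every} $\pi$-fork of $f$, so that deleting $p^*$ removes all forks (hence the deletion distance is $1$), while any collection of value-modifications that destroys all forks must touch $\Omega(n)$ points. Concretely, fix $N=\Theta(n)$, set $f(p^*)=0$, and attach to $p^*$ two "pencils" of $N$ forks each, sharing only the point $p^*$: a family $\mathcal C$ of triples $(a_k,b_k,p^*)$ with $b_k\prec a_k$, $b_k\prec p^*$, $a_k\parallel p^*$ and $f(a_k)<f(b_k)<0$ (so $p^*$ is the $3$-leg, $b_k$ the $2$-leg, $a_k$ the $1$-leg); and a family $\mathcal B$ of triples $(p^*,c_k,d_k)$ with $c_k\prec p^*$, $c_k\prec d_k$, $p^*\parallel d_k$ and $0<f(c_k)<f(d_k)$ (so $p^*$ is the $1$-leg, $c_k$ the $2$-leg, $d_k$ the $3$-leg). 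The $4N$ points $a_k,b_k,c_k,d_k$ are pairwise distinct, so the forks of $\mathcal C$ (and of $\mathcal B$) are pairwise disjoint apart from $p^*$. I would place the two pencils in thin regions of $[n]^2$ that are mutually incomparable away from $p^*$, and set $f$ on all remaining points to follow a fixed strictly monotone function $g$ (e.g.\ $g(x,y)=xn+y$) with extreme values, choosing coordinates — exactly in the spirit of the proof of \cref{clm:hamneqdelexample} — so that $f$ has no $\pi$-fork other than those of $\mathcal B$ and $\mathcal C$.

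Granting this, the two easy directions are quick. There is at least one $\pi$-fork, so the deletion distance is $\ge 1$; and by the "no other fork" property, $f$ restricted to $[n]^2\setminus\{p^*\}$ is $\pi$-free, so the deletion distance is exactly $1$. For the Hamming upper bound, resetting the $4N+1$ special values to agree with $g$ turns $f$ into $g$, which is monotone and hence $\pi$-free, so the Hamming distance is at most $4N+1=O(n)$.

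For the Hamming lower bound, let $f'$ be any $\pi$-free function and $S=\{p\colon f(p)\ne f'(p)\}$. Note the order-relations and incomparabilities among the prescribed points are structural, so they are the same for $f'$. If $f'(p^*)\ge f(p^*)=0$, then for each $k$ consider the triple $(a_k,b_k,p^*)$ of $\mathcal C$: $b_k$ is its $\prec$-minimum, and $f(a_k)<f(b_k)<f(p^*)\le f'(p^*)$, so if $a_k\notin S$ and $b_k\notin S$ we would have $f'(a_k)<f'(b_k)<f'(p^*)$, i.e.\ a $\pi$-fork of $f'$ — a contradiction. Hence $S$ meets $\{a_k,b_k\}$; since the pairs $\{a_k,b_k\}$ are pairwise disjoint, $|S|\ge N$. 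If instead $f'(p^*)<f(p^*)=0$, the same argument applied to the triples $(p^*,c_k,d_k)$ of $\mathcal B$, now using $f'(p^*)<0<f(c_k)<f(d_k)$, gives $|S|\ge N$. Either way $|S|=\Omega(n)$, so the Hamming distance is $\Theta(n)$.

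The main obstacle is the bookkeeping hidden in "choose coordinates so that there is no other $\pi$-fork." The subtle danger is that each $\mathcal C$-fork carries an \emph{inverted} comparable pair $b_k\prec a_k$ (with $f(a_k)<f(b_k)$), and any background point of large value that lies above $b_k$ and is incomparable to $a_k$ would complete an unintended fork avoiding $p^*$ (and symmetrically for $\mathcal B$, and for the "mid-valued" background points); one must therefore lay out the pencils and pick the background values so that the only point above $b_k$, incomparable to $a_k$, with value exceeding $f(b_k)$ is $p^*$ itself, and dually for $\mathcal B$. I expect this to be achievable by a careful but finite coordinate construction analogous to (and somewhat more intricate than) the "plus-shaped" matrix in \cref{clm:hamneqdelexample}, keeping both pencils immediately adjacent to $p^*$ and exploiting that the fork pattern is not a permutation pattern (its legs do not form a chain); that verification is where the real work lies.
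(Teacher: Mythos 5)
Your approach is the same as the paper's: you plant a single point $p^*$ that is simultaneously a $3$-leg of one pencil of $\Theta(n)$ disjoint forks and a $1$-leg of another, so that deleting $p^*$ kills every fork while any value-repair must choose $f'(p^*)\ge 0$ or $f'(p^*)<0$ and is then forced to touch one point in each fork of the surviving pencil. The paper's matrix is precisely a concrete instantiation of this scheme: the bold $\mathbf{3}$ plays the role of $p^*$, the row of $1$s and $2$s is your family $\mathcal C$, and the column of $4$s and $5$s is your family $\mathcal B$, with the large block of $5$s serving as the inert background. Your dichotomy on $f'(p^*)$ is in fact a welcome tightening of the paper's one-line justification of the Hamming lower bound.

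The one thing you leave open is the thing the paper does supply: an explicit layout showing that $f\restriction_{[n]^2\setminus\{p^*\}}$ is $\pi$-free, i.e.\ that there are no forks avoiding $p^*$. You are right that this is where the care is needed — each $\mathcal C$-fork carries an inverted comparable pair $b_k\prec a_k$ with $f(a_k)<f(b_k)$, and a high background value above $b_k$ and incomparable to $a_k$ would complete a stray fork. The paper handles this not by a generic strictly monotone background $g$ (which would in general produce such stray forks) but by a flat background of repeated values: the $1$s, $2$s, $4$s, and $5$s each fill a thin stripe, so the only candidate $\prec$-minimum with a genuinely smaller value to one side and a larger value to the other is $p^*$ itself. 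If you complete your write-up, you should either reproduce a layout of this kind or verify by hand that your chosen coordinates admit no fork disjoint from $p^*$; as stated, that step is asserted rather than proved, and a generic strictly increasing $g$ would not work without further constraints on where the pencils sit relative to it.
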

	
	\begin{proof}
		Consider the function $f \colon {[n]}^2 \to \mathbb{R}$ whose $f$-values are represented by the matrix
		\[
		\begin{bmatrix}
			1 & \cdots & 1 & 4 & 4\\
			2 & \cdots & 2 & \mathbf{3} & 2\\
			\; & \; & \; & 4 & 5\\
			\; & \scalebox{1.5}{5} & \; & \vdots & \vdots\\
			\; & \; & \; & 4 & 5
		\end{bmatrix}
		\]
		
		Note that there are $\Theta(n)$ $\pi$-appearances formed by the values $1$, $2$, and $\mathbf{3}$, where the $\mathbf{3}$ marked in bold is the $3$-leg of each of these $\pi$-appearances. There are also $\Theta(n)$ $\pi$-appearances formed by the values $\mathbf{3}$, $4$, and $5$, where the $\mathbf{3}$ marked in bold is the $1$-leg of each of these $\pi$-appearances. The deletion distance of $f$ from $\pi$-freeness is $1$ as deleting the $\mathbf{3}$ would suffice. However, the Hamming distance of $f$ from $\pi$-freeness is $\Theta(n)$, thanks to $\mathbf{3}$ simultaneously being the $1$-leg and the $3$-leg of $\Theta(n)$ $\pi$-appearances each.
	\end{proof}
	
	We do not know how to design tests for such patterns with respect to the Hamming distance and conjecture that it
	requires $n^{\omega(1)}$ queries. This is to be compared
	with the only known provable lower bound on permutation-freeness testing which is $\Omega(\log n)$.
	
	Thus, it makes sense to test freeness of such order patterns with
	respect to the deletion distance. Our methods for $(1,3,2)$-freeness
	apply for such patterns too, and in particular, an $\widetilde{O}(n)$-query erasure-resilient test can be constructed along the lines of \cref{sec:O(n)-132}.
	
\end{appendices}


\end{document}